\newtheorem{theorem}{Theorem}[section]
\newtheorem{corollary}[theorem]{Corollary}
\newtheorem{example}[theorem]{Example}
\newtheorem{lemma}[theorem]{Lemma}
\newtheorem{sublemma}[theorem]{Sublemma}
\newtheorem{notation}[theorem]{Notation}
\newtheorem{proposition}[theorem]{Proposition}
\newtheorem{remark}[theorem]{Remark}
\newenvironment{proof}[1][Proof]{\textbf{#1.} }{\ \rule{0.5em}{0.5em}}
\def \<{\langle}
\def \>{\rangle}
\def \D{{\mathcal D}}
\def \ep{\epsilon}
\def \R{\mathbb R}
\def \ka{\kappa}
\def \l{\lambda}
\def \w{\omega}
\def \beq{\begin{equation}}
\def \eeq{\end{equation}}
\def \ben{\begin{align}}
\def \een{\end{align}}
\def \enda{\end{align}}
\def \n{\nabla}
\def \eref{\eqref}
\def \({\Big(}
\def \){\Big)}
\numberwithin{equation}{section}
\begin{document}


\title{Invariance of intrinsic hypercontractivity under perturbation of  Schr\"odinger operators
\footnote{\emph{Key words and phrases.} Perturbation of Schr\"odinger operators, intrinsic hypercontractivity, logarithmic Sobolev inequalities.  \newline
 \indent 
\emph{2020 Mathematics Subject Classification.} 
 Primary; 81Q15, 47D08,  Secondary; 35J10, 35B20, 60J46.} 
\author{Leonard Gross \\
Department of Mathematics\\
Cornell University\\
Ithaca, NY 14853-4201\\
{\tt gross@math.cornell.edu}}
}

\maketitle

\begin{abstract}

      A Schr\"odinger operator that is bounded below and has a unique
positive ground state can be transformed into a Dirichlet form operator
by the ground state transformation. If the resulting Dirichlet form
operator is hypercontractive, Davies and Simon call the Schr\"odinger
operator ``intrinsically hypercontractive". I will show that if one adds a
suitable potential onto an intrinsically hypercontractive Schr\"odinger
operator it remains intrinsically hypercontractive. 
The proof uses a fortuitous relation between the WKB equation and 
logarithmic Sobolev inequalities.
     All bounds are dimension independent.  
    The main theorem will be applied to several examples.

\end{abstract}

\tableofcontents

\newpage
\section{Introduction}

An operator whose quadratic form is a Dirichlet form has some particularly nice properties.
Suppose that $m$ is a measure on a Riemannian manifold $X$ and  $A$ is a self-adjoint operator, 
densely defined in $L^2(X,m)$, such that 
\begin{align}
(Af, g)_{L^2(X,m)} = \int_X \<\n f, \n g\> dm     \label{I1}
\end{align}
for an appropriate set of functions $f$ and $g$. Here $\<\cdot, \cdot\>$ denotes the Riemannian metric.
Such operators have been studied systematically for many years. 
\cite{BD58,BD59,Sil74,Fu80,BH91,MR92,FOT94,ChF2012}.
Some divergence form operators are included  in this class,  \cite{GT}. 
 The semigroup $e^{-tA}$  associated to such an operator is positivity preserving, 
 is the generator of a Markov process, is a contraction in all $L^p$ spaces, 
 and frequently has  useful smoothing properties. 
There is an equivalence between hypercontractivity properties  of the semigroup $e^{-tA}$ and 
coercivity properties of its Dirichlet form generator $A$, \cite{G1}. 
The latter  take the form of logarithmic Sobolev inequalities.

 A Schr\"odinger operator  with a non-zero potential is  not a Dirichlet form operator, but can often
 be unitarily transformed into one: Suppose that the
 operator $H:= -\Delta +V$
 acts in  $L^2(\R^n, dx)$ and has an eigenvalue $\l_0$ at the bottom of its spectrum with 
 multiplicity one.
 The corresponding 
 normalized eigenfunction $\psi$ may  typically  
 be chosen to be strictly positive almost everywhere. 
 The measure $dm_\psi := \psi^2 dx$
 is then a probability measure on $\R^n$ and the map $U:f\to f\psi$ is 
 a   unitary operator from $L^2(m_\psi)$
 onto $L^2(m)$, as is easily verified. A simple computation shows that the operator 
 $\hat H :=U^{-1}(H - \l_0)U$, which acts in $L^2(\R^n, m_\psi)$, 
will then be  the  Dirichlet form operator for $m_\psi$. That is, 
 $(\hat H f,g)_{L^2(m_\psi)} = \int_{\R^n} \<\n f(x), \n g(x)\> dm_\psi(x)$,
where $\<\cdot,\cdot\>$  is the inner product on $\R^n$. The semigroups $e^{-t(H-\l_0)}$  
 and $e^{-t\hat H}$ are 
 unitarily equivalent via $U$, but differ in very important respects.  
 The transformation of the Schr\"odinger operator $H$ into the Dirichlet form operator $\hat H$ 
 is nowadays called the ground state transformation. An early incarnation of this transformation
 goes back to an 1837 paper of Jacobi, \cite{J1837}, whose interest was to remove the zeroth order term
 from an ordinary differential operator. Indeed $\hat H$ has no zeroth order term. The potential $V$
 is now encoded in the measure $m_\psi$. The ground state transformation was used in \cite{G1}
 to produce Dirichlet form operators from Schr\"odinger operators by this method.

 The notions of  {\it intrinsic hypercontractivity} and  {\it intrinsic ultracontractivity}  were introduced by
 Davies and Simon in their paper \cite{DS84}: Suppose that the operator
  $\hat H$ above is hypercontractive  or ultracontractive
 in the sense that the semigroup $e^{-t\hat H}$ is hypercontractive, resp. ultracontractive 
 in $L^2(\R^n, m_\psi)$. 
 They then call  $H$ itself 
 {\it intrinsically hypercontractive} (resp. {\it intrinsically ultracontractive$)$}. 
 They showed that
 intrinsic ultracontractivity is invariant under perturbation of the potential $V$ by a bounded potential 
but left open the question as to whether intrinsic hypercontractivity is also invariant 
under perturbation by bounded potentials.
The goal of this paper is to show that intrinsic hypercontractivity for semigroups generated 
by Schr\"odinger operators  is invariant under 
perturbation of the potential by a class of unbounded potentials, including all bounded potentials in particular. 
We will do this in a dimension independent way over arbitrary Riemannian manifolds. 
We  also show, by examples, how to combine this perturbation
theorem with the convexity techniques of the Bakry-Emery method to produce a large class of Dirichlet forms
satisfying a logarithmic Sobolev inequality.

A proof of invariance of intrinsic hypercontractivity  requires
 showing that if $-\Delta + V_1$ is intrinsically
hypercontractive then   $-\Delta + V_1 + V$ is also intrinsically hypercontractive  
under  suitable conditions on $V$. 
The ground state transformation  for $-\Delta + V_1 + V$ can be realized as the 
 composition of two successive ground state  transformations, 
 one for $-\Delta + V_1$, giving a Dirichlet form operator $\hat H_1$, and a second 
 one for the Schr\"odinger operator $\hat H_1 + V$.  We will elaborate on this 
 composition property  of the ground state transformation in Section \ref{secgst}.
 By hypothesis, the Dirichlet form operator $\hat H_1$ is hypercontractive. 
    Using this and the known equivalence of hypercontractivity to logarithmic Sobolev inequalities,
     the invariance of intrinsic hypercontractivity  can be phrased directly  in terms of 
     the perturbation of a Dirichlet form as follows:

       Suppose that $m$ is a probability measure  on a Riemannian manifold $X$ and  
 that the logarithmic Sobolev inequality  
\begin{align}
Ent_m(u^2) \le 2c \int_X |\n u|^2 dm          \label{I4}
\end{align}
holds for some constant $c$. 
(Here $m$ plays the role of the ground state measure for $-\Delta + V_1$ in the example of the 
preceding paragraph.)
Denote by $\n^*\n$ the  Dirichlet form operator associated to $m$. 
It is defined, as in \eref{I1}, by $(\n^*\n u, v)_{L^2(m)} = \int_X \<\n u, \n v\> dm$. 
(Thus $\n^*\n = \hat H_1$ in the example.)
 Let $V$ be a potential on $X$.  
 If the Schr\"odinger operator $\n^*\n + V$ has an eigenvalue $\l_0$ of multiplicity one 
at the bottom of its spectrum  
with a normalized, a.e. strictly positive eigenfunction $\psi$ then 
 the ground state transformation for  $\n^*\n + V$ associates to $\psi$ the new ground 
 state measure $m_\psi = \psi^2 m$
and its corresponding Dirichlet form $\int_X |\n f|^2 dm_\psi$. 
The  problem of invariance of  intrinsic hypercontractivity  asks  
  for  conditions on $V$  that will ensure that the new
Dirichlet form also satisfies a logarithmic Sobolev inequality.

We will prove that   if  \eref{I4} holds and if there are constants  $\ka >0$ and $\nu > 2c$ 
such that 
\beq
M:= \|e^V\|_{L^\ka(m)}\| e^{-V}\|_{L^\nu(m)} < \infty \label{I5}
\eeq 
then the operator $\n^*\n + V$ is bounded below,
the bottom
of its  spectrum is an eigenvalue of multiplicity one,  there is a normalized ground state $\psi >0$
a.e.  and  there  is a constant $c_1$ such that
\begin{align}
Ent_{m_\psi}(f^2) \le 2c_1 \int_X |\n f|^2 dm_\psi.                 \label{I6}
\end{align}
Moreover there are constants $a$ and $b$ depending
only on $c,\ka$ and $\nu$ such that $c_1 \le a M^b$.  In particular, the Schr\"odinger operator 
$\n^*\n +V$ has a gap at the bottom of its spectrum of at least  $2M^{-b}/a$.   
All bounds are dimension independent.
This is the main theorem of the present paper.

\bigskip
  
There is a large literature on a related problem: Suppose that
$F:X \to \R$ is measurable and $\int_Xe^{-2F} dm =1$ for some probability measure $m$. 
Then $m^F := e^{-2F} dm$ is another 
probability measure and one can
 ask for conditions on $F$ which ensure 
that the Dirichlet form for $m^F$ 
 satisfies a logarithmic Sobolev 
inequality when 
$m$ does.
If, given a potential $V$ with its ground state $\psi$, one puts $F = -\log \psi$  
 then $m_\psi = m^F$ and the desired conclusion is the same for the two perturbation problems.
  But the hypotheses
are very different. For us it is essential to impose conditions only on the potential $V$ and deduce from them
any properties of $\psi$ 
 that may be needed for proving  \eref{I6}.
If, on the other hand, 
 one takes $F$ as the primary data rather than $V$, then it is natural to impose conditions directly
on $F$. This is the case for the application of logarithmic Sobolev inequalities to classical statistical mechanics,
and is frequently used in the application of logarithmic Sobolev inequalities to large deviations,
concentration of measure and optimal transport.
An early perturbation theorem taking $F$ as the given data 
is the Deuschel-Holley-Stroock (DHS) theorem, \cite{HS1987,DS1990}, 
 which asserts that boundedness of $F$ is a sufficient condition.
 One may  take $c_1 = c\exp(osc 2F)$ in \eref{I6}. (cf. also \cite[Proposition 3.1.18]{Roy07} 
 or \cite[Proposition 5.1.6]{BGL} for a proof of this.)
 The two papers  \cite{HS1987,DS1990}
 link logarithmic Sobolev inequalities with classical statistical mechanics.
 See also e.g. Royer, \cite{Roy99,Roy07},
  Guionnet and Zegarlinski, \cite{GZ03}, Helffer  \cite{Helffer2002}, 
 and Ledoux, \cite{Led2001u},   
  for further 
  early 
 expositions of this connection with classical statistical mechanics.
 See Ledoux,  \cite{Led1999, Led2000, Led2001} for expositions of the connection with concentration of measure, and see Villani, \cite{Vil2003,Vil2009} and Gigli-Ledoux \cite{LedG2013}  
 for expositions of the connection with optimal transport.

 Whether one perturbs the measure  $m$ directly, by 
inserting a density $e^{-2F}$,  or  perturbs $m$  indirectly,
 via the Schr\"odinger equation, the identities that accompany  the ground state transformation
 play a central role, as will be explained in Section \ref{secgs}.  
 Even if $F$ is the primary object, these identities
 suggest the use of hypotheses on $F$ that include its relation to 
an artificial  potential $V_F$,  constructed  from $F$, for which the ground state of $\n^*\n + V_F$
is exactly $e^{-F}$.
Many works hypothesize conditions on $F$, which are in fact conditions on a combination of $F$ and $V_F$.
 Further historical discussion of this will be given in Section \ref{secgs}
 after more details of the ground sate transformation are described and also in Section \ref{secwp},
 which contains some comparisons of results.

\bigskip

   Several papers  
aimed at developing techniques for proving spectral gaps and 
logarithmic Sobolev inequalities  directly over infinite dimensional spaces 
led to some  of the methods that we will be building on. 
S. Kusuoka, \cite{Kus1991}, \cite{Kus1992}, seeking an infinite dimensional analog of the Hodge-deRham theorem  for an open subset of an abstract Wiener space, developed a method for proving a weak kind of 
spectral gap for a Dirichlert form 
over an infinite dimensional manifold.  Aida, Masuda and Shigekawa \cite{AMS94}, \cite{AS94}  proved   a perturbation theorem for Gaussian measure 
on an abstract Wiener space that imposed hypotheses on the perturbing density $e^{-2F} $.  
They replaced the hypothesis that $F$ be bounded,   required in the DHS theorem, by a 
size condition on the gradient of $F$.
The notions of spectral gap and positivity improving were themselves 
better understood through various kinds of weaker or stronger versions developed further by
M. Hino, \cite{Hino1997}, \cite{Hino2000},  
S. Aida, \cite{Aida1998},  \cite{Aida2001},
Gong and Ma, \cite{GongMa1998a}, 
Liming Wu, \cite{Wu2000}, 
P. Mathieu, \cite{Mat98}, 
M. Rockner and F-Y Wang, \cite{RW2001},
and culminating in the resolution, by  Gong and Wu \cite{GongWu2000} 
 and  F. Gong, M. R\"ockner and L. Wu in \cite{GRW2001}, 
of a spectral gap  conjecture for loop groups  made in \cite{G93},  
which was itself aimed at proving a Hodge-deRham type theorem over loop groups.
See the introductions to \cite{GRW2001} and \cite{Aida2001} for histories of these 
techniques up to that time and  in particular see  Remark 4.13 in \cite{Aida2001} for an 
illuminating comparison  of some of the historical conditions on the log density $F$. 
See   \cite{CLW2011} for later historical perspective  and development of
more quantitative bounds on the rate function for the weak Poincar\'e inequality over loop spaces.

\bigskip

This paper depends heavily on techniques developed by Aida, \cite{Aida2001}. 
Aida derived  a lower bound on the spectral gap of the perturbed operator largely  in terms of 
 information about the distribution of the ground state wave function $\psi$.  
 We will build on his techniques.  We will first derive bounds on $\|\psi^{-1}\|_{L^s(m)}$, 
 for some $s >0$, that depend only on $c,\kappa, \nu$ and $M$. 
We will use these bounds to derive a defective logarithmic Sobolev inequality  
 and then use them again to derive information needed about the distribution of $\psi$
 for producing a spectral gap via Aida's method.  Rothaus' theorem \cite{Rot5}  in the 
 form of \cite[Proposition 5.1.3]{BGL}, then yields \eref{I6}.
 All bounds are quantitatively dependent on the input data $c,\ka, \nu$ and $ M$.

\bigskip

\section{Statements}

\subsection{The main theorem}    \label{secmainthm}

\begin{notation} \label{notso} {\rm   (Schr\"odinger operator in its ground state representation). 
     Denote by $X$ a Riemannian manifold, by $dx$ its Riemann-Lebesgue measure and
by $\n$ the gradient operator.   $m$ will denote a  measure on $X$ with a 
density:  $dm = \rho^2 dx$ with $\rho >0$ and $\n \rho \in L^2_{loc}(dx)$.
The  adjoint of the gradient operator with respect to $m$ is defined on smooth vector fields over $X$ by 
\begin{align}
\int_X (\n^* v)h\, dm = \int_X v\cdot (\n h) dm  \ \ \text{for all}\ \ h \in C_c^\infty(X).   \label{div1}
\end{align}
Here we have written $v\cdot u = g(v, u)$, where $g$ is the Riemannian metric and 
$v$ and $u$ are vector fields.
The technical condition on $\rho$ ensures that $\n^* v \in L^2_{loc}(m)$ for every smooth 
vector field $v$ on $X$,  (cf. \cite[Theorem 3.1.3]{FOT94}).
             Then
 \begin{align}
(\n^* \n f, g)_{L^2(m)} = \int_X \n f \cdot \n g\, dm\ \ \text{for all}\ f,g \in C_c^\infty(X).            \label{div2}
\end{align}     
 The Dirichlet form on the right is closable in $L^2(m)$, (cf.  \cite[Theorem 3.1.3]{FOT94}). 
  Its closure is associated to a non-negative self-adjoint operator, which we refer to as 
  the Dirichlet form operator  associated to $m$ and $g$. We denote it by $\n^*\n$.
For example if $m$ is Lebesgue measure  on $\R^n$  and $g$ is the Euclidean metric then
$\n^*\n = - \Delta$ with its usual self-adjoint domain in $L^2(\R^n, dx)$.

Let $V$ be a real valued function on $X$. The 
 Schr\"odinger operator we are interested in is given informally by
\beq
H = \n^*\n + V.
\eeq
 We will impose conditions on $V$ which ensure that this expression is essentially self-adjoint, 
 that $\l \equiv \inf(\text{spectrum}\, H)$ is an eigenvalue
with multiplicity one and that $H$ has a corresponding normalized eigenfunction $\psi$ which is  strictly
positive a.e. on $X$.

The corresponding ground state measure $m_\psi$ is given by
\beq
d m_\psi = \psi^2 dm.
\eeq
$m_\psi$ is a 
 probability measure on $X$ and has its own Dirichlet form 
operator $\n_\psi^* \n$ acting in $L^2(m_\psi)$ and given by 
\begin{align}
(\n_\psi^*\n f, g)_{L^2(m_\psi)} = \int_X \n f \cdot \n g\, dm_\psi.
\end{align}
The map $U: L^2(m_\psi) \to L^2(m)$ defined by
\beq
U f = f \psi
\eeq
is clearly unitary. It is a standard computation, which we will repeat in Section \ref{secgs}, to show that
$U$ intertwines $H -\l$ with $\n_\psi^*\n$:
\beq
U^{-1}(H - \l) U = \n_\psi^*\n.
\eeq
Thereby the ground state transformation $U$ converts the  Schr\"odinger operator $H-\l$ 
to another Dirichlet form operator.

 In case $m$ is a probability measure we define the $m$ entropy of a non-negative integrable function $f$ by
 \begin{align}
 Ent_m(f) = \int_X  f \log f dm - \(\int_X f dm\)\(\log \int_X f dm\).        \label{E1}
 \end{align} 
}
\end{notation}

\bigskip
\noindent
\begin{theorem} \label{thmM} $($Main theorem$)$.
Assume that  $m(X) =1$ and that 
\begin{align}
&1. \ \  Ent_m (u^2) \le 2c \int_X |\n u|^2 dm.          \label{mt1}\\ 
&2. \ \  \|e^V\|_\kappa < \infty \ \text{and}\ \|e^{-V}\|_\nu < \infty\ \text{for some}\ \ \ka >0\  
                                      \text{and}\ \  \nu > 2c.        \label{mt2} 
\end{align}
 Then
 
 a. $ \n^*\n + V$ is essentially self-adjoint  on $\D(\n^*\n) \cap L^\infty$. Let $H =$ closure of  $\n^*\n + V$.

b. $\l_0 \equiv \inf$ spectrum $H$ is an isolated eigenvalue of multiplicity one.  It has an eigenfunction
$\psi > 0\ a.e.$ with $ \int_X \psi^2 dm =1$. 

c. Let
\begin{align}
 M =\|e^V\|_\kappa \|e^{-V}\|_\nu.       \label{mt3}
 \end{align}
 There is a constant $c_1$ depending only on $c, \kappa, \nu$ and $M$,  
 such that
\beq
Ent_{m_\psi}(f^2) \le 2c_1 \int_X |\n f|^2 dm_\psi.   \label{mt5}
\eeq

d. In particular $H$ has a spectral gap of at least $1/c_1$  above the eigenvalue $\l_0$.
 
e.  There are constants $\alpha$ and $\beta$, depending only on $c, \ka, \nu$, such 
that  $c_1 \le \alpha M^\beta$ and therefore
$H$ has a spectral gap above $\l_0$ of at least $\alpha^{-1}M^{-\beta}$.
\end{theorem}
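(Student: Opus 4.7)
My plan follows the road map in the introduction: first use the log-Sobolev inequality \eref{mt1} and the integrability of $e^{-V}$ to make sense of $H$ and produce its ground state $(\lambda_0,\psi)$; then derive a quantitative bound on $\|\psi^{-1}\|_{L^s(m)}$ for some $s>0$ depending only on $c,\kappa,\nu,M$; apply that bound to obtain both a defective logarithmic Sobolev inequality for $m_\psi$ and a spectral gap for $\n_\psi^*\n$ via Aida's method \cite{Aida2001}; finally invoke Rothaus' theorem \cite{Rot5} in the form of \cite[Proposition 5.1.3]{BGL} to upgrade the defective inequality into the tight bound \eref{mt5}.

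\textbf{Parts (a), (b).} The dual (Young) form of the entropy inequality,
\begin{equation*}
\int_X \Phi\,g\,dm \le Ent_m(g) + \Bigl(\int_X g\,dm\Bigr)\log\!\int_X e^{\Phi}\,dm,
\end{equation*}
applied with $g=u^2$ and $\Phi=-\nu V$, combined with \eref{mt1}, yields
\begin{equation*}
-\int_X V u^2\,dm \le \frac{2c}{\nu}\int_X|\n u|^2\,dm + (\log\|e^{-V}\|_\nu)\|u\|_{L^2(m)}^2.
\end{equation*}
Because $\nu>2c$, the gradient coefficient is strictly less than $1$, so the quadratic form of $\n^*\n+V$ is closed and bounded below on $\D(\n^*\n)\cap L^\infty$, and the KLMN theorem realizes $H$ as a self-adjoint operator. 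Essential self-adjointness on that domain follows from a standard cutoff approximation of the positive part of $V$. Positivity improvingness of $e^{-tH}$ combined with the Perron--Frobenius argument for Schr\"odinger semigroups then forces $\lambda_0$ to be a simple eigenvalue with an almost everywhere positive eigenfunction $\psi$ normalized so that $\int\psi^2\,dm=1$.

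\textbf{Bound on $\psi^{-1}$ and defective LSI.} Writing $\psi=e^{-\phi}$, the ground state equation $H\psi=\lambda_0\psi$ rearranges into the WKB identity
\begin{equation*}
\n^*\n\phi + |\n\phi|^2 = V-\lambda_0.
\end{equation*}
Multiplying by $e^{t\phi}$, integrating by parts the $\n^*\n\phi$ term to produce $t\int e^{t\phi}|\n\phi|^2\,dm$, and combining with \eref{mt1} applied to $u=e^{t\phi/2}$ and Young's entropy inequality against $V$, one derives a closed differential inequality in $t$ for $\int e^{t\phi}\,dm$. The variational upper bound $\lambda_0 \le (H1,1) = \int V\,dm \le \kappa^{-1}\log\|e^V\|_\kappa$ keeps the right side quantitative, and for some explicit $s>0$ one obtains
\begin{equation*}
\|\psi^{-1}\|_{L^s(m)} \le F_1(c,\kappa,\nu,M).
\end{equation*}
Next, apply \eref{mt1} with $u=f\psi$. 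Using the identity $Ent_m(f^2\psi^2) = Ent_{m_\psi}(f^2) + \int f^2\log\psi^2\,dm_\psi$ together with $|\n(f\psi)|^2\le 2\psi^2|\n f|^2 + 2f^2|\n\psi|^2$, the two residual terms $\int f^2|\n\phi|^2\,dm_\psi$ (which I would handle by integrating the WKB identity against $f^2\psi^2$ and applying Cauchy--Schwarz) and $\int f^2\log\psi^{-1}\,dm_\psi$ (absorbed by Young's entropy inequality in $m_\psi$, paid for by the $L^s$-bound on $\psi^{-1}$) produce a defective LSI
\begin{equation*}
Ent_{m_\psi}(f^2) \le A\int_X |\n f|^2\,dm_\psi + B\|f\|_{L^2(m_\psi)}^2,
\end{equation*}
with $A,B\le F_2(c,\kappa,\nu,M)$.

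\textbf{Spectral gap, Rothaus, and tracking constants.} Aida's method \cite{Aida2001} lower-bounds the spectral gap of $\n_\psi^*\n$ in terms of $m(\{\psi<\varepsilon\})$ for a suitable $\varepsilon$; Chebyshev's inequality reduces such a distributional bound to the $L^s$-estimate on $\psi^{-1}$ already in hand, giving a gap at least $F_3(c,\kappa,\nu,M)^{-1}>0$. Rothaus' theorem then converts defective LSI plus Poincar\'e into the tight inequality \eref{mt5}, proving (c) and hence (d). Tracking the elementary $M$-dependence through each step, $F_1$, $F_2$ and $F_3$ each scale as a power of $M$, so the Rothaus combination delivers $c_1\le\alpha M^\beta$ of (e). The main obstacle I foresee is the $L^s$-bound on $\psi^{-1}$ with $M$-dependence only a power of $M$ (rather than, say, exponential); this is the step where the coupling between the WKB equation and the log-Sobolev inequality --- the ``fortuitous relation'' highlighted in the abstract --- must deliver a quantitative, not merely qualitative, estimate, and where the careful choice of the moment $s$ is crucial so that the constants propagate cleanly through Aida and Rothaus.
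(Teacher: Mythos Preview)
Your overall architecture matches the paper's, but there is a genuine gap in the step where you bound $\|\psi^{-1}\|_{L^s(m)}$. The Herbst-type argument you sketch produces a differential inequality of the form
\[
\frac{d}{dt}\,\log\|e^{\phi}\|_t \;=\; \frac{Ent_m(e^{t\phi})}{t^2\,E(e^{t\phi})} \;\le\; \beta(t),
\]
but to extract a bound on $\|\psi^{-1}\|_s=\|e^{\phi}\|_s$ by integrating from $0$ to $s$ you need the initial value $\lim_{t\downarrow 0}\log\|e^{\phi}\|_t = E(\phi) = -\int_X\log\psi\,dm$. There is no a priori upper bound on this quantity in terms of $c,\kappa,\nu,M$: Jensen only gives $E(\log\psi)\le 0$, which goes the wrong way, and the variational bound on $\lambda_0$ you invoke controls the right-hand side of the differential inequality but not its starting point. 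The paper is explicit that this initial condition is unavailable (Remarks \ref{reminit} and \ref{reminit2}).

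The paper's remedy is twofold. First, it integrates the Herbst inequality across the singularity at $t=0$, from $-r$ to $s$, which yields a bound on the \emph{product} $\|e^{-\phi}\|_r\,\|e^{\phi}\|_s = \|\psi\|_r\,\|\psi^{-1}\|_s$ in which the troublesome $E(\phi)$ cancels (Lemma \ref{lemH2} and Theorem \ref{thmmp1}). Second, it obtains an independent \emph{lower} bound on $\|\psi\|_r$ for $0<r<2$ from the non-standard hyperboundedness of $e^{-tH}$ (Theorem \ref{thmulb}, specifically \eref{L512}, via $Ent_m(\psi^2)\le 2c_\nu(\lambda_0+\log\|e^{-V}\|_\nu)$ and Jensen). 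Dividing the product bound by this lower bound gives the desired $\|\psi^{-1}\|_s\le M^{\text{power}}$. Without this two-step manoeuvre your argument does not close, and in particular the power-of-$M$ dependence needed for part (e) is not delivered.

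A secondary remark: your defective-LSI step uses the crude pointwise inequality $|\n(f\psi)|^2\le 2\psi^2|\n f|^2+2f^2|\n\psi|^2$, which leaves you with the awkward term $\int f^2|\n\phi|^2\,dm_\psi$. The paper instead uses the exact ground-state identity $\int|\n(f\psi)|^2\,dm=\int|\n f|^2\,dm_\psi+\int f^2(\lambda_0-V)\,dm_\psi$ (Lemma \ref{lemgs1}), which replaces that term directly by a potential term handled cleanly by Young's inequality. Your route could likely be made to work, but the identity is both simpler and sharper.
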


\begin{remark}\label{remsg1} {\rm (Spectral gap). Our procedure for proving \eref{mt5} requires
proving both a Poincar\'e inequality for $m_\psi$ and a defective logarithmic Sobolev inequality.
The spectral gap associated to this Poincar\'e inequality is typically larger than the one 
listed in item d.  
 See Remark \ref{remsg2} for more details.

}
\end{remark}

      \begin{remark} \label{remsum} {\rm (Overview).  The main ingredient in the proof of 
      Theorem \ref{thmM}
is the derivation of $L^p(m)$ bounds for  
the ground state $\psi$ and for  its inverse $1/\psi$.
Bounds on $\|\psi\|_{L^p(m)}$ can be derived from hyperboundedness estimates for the 
 Schr\"odinger operator $\n^*\n +V$ 
      by techniques that were initially  developed in the early 1970's for the purposes of 
      constructive quantum field theory. 
In addition to the logarithmic Sobolev inequality \eref{mt1} the key hypothesis needed for this step
      is the assumption that $\| e^{-V}\|_\nu < \infty$ for some $ \nu >2c$,  but not the drastic 
condition  $\|e^{V}\|_\ka < \infty$.
    The proofs  of essential self-adjointness of $\n^*\n +V$ and the  existence and uniqueness 
 of its  ground state also depend only on these two hypotheses and not on the condition 
 $\| e^V\|_\ka < \infty$.  The proofs and   
 bounds on $\| \psi\|_{L^p(m)}$ are given in Section \ref{sechyperp}.

 The techniques needed to establish 
       bounds on $\|(1/\psi)\|_p$ are very different.
       They have their origin partly in the work of Aida, \cite{Aida2001}, which was itself motivated 
       by attempts to prove a Hodge-deRham theorem over certain infinite dimensional loop spaces.       
Aida derived information about the distribution of $\log \psi$, which he needed to prove a spectral gap,
from an identity related to the WKB equation.  
 We will see that Aida's identity  also bears a fortuitous relation to logarithmic Sobolev 
  inequalities. We will use this relation to derive  bounds on the entropy of $\psi^{-s}$ for small positive $s$.
 From this, using Herbst's method, we will 
  derive bounds on $\|\psi^{-s}\|_{L^1(m)}$ for such $s$. 
   These bounds make use of the strong condition $\|e^{V}\|_\ka < \infty$ assumed in \eref{mt2}.
 These steps are carried out in Sections \ref{secpm} and \ref{seclp}. 
    
  Our bounds on $\|\psi\|_p$ and $\|\psi^{-1}\|_s$ allow us to derive  a defective logarithmic Sobolev
  inequality for the ground state measure $m_\psi$.     The final step in proving \eref{mt5} consists
  in removing the defect by  proving a spectral gap for $\n^*\n + V$ (or equivalently,  for the 
  Dirichlet form operator   for $m_\psi$) and then applying Rothaus' theorem.
  Our  technique for proving a spectral gap is  largely  due to Aida, \cite{Aida2001}. We are able to make some
  simplifications of his method by using our $L^p(m)$ bounds for  $\psi^{\pm 1}$.
          These  bounds will allow us to  derive  the quantitative bounds on $c_1$ given in item e. of 
 Theorem  \ref{thmM}. 
}
\end{remark}

\subsection{Non-standard hyperboundedness in $L^p(m)$} \label{secns}

   We will 
   establish  logarithmic Sobolev inequalities  for the operator $\n^*\n +V$  in the 
   spaces $L^p(m)$ and  derive  corresponding hyperboundedness  in these spaces  
  in order to prove existence, uniqueness and properties of its 
  ground state.   This must be done before transforming to the ground state representation.
  
  Since $\n^*\n + V$ is not a Dirichlet form operator the minimum time to boundedness
  from $L^q(m)$ to $L^p(m)$ of $e^{-t(\n^*\n +V)}$ takes a different form from Nelson's classical time.
  Moreover $q$ and $p$ must be restricted to a small neighborhood of 2 for any such boundedness
  to hold. We will see by example in Section \ref{secgp} 
  that the peculiar restrictions on $q$ and $p$ in Corollary \ref{corhb2} are not artifacts of the proof.

 \begin{notation} {\rm The quadratic equation 
 \beq
  p\frac{p}{p-1} = 2\nu/c     \label{L313m}
 \eeq 
 is self-conjugate in the sense that it is invariant under the map $ p \to p/(p-1)$. 
 If $\nu > 2c$ then it has two  solutions,
 which are conjugate exponents as we will see in Section \ref{sechyperp}. 
 Denote them by $q_0, p_0$ with $1 < q_0 <2 < p_0 < \infty$.
 }
 \end{notation}
 
\begin{theorem} \label{thmns2} Assume that \eref{mt1} holds. Suppose that $\nu > 2c$ and that
$\|e^{-V}\|_{L^\nu(m)} < \infty$. Suppose also that $V \in L^{p_1}(m)$ for some $p_1 \ge 2p_0/(p_0-2)$. 
Then $\n^*\n +V$
is essentially self-adjoint. Its closure $H$ is bounded below.  The semigroup $e^{-tH}$ that it generates
extends uniquely to a strongly continuous semigroup of bounded operators on $L^q$ for $q \in [q_0,2]$ and
restricts to a strongly continuous semigroup of bounded operators on $L^q$ for $q \in [2, p_0]$. 
If $q_0 < p <p_0$ then 
\begin{align}
 &Ent_m(|u|^p)  
 \le pc_\nu(p) \<(H+ \log \|e^{-V}\|_\nu) u, u_p\>_{L^2(m)}\ \    
 \label{L325}
 \end{align}  
 for $u$ in the $L^p$ domain of $H$, where $u_p = (\text{sgn}\, u) |u|^{p-1}$ and 
 \begin{align}
c_\nu (p)  = \frac{\nu p}{(p_0-p)(p-q_0)}\ \ \ \text{for}\ \ \ p \in (q_0, p_0).     \label{L329}
\end{align} 
 In particular, at $p=2$  the defective logarithmic Sobolev inequality
 \begin{align}
 &Ent_m(u^2)  
 \le 2c_\nu  \<(H+ \log \|e^{-V}\|_\nu) u, u\>_{L^2(m)}\ \      \label{L325h}
 \end{align} 
 holds with 
 \begin{align}
 c_\nu =  \frac{c}{1- (2c/\nu)}.       \label{L329a}
 \end{align}
\end{theorem}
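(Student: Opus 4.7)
The plan is to first derive the formal inequality \eref{L325} on a dense class of test functions, and then extract the functional-analytic and semigroup statements from it. The derivation of \eref{L325} has two ingredients. First, apply \eref{mt1} to $w = |u|^{p/2}$: via the standard identity $\int |\n w|^2 dm = \frac{p^2}{4(p-1)} \langle \n^*\n u, u_p\rangle_{L^2(m)}$, this yields $Ent_m(|u|^p) \le \frac{cp^2}{2(p-1)} \langle \n^*\n u, u_p\rangle$. Second, to absorb the potential, apply the Donsker--Varadhan variational formula $\int g\, d\mu \le \log \int e^g\, dm + Ent_m(d\mu/dm)$ to the probability density $d\mu/dm = |u|^p/\|u\|_p^p$ and the choice $g = -\nu V$, yielding $\int V|u|^p dm \ge -\|u\|_p^p \log\|e^{-V}\|_\nu - \nu^{-1} Ent_m(|u|^p)$. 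Combining the two estimates gives
\begin{align*}
\langle (H + \log\|e^{-V}\|_\nu)u, u_p\rangle \ge \frac{2\nu(p-1) - cp^2}{c\nu p^2}\, Ent_m(|u|^p).
\end{align*}
Since $q_0, p_0$ are the roots of $cp^2 - 2\nu p + 2\nu = 0$ by \eref{L313m}, the numerator factors as $c(p_0 - p)(p - q_0)$, which is positive precisely on $(q_0, p_0)$, and the reciprocal of the coefficient of $Ent_m(|u|^p)$ is $pc_\nu(p)$; this proves \eref{L325}. Specializing to $p = 2$ and using Vieta's formulas ($p_0 + q_0 = p_0 q_0 = 2\nu/c$) recovers the constant $c_\nu = c/(1-2c/\nu)$ in \eref{L325h}.

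For the functional-analytic statements, specialize the entropy inequality to $u^2/\|u\|_2^2$ to deduce that the quadratic form $Q(u) := \int |\n u|^2 dm + \int V u^2 dm$ satisfies $Q(u) \ge (1 - 2c/\nu) \int |\n u|^2 dm - \log\|e^{-V}\|_\nu \|u\|_2^2$ on a form core such as $C_c^\infty(X)$. Since $\nu > 2c$, $Q$ is closable and semibounded, and its closure defines a self-adjoint operator $H$ bounded below. The hypothesis $p_1 \ge 2p_0/(p_0 - 2)$ forces $p_1 > 2$, so multiplication by $V$ sends $L^\infty$ into $L^{p_1} \subseteq L^2$; hence $\n^*\n + V$ is well-defined on $\D(\n^*\n) \cap L^\infty$ and coincides there with $H$. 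Essential self-adjointness on this domain can be established by truncating $V$ to the bounded $V_n = (-n) \vee V \wedge n$, invoking Kato--Rellich for essential self-adjointness of $\n^*\n + V_n$ on $\D(\n^*\n)$, and passing to the limit via strong resolvent convergence, justified by the uniform form bound together with dominated convergence against the fixed majorants $e^{\pm V}$.

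For the semigroup: using $Ent_m(|u|^p) \ge 0$ in \eref{L325} yields the $L^p$ dissipativity bound $\langle H u, u_p\rangle \ge -\log\|e^{-V}\|_\nu \|u\|_p^p$, whence $\tfrac{d}{dt}\|e^{-tH}f\|_p^p \le p \log\|e^{-V}\|_\nu \|e^{-tH}f\|_p^p$ and so $\|e^{-tH}\|_{L^p \to L^p} \le \|e^{-V}\|_\nu^t$ for $p \in (q_0, p_0)$. Extension of $e^{-tH}$ from $L^2$ to $L^q$ for $q \in [q_0, 2]$ then follows from density of $L^2$ in $L^q$ (since $m$ is a probability measure), restriction to $L^q$ for $q \in [2, p_0]$ is direct from the uniform $L^p$ bound, and the endpoint cases $q = q_0, p_0$ follow by Fatou from the open-interval result.

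The main obstacle I expect is the rigorous justification of the formal manipulations: (a) showing that $\D(\n^*\n) \cap L^\infty$ is actually a core for the form-defined $H$, rather than merely a subspace on which $\n^*\n + V$ acts meaningfully, and (b) extending \eref{L325} from a tractable core of smooth test functions to the full $L^p$ domain of $H$. The truncation argument must be set up so that the strong resolvent limit of $\n^*\n + V_n$ is identified with the specific form-closure $H$ of $\n^*\n + V$ and not with some other self-adjoint extension.
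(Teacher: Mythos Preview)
Your derivation of \eref{L325} for bounded (or sufficiently nice) $V$ is exactly the paper's: the $p$-LSI for $\n^*\n$ plus Young/Donsker--Varadhan for $-\nu V$ against $|u|^p$, then solving for $Ent_m(|u|^p)$ and factoring the quadratic in $p$. The $p=2$ specialization is also the same.

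Where you and the paper diverge is in the passage to unbounded $V$. You define $H$ via the form closure of $Q$, sketch a Kato--Rellich/strong-resolvent-convergence argument with truncations $V_n$, and then hope to push \eref{L325} to the full $L^p$ domain by density. You correctly flag the core issue (a) as the soft spot. The paper takes a more concrete route that avoids defining $H$ via the abstract form closure. It truncates $V$ to bounded $V_k$, sets $H_k = \n^*\n + V_k$, and uses the DuHamel formula
\[
e^{-tH_n} - e^{-tH_k} = \int_0^t e^{-(t-u)H_k}(V_k - V_n)e^{-uH_n}\,du
\]
together with the already-proved $L^{p_0}\to L^{p_0}$ bounds for $e^{-uH_n}$ and H\"older's inequality. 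This is exactly where the hypothesis $p_1 \ge 2p_0/(p_0-2)$ enters: since $p_1^{-1} + p_0^{-1} = 1/2$, one gets $\|(V_k - V_n)e^{-uH_n}f\|_2 \le \|V_k - V_n\|_{p_1}\|e^{-uH_n}f\|_{p_0}$, and $\|V_k - V_n\|_{p_1} \to 0$ drives the Cauchy estimate. The limiting semigroup $S(t)$ is then shown to be strongly continuous on each $L^q$, $q\in[q_0,p_0]$, by Fatou at the endpoints; its $L^2$ generator $H$ is identified with the closure of $\n^*\n + V$ by a second DuHamel comparison against $e^{-tH_0}$ and the explicit core $\D = (H+a)^{-1}L^{p_0}$, which lies in $D(H_0)\cap L^{p_0}$ by the $L^{p_0}$-bound on the resolvent.

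Finally, the paper's logical order for \eref{L325} in the unbounded case is the reverse of yours: it first obtains the hyperboundedness $\|e^{-tH}\|_{q\to p}\le \|e^{-V}\|_\nu^t$ for the limiting semigroup (inherited from the $V_k$ approximants via Fatou), and then invokes the equivalence of hyperboundedness and $p$-LSI from \cite{G1} to recover \eref{L325} for general $V$. Your plan to go the other way (prove \eref{L325} on a core, then integrate up to the semigroup bounds) is viable in principle but requires you to justify that your core is invariant enough under $e^{-tH}$ to run the $L^p$-norm differential inequality; the paper sidesteps this by working at the semigroup level throughout.
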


  \begin{corollary} \label{corhb2}  
  $($Non-standard hyperboundedness$)$.
Continuing the notation and assumptions of Theorem \ref{thmns2}, let  
\begin{align}
a_\nu &= \sqrt{1 - (2c/\nu)}   \ \ \ \ \ \text{and}       \label{L341d}\\
\tau(p) &= \frac{c}{2a_\nu} \log\frac{q_0^{-1} - p^{-1}}{p^{-1} - p_0^{-1}},\ \ \ q_0 < p < p_0.   \label{L505}
\end{align}  
Then
 \begin{align}
 \|e^{-tH} \|_{q \to p} \le  \| e^{-V}\|_\nu^t\ \    
                 \ \ \text{for}\ \ t \ge \tau(p) - \tau(q) \ \ \text{ if}\ \ \  q_0 < q \le p <p_0.     \label{L289}
 \end{align} 
 Moreover, if $ q \in [q_0, p_0]$  then   
 \beq
 \|e^{-tH} \|_{q\to q} \le \| e^{-V}\|_\nu^t\ \ \ \ 
                          \ \   \text{for all}\ \ t \ge 0. \label{L291}
 \eeq
 For fixed $q$ and $p$  in $(q_0, p_0)$ with $q\le p$ the function $t_{q,p}\equiv \tau(p) - \tau(q)$
 decreases as $\nu$ increases. 
 \end{corollary}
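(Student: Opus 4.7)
The plan is to run the classical Gross--Federbush flow argument driven by the one-parameter family of logarithmic Sobolev inequalities \eqref{L325}. Fix $u_0$ in a dense class of smooth, strictly positive, bounded functions, set $u(t)=e^{-tH}u_0$, let $p:[0,T]\to(q_0,p_0)$ be smooth and non-decreasing with $p(0)=q$, and define $F(t)=\|u(t)\|_{p(t)}$. Differentiating, using $\frac{d}{dt}|u|^{p}=p u_p\dot u+\dot p|u|^{p}\log|u|$ and rewriting $\int|u|^p\log|u|\,dm$ in terms of $Ent_m(|u|^p)$, yields the standard identity
\begin{equation*}
\frac{d}{dt}\log F(t)=-\frac{1}{F^{p}}\langle Hu,u_p\rangle+\frac{\dot p(t)}{p(t)^{2}F^{p}}\,Ent_m\bigl(|u|^{p(t)}\bigr).
\end{equation*}
Inserting \eqref{L325} and choosing
\begin{equation*}
\dot p(t)=\frac{p(t)}{c_\nu(p(t))}=\frac{(p_0-p(t))(p(t)-q_0)}{\nu}
\end{equation*}
exactly cancels the $\langle Hu,u_p\rangle$ term and leaves $\frac{d}{dt}\log F(t)\le\log\|e^{-V}\|_\nu$. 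Integrating gives $\|u(t)\|_{p(t)}\le\|u_0\|_q\,\|e^{-V}\|_\nu^{\,t}$, and this extends to all $u_0\in L^q(m)$ by density combined with the $L^q$-boundedness of $e^{-tH}$ from Theorem \ref{thmns2}.

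To recover the critical time, separate variables and use partial fractions:
\begin{equation*}
t(p)-t(q)=\int_q^p\frac{\nu\,dr}{(p_0-r)(r-q_0)}=\frac{\nu}{p_0-q_0}\log\frac{(p-q_0)(p_0-q)}{(p_0-p)(q-q_0)}.
\end{equation*}
Because $q_0,p_0$ are the roots of $p^{2}-(2\nu/c)p+(2\nu/c)=0$, Vieta gives $q_0+p_0=q_0 p_0=2\nu/c$, hence $p_0-q_0=(2\nu/c)a_\nu$ and $\nu/(p_0-q_0)=c/(2a_\nu)$. Writing $\log\frac{p-q_0}{p_0-p}=\log\frac{q_0^{-1}-p^{-1}}{p^{-1}-p_0^{-1}}-\log(p_0/q_0)$ shows $t(p)-t(q)=\tau(p)-\tau(q)$ with $\tau$ as in \eqref{L505}, proving \eqref{L289}.

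For the $L^q\to L^q$ estimate \eqref{L291} the same identity with $\dot p\equiv 0$ discards the entropy term; applying \eqref{L325} together with $Ent_m(|u|^p)\ge 0$ gives $-\langle Hu,u_p\rangle\le\log\|e^{-V}\|_\nu\,F^{p}$, so again $\frac{d}{dt}\log F\le\log\|e^{-V}\|_\nu$, and the endpoint values $q\in\{q_0,p_0\}$ follow by continuity. Monotonicity of $t_{q,p}$ in $\nu$ is visible from the integrand once Vieta is applied:
\begin{equation*}
\frac{\nu}{(p_0-r)(r-q_0)}=\frac{1}{(2/c)(r-1)-r^{2}/\nu},
\end{equation*}
whose denominator strictly increases with $\nu$ on $r>1\supset(q_0,p_0)$, forcing the integral to decrease. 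I expect the only real technical obstacle to be the standard one of rigorously differentiating $t\mapsto F(t)^{p(t)}$ and managing the sign conventions in $u_p$; these are overcome by working on the dense class of smooth, strictly positive, bounded $u_0$, for which $u(t)$ remains positive (by positivity-preservation of the Schr\"odinger semigroup) and lies in every relevant $L^r$, before extending the inequalities by density.
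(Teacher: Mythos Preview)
Your proposal is correct and follows essentially the same approach as the paper. The paper invokes the abstract Gross framework from \cite{G1} (the ODE $c_\nu(p)\,dp/dt=p$ and the norm bound from \cite[Eq.~(2.5)]{G1}) rather than writing out the differentiation of $\|u(t)\|_{p(t)}$, and it obtains the monotonicity in $\nu$ by tracking $a_\nu$, $p_0$, $q_0$ separately in the integrand of \eqref{L333a}; your direct rewriting of the integrand as $\bigl((2/c)(r-1)-r^{2}/\nu\bigr)^{-1}$ via Vieta is a tidy equivalent. The endpoint cases $q\in\{q_0,p_0\}$ in \eqref{L291} are handled in the paper by Fatou and monotone convergence rather than a bare appeal to continuity, which is the one place where your sketch could be tightened.
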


\begin{remark} {\rm   The function $\tau(p)$ does not give  
 the standard Nelson time to contraction in \eref{L289}. The Nelson time is 
 determined by   $\tau_0(p) = (c/2) \log(p-1)$.
(See e.g. \cite{G1}.) But if $V$ is
 bounded below, then we may let $\nu \uparrow \infty$ and, as  we will see in Section \ref{secbelow},
    $\tau(p)-\tau(q) \downarrow \tau_0(p) - \tau_0(q)$.   
}
\end{remark}

 \begin{corollary} \label{corEU} Under the assumptions of Theorem \ref{thmns2}, $\l_0 \equiv$ inf spectrum $H$ is an 
 eigenvalue of multiplicity one. It has an eigenvector $\psi$ which is strictly positive a.e..
 \end{corollary}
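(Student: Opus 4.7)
The plan is to combine positivity improvement of the semigroup $e^{-tH}$ with the non-standard hyperboundedness of Corollary~\ref{corhb2} and apply a Perron-Frobenius argument in the spirit of \cite{G1}.

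First, I would verify that $e^{-tH}$ is positivity preserving. The Dirichlet semigroup $e^{-t\n^*\n}$ preserves positivity by construction, and $e^{-sV}$ is multiplication by a strictly positive function (note that $V$ is a.e.\ finite since $V\in L^{p_1}(m)$). The essential self-adjointness established in Theorem~\ref{thmns2} justifies the Trotter product formula, so $e^{-tH}$ is positivity preserving as well. To upgrade this to \emph{positivity improvement}, one uses that on a connected Riemannian manifold the integral kernel of $e^{-t\n^*\n}$ is strictly positive, so a single Trotter factor already sends any $f\ge 0$ with $f\not\equiv 0$ to a strictly positive function, and strict positivity passes to the limit.

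Next, I would produce the ground state. Pick $f_0>0$ with $\|f_0\|_2=1$ and set $\psi_t:=e^{-tH}f_0/\|e^{-tH}f_0\|_2$, which is strictly positive a.e.\ by the previous step. Corollary~\ref{corhb2} bounds $\|e^{-tH}\|_{L^{q_0}\to L^{p_0}}$, and the Rayleigh quotients $\<H\psi_t,\psi_t\>$ are monotone decreasing in $t$ to $\l_0$. A weak compactness argument in $L^2\cap L^{p_0}$ extracts a nonnegative limit $\psi$ along some $t_n\to\infty$, and lower semicontinuity of the quadratic form of $H$ together with the definition $\l_0=\inf\sigma(H)$ identifies $\psi$ as an eigenfunction of $H$ at $\l_0$.

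With $\l_0$ now realized as an eigenvalue, for $t_0$ large enough $T:=e^{-t_0 H}$ is a bounded self-adjoint positivity-improving operator whose operator norm $e^{-t_0\l_0}$ sits in its point spectrum. The standard Perron-Frobenius theorem for positivity-improving self-adjoint operators then forces this top eigenvalue to be simple and its eigenvector to be strictly positive a.e., which translates back to simplicity of $\l_0$ and strict positivity of $\psi$.

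I expect the existence step to be the main obstacle. Hypercontractivity alone does not force attainment of the infimum of the spectrum, and the non-standard form of Corollary~\ref{corhb2}, with its $\|e^{-V}\|_\nu^t$ prefactor, requires careful tracking of how the $L^{p_0}$-bound on $\psi_t$ competes with the $L^2$-normalization as $t\to\infty$. The strict inequality $\nu>2c$ from hypothesis~\eref{mt2} should be essential here, as it is exactly what makes the defective log Sobolev constant $c_\nu$ in \eref{L329a} finite and so ensures that the effective semigroup contraction dominates the potential-driven growth in the long run.
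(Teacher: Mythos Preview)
Your outline has the right ingredients but the existence step contains a genuine gap, and your uniqueness route differs from the paper's in an instructive way.

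\textbf{Existence.} You are right that this is the main obstacle, and your sketch does not close it. The difficulty is that weak $L^2$-limits of the normalized flow $\psi_t$ can be zero: if $\lambda_0$ were not an eigenvalue, the spectral measure would have no atom at $\lambda_0$ and one would get $\psi_t\rightharpoonup 0$. Lower semicontinuity of the form then tells you nothing. The paper does not argue by weak compactness of $\psi_t$; instead it invokes \cite[Theorem~1]{G1972}, which says that if a self-adjoint, positivity-preserving operator $e^{-tH}$ is bounded from $L^2$ into some $L^p$ with $p>2$, then $\|e^{-tH}\|_{2\to 2}$ is an eigenvalue of finite multiplicity. That is exactly the compactness substitute one needs, and Corollary~\ref{corhb2} supplies the $L^2\to L^{p_2}$ bound for any $p_2\in(2,p_0)$. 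Your remark about $c_\nu$ and the prefactor $\|e^{-V}\|_\nu^t$ is beside the point here: the bound $\|e^{-tH}\|_{2\to p_2}<\infty$ for a single $t$ is what is used, not any contraction statement as $t\to\infty$.

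\textbf{Uniqueness.} Your Perron--Frobenius argument requires positivity \emph{improving}, and the step ``strict positivity passes to the limit'' in the Trotter product is not justified: an $L^2$-limit of a.e.\ strictly positive functions is only a.e.\ nonnegative. One can repair this on a connected finite-dimensional manifold by other means, but the paper deliberately avoids kernel estimates and positivity improving altogether, since it wants dimension-independent arguments that extend to infinite-dimensional settings (see the paragraph preceding Theorem~\ref{thmesa2}). Instead it runs an ergodicity argument (Lemma~\ref{lemU2} and Theorem~\ref{thmU2}): from any eigenfunction one extracts a nonnegative eigenfunction $\phi$, shows that $M_{\chi_{\{\phi=0\}}}$ commutes with $e^{-tH}$, and then uses only $H_0 1=0$ together with the one-dimensionality of the null space of $H_0=\nabla^*\nabla$ (a consequence of the LSI \eqref{mt1} via the Rothaus--Simon theorem) to force $\chi_{\{\phi=0\}}$ to be constant. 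This replaces positivity improving by the hypothesis that the unperturbed generator has ergodic ground state, which is already encoded in \eqref{mt1}.
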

 
  The proofs will be given in Section \ref{sechyperp}.  
  We will also establish upper bounds on $\|\psi\|_p$ for $2 < p <p_0 $ and lower bounds on $\|\psi\|_r$ for
  $0 < r < 2$.

\subsection{A product of moments} \label{secpms}
 
 The Schr\"odinger equation for the ground state $\psi$ can be written in WKB form simply.
Let $F = - \log \psi$. Since $\psi$ is strictly positive almost everywhere, $F$  
is real valued almost everywhere. A computation, which will be sketched in Remark \ref{remWKB},
 yields 
 \begin{align}
    \n^*\n F + |\n F|^2 =  V - \l_0.\ \ \ \text{WKB}         \label{wkb}
    \end{align} 
    Suppose that $v$ is a real valued function on $\R$. Multiply \eref{wkb} by the composed function
    $v\circ F$ and, using $\n (v\circ F) = v'(F) \n F$, integrate over $X$ to find informally, 
    after an integration by parts      
    \begin{align}
\int_X (v'(F) + v(F)) |\n F|^2 dm = \int_X v(F) (V -\l_0) dm \ \ \ \ \text{Aida's identity}   \label{W30} 
\end{align}  
A more precise derivation will be given in Theorem \ref{thmA3}.
    Aida used this identity cf. \cite[Equ. (3.26) in Lemma 3.3]{Aida2001} 
    to derive information about the distributions 
    of $|\n F|$, $F$ and  $\psi$, which was crucial for 
    his proof of a spectral gap.

  We will exploit Aida's identity in a different way. Suppose that $\phi$ is a real valued function
  on $\R$. 
 We may apply the  logarithmic Sobolev inequality \eref{mt1} to the composed function $\phi \circ F$
to find
  \begin{align}
  Ent_m((\phi\circ F)^2) \le 2c \int_X (\phi'\circ F)^2 |\n F|^2 dm,       \label{pm5}
 \end{align}   
 wherein we have used $\n (\phi\circ F) = (\phi'\circ F) \n F$.
    If $\phi$ and $v$ are  chosen in \eref{pm5} and \eref{W30} so that the two integrands involving $|\n F|^2$
    are equal then Aida's identity, together with \eref{pm5} give a bound on $Ent_m((\phi\circ F)^2)$ 
in terms of the potential. In this way the quadratic nonlinearity in the WKB equation meshes well with
the use of logarithmic Sobolev inequalities. We will show that this procedure can be carried 
out for several different kinds
of functions $\phi$. In particular, taking $\phi(s) = e^{ts}$ (giving $\phi\circ F = \psi^{-t}$), we will derive
entropy bounds on $\psi^{-t}$ for $t $ in an 
 open interval containing zero. 
We will then derive moment bounds from these entropy bounds using Herbst's method. 
The interval  of $t$ for which this procedure works depends on $\ka$ in the 
condition \eref{mt2}, and on the solutions to the quadratic equation \eref{W752}. 
 We will show by example in Section \ref{secpp} that the peculiar interval of $t$ for which this procedure
works is not an artifact of the proof. The moment bounds that we arrive at take the form of
a bound on a product of moments, as in the following simplified theorem. 

\begin{theorem} \label{thmmp0} 
Suppose that the hypotheses of Theorem \ref{thmns2} hold. 
Let $\ka >0$. Assume that
\begin{align}
\|e^V\|_\ka < \infty.                      \label{W751}
\end{align}
 Let $s_0$ and $-r_0$  be the positive  and negative roots of the quadratic equation
 \begin{align}
 t^2 - (2\ka/c)(t+1) = 0.                                                                                                \label{W752}
 \end{align}
 Then there is a function $f:(0,r_0)\times (0, s_0) \to [0, \infty)$ such that
\begin{align}
\|\psi\|_r \| \psi^{-1}\|_s \le   \| e^{V-\l_0}\|_\kappa^{f(r,s)}, \ \ \ 0 < r < r_0,\ \ 0 < s <s_0.     \label{W753h}
\end{align}
The function $f(r,s)$ will be given explicitly in Theorem \ref{thmmp1}.
\end{theorem}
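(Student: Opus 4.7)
My strategy follows the route indicated in Remark~\ref{remsum}: combine the logarithmic Sobolev inequality \eqref{mt1} for $m$, applied to the composition $\phi\circ F$, with Aida's identity \eqref{W30}, applied to a test function $v\circ F$, choosing $\phi$ and $v$ so that the integrands carrying $|\nabla F|^{2}$ match up to a scalar. Take $\phi(x)=e^{sx/2}$ and $v(x)=e^{sx}$ for a real parameter $s$. Then $(\phi\circ F)^{2}=\psi^{-s}$, $(\phi'\circ F)^{2}=(s^{2}/4)\psi^{-s}$, and $v'+v=(s+1)e^{sx}$, so the integrands are proportional. Eliminating $\int_{X}\psi^{-s}|\nabla F|^{2}\,dm$ between \eqref{pm5} and \eqref{W30} yields the pivotal estimate
\begin{equation}
\operatorname{Ent}_{m}(\psi^{-s}) \le \frac{cs^{2}}{2(s+1)}\int_{X}\psi^{-s}(V-\lambda_{0})\,dm, \label{planP}
\end{equation}
initially valid for $s>-1$.

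To control the $V$-integral on the right by the hypothesis $\|e^{V}\|_{\kappa}<\infty$, I apply the Donsker--Varadhan variational inequality
\[
\int_{X}fg\,dm \le \operatorname{Ent}_{m}(f)+\Bigl(\int_{X}f\,dm\Bigr)\log\int_{X}e^{g}\,dm,\qquad f\ge 0,
\]
with $f=\psi^{-s}$ and $g=\kappa V$. This gives $\int\psi^{-s}V\,dm\le\kappa^{-1}\operatorname{Ent}_{m}(\psi^{-s})+(\int\psi^{-s}\,dm)\log\|e^{V}\|_{\kappa}$. Inserting this into \eqref{planP} and absorbing the entropy term on the left is possible provided $cs^{2}/(2\kappa(s+1))<1$, i.e., $cs^{2}-2\kappa s-2\kappa<0$. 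This is precisely the quadratic condition \eqref{W752} and pins down the admissible interval $s\in(-r_{0},s_{0})$. The identities $s_{0}-r_{0}=2\kappa/c=s_{0}r_{0}$ give $r_{0}<1$, so $s+1>0$ throughout, and (since $\lambda_{0}\le\int V\,dm\le\log\|e^{V}\|_{\kappa}$ by the Rayleigh and Jensen inequalities) $\log\|e^{V-\lambda_{0}}\|_{\kappa}\ge 0$; both facts keep all signs unambiguous. The rearranged inequality reads
\[
\operatorname{Ent}_{m}(\psi^{-s}) \le \frac{c\kappa s^{2}}{2\kappa(s+1)-cs^{2}}\bigl(\log\|e^{V-\lambda_{0}}\|_{\kappa}\bigr)\int_{X}\psi^{-s}\,dm.
\]

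Next I apply Herbst's method. Setting $L(s)=\log\int_{X}e^{sF}\,dm$, the identity $\operatorname{Ent}_{m}(e^{sF})/\int e^{sF}\,dm=sL'(s)-L(s)$ converts the previous display into the differential inequality $(L(s)/s)'\le(C(s)/s^{2})\log\|e^{V-\lambda_{0}}\|_{\kappa}$, where $C(s)/s^{2}=c\kappa/(2\kappa(s+1)-cs^{2})$. Integrating outward from $s=0$, where $L(s)/s\to\int F\,dm$ by l'H\^opital, yields an upper bound on $L(s)/s=\log\|\psi^{-1}\|_{s}$ for $s\in(0,s_{0})$ and a lower bound on $L(-r)/(-r)$ (equivalently, an upper bound on $\log\|\psi\|_{r}=-L(-r)/(-r)$) for $r\in(0,r_{0})$. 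Adding the two bounds cancels the unknown constant $\int F\,dm$ and produces
\[
\log\bigl(\|\psi\|_{r}\|\psi^{-1}\|_{s}\bigr) \le \bigl(\log\|e^{V-\lambda_{0}}\|_{\kappa}\bigr)\,c\kappa\int_{-r}^{s}\frac{du}{2\kappa(u+1)-cu^{2}},
\]
which is \eqref{W753h} with the explicit choice $f(r,s)=c\kappa\int_{-r}^{s}\bigl(2\kappa(u+1)-cu^{2}\bigr)^{-1}\,du$.

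The principal obstacle is rigor: \eqref{W30} is derived only informally and the test function $v(F)=\psi^{-s}$ need not be integrable, nor is $\operatorname{Ent}_{m}(\psi^{-s})$ known to be finite a priori, so inserting these directly into the identity is circular. I plan to regularize by replacing $v(x)=e^{sx}$ by a bounded cutoff (equivalently, truncating $F$), deriving the entropy bound and running Herbst's integration for the truncated quantities---which are automatically finite---and then passing to the limit, using the fact that the right-hand sides are uniform in the cutoff and monotone convergence applies. Once the precise version of Aida's identity furnished by Theorem~\ref{thmA3} is at hand, this regularization step should be essentially routine.
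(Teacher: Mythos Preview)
Your proposal is correct and follows essentially the same route as the paper: the combination of the LSI \eqref{mt1} with Aida's identity via the exponential test functions, the Young/Donsker--Varadhan step that produces the quadratic condition \eqref{W752}, the Herbst integration across $s=0$ so that $\int F\,dm$ cancels in the product $\|\psi\|_{r}\|\psi^{-1}\|_{s}$, and the recognition that a truncation of $F$ is needed to make the entropy and Aida steps rigorous, are exactly the ingredients of Sections~\ref{secEbound}--\ref{secpfmp}. Your explicit $f(r,s)=c\kappa\int_{-r}^{s}(2\kappa(u+1)-cu^{2})^{-1}\,du$ coincides with the paper's exponent $\ell(a)+\ell(\sigma)$ via Sublemma~\ref{sublem5}; the only point you leave implicit is the small inequality $u\le v+v'$ (Lemma~\ref{lemeb3}) needed because truncating $F$ introduces a factor $\phi'(F)^{2}$ in the LSI energy term that must be matched to Aida's identity, but this is indeed routine once noticed.
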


\begin{remark}   {\rm(Upper bound on $\|\psi^{-1}\|_s$). 
 Typical perturbation proofs of a defective LSI for the ground state measure $m_\psi$ rely on 
some information about
the behavior of $\psi$ in the regions where $\psi$ is large or 
 where $\psi$ is close to zero. For example
the classical condition of Deuschel-Holley-Stroock \cite{HS1987,DS1990} 
requires that $F \equiv - \log \psi$ be bounded both above and below;
equivalently,
 $0< \ep \le \psi \le K < \infty$ on all of $X$ for some $\ep, K$.  Aida     
 relaxed the condition that $\psi$ be 
bounded away from zero by  assuming instead that
$\psi^{-1} \in L^p(m)$ for some $p >0$,  along with hyperboundeness 
assumptions on $e^{-tH}$ in the spaces $L^p(m)$, 
cf.  \cite[Lemma 4.12]{Aida2001}.  
He proved, moreover, that these hypotheses actually hold  for 
finite and infinite dimensional Gauss measure if
 $E(e^{q V}) < \infty$ for sufficiently large $q$, cf. \cite[Lemma 5.5]{Aida2001}.

We will derive an upper bound on $\| \psi^{-1}\|_s$,  depending only on  
$c,\ka, \nu$ and $M$,   by combining \eref{W753h} 
 with the  lower bound on  $\|\psi\|_r$ derived  in Section \ref{seculb}.  The upper bound on $\|\psi^{-1}\|_s$
 is the key input to the derivation of a DLSI.
 }
 \end{remark}

\subsection{A defective LSI for $m_\psi$.}

\begin{theorem}    Assume that \eref{mt1} and \eref{mt2} hold.
 Let $b_\ka = \sqrt{1 +(2c/\ka)}$ and define $c_\nu$ as in  \eref{L329a}.  Let $a > c_\nu b_\ka$.
Then there exists a number $D$, depending only on $c,\nu,\ka, M$ and the choice of $a$,
 such that
\begin{align}
Ent_{m_\psi}(u^2) \le 2a \int_X |\n u|^2 dm_\psi + D \|u\|_{L^2(m_\psi)}^2  \label{D1}
\end{align}
\end{theorem}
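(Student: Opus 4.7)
The plan is to pull the problem back to $L^2(m)$ via the unitary $U\colon u \mapsto u\psi$ and combine the defective LSI \eref{L325h} for $H$ with the Gibbs variational principle and the moment bound $\|\psi^{-1}\|_\alpha < \infty$ furnished by Theorem~\ref{thmmp0}. The key algebraic coincidence will be that the positive root $s_0$ of the quadratic \eref{W752} turns the unavoidable ``loss'' factor in the Gibbs step into precisely $b_\ka$.

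Concretely, for $u$ in a dense subspace of the form domain of $\n_\psi^*\n$, set $f = u\psi = Uu$. Direct computation yields the pullback identity $Ent_{m_\psi}(u^2) = Ent_m(f^2) - 2\int_X f^2 \log\psi\, dm$, and the intertwining $U^{-1}(H-\l_0)U = \n_\psi^*\n$ gives $\<Hf,f\>_{L^2(m)} = \int_X |\n u|^2\, dm_\psi + \l_0 \|u\|_{L^2(m_\psi)}^2$. Applying the defective LSI \eref{L325h} to $f$ produces a bound on $Ent_m(f^2)$ of the form $2c_\nu \int_X |\n u|^2\, dm_\psi + 2c_\nu(\l_0 + \log\|e^{-V}\|_\nu)\|u\|_{L^2(m_\psi)}^2$. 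Separately, for any $\alpha \in (0,s_0)$, the Gibbs variational principle on $(X,m)$ with the exponent $-\alpha\log\psi$ gives $-2\int_X f^2 \log\psi\, dm \le (2/\alpha)\, Ent_m(f^2) + 2\|u\|_{L^2(m_\psi)}^2 \log\|\psi^{-1}\|_\alpha$, and finiteness of $\int e^{-\alpha\log\psi}\, dm = \|\psi^{-1}\|_\alpha^\alpha$ is exactly what Theorem~\ref{thmmp0} supplies for $\alpha < s_0$. Assembling these three pieces yields $Ent_{m_\psi}(u^2) \le 2c_\nu(1+2/\alpha)\int_X |\n u|^2\, dm_\psi + D\|u\|_{L^2(m_\psi)}^2$ with the explicit remainder $D = 2c_\nu(1+2/\alpha)(\l_0 + \log\|e^{-V}\|_\nu) + 2\log\|\psi^{-1}\|_\alpha$.

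It remains to match the coefficient $c_\nu(1+2/\alpha)$ to any prescribed $a > c_\nu b_\ka$. Using \eref{W752}, a short calculation shows that $1 + 2/s_0 = \sqrt{1+2c/\ka} = b_\ka$, so the map $\alpha \mapsto c_\nu(1+2/\alpha)$ decreases continuously to $c_\nu b_\ka$ as $\alpha \uparrow s_0$; thus any $a > c_\nu b_\ka$ is achievable by an admissible $\alpha < s_0$. For this $\alpha$, $\|\psi^{-1}\|_\alpha$ is bounded purely in terms of $c,\ka,\nu,M$ by combining Theorem~\ref{thmmp0} with the lower bound on $\|\psi\|_r$ from Section~\ref{seculb}, while $\l_0$ is controlled by those same four quantities via the hyperbound \eref{L291}. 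Collecting, $D$ depends only on $c,\ka,\nu,M$ and the choice of $a$. The principal technical obstacle I foresee is purely regulatory: the Gibbs step and the entropy identity are to be applied when $f^2 \log\psi$ is not manifestly integrable, which calls for truncating $\log\psi$ (e.g.\ replacing $\psi$ by $\psi \vee \varepsilon$) and passing to the limit by dominated convergence controlled by the $L^\alpha(m)$-bound on $\psi^{-1}$; and the identity for $\<Hf,f\>$ must be extended from the algebraic core used in Section~\ref{sechyperp} to the whole form domain by closedness. Both are routine once the $L^p(m)$-bounds on $\psi^{\pm 1}$ and essential self-adjointness are in hand.
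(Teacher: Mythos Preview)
Your proof is correct and arrives at exactly the same inequality as the paper's Theorem~\ref{thmDLSI2}, namely \eref{gs805}, with the identical relation $a=c_\nu(1+2/s)$ between the Sobolev coefficient and the exponent $s$ in $\|\psi^{-1}\|_s$. The organization differs slightly: the paper starts from the raw LSI \eref{mt1} and the ground state identity to reach \eref{gs720}, then applies two Young-type estimates in $L^1(m_\psi)$ (Lemma~\ref{lemDLS5}) to handle the $(\l_0-V)$ and $F$ terms separately; you instead start from the pre-packaged DLSI \eref{L325h} for $H$ (which has already absorbed the $V$ term via one Young step) and then apply a single Young step in $L^1(m)$ to the $-\log\psi$ term. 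Unwinding the algebra shows the two routes coincide line for line. Your version is arguably more streamlined since it avoids the decomposition in \eref{gs606b}.

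One small slip: you cite \eref{L291} to control $\l_0$, but \eref{L291} yields only the Federbush \emph{lower} bound $\l_0\ge -\log\|e^{-V}\|_\nu$. For the defect $D$ you need the \emph{upper} bound $\l_0\le \log\|e^{V}\|_\ka$, which is \eref{s2} (obtained from the variational principle with test function $1$, or from Aida's identity \eref{W31}); this gives $\l_0+\log\|e^{-V}\|_\nu\le \log M$ and hence the claimed dependence of $D$ on $c,\ka,\nu,M,a$ alone.
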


A more detailed version of this theorem, showing the dependence of $D$ on the various 
parameters, and in particular
its dependence on our bounds of the norms $\|\psi^{-1}\|_s$,  is given in 
Section \ref{secdlsi}.

\subsection{Spectral gap}

To complete the proof of Theorem \ref{thmM} we will show that the Dirichlet form operator for $m_\psi$
has a spectral gap. A theorem of Rothaus then shows that the defect in \eref{D1} can be removed 
at the cost of increasing the Sobolev coefficient $2a$. 

In case the defect in \eref{D1} is sufficiently small,
a theorem of F-Y. Wang  can be used to show  
 that there is a spectral gap.
In general, a proof that $m_\psi$ has a spectral gap depends on data encoded in $\psi$ and
not just on the size of $D$ and $a$ in \eref{D1}. 
 We adapt a method of Aida, \cite{Aida2001},
which produces  a spectral gap dependent on the distribution of $\psi$ and its gradient. With the help
of quantitative bounds on $D$ and $a$ in  \eref{D1} we then obtain quantitative bounds on 
the spectral gap,  and, by Rothaus'  theorem,  a 
quantitative bound on the Sobolev constant $c_1$ in \eref{mt5}. This will be carried out in Section \ref{secsg}.

\section{Hyperboundedness of $\n^*\n +V$ in $L^p(m)$} \label{sechyperp}

\subsection{Interval of validity } \label{seciv1}

\begin{lemma} \label{lemiv1} 
  $($Interval of validity$)$. 
  Suppose that $1 < \nu/(2c) < \infty$. Define $a_\nu$ by \eref{L341d}.  
Then the quadratic equation 
\beq
p^2 - (2\nu/c) (p-1)  =0         \label{L313} 
\eeq
has two real roots, $q_0 < p_0$, which are given by
\begin{align}
&p_0 = (\nu/c)\( 1+ a_\nu\),\ \ \ \ \ q_0 = (\nu/c)\( 1- a_\nu\).   \label{L313q}
\end{align}
 They satisfy the following identities.
\begin{align}
&(2\nu/c) (p-1)  -p^2 = (p_0 -p)(p - q_0)\ \ \forall\ \ p \in \R. \label{L313a}\\
&p_0^{-1} = (1/2)\(1-a_\nu\), \ \ \ q_0^{-1} =(1/2)\(1+ a_\nu\) \label{L313qi} \\
&(1/p_0) + (1/q_0) = 1.                            \label{L313b}\\
&1 < q_0 < 2 < p_0 < \infty .                    \label{L313d}\\
&(p_0 -2) (2-q_0) = (2\nu/c) a_\nu^2.      \label{L313c}   \\  
& \nu/(p_0 -q_0)    =c/(2a_\nu).               \label{L313j}\\
&(1/2) - (1/p_0) =(a_\nu/2) = (1/q_0) - (1/2) .      \label{L325p}
\end{align}
In particular $q_0 $ and $p_0$ are conjugate indices. Define $\tau(p)$ by  \eref{L505}. 
Then
\begin{align}
&a) \ \tau\ \text{is a strictly increasing function on}\ (q_0, p_0). \label{L325s} \\
&b) \lim_{p\uparrow p_0} \tau(p) = + \infty, \ \ \ \lim_{p\downarrow q_0}\tau(p) = - \infty \label{L325v} \\
&c) \ \tau(2) = 0.          \label{L325t} \\
&d) \ \tau(p') = - \tau(p)\ \ \text{if} \ \ p' =p/(p-1). \label{L325u}
\end{align}
\end{lemma}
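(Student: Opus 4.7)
The entire lemma is essentially bookkeeping about a quadratic and a logarithmic substitution, so the plan is to carry out each identity in turn in the order that makes the algebra cheapest, rather than looking for a conceptual mechanism.

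First I would rewrite \eqref{L313} as $p^2 - (2\nu/c)p + (2\nu/c) = 0$ and apply the quadratic formula. The discriminant is $(2\nu/c)^2 - 4(2\nu/c) = (2\nu/c)^2(1 - 2c/\nu) = (2\nu/c)^2 a_\nu^2$, which is positive precisely because $\nu/(2c) > 1$. The two roots are then $(\nu/c)(1 \pm a_\nu)$, giving \eqref{L313q}. Since both sides of \eqref{L313a} are monic degree-$2$ polynomials in $p$ with leading coefficient $-1$ and the same roots $q_0, p_0$, they agree identically. From \eqref{L313q}, Vieta gives $p_0 + q_0 = p_0 q_0 = 2\nu/c$, which is exactly \eqref{L313b} after dividing by $p_0 q_0$. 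The formulas \eqref{L313qi} come from rationalizing $1/(1 \pm a_\nu) = (1 \mp a_\nu)/(1 - a_\nu^2)$ and using $1 - a_\nu^2 = 2c/\nu$.

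Next I would locate the roots to get \eqref{L313d}. Evaluating $P(p) := p^2 - (2\nu/c)(p-1)$ at $p = 1$ yields $P(1) = 1 > 0$, and at $p = 2$ yields $P(2) = 4 - 2\nu/c < 0$ since $\nu > 2c$. Because $P$ opens upward and has real roots $q_0 < p_0$, the sign pattern forces $1 < q_0 < 2 < p_0$. Identity \eqref{L313c} follows by setting $p = 2$ in \eqref{L313a}: the left side equals $2\nu/c - 4 = (2\nu/c) a_\nu^2$. Identity \eqref{L313j} is the direct subtraction $p_0 - q_0 = (2\nu a_\nu)/c$. And \eqref{L325p} is just $1/2 - (1-a_\nu)/2 = a_\nu/2$ from \eqref{L313qi}.

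Finally, for the assertions about $\tau$, the trick is the algebraic identity
\begin{align}
\frac{q_0^{-1} - p^{-1}}{p^{-1} - p_0^{-1}} = \frac{(p - q_0)/(p q_0)}{(p_0 - p)/(p p_0)} = \frac{p_0}{q_0} \cdot \frac{p - q_0}{p_0 - p},  \notag
\end{align}
so that
\begin{align}
\tau(p) = \frac{c}{2a_\nu}\log\frac{p - q_0}{p_0 - p} + \text{const}.  \notag
\end{align}
On $(q_0, p_0)$ the derivative of the log term is $1/(p-q_0) + 1/(p_0 - p) > 0$, giving \eqref{L325s}, and the boundary limits give \eqref{L325v}. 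Plugging $p = 2$ into the original formula for $\tau$ and using \eqref{L325p} makes numerator and denominator each equal $a_\nu/2$, so $\tau(2) = 0$, which is \eqref{L325t}. For the involution \eqref{L325u}, substitute $1/p' = 1 - 1/p$ into the defining quotient: using $1/q_0 + 1/p_0 = 1$ we get $1/q_0 - 1/p' = 1/p - 1/p_0$ and $1/p' - 1/p_0 = 1/q_0 - 1/p$, which swaps numerator and denominator and flips the sign of the log.

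There is no real obstacle here — if anything, the only thing to be careful about is keeping the two rational-function manipulations for $\tau$ clean, since four different quantities $p, 1/p, q_0, p_0$ are in play simultaneously; factoring out $(p_0/q_0)$ up front to decouple the constant from the monotone piece is what keeps the remaining calculations transparent.
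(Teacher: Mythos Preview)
Your proof is correct and follows essentially the same route as the paper: quadratic formula for \eqref{L313q}, factorization for \eqref{L313a}, Vieta/rationalization for \eqref{L313b}--\eqref{L313qi}, evaluation at $p=2$ for \eqref{L313c}, and the substitution $p^{-1}\mapsto 1-p^{-1}$ for \eqref{L325u}. The only cosmetic differences are that you locate the roots via the sign of $P$ at $p=1,2$ (the paper reads \eqref{L313d} off \eqref{L313qi}) and you spell out the $(p-q_0)/(p_0-p)$ rewrite to justify monotonicity (the paper just says ``clear from the definition'').
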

\begin{proof} By the quadratic formula the quadratic equation \eref{L313} has two positive real roots 
   given by     $p = (\nu/c)\( 1\pm \sqrt{1 -2c/\nu}\)$. The roots are therefore correctly given by \eref{L313q}, in view of the definition  \eref{L341d}.    The inverse of the roots are are therefore given by
 $ 1/p = (1/2)(1\mp \sqrt{1 -2c/\nu}\)$, from which follows \eref{L313qi}.     
\eref{L313b} and \eref{L313d} follow from \eref{L313qi} 
while \eref{L313a} just restates that $q_0, p_0$ are the roots of \eref{L313}. 
     Insert $p=2$ in \eref{L313a}
to find  $(p_0 -2) (2-q_0) = 2\nu/c - 4=  (2\nu/c) a_\nu^2$, which is  \eref{L313c}. 
\eref{L313q} shows  that $p_0 -q_0 =  (2\nu/c)a_\nu $,  which is  \eref{L313j}. 
\eref{L325p} follows from \eref{L313qi}.

That $q_0$ and $p_0$ are conjugate indices
 follows from \eref{L313b}, but also from writing the equation \eref{L313} in the form \eref{L313m},
 which exhibits the equation as self conjugate.
 
 Concerning the function $\tau$ defined in \eref{L505}, the properties   \eref{L325s} and \eref{L325v} are  
 clear from the definition, \eref{L505}. 
 \eref{L325t} follows from \eref{L325p}. 
 Replacing $p^{-1}$ by $1- p^{-1}$ in the numerator and denominator of \eref{L505} 
 interchanges the numerator and denominator, in view of \eref{L313b}. This proves  \eref{L325u}.
\end{proof}

\subsection{Proof of non-standard  hyperboundedness for \linebreak bounded $V$} \label{secnsh}

 We assume in this subsection that $V$ is bounded. $\n^*\n$ denotes
the self-adjoint Dirichlet form operator for $m$. The Schr\"odinger operator 
$H \equiv \n^*\n + V$ is then self-adjoint on the domain of $\n^*\n$ and there are 
no serious domain issues.  We will prove all of the inequalities of Section \ref{secns}
in this case. In Section \ref{secesa}  we will remove the boundedness assumption for $V$ and 
show that $\n^*\n + V$ is  essentially self-adjoint and that its closure, $H$,  also
satisfies the inequalities of Section \ref{secns}. Section \ref{secesa} has a technical character.

\bigskip
\noindent
    \begin{proof}[Proof of Theorem \ref{thmns2} for bounded $V$]    By \cite[Lemma 6.1]{G1},
   the logarithmic Sobolev inequality \eref{mt1}    implies 
\begin{align}
Ent_m(|u|^p) \le c \frac{p^2}{2(p-1)} \<\n^*\n u, u_p\> , \ 1 < p < \infty ,   \ \ \ \          (LSp)     \label{L320p}
\end{align}
where $u_p = (\text{sgn}\, u) |u|^{p-1}$.
     We will frequently use Young's inequality in the form 
 \begin{align}
 E(gu) \le Ent(g) +\(\log E(e^u)\) E(g),        \label{BG500c}
 \end{align}
where  $g$ and $u$ are real valued measurable functions on some probability 
space, $g \ge 0$ and $E(g) < \infty$.

In particular, if $v \in L^p(m)$ then, choosing $u = -\nu V$ and $g = |v|^p$ in \eref{BG500c}, we find 
\begin{align}
\int_X (-V)|v|^p dm &\le \nu^{-1} \Big\{Ent_m(|v|^p) + \(\log E(e^{-\nu V})\) E(|v|^p) \Big\}    \\ 
 &= \nu^{-1} Ent_m(|v|^p) + \(\log \|e^{-V}\|_\nu\)  E(|v|^p)    \label{L48}
\end{align}

 It follows from \eref{L320p}, \eref{L48}   and from the definition $H = \n^*\n + V$ that  
\begin{align}
- \<Hv, v_p\> &=-\<\n^*\n v, v_p\> + \int (-V) |v|^p dm       \notag\\
&\le - \frac{2(p-1)}{cp^2} Ent_m(|v|^p) + \nu^{-1} Ent_m(|v|^p) + \alpha \int |v|^p dm        \notag\\
&= \(\nu^{-1}  -\frac{2(p-1)}{cp^2} \) Ent_m(|v|^p) + \alpha \int |v|^p dm,  \label{L49a}
\end{align}
where $\alpha =   \log \|e^{-V}\|_\nu$. 
Rearrange to find
\begin{align}
 \( \frac{2(p-1)}{cp^2} - \nu^{-1}\)Ent_m(|v|^p)&\le   \<Hv, v_p\> +  \alpha \int |v|^p dm  \notag\\
 &= \<(H+\alpha)v, v_p\>    .        \label{L327}
 \end{align}
 With the help of \eref{L313a} we find
 \begin{align}
 \frac{2(p-1)}{cp^2} - \nu^{-1} =\frac{(2\nu/c)(p-1) - p^2}{\nu p^2} =\frac{(p_0 -p)(p - q_0)}{\nu p^2}.
  \end{align}
  For $q_0 < p <p_0$ the last  expression is strictly positive. We may therefore divide \eref{L327} by it to find
  \eref{L325}.  
 Put $p=2$ in \eref{L329} and use \eref{L313c} to arrive at \eref{L325h}.
\end{proof}

\bigskip
\noindent
      \begin{proof}[Proof of Corollary \ref{corhb2} for bounded $V$]
    For $q <p$ the time $t_{q,p}$ that  it takes for $e^{-tH}$  to map $L^q(m)$ into $L^p(m)$ is determined by the equation (cf. \cite[Equation (2.4) of Theorem 1]{G1})
\beq
\hat c(p(t)) dp(t)/dt = p(t), \ \ \ \ \ \ p(0,q) = q,  \ \   p(t_{q,p}, q) = p.   \label{L330}
\eeq
That is, $t_{q,p}$ is the first time that the increasing function $p(t)$ reaches $p$ when starting at $q$.
$\hat c(p)$ is determined by the definition $Ent_m(|u|^p) \le p \hat c(p) \< (H + \alpha)u, u_p\>$
and $\alpha$ is the ``local norm" at index $p$,
(cf. \cite[Definition 1]{G1}).
In our case, \eref{L325}, $\alpha = \log \| e^{-V}\|_\nu$ and  
$\hat c(p)= c_\nu(p)$, which is given by \eref{L329}.

Upon separating variables in \eref{L330} the equation becomes
\begin{align}
\nu \frac{dp}{(p_0-p)(p-q_0)} =  dt.   \label{L332a}
\end{align}
Using \eref{L313j} in the second line below, we have  
\begin{align}
\frac{\nu}{ (p_0-p) (p-q_0)} &= \frac{\nu}{(p_0-q_0)} \{(p_0-p)^{-1} + (p- q_0)^{-1}\} \\ 
&= \frac{c}{2 a_\nu}\{(p_0-p)^{-1} + (p- q_0)^{-1}\}.
\end{align}
The solution to \eref{L332a} is therefore given by
\begin{align}
 \frac{c}{2 a_\nu}\int_q^p\{(p_0-r)^{-1} + (r- q_0)^{-1}\} dr = \int_0^{t_{q,p}} dt  =  t_{q,p}.   \label{L333a}
\end{align}
Thus
\begin{align}
t_{q,p} &= \frac{c}{2 a_\nu} \log\frac{r-q_0}{p_0-r}\Big|_q^p  
   = \frac{c}{2 a_\nu} \(\log\frac{(r/q_0) -1}{1- (r/p_0)} +\log(q_0/p_0)\)\Big|_q^p  \notag \\
   &=  \frac{c}{2 a_\nu} \(\log\frac{q_0^{-1} -r^{-1} }{r^{-1} - p_0^{-1}}\)\Big|_q^p = \tau(p) - \tau(q) \ \ \  
\end{align}
This proves that the minimum assured time to boundedness of $e^{-tH}$ from $L^q(m)$ to $L^p(m)$
is correctly given in  \eref{L289}.
From  \cite[Equation (2.5)]{G1}  we find that $\|e^{-t_{q,p}H}\|_{q\to p} \le e^{t_{q,p} \alpha}$, 
where        $\alpha = \log \| e^{-V}\|_\nu$,     
   because
 the integrand in \cite[Equation (2.5)]{G1} is just the constant  $\log \| e^{-V}\|_\nu$ that appears in 
 \eref{L325}.
 This proves \eref{L289} in case $t = t_{q,p}$. 
If $t > t_{q,p}$  then there exists $p_1 \in (p, p_0)$ such that  
$t = t_{q,p_1}$ because $ \tau(p)$  
is a continuous and strictly  increasing function of $p$ by \eref{L325s},   and 
 goes to $\infty$ as $p \uparrow p_0$ by \eref{L325v}.  
 Therefore  $\|e^{-tH}\|_{q\to p} \le \|e^{-tH}\|_{q\to p_1} \le e^{t_{q, p_1} \alpha} = e^{t\alpha}$. 
 This proves \eref{L289}  for all $t \ge t_{q,p}$.

 The representation \eref{L333a} shows that $t_{q,p}$ is decreasing as a function of $\nu$, as asserted 
 in the corollary, because, as $\nu$ increases
$a_\nu$ increases , as we see from  \eref{L341d},  
 while $p_0$ increases, as we see from \eref{L313q}, and consequently $q_0$ decreases, 
 implying that the integrand in  \eref{L333a} decreases. This proves the last line of Corollary \ref{corhb2}.

 For the proof of \eref{L291} set $p = q $  in \eref{L289}.  
 Since $t_{q,q} =0$ it follows that
 \eref{L291}  holds for all $t \ge0$,  provided $q \in (q_0, p_0)$. 
      (A short, abstract, but less illuminating  proof of \eref{L291} for $q \in (q_0, p_0)$ 
       is given in  \cite[Remark 3.5]{G1993}  that just uses the Hille-Yosida theorem.)
 To prove \eref{L291} for $q \in \{ q_0, p_0\}$ 
 choose first $v \in L^{p_0}$. Then $v \in L^q$ for all
 $q \in (q_0, p_0)$  and $\|v\|_q \to \|v\|_{p_0}$ as $q \uparrow p_0$. By \eref{L291} for $q < p_0$ we have 
 \begin{align}  
 \int_X |e^{-tH} v|^q dm \le \|e^{-V}\|_\nu^{tq} \| v\|_q^q .   \label{L335}
 \end{align}
 Choose a sequence $q_n\uparrow p_0$ and apply Fatou's lemma on the left side of \eref{L335} 
 to find \eref{L291} for $q = p_0$.
 To prove \eref{L291} for $q = q_0$ observe that the previous argument shows
 that \eref{L335} holds for $q = q_0$ if  $v$ is bounded. 
  Now the semigroup $e^{-tH}$ is positivity preserving. So it suffices to prove 
 \eref{L335} for $0 \le v\in L^{q_0}$. For such a function $v$, let $v_n = min(v,n)$ for each positive integer $n$.
Then each function $v_n$  is bounded and  \eref{L335} holds for $q = q_0$. 
We can now apply the monotone convergence theorem to find that \eref{L335} holds for $v$. 
\end{proof}

\begin{remark} \label{remtpf} {\rm  
 The Trotter product formula offers a good heuristic for an
understanding of the   inequality \eref{L289}. If, putting $H_0 = \n^*\n$, one writes    
 $e^{-tH} = \lim_{n\to \infty}\(e^{-tV/n} e^{-t H_0/n} \)^n$ and if $f \in L^q$ 
 then $e^{-t H_0/n}f$ will be in $L^{q_1}$ for some $q_1 > q$ by hypercontractivity. Then
 $e^{-tV/n} e^{-t H_0/n} f$ will be in $L^{q_2}$ for some $q_2 < q_1$ by H\"older's inequality. The exponents
 $q_1$ and $q_2$ are explicitly computable. Continuing in this way $n$ times one can bound the product
 and take the limit as $n \to \infty$ to derive a version of \eref{L289}. 
 This procedure was carried out by I.E. Segal in \cite[Lemma 2.1]{Seg70}, though
 not with Nelson's shortest  time to contraction, which was not known at that time.  
  Segal's method for showing boundedness of $e^{-tH}: L^q\to L^p$, based on the Trotter 
 product formula,  was refined in  \cite[Chapter 2]{SHk72}  and in \cite[Theorem X.58]{RS2}.
 Our forced confinement of $q, p$  to the interval $(q_0, p_0)$ in \eref{L289} does not 
 show up in these three sources    
 because it was always assumed   
 that  $\|e^{-V}\|_\nu < \infty$ for all $\nu <\infty$.
 Our proof may be considered to be an infinitesimal version  of Segal's method.
}
\end{remark}

\begin{remark} \label{remfed5}  {\rm (Federbush's semi-boundedness theorem). 
If   $H_0$ is a non-negative self-adjoint operator on $L^2(\text{probability measure}\ m)$  
satisfying  a logarithmic Sobolev inequality  
\beq
Ent_m(u^2) \le 2c (H_0 u, u)_{L^2(m)}                  \label{L337}
\eeq
then for any real valued measurable function $V$ there holds
\begin{align}
((H_0 + V) u, u)_{L^2(m)} \ge \(-\log \|e^{-V}\|_{L^{2c}(m)}\) \|u\|_2^2       \label{L338}
\end{align}
for all $u \in D(H_0) \cap D(V)$. 
This is the Federbush semi-boundedness theorem, \cite{Fe, G1, G1993}. 
        $H_0$ need not be a Dirichlet form operator. \eref{L337} and \eref{L338} are in a sense equivalent.
   See \cite[Theorem 2.1]{G1993}.
        
   However if $H_0$ is a Dirichlet form operator then \eref{L338} can be regarded as a limiting form
   of the hyperboundedness inequality  \eref{L291}: Taking $V$ bounded for simplicity 
   and $q=2$ in \eref{L291}  we have  $ \|e^{-tH} \|_{2\to 2} \le \| e^{-V}\|_\nu^t\    \text{for all}\ \ t \ge 0$.
  We may apply the spectral theorem to find $\text{inf  spectrum}\ H \ge - \log \|e^{-V}\|_\nu$. 
   Let $\nu\downarrow 2c$ to find \eref{L338}.
   
   Notice that as $\nu \downarrow 2c$ the interval of validity in Corollary  \ref{corhb2}, 
$(q_0, p_0)$, collapses to the one point set
 $\{2\}$, as one can see from \eref{L313q}, since $a_\nu \downarrow 0$ as $\nu \downarrow 2c$. 
For $\nu = 2c$,  hyperboundedness inequalities such as \eref{L291}, involving the exponential $e^{-tH}$,
 can fail. See for example Theorem \ref{thmgp1} for the modes of such failure.   
}
\end{remark}

\begin{example} \label{ex2p}{\rm The case $q=2$ and  $2 < p < p_0$ will be important for us.   
Suppose that $t$ is the minimum time for \eref{L289} to hold when $q=2$. 
That is, $t = \tau(p)$ because $\tau(2) =0$ by \eref{L325t}. Then by \eref{L505} we have 
\begin{align} 
2a_\nu t/c =  \log\frac{q_0^{-1} - p^{-1}}{p^{-1} - p_0^{-1}},\ \ \ q_0 < p < p_0. \label{L340}
\end{align}
Let $b= e^{2a_\nu t/c}$ and take the exponential of \eref{L340} to find 
$q_0^{-1} - p^{-1} = b ( p^{-1} - p_0^{-1})$. Therefore $(1+b)p^{-1} = q_0^{-1} + b p_0^{-1}$.
Hence the function 
\begin{align}
p(t) \equiv \frac{1+e^{2a_\nu t/c}}{q_0^{-1} + e^{2a_\nu t/c} p_0^{-1}}    \label{L340a}
\end{align}
gives the maximum Lebesgue index for boundedness from $L^2(m)$ to $L^p(m)$ 
predicted by \eref{L289}. That is,
\begin{align}
\| e^{-tH}\|_{2 \to p(t)} \le \|e^{-V}\|_\nu^t\ \ \  \text{for all} \ \ t \ge 0. \label{L289a}
\end{align}
It is instructive to observe that as $t\uparrow \infty$ the index $p(t) \uparrow p_0$.

In Section \ref{secbelow} we will show that if $V$ is bounded below then we may take $\nu = \infty$
and $t_{2,p}$  reduces exactly to Nelson's shortest time to contraction.
}
\end{example}

\subsection{Essential self-adjointness}   \label{secesa}

The computations in Section \ref{secnsh} were proven in case the potential $V$ is bounded. 
If $V$ is unbounded
the operator $H$, defined as the closure of the operator $\n^*\n+ V$, must be shown to be self-adjoint
before inequalities such as \eref{L289} can be given meaning. 
We will show in this section that $\n^*\n + V$ is essentially 
self-adjoint  and that Theorem \ref{thmns2} and Corollary \ref{corhb2} hold in the generality stated.
  We will also prove that the self-adjoint operator $H$ has an eigenvalue of multiplicity one 
at the bottom  of its spectrum 
 which belongs to a unique positive eigenfunction $\psi$.

The methods of this section are based on techniques that have their origin  in 
 the early attempts 
to prove the internal consistency of quantum field theory.  The 
problem there, as here, was to prove that a particular 
 operator of the form $H_0 + V$ is essentially self-adjoint and that its closure has a unique ground state.
The operator $H_0$, of interest at that time, was similar in many ways to our 
operator $\n^*\n$, but had additional special structure. 
All three properties, essential self-adjointness, existence and uniqueness of a ground state, were first
proved by Glimm and Jaffe   \cite{GJ68}, \cite{GJ70}.  
Their proofs made use of some of the special structures of $H_0$ available in that setting, 
that are not shared by our operators $\n^*\n$. 
         I.E. Segal,  \cite{Seg1970}, subsequently  removed     the need
  for the special structure in the 
  proof of essential self-adjointness and replaced it 
  by a hypercontractivity argument.     
  The present author, \cite{G1972}, subsequently 
    removed need  
  of the special  structure in the proof of existence
     of a ground state, replacing it again by hypercontractive notions.
The proofs we will give here are modifications of the latter proofs. They depend 
only on the positivity preserving character of the operators $e^{-tH}$  and the 
hypercontractivity bounds that are already available to us.   
 Simon and Hoegh-Krohn, \cite[Section 2]{SHk72} developed the methods of Segal, \cite{Seg1970}, further.
We will make use of their  techniques also.

The statements and techniques of proof of essential self-adjointness and 
existence and uniqueness of a ground state are dimension independent. Although 
the underlying manifold in this paper is assumed to be finite dimensional, 
these results  can be formulated and the proofs carried out directly in infinite 
dimensions once a suitable notion of differentiation is available. See, 
for example, \cite{AKR1995}  or \cite{MR92} for a systematic exposition 
of Dirichlet forms over infinite dimensional spaces.

\begin{theorem}\label{thmesa2}  $($Essential self-adjointness$)$. Assume that \eref{mt1} holds. Suppose that 
\beq
\int_X e^{-\nu V} dm < \infty \ \ \ \text{for some} \ \ \  \nu > 2c.      \label{EU6}
\eeq
Define $p_0$ as in Lemma  \ref{lemiv1} and assume that
\beq 
V \in L^{p_1}(m)  \ \  \text{for}\ \  p_1 =  2\frac{p_0}{p_0 -2} .     \label{EU7}
\eeq
Then $\n^*\n + V$ is essentially self adjoint on 
\beq
D(\n^*\n) \cap L^{p_0}(m)                                                  \label{EU8}
\eeq
and is bounded below. Denote by $H$ its closure. The semigroup $e^{-tH}$ extends to a strongly continuous
semigroup of bounded operators on $L^q$ for $q \in [q_0,2]$ and restricts to a strongly continuous semigroup
of bounded operators on $L^q$ for $q \in [2,p_0]$.      
For these extensions we have  
\begin{align}
\|e^{-tH} f\|_q \le \|e^{-V}\|_\nu^t\  \|f\|_q \ \ \text{for}\ \ q_0 \le q \le p_0\ \ and \ \ t \ge 0.  \label{EU9}
\end{align}
Moreover
\begin{align}
\|e^{-tH} f\|_p \le \|e^{-V}\|_\nu^t\  \|f\|_q \ \ \text{for} \ \ q_0 < q \le p <p_0\ \ 
          \text{if} \ \ t \ge \tau(p) - \tau(q).                                                                      \label{EU10}
\end{align}
$e^{-tH}$ is positivity preserving for all $t \ge 0$.
\end{theorem}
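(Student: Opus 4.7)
The plan is to approximate $V$ by bounded truncations $V_n := \max(-n,\min(V,n))$, apply Theorem \ref{thmns2} and Corollary \ref{corhb2} (already proved for bounded potentials in Section \ref{secnsh}) to each $H_n := \n^*\n + V_n$, and then pass to the limit using the uniform hyperboundedness bounds. The pointwise inequalities $|V_n| \le |V|$ and $e^{-V_n} \le 1 + e^{-V}$ together with dominated convergence give $\|V - V_n\|_{p_1} \to 0$ and $\|e^{-V_n}\|_\nu \to \|e^{-V}\|_\nu$; in particular $\sup_n \|e^{-V_n}\|_\nu < \infty$. The Federbush bound of Remark \ref{remfed5} therefore yields a lower bound on $H_n$ that is uniform in $n$, and Corollary \ref{corhb2} applied to each $V_n$ produces the bounds \eref{L289}, \eref{L291} with constants uniform in $n$.

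Set $\D := D(\n^*\n) \cap L^{p_0}(m)$. Since $m(X)=1$, the subspace $\D$ contains $C_c^\infty(X)$ and is dense in $L^2(m)$. For $f \in \D$, H\"older's inequality with $1/p_1 + 1/p_0 = 1/2$ gives $\|Vf\|_2 \le \|V\|_{p_1}\|f\|_{p_0} < \infty$, so $\tilde H f := \n^*\n f + Vf$ defines a symmetric operator on $\D$, and
\beq
\|H_n f - \tilde H f\|_2 \le \|V - V_n\|_{p_1}\|f\|_{p_0} \to 0 \label{planA}
\eeq
shows that $H_n f \to \tilde H f$ in $L^2(m)$ for every $f \in \D$.

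The principal obstacle is upgrading this pointwise convergence on a core to essential self-adjointness of $\tilde H$. I would combine the uniform lower bound on the $H_n$ with \eref{planA} to invoke Trotter--Kato, obtaining a self-adjoint operator $H$ with $H_n \to H$ in strong resolvent sense and $H \supseteq \tilde H$. To identify $H$ with the closure of $\tilde H$ I would show that $\D$ is a core for $H$ by a range-density argument: for $\lambda$ larger than the common lower bound,
\[
(H_n + \lambda)^{-1} g = \int_0^\infty e^{-\lambda t} e^{-tH_n} g\,dt,
\]
and the $L^2 \to L^{p_0}$ smoothing from \eref{L289a} together with the fact that $e^{-tH_n}$ carries $L^2$ into $D(\n^*\n)$ show that $(H_n + \lambda)^{-1}$ maps $L^2(m)$ into $\D$, with norm bounds uniform in $n$. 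Passing to the strong limit, $(H+\lambda)^{-1} L^2(m)$ lies in the $\tilde H$-graph closure of $\D$, so the range $(\tilde H + \lambda)\D$ is dense in $L^2(m)$ and $\tilde H$ is essentially self-adjoint with closure $H$.

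Once essential self-adjointness has been secured, the hyperboundedness bounds \eref{EU9}, \eref{EU10} transfer from $H_n$ to $H$ by strong convergence combined with Fatou's lemma applied to $\|e^{-tH_n}f\|_p^p$. Positivity preservation of each $e^{-tH_n}$ (which holds because $\n^*\n$ is a Dirichlet form operator and bounded multiplication preserves positivity, e.g.\ via the Trotter product formula) is preserved under strong operator limits, completing the argument.
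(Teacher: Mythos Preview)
Your outline follows the paper's strategy closely (truncate $V$, use the bounded-$V$ bounds of Section~\ref{secnsh}, pass to the limit), but there is a concrete error in the core-identification step that propagates into the Trotter--Kato step as well.

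You claim that $(H_n+\lambda)^{-1}$ maps $L^2(m)$ into $\D=D(\n^*\n)\cap L^{p_0}$ by virtue of ``the $L^2\to L^{p_0}$ smoothing from \eref{L289a}.'' There is no such smoothing. By \eref{L505} one has $\tau(p)\to+\infty$ as $p\uparrow p_0$, so $e^{-tH_n}$ never carries $L^2$ into $L^{p_0}$ in finite time; \eref{L289a} only gives $L^2\to L^{p(t)}$ with $p(t)<p_0$. Consequently there is no reason for $\int_0^\infty e^{-\lambda t}e^{-tH_n}g\,dt$ to land in $L^{p_0}$ when $g$ is merely in $L^2$. The correct input space is $L^{p_0}$: for $g\in L^{p_0}$ the endpoint bound \eref{L291} with $q=p_0$ gives $\|(H_n+\lambda)^{-1}g\|_{p_0}\le \|g\|_{p_0}$ once $\lambda>\log\|e^{-V_n}\|_\nu$, and since $V_n$ is bounded one has $(H_n+\lambda)^{-1}L^2\subset D(H_n)=D(\n^*\n)$; together these give $(H_n+\lambda)^{-1}L^{p_0}\subset\D$ with bounds uniform in $n$. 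Since $L^{p_0}$ is dense in $L^2$ this still yields the range-density you need. This is exactly how the paper proceeds (Lemma~\ref{lemesa3}).

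The same oversight undermines your Trotter--Kato invocation. Uniform semiboundedness together with $H_nf\to\tilde Hf$ on a dense set $\D$ does \emph{not} imply strong resolvent convergence; one needs an independent argument that the resolvents (or semigroups) form a Cauchy sequence. The paper supplies this via the DuHamel formula (Lemma~\ref{lemEU1a}), which gives
\[
\|(e^{-tH_n}-e^{-tH_m})f\|_2 \le t\bigl(\|e^{-V}\|_\nu+1\bigr)^t\,\|V_n-V_m\|_{p_1}\,\|f\|_{p_0},
\]
again using the $L^{p_0}\to L^{p_0}$ boundedness of $e^{-uH_m}$. Equivalently you could run the second resolvent identity $(H_n+\lambda)^{-1}-(H_m+\lambda)^{-1}=(H_n+\lambda)^{-1}(V_m-V_n)(H_m+\lambda)^{-1}$ on $L^{p_0}$ for the same reason. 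Either way, the missing $L^{p_0}$-endpoint boundedness---not an $L^2\to L^{p_0}$ smoothing---is the mechanism that drives both the convergence and the core argument.
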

The proof depends on the following three lemmas.

\begin{lemma}\label{lemEU1a}  
Let $V_1$ and $V_2$ be bounded potentials. Let $H_i = \n^*\n + V_i$, $i = 1,2$.
Let $p_1 =2\frac{p_0}{p_0 -2}$.  
Then
\begin{align}
\|(e^{-t H_1} - e^{-tH_2})f\|_2
 &\le \(\int_0^t\|e^{-V_1}\|_\nu^{t-u} \|e^{-V_2}\|_\nu^u  du\) \  \|V_1-V_2\|_{p_1} \|f\|_{p_0}  . \label{EU19}
\end{align}
\end{lemma}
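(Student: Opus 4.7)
The plan is to use Duhamel's formula together with the already-established non-standard hyperboundedness from Corollary \ref{corhb2}, and to control the multiplication by $V_1-V_2$ using H\"older's inequality for the particular conjugate-type relation $1/p_1 + 1/p_0 = 1/2$, which is exactly the condition that defines $p_1=2p_0/(p_0-2)$.

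First I would differentiate the interpolating family $\phi(u) = e^{-(t-u)H_1}e^{-uH_2}$, which is norm-differentiable since $V_1,V_2$ are bounded and both $H_i$ are self-adjoint on $D(\n^*\n)$. The derivative is $\phi'(u) = e^{-(t-u)H_1}(V_1-V_2)e^{-uH_2}$, so integrating from $0$ to $t$ gives the Duhamel identity
\begin{align}
(e^{-tH_2} - e^{-tH_1})f = \int_0^t e^{-(t-u)H_1}(V_1-V_2)e^{-uH_2} f\, du.
\end{align}

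Next I would estimate the integrand pointwise in $u$ as an operator $L^{p_0}(m) \to L^2(m)$ by factoring through $L^{p_0}\to L^{p_0}\to L^2\to L^2$. The outer propagator $e^{-(t-u)H_1}$ is bounded on $L^2(m)$ with norm at most $\|e^{-V_1}\|_\nu^{t-u}$ by the $q=2$ case of \eref{L291}, and the inner propagator $e^{-uH_2}$ is bounded on $L^{p_0}(m)$ with norm at most $\|e^{-V_2}\|_\nu^u$ by the $q=p_0$ case of \eref{L291}. For the middle factor, H\"older's inequality with $1/p_1 + 1/p_0 = 1/2$ (which is where the particular value $p_1=2p_0/(p_0-2)$ enters) gives $\|(V_1-V_2)g\|_2 \le \|V_1-V_2\|_{p_1}\|g\|_{p_0}$. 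Chaining the three bounds and integrating in $u$ yields the claimed estimate (the sign in the Duhamel formula is irrelevant under norms).

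The only step requiring care is the fact that the integrand is strongly measurable and Bochner-integrable as an $L^2$-valued function of $u$; this is automatic because $u\mapsto e^{-uH_2}f$ is strongly continuous in $L^{p_0}$ by the second assertion of Theorem \ref{thmns2} (already available for bounded $V$ from Section \ref{secnsh}), multiplication by the fixed $L^{p_1}$ function $V_1-V_2$ is continuous $L^{p_0}\to L^2$, and $u\mapsto e^{-(t-u)H_1}$ is strongly continuous on $L^2$. I do not expect a real obstacle here; the crux is simply recognizing that the required $L^{p_0}\to L^{p_0}$ and $L^2\to L^2$ bounds from Corollary \ref{corhb2} mesh with H\"older at the precise exponent $p_1$ appearing in the hypothesis \eref{EU7}, which explains why $p_1$ was defined the way it was.
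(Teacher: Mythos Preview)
Your proposal is correct and follows essentially the same route as the paper: Duhamel's formula for the difference of semigroups, then the three-factor estimate using \eref{L291} at $q=2$ for the outer propagator, \eref{L291} at $q=p_0$ for the inner propagator, and H\"older with $p_1^{-1}+p_0^{-1}=1/2$ for the multiplication by $V_1-V_2$. The paper's only organizational difference is that it first writes Duhamel for $f$ in the common domain and then extends to all of $L^2$ by boundedness of $V_1-V_2$, rather than invoking Bochner integrability directly; the substance is the same.
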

       \begin{proof} If $H_1$ and $H_2$ are two self-adjoint operators on $L^2(m)$ which have a common domain
$\D$ and are both bounded below then the DuHamel formula
\begin{align}
(e^{-tH_2} - e^{-tH_1}) f =    \int_0^t e^{-(t-u)H_1}(H_1 - H_2) e^{-uH_2} f\ du\ \ \ f  \in \D  \label{EU20}
\end{align}
follows by integrating from $0$ to $t$ the identity 
$(d/du)\(e^{-(t-u)H_1} e^{-uH_2}\) f =\(e^{-(t-u)H_1}(H_1 - H_2) e^{-uH_2}\) f$,
which is valid for $f \in \D$. If $H_1 - H_2$ is a bounded (albeit only densely defined) operator
then \eref{EU20} extends by continuity to all $f \in L^2(m)$.
Thus if  $H_i = \n^*\n + V_i$ we have
   \begin{align}
   (e^{-tH_2} - e^{-tH_1}) f 
                =    \int_0^t e^{-(t-u)H_1}(V_1-V_2) e^{-uH_2} f \ du\ \ \ \forall\ \ f \in L^2(m).     \label{EU21}
   \end{align}
Since \eref{L291} has been proven for bounded $V$ in Section \ref{secnsh} we may use it to find that
 $ \|e^{-(t-u)H_1}\|_{2\to 2} \le \|e^{-V_1}\|_\nu^{t-u}$ for all $t-u \ge 0$,
 while 
 \beq 
  \|e^{-uH_2} f\|_{p_0} \le \| e^{-V_2}\|_\nu^u\ \|f\|_{p_0}      \label{EU22}
  \eeq
   for all $u \ge 0$. 
  Since $p_1^{-1} + p_0^{-1} = 1/2$ we have  
  \begin{align}
  \|(V_1-V_2) e^{-uH_2} f\|_2  
            \le \|V_1 - V_2\|_{p_1}   \| e^{-V_2}\|_\nu^u\ \|f\|_{p_0}\ \ \forall\ \ u \ge0. \label{EU23}
  \end{align}
  Hence
  \begin{align}
   \|(e^{-tH_2} - e^{-tH_1}) f\|_2 
    &\le    \int_0^t \|e^{-(t-u)H_1}\|_{2\to 2}\|(V_1-V_2) e^{-uH_2} f\|_2 \ du \notag \\ 
   &\le  \int_0^t   \|e^{-V_1}\|_\nu^{t-u}  \|V_1 - V_2\|_{p_1}   \| e^{-V_2}\|_\nu^u\  \|f\|_{p_0} du, \notag
\end{align}
which is \eref{EU19}.
\end{proof}

\begin{lemma}\label{lemesa2}  Suppose that $\|e^{-V}\|_\nu < \infty$ 
   and that $V \in L^{p_1}(m)$. For \linebreak 
   $k = 0,1,2, \dots$ let 
\begin{align}
V_k = (-k\vee V)\wedge k.    \label{EU40}
\end{align}
Define $H_k = \n^*\n + V_k$ and let
\begin{align}
S_k(t) = e^{-tH_k}, \ \ \ t \ge 0. \label{EU41}
\end{align}
Then the sequence $S_k(t)$ converges strongly in $L^2$  to a bounded positive 
operator $S(t)$ for each $t \ge 0$.
$S(\cdot)$ is a strongly continuous semigroup of  bounded positivity preserving operators on $L^2(m)$. 

For each $t \ge 0$, $S(t)$ extends uniquely  to a bounded operator on $L^q$  for  
 $q \in [q_0, 2]$ and restricts to a bounded operator on  $L^q(m)$ for  $q \in [2,p_0]$.
The extensions and restrictions form strongly continuous semigroups in these spaces.

      Denoting the extensions and restrictions by $S(t)$ we have, for all $f \in L^q$, 
\begin{align}
\|S(t) f\|_q 
     &\le \|e^{-V}\|_\nu^t\  \|f\|_q \ \ \text{for}\ \ q_0 \le q \le p_0\ \ and \ \ t \ge 0\ \ \text{and} \label{EU45}\\
\|S(t) f\|_p 
    &\le \|e^{-V}\|_\nu^t\  \|f\|_q \ \ \text{for} \ \ q_0 < q \le p <p_0\ \ \text{if} \ \ t \ge \tau(p) - \tau(q). \label{EU46}
\end{align}
\end{lemma}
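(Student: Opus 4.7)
The plan is to use the DuHamel inequality of Lemma \ref{lemEU1a} to show that $\{S_k(t)f\}$ is Cauchy in $L^2$ for every $f$ in a dense subspace, and then to transfer all of the required properties—semigroup, strong continuity, positivity, and the bounds \eqref{EU45}, \eqref{EU46}—to the strong $L^2$-limit $S(t)$ by density and Fatou arguments.

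First I would record two preliminary facts about the truncations. Since $m$ is a probability measure and $V_k=k$ (resp.\ $-k$) on $\{V>k\}$ (resp.\ $\{V<-k\}$), one has the pointwise bound $e^{-\nu V_k}\le e^{-\nu V}+1$, so dominated convergence gives both uniform boundedness of $\{\|e^{-V_k}\|_\nu\}$ and convergence $\|e^{-V_k}\|_\nu\to\|e^{-V}\|_\nu$. Similarly $|V_k|\le |V|\in L^{p_1}(m)$ together with pointwise convergence yields $V_k\to V$ in $L^{p_1}(m)$. Feeding these into Lemma \ref{lemEU1a} with $V_1=V_k$, $V_2=V_j$ gives, uniformly for $t$ in any compact interval,
\begin{equation}
\|(S_k(t)-S_j(t))f\|_2 \le C(t)\,\|V_k-V_j\|_{p_1}\,\|f\|_{p_0},\qquad f\in L^{p_0}(m),
\end{equation}
with $C(t)$ bounded independently of $k,j$. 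Thus $\{S_k(t)f\}$ is $L^2$-Cauchy for every $f\in L^{p_0}$, uniformly on compact $t$-intervals. Since $L^{p_0}$ is dense in $L^2$ (the measure is finite) and $\|S_k(t)\|_{2\to 2}\le \|e^{-V_k}\|_\nu^t$ is uniformly bounded in $k$, the strong $L^2$-limit $S(t)f$ exists for all $f\in L^2$.

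The semigroup identity follows by letting $k\to\infty$ in $S_k(t+s)f=S_k(t)S_k(s)f$, using uniform boundedness in $L^2$ of $\{S_k(t)\}$ together with strong convergence of $S_k(s)f\to S(s)f$. Positivity is inherited since each $S_k(t)$ is positivity preserving (Feynman--Kac, or a Trotter argument using that $e^{-t\n^*\n}$ is positivity preserving and $V_k$ is bounded), and the property passes to the $L^2$-strong limit. Strong $L^2$-continuity is checked first on the dense set $L^{p_0}$, where $S_k(t)f\to S(t)f$ holds uniformly on compact $t$-intervals, by the two-$\epsilon$ split $\|S(t)f-f\|_2\le\|S(t)f-S_k(t)f\|_2+\|S_k(t)f-f\|_2$, and then extended to $L^2$ by the uniform bound. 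To transfer the $L^q$-bounds to $S(t)$, extract a subsequence along which $S_{k_n}(t)f\to S(t)f$ almost everywhere and apply Fatou's lemma to the corresponding bounds of Section \ref{secnsh} for bounded $V_{k_n}$; for $f\in L^{p_0}\cap L^q$ (dense in $L^q$ for every $q\in[q_0,p_0]$ since $m$ is finite) this yields \eqref{EU45} and \eqref{EU46}, which then extend to all of $L^q$ by density, giving the unique bounded extension for $q<2$ and the bounded restriction for $q\ge 2$.

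Strong continuity on each $L^q$ then requires a short case analysis: for $q\in[q_0,2]$ it follows from the continuous embedding $L^2\hookrightarrow L^q$ combined with strong $L^2$-continuity on the dense set $L^2$; for $q\in(2,p_0)$ it follows by interpolating $\|S(t)f-f\|_q$ between $\|S(t)f-f\|_2\to 0$ and the uniform $L^{p_0}$-bound on $f\in L^{p_0}\subset L^q$, dense in $L^q$; and for the endpoint $q=p_0$, weak $L^{p_0}$-convergence $S(t)f\rightharpoonup f$ (obtained from $L^2$-convergence, the $L^{p_0}$-bound, and reflexivity via a subsequence argument) combined with $\|S(t)f\|_{p_0}\to\|f\|_{p_0}$ (by weak lower semicontinuity and $\|S(t)f\|_{p_0}\le \|e^{-V}\|_\nu^t\|f\|_{p_0}\to\|f\|_{p_0}$) yields strong convergence in $L^{p_0}$ by uniform convexity. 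The main obstacle I anticipate is precisely this endpoint bookkeeping—at $p_0$ the elementary Fatou/interpolation tools degenerate, and one must fall back on the uniform-convexity argument just sketched to close the strong-continuity claim at the top of the allowed range.
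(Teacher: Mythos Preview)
Your proof is correct and follows essentially the same route as the paper: the DuHamel estimate of Lemma~\ref{lemEU1a} to get $L^2$-Cauchy on $L^{p_0}$, uniform bounds plus density to extend to $L^2$, Fatou along an a.e.-convergent subsequence to transfer the $L^q$ and $L^q\to L^p$ bounds, Trotter for positivity, and strong convergence plus uniform boundedness for the semigroup law.

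The one place you diverge is the strong-continuity argument for $q\in(2,p_0]$. The paper avoids your case split by observing that the operators $S(t)$ are symmetric, so strong $L^q$-continuity for $q\in[q_0,2]$ immediately gives weak $L^p$-continuity on the conjugate range $p\in[2,p_0]$; since these spaces are reflexive, the standard fact that a weakly continuous one-parameter semigroup on a reflexive Banach space is strongly continuous (cited from \cite{EN}) finishes all of $[2,p_0]$ at once. Your interpolation argument for $q\in(2,p_0)$ and your uniform-convexity argument at the endpoint $p_0$ are both valid and more elementary---they do not invoke an external semigroup theorem---but the paper's duality route is shorter and treats the whole range uniformly. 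Either way the conclusion is the same.
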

      \begin{proof} By the monotone convergence theorem on $\{ V\le 0\}$ and 
 dominated convergence theorem on $\{V>0\}$  we have 
 $\lim_{k\to \infty} \|e^{-V_k}\|_\nu \to \|e^{-V}\|_\nu$. Moreover,
 since $-V_k \le -V$ wherever $V \le 0$ it follows that $0 \le e^{-V_k} \le e^{-V} + 1$ everywhere and
 therefore $\|e^{-V_k}\|_\nu \le \|e^{-V}\|_\nu +1$ for all $k$. Hence 
\begin{align}
      \int_0^t \|e^{-V_k}\|_\nu^{t-u} \|e^{-V_n}\|_\nu^u du \le   t\(\|e^{-V}\|_\nu +1\)^t.
\end{align}    
Apply \eref{EU19} to the potentials $V_k$ and $V_n$  
to find
\begin{align}
\|(S_k(t) - S_n(t))f\|_2 \le  t\(\|e^{-V}\|_\nu +1\)^t \|V_k - V_n\|_{p_1}  \|f\|_{p_0}.
\end{align} 
Since $\|V_k - V_n\|_{p_1} \to 0$ as $k,n \to \infty$, it follows that for fixed $f \in L^{p_0}$ and $T < \infty$ the sequence $S_k(t)f$ converges   in 
$L^2(m)$  uniformly on $[0, T]$ as $k\to \infty$.

Denote the limit by $\hat S(t)f$.  $\hat S(t)$ is 
a linear operator from $L^{p_0}(m) \to L^2(m)$ for each $t \in [0, T]$ and $\hat S(t)f$ is 
continuous in $t \in [0,T]$ into $L^2$ for each $f \in L^{p_0}$ and each $T>0$. 

For fixed $t>0$ and $f \in L^{p_0}$ there is a subsequence $k_j$   such that $S_{k_j}(t) f$ 
converges to $\hat S(t)f$  pointwise almost everywhere. By   \eref{L291}, which has already been proven
for bounded $V$ in Section \ref{secnsh}, we have $\|S_{k_j}(t) f\|_q \le \|e^{-V_{k_j}}\|_\nu^t\  \|f\|_q$.
 Apply Fatou's lemma on the left to find 
 \beq
\|\hat S(t) f\|_q \le \|e^{-V}\|_\nu^t\  \|f\|_q,\ \  q \in [q_0, p_0],\ \   f \in L^{p_0}  \label{EU47}
\eeq
    The same argument also shows that \eref{EU46}  holds for $\hat S(t) f$ when $f \in L^{p_0}$.

Note  for later use the uniform (in $k$) bound 
\begin{align}
\|S_{k}(t) f\|_q \  \le (\|e^{-V}\|_\nu +1)^t\  \|f\|_q\ \ \text{for all}\ \  q \in [q_0, p_0]    \label{EU48}
\end{align}
 and all $f \in L^q$, which follows from \eref{L291},  with $V$ replaced 
by the bounded function $V_k$,  
 and using the bound
$\|e^{-V_k}\|_\nu \le \|e^{-V}\|_\nu +1$.  

Since $L^{p_0}$ is dense in $L^q$ for each $q \in [q_0, p_0]$ we may, by virtue of \eref{EU47},
extend $\hat S(t)$ by continuity  in  $L^q$ norm  
to a bounded linear operator from $L^q$ into $L^q$, which we  denote by $S_q(t)$. 
\eref{EU45} and \eref{EU46}  hold for all $f \in L^q$ for this extended operator. The extensions are easily seen
to be consistent in the sense that if $q_0 \le q_1 \le q_2 \le p_0$ then $S_{q_1}$, restricted to $L^{q_2}$,
is $S_{q_2}$.  We will drop the subscript $q$ and just write $S(t)$, which now acts on each space $L^q$
as a bounded operator satisfying \eref{EU45} and \eref{EU46}.

      In case  $q\in [q_0, 2]$, the extended operator $S(t)$ is also a strong limit on all of $L^q$ of the 
 operators $S_k(t)$. Indeed, if  $g \in L^q$ and
 $\|f_n - g\|_q \to 0$ for some sequence $f_n \in L^{p_0}$ then, 
 \begin{align*}
 &\|S(t)g - S_k(t)g\|_q \\
 &\le \| S(t)g - \hat S(t)f_n\|_q + \| \hat S(t)f_n -S_k(t) f_n\|_q + \|S_k(t)f_n -  S_k(t)g\|_q \\
 &\le \| S(t)g - \hat S(t)f_n\|_q + \| \hat S(t)f_n -S_k(t) f_n\|_2 + \|S_k(t) \|_{q\to q} \|f_n - g\|_q. 
 \end{align*} 
 The first term on the right goes  to zero as $n\to \infty$ by the definition of $S(t)g$.
 The second goes to zero for each $n$ as  $k\to \infty$ by the definition of $\hat S(t)f_n$.
 The third term goes to 
 zero as $n\to \infty$ by the uniform bound \eref{EU48}.
 A standard argument completes the proof.
  
          The semigroup property        $S(t+s) = S(t) S(s)$        
  follows from the fact that the operators    $S_k(t)$ form semigroups
  and, for $q\in [q_0, 2]$, converge strongly in $L^q$,   boundedly in $k$, to $S(t)$ for each $t >0$.
  The semigroup equation also holds  of course when restricted to the spaces $L^p; p \in [2, p_0]$.
        
          Each operator $S_k(t)$ is positivity preserving because $e^{-t\n^*\n}$ is positivity preserving
 while the Trotter product formula 
 \begin{align}
 e^{-t(\n^*\n + V_k)} = \text{strong limit}_{n\to \infty}\( e^{-(t/n)(\n^*\n)} e^{-(t/n)V_k}\)^n
 \end{align} 
 is applicable in $L^2$, since $V_k$ is bounded. $ S(t)$, being a strong limit in $L^2$ of the operators $S_k(t)$,
  is therefore also positivity preserving.

       For the proof of strong continuity we must go back to the operators $\hat S(t)$. For each $f \in L^{p_0}$, 
$\hat S(t)f$ is continuous in $t$ as a function into $L^2$ and therefore as a function into
$L^q$ if $q_0 \le q \le 2$.   
 Hence if $g \in L^q$ with $q_0 \le q \le 2$  and $f_n \to g$ in $L^q$ for some sequence 
 $f_n \in L^{p_0}$   then, since $S(t) f_n = \hat S(t) f_n$, we have 
$S(t)g = \lim\ \text{in}\ L^q\  S(t)f_n$ 
and the convergence is uniform for $t \in [0,T]$ by \eref{EU45}. 
Therefore $S(\cdot)g$ is continuous into $L^q$ for each $g \in L^q$. $S(t)$ is therefore  a strongly 
continuous  semigroup on the spaces $L^q:  q \in [q_0,2]$ and therefore weakly continuous
on the dual spaces $L^p, \, p \in [2,p_0]$ because the operators $S(t)$ are symmetric.  
       But these spaces  are reflexive. 
  Consequently   $S(\cdot)$  is strongly continuous on $L^p: 2 \le p \le p_0$ 
  by the general theorem \cite[Theorem 1.6]{EN}.   
\end{proof}

  $S(t)$ has been constructed as a limit of the semigroups $e^{-tH_k}$. 
We should expect that $S(t) = e^{-tH}$ with $H = \text{closure of}\ \n^*\n +V$.  
The next lemma proves that this is the case.

      \begin{lemma} \label{lemesa3} Suppose that $\|e^{-V}\|_\nu < \infty$ and that $V \in L^{p_1}(m)$.
 Denote by  $H$  the infinitesimal generator of the strongly continuous 
 semigroup $S(t)$ in $L^2(m)$. Denote by  $H_0$  the self-adjoint Dirichlet form operator $\n^*\n$ for $m$.
 Then 
 \begin{align}
D(H)\cap L^{p_0} = D(H_0)  \cap L^{p_0}      \label{EU55}
\end{align}
and
\begin{align}
H= \text{closure of}\ (H_0 +V)\ \ \text{in}\ \ L^2(m).  \label{EU56a}
\end{align}
\end{lemma}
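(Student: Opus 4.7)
The plan is to prove both assertions simultaneously by analyzing the operator $T := (H_0 + V)|_{D(H_0) \cap L^{p_0}}$ and its relation to the resolvent of $H$. First I would show $D(H_0) \cap L^{p_0} \subseteq D(H)$ with $Hf = H_0 f + Vf$: for such $f$, H\"older together with $p_0^{-1} + p_1^{-1} = 1/2$ puts $Vf \in L^2$, and the Duhamel identity $S_k(t)f - f = -\int_0^t S_k(s)(H_0 f + V_k f)\,ds$ (valid in $L^2$ because $H_k$ is self-adjoint on $D(H_0)$ for bounded $V_k$) can be passed to the limit: $V_k f \to V f$ in $L^2$ by dominated convergence, $S_k(s) \to S(s)$ strongly in $L^2$ by Lemma \ref{lemesa2}, and the uniform bound $\|S_k(s)\|_{2 \to 2} \le (\|e^{-V}\|_\nu + 1)^s$ enables dominated convergence in the $s$-integral. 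Dividing by $t$ and letting $t \downarrow 0$ shows $f \in D(H)$ with $Hf = H_0 f + V f$, so $T \subseteq H$; since each $S_k(t)$ is self-adjoint and strong $L^2$-limits preserve self-adjointness, $H$ is self-adjoint and $T$ is symmetric.

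The crux is a resolvent regularity claim: for $\lambda$ exceeding $\log(\|e^{-V}\|_\nu + 1)$ and any $g \in L^{p_0}$, the element $f := (H + \lambda)^{-1} g = \int_0^\infty e^{-\lambda t} S(t) g\,dt$ lies in $D(H_0) \cap L^{p_0}$. Membership in $L^{p_0}$ is immediate from \eref{EU45}. To put $f$ in $D(H_0)$, I would approximate by $f_k := (H_k + \lambda)^{-1} g$, which lies in $D(H_0)$ and satisfies $H_0 f_k = g - V_k f_k - \lambda f_k$; applying dominated convergence to the Laplace integrals gives $f_k \to f$ in $L^2$ together with a uniform $L^{p_0}$ bound $\|f_k\|_{p_0} \le C\|g\|_{p_0}$. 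Closedness of $H_0$ then reduces the claim to showing $V_k f_k \to V f$ in $L^2$. Decompose $V_k f_k - V f = V_k(f_k - f) + (V_k - V) f$: the second term vanishes in $L^2$ by dominated convergence. For the first, split $V = V^N + V^{>N}$ with $V^N := V \mathbf{1}_{|V| \le N}$; then $\|V^N(f_k - f)\|_2 \le N\|f_k - f\|_2 \to 0$ as $k \to \infty$ for fixed $N$, while $\|V^{>N}(f_k - f)\|_2 \le \|V^{>N}\|_{p_1}(\|f_k\|_{p_0} + \|f\|_{p_0})$ can be made uniformly small in $k$ by taking $N$ large. This double-limit argument is the main obstacle, since we only have uniform $L^{p_0}$ boundedness, not strong $L^{p_0}$ convergence, of $f_k$.

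The claim yields both conclusions. First, $L^{p_0} \subseteq (H + \lambda)(D(H_0) \cap L^{p_0}) = (T + \lambda)(D(T))$, and since $L^{p_0}$ is dense in $L^2$ ($m$ being a probability measure), $(T + \lambda)(D(T))$ is dense in $L^2$, proving $T$ is essentially self-adjoint with $\overline{T} = H$, which is \eref{EU56a}. Second, for the domain equality \eref{EU55}, the inclusion $\supseteq$ is contained in Step 1; for $\subseteq$, given $f \in D(H) \cap L^{p_0}$, the Yosida approximation $f_n := n(n + H)^{-1} f$ lies in $D(H_0) \cap L^{p_0}$ by the claim applied to $g = nf \in L^{p_0}$, and the strong continuity of $S(t)$ on $L^{p_0}$ from Lemma \ref{lemesa2} upgrades the standard convergence $f_n \to f$ in $L^2$ to convergence in $L^{p_0}$. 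Then $V f_n \to V f$ in $L^2$ by H\"older, so $H_0 f_n = H f_n - V f_n$ converges in $L^2$ to $Hf - Vf$, and closedness of $H_0$ places $f$ in $D(H_0)$, finishing the proof.
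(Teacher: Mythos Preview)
Your argument is correct, but it takes a somewhat different and more laborious route than the paper. The paper's central device is the Duhamel formula comparing the \emph{two} semigroups $e^{-tH_0}$ and $e^{-tH_k}$: passing to the limit in
\[
(e^{-tH_0}-e^{-tH})f=\int_0^t e^{-(t-u)H}\,V\,e^{-uH_0}f\,du,\qquad f\in L^{p_0},
\]
dividing by $t$, and rearranging, the paper reads off \emph{both} inclusions of \eref{EU55} at once (depending on which two of the three limits exist), and the identity $H=H_0+V$ on $D(H)\cap L^{p_0}$. The key simplification is that in this formula $V_k$ acts on $e^{-uH_0}f$, which is independent of $k$, so the limit $V_k e^{-uH_0}f\to Ve^{-uH_0}f$ in $L^2$ is immediate from $\|V_k-V\|_{p_1}\to 0$ and H\"older. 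This completely sidesteps your ``main obstacle,'' the double-limit argument showing $V_kf_k\to Vf$ with only a uniform $L^{p_0}$ bound on $f_k$. The paper then gets essential self-adjointness exactly as you do, by showing $(H+a)^{-1}L^{p_0}$ is a core, but this now follows trivially from the already-established domain equality rather than requiring a separate resolvent-regularity step.

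Your approach has the virtue of being self-contained at the resolvent level and of making explicit the $L^{p_0}$ strong continuity of $S(t)$ (used in your Yosida step), whereas the paper's comparison-Duhamel trick is shorter and avoids the truncation-splitting of $V$ altogether. Both lead to the same conclusions; the paper's is the more economical line.
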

        \begin{proof}      
Choose $H_1 = H_k$ and $H_2  = H_0$ in \eref{EU21}.  Since $V_0 =0$ we have
\begin{align}
(e^{-tH_0} -e^{-tH_k})f =  \int_0^t e^{-(t-u)H_k} (V_k - 0) e^{-uH_0} f \ du, \ \ \ f \in L^{p_0}. \label{EU49}
\end{align}
Fix $f \in L^{p_0}(m)$. Since $p_1^{-1} + p_0^{-1} = 1/2$ and $V \in L^{p_1}(m)$ we see that
 $\| (V_k -0)e^{-uH_0}f - Ve^{-uH_0}f\|_2  \le \|V_k - V\|_{p_1} \| e^{-uH_0}f\|_{p_0} \to 0$ uniformly
 for $0 \le u \le t$ as $ k \to \infty$.
 Moreover,  by Lemma \ref{lemesa2}, $e^{-(t-u)H_k}$ converges strongly 
in $L^2$ and boundedly.
Therefore  $e^{-(t-u)H_k} (V_k - 0) e^{-uH_0} f$ converges in $L^2$ boundedly for $u\in [0,t]$ to
$e^{-(t-u)H} V e^{-uH_0} f$. We may take the
limit as $k \to \infty$ in \eref{EU49}  to find
\begin{align}
(e^{-tH_0} -e^{-tH})f =  \int_0^t e^{-(t-u)H} V e^{-uH_0} f \ du, \ \ \ f \in L^{p_0}.   \label{EU50c}
\end{align}
because, by the assumption of this lemma,  $S(t) = e^{-tH}$ on $L^2$ and therefore on $L^{p_0}$.
The integrand is a continuous function of $u$ into $L^2$. Multiply \eref{EU50c} by $t^{-1}$ and 
rearrange to find, for $f \in L^{p_0}$,
\begin{align}
t^{-1}(I -e^{-tH_0})f =  t^{-1}(I -e^{-tH})f  - t^{-1}\int_0^t e^{-(t-u)H} V e^{-uH_0} f \ du. \label{EU51}
\end{align}
If $f \in D(H) \cap L^{p_0}$ then  both
terms on the right side of \eref{EU51} converge in $L^2$ as $t\downarrow 0$, and therefore the limits 
on the right and left both exist. The limit on the right is $Hf - Vf$. Since the limit on the left exists
we know that $f \in D(H_0)$ and moreover the limit is $H_0f$.
Thus  $D(H) \cap L^{p_0} \subset D(H_0)$ and $(H-V)f= H_0 f$ for $f \in D(H)\cap L^{p_0}$. 
Therefore
\begin{align}
H = H_0+V \ \  \text{on}\ \   D(H) \cap L^{p_0}.  \label{EU56}
\end{align}
Similarly, if $f\in D(H_0) \cap L^{p_0}$ then 
the left side and second term on the right side of \eref{EU51} converge in $L^2$ and 
therefore $f \in D(H)$.  This proves \eref{EU55}.

Let $K$ be the closure of $H_0 +V$ in $L^2$.
$K$ is a closed symmetric operator on $L^2$. We wish to prove that $H= K$.
It suffices to show that there is a set $\D \subset D(H_0) \cap D(V)$  
 which is a core for $H$
and such that $H = H_0 + V$ on $\D$. 
For then $K \supset \text{closure of}\ \(\{H_0+V\}\Big|\D\) =H$ 
and  therefore $K^*\subset H \subset K$. Since $K^* \supset K$ it will then follow that $K = K^*= H$.

Let $a = 1 +\log \|e^{-V}\|_\nu$. Then \eref{EU45} shows that, for all $q \in [q_0, p_0]$ we have
$\|S(t) f\|_q \le e^{(a-1)t}\|f\|_q$. For $q\ge 2$ we may write this 
as $ \|e^{-tH}f\|_q \le  e^{(a-1)t} \|f\|_q$ because $L^q \subset L^2$. 
Therefore $ \|e^{-t(H+a)}f\|_q \le  e^{-t} \|f\|_q$ for $2 \le q \le p_0$. Hence
\begin{align}
\| (H+a)^{-1} f\|_q &= \| \int_0^\infty e^{-t(H+a)} f dt\|_q \le \int_0^\infty e^{-t}dt \|f\|_q = \|f\|_q.   \label{EU58}
\end{align}
For $q=2$ this shows that $\|(H+a)^{-1}\|_{2\to 2} \le 1$.
Let $\D = (H+a)^{-1} L^{p_0}$.  Then $\D \subset D(H)$ because  
$(H+a)^{-1} L^{p_0} \subset (H+a)^{-1} L^{2}$. Further, from \eref{EU58} with $q = p_0$ we see that 
$\D \subset L^{p_0}$.
Hence  
\beq
\D \subset D(H) \cap L^{p_0} = D(H_0)\cap L^{p_0} \subset D(H_0) \cap D(V),
\eeq
 From this and \eref{EU56}  it follows that $H = H_0 + V$ on $\D$.

It remains  to show that $\D$ is a core for $H$.    Suppose that $\phi \in D(H)$. 
    Since $(H+a) \phi \in L^2$ there is a sequence $f_n \in L^{p_0}$ which 
converges to $(H+a) \phi $ in $L^2$. The functions  $g_n \equiv (H+a)^{-1}f_n$ are in $\D$ and
$(H+a)g_n = f_n$, which   converge to $(H+a) \phi$ in $L^2$. But also $g_n \to \phi$ because
$\|(H+a)^{-1}\|_{2\to 2} \le 1$.
Therefore $\D$ is a core for $H$. This completes the proof of \eref{EU56a}.
\end{proof}

\bigskip
\noindent
\begin{proof}[Proof of Theorem \ref{thmesa2}]  It was shown in Lemma \ref{lemesa3} that the closure, $H$,
of $\n^*\n +V$ is the infinitesimal generator of the semigroup  $S(t)$ and that this semigroup 
extends and restricts to $L^q$ spaces as asserted in the statement of the theorem. 
The asserted inequalities \eref{EU9} and \eref{EU10} are restatements 
of \eref{EU45} and \eref{EU46} respectively.
\end{proof}

\begin{corollary} \label{cordom}  Under the hypotheses of Theorem \ref{thmesa2}
there holds
\begin{align}
Q(H) = Q(\n^*\n) \cap Q(|V|).   \label{EU60}
\end{align}
where $Q(A)$ denotes the form domain of a closed semi-bounded operator $A$.
\end{corollary}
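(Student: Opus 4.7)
The plan is to combine the hypothesis \eqref{mt2} with the logarithmic Sobolev inequality \eqref{mt1} to show that the negative part $V_-$ of $V$ is infinitesimally form-bounded by $\n^*\n$, then build a self-adjoint operator $\tilde H$ by the KLMN theorem with form domain $Q(\n^*\n)\cap Q(|V|)$, and finally identify $\tilde H$ with $H$ using the essential self-adjointness already established in Theorem \ref{thmesa2}.

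The analytic input is the form-bound
\begin{equation}
\int_X V_- |f|^2\, dm \;\le\; \frac{2c}{\nu}\int_X |\n f|^2\, dm + C\,\|f\|_{L^2(m)}^2, \qquad f \in Q(\n^*\n), \label{propkey}
\end{equation}
with $C = \nu^{-1}\log\bigl(\|e^{-V}\|_\nu^\nu + 1\bigr)$. I would derive this by applying Young's inequality \eqref{BG500c} with $g = |f|^2/\|f\|_{L^2}^2$ and $u = \nu V_-$: the entropy term is then controlled by \eqref{mt1}, while $\int e^{\nu V_-}\,dm$ is bounded by $\|e^{-V}\|_\nu^\nu + m(\{V\ge 0\})$ because $e^{\nu V_-}$ coincides with $e^{-\nu V}$ on $\{V<0\}$ and equals $1$ elsewhere. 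Since $\nu > 2c$, the coefficient $2c/\nu$ is strictly less than $1$; in particular $Q(\n^*\n)\subset Q(V_-)$, so $Q(\n^*\n)\cap Q(|V|) = Q(\n^*\n)\cap Q(V_+)$.

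For $Q(H)\subset Q(\n^*\n)\cap Q(|V|)$, I would take $f\in Q(H)$ and approximate it by a sequence $f_n$ in the operator core $D(\n^*\n)\cap L^{p_0}$ (which is automatically a form core for the semi-bounded $H$) with $q_H(f_n-f_m)\to 0$, where $q_H$ denotes the quadratic form of $H$. Applying \eqref{propkey} to $f_n-f_m$ and rearranging $q_H(u) = \|\n u\|_{L^2}^2 + \int V_+|u|^2 dm - \int V_-|u|^2 dm$ yields
\begin{equation}
(1-2c/\nu)\,\|\n(f_n-f_m)\|_{L^2}^2 + \int_X V_+|f_n-f_m|^2 dm \;\le\; q_H(f_n-f_m) + C\|f_n-f_m\|_{L^2}^2 \;\longrightarrow\; 0,
\end{equation}
so $\n f_n$ and $\sqrt{V_+}\,f_n$ are Cauchy in $L^2(m)$. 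Closedness of the Dirichlet form on $Q(\n^*\n)$ and of multiplication by the measurable function $\sqrt{V_+}$ on $L^2(m)$ then place $f$ in $Q(\n^*\n)\cap Q(V_+)$, and $f\in Q(V_-)$ follows from \eqref{propkey}.

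For the reverse inclusion, the KLMN perturbation theorem applies to the closed nonnegative form $q_+(f):=\|\n f\|_{L^2}^2 + \int V_+|f|^2 dm$ on $Q(\n^*\n)\cap Q(V_+)$ perturbed by $-q_-(f):=-\int V_-|f|^2 dm$, the relative bound $2c/\nu<1$ being supplied by \eqref{propkey}. Let $\tilde H$ be the self-adjoint operator associated to $q:=q_+-q_-$, whose form domain is $Q(\n^*\n)\cap Q(|V|)$. For any $f\in D(\n^*\n)\cap L^{p_0}$ one has $Vf \in L^2(m)$ by H\"older (since $1/p_0 + 1/p_1 = 1/2$), and the identity $q(f,g) = ((\n^*\n+V)f, g)_{L^2}$ extends by density to all $g\in Q(\tilde H)$, placing $f$ in $D(\tilde H)$ with $\tilde Hf = (\n^*\n+V)f$. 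Thus both $H$ and $\tilde H$ are self-adjoint extensions of the essentially self-adjoint operator $(\n^*\n+V)|_{D(\n^*\n)\cap L^{p_0}}$, forcing $\tilde H = H$ and giving \eqref{EU60}. The step requiring the most care will be the first direction, namely ensuring that the approximating sequence for $f\in Q(H)$ can be drawn from the operator core $D(\n^*\n)\cap L^{p_0}$ rather than merely from $D(H)$, so that \eqref{propkey} is available on the differences; the KLMN half and the self-adjoint identification via essential self-adjointness are then routine.
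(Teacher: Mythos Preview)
Your proof is correct, and it takes a route that is genuinely different from the paper's. Both arguments rest on a form bound for $V_-$, but they obtain it differently. The paper makes room by shrinking $\nu$: it sets $W=V-\ep V_-$ for small $\ep>0$, notes that $e^{-\nu_1 W}$ is still integrable for some $\nu_1>2c$, and applies the Federbush semiboundedness theorem to $\n^*\n+W$ to obtain the operator-level estimate $V_-\le \ep^{-1}(H+b)$; the form inequality $H_0+|V|\le (1+2\ep^{-1})H+2\ep^{-1}b$ on the operator core then gives $Q(H)\subset Q(\n^*\n)\cap Q(|V|)$ in one stroke, while the reverse inclusion is declared ``clear''. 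You instead bound $V_-$ directly by $\n^*\n$ with relative bound $2c/\nu<1$ via Young's inequality plus the LSI, which is in fact the same mechanism the paper uses elsewhere (compare \eqref{L48}). Your KLMN construction of $\tilde H$ with form domain $Q(\n^*\n)\cap Q(|V|)$, followed by the identification $\tilde H=H$ through essential self-adjointness on the core, makes the ``easy'' inclusion fully rigorous where the paper leaves it implicit. The paper's route is terser and yields the pleasant operator inequality $V_-\lesssim H$; your route is more self-contained and ties everything to the single relative-bound estimate \eqref{propkey}. The concern you flag at the end is harmless: an operator core for a semibounded self-adjoint operator is automatically a form core, so drawing the approximating sequence from $D(\n^*\n)\cap L^{p_0}$ is legitimate.
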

     \begin{proof} If $\int_X |\n u|^2 dm < \infty$ and $\int_X |V|u^2 dm < \infty$ then 
  $u$ is clearly in the form domain of $H$. Therefore $Q(H) \supset Q(\n^*\n) \cap Q(|V|)$.
  The reverse containment requires specific information about $V$.
  
  Let $V_- = \sup(-V, 0)$. Choose  $\nu_1 \in (2c, \nu)$ and let $\ep  = (\nu/\nu_1) - 1$. Define 
a new potential by $W = V - \ep V_-$. On the set $\{ V\le 0\}$ we have $V_- = -V$. Therefore, on this set
\begin{align*}
-\nu_1 W & = -\nu_1 V + \ep \nu_1 V_-\\
   &= \nu_1(1+\ep) \(-V \) \\
   & = -\nu V.
 \end{align*}
 Since $W = V$ where $V >0$, the hypothesis   \eref{EU6}  ensures that $\int_X e^{-\nu_1 W} dm < \infty$.
By the Federbush semiboundedness theorem, \eref{L338}, applied to the potential $W$, we therefore have
\begin{align}
\n^*\n +(V -\ep V_-) \ge -b \equiv -\log \|e^{-(V-\ep V_-)}\|_{\nu_1}  >-\infty.     \notag
\end{align}
So      $ \ep V_- \le H + b$. Hence 
\begin{align}
 V_- \le \ep^{-1}(H + b).   \label{EU65}
\end{align}
Therefore
\begin{align} 
 H_0 + |V| = H_0 + V + 2V_-  = H + 2V_- \le  (1 + 2\ep^{-1})H + 2\ep^{-1} b.  \notag 
\end{align}
All of these inequalities hold upon taking the inner products $(\cdot\ u, u)$ 
with $u$ in the operator core  given by \eref{EU8}. Since the operator core is also a form core 
it follows that  $Q(H) \subset Q(H_0+ |V|)$, which is the right hand side of \eref{EU60} 
because $H_0$ and $|V|$ are  both non-negative.       
\end{proof}

\begin{remark} {\rm The equality \eref{EU60} will be needed 
 in the proof of Theorem \ref{thmA3}.
 It was needed in  similar contexts in \cite{GRW2001} and  \cite{Aida2001}, where 
 it was taken as a natural hypothesis.
 }
\end{remark}

 \subsubsection{Proof of non-standard  hyperboundedness} \label{secnsh2}

The proofs of Theorem \ref{thmns2} and Corollary \ref{corhb2} under the general conditions on $V$
specified there are largely consequences of the results in Sections \ref{secnsh} and  \ref{secesa}:
The qualitative assertions in Theorem \ref{thmns2} concerning essential self-adjointness of $\n^*\n +V$
and extendability of $e^{-tH}$ are proved in Theorem \ref{thmesa2}. 
The logarithmic Sobolev inequalities  \eref{L325} 
follow, by \cite[Theorem 2]{G1},  from the hyperboundedness inequalities  \eref{L289} asserted 
in Corollary \ref{corhb2} once the relation between the Sobolev coefficients $c_\nu(p)$ in 
Theorem \ref{thmns2} and the 
minimum time to boundedness $\tau(p) - \tau(q)$ in Corollary \ref{corhb2} are 
established, as they are in the Section \ref{secnsh} on bounded potentials.
The hyperboundedness inequalities  \eref{L289} and \eref{L291} were   proved in 
Theorem \ref{thmesa2} for the desired class of potentials.

\subsection{Existence and uniqueness of a  ground state} \label{secEU}

 \begin{theorem} \label{thmE1} $($Existence of a ground state$)$. 
Suppose that $m$ satisfies a  
logarithmic Sobolev inequality \eref{mt1}.
Assume that $\|e^{-V}\|_\nu < \infty$ for some $\nu > 2c$. Define $p_1$ as in \eref{EU7} and assume
that $V\in L^{p}$ for  some $p \ge p_1$. 
Then the closure, $H$, of  $\n^*\n + V$ is self adjoint,
bounded below, and the bottom of its spectrum is an eigenvalue of finite multiplicity.
\end{theorem}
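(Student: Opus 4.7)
The self-adjointness and lower semi-boundedness of $H=\overline{\nabla^*\nabla+V}$ are already contained in Theorem~\ref{thmesa2}, which further gives that the semigroup $e^{-tH}$ is positivity preserving and satisfies the hypercontractive bounds of Corollary~\ref{corhb2}. What remains is to show that $\lambda_0:=\inf\mathrm{spec}(H)$ is an eigenvalue of finite multiplicity. The plan is to follow the template of Gross (1972) and Segal--Hoegh-Krohn: deduce existence of a ground state from positivity preservation and hypercontractivity alone, and then derive finite multiplicity from those same two properties.

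Fix $t_0>0$ so large that $T:=e^{-t_0H}$ maps $L^2(m)$ boundedly into $L^p(m)$ for some $p\in(2,p_0)$, and set $\mu_0:=\|T\|_{L^2\to L^2}=e^{-t_0\lambda_0}$. For existence of a ground state I would work with the quadratic form $f\mapsto(Tf,f)$: positivity preservation gives $(T|g|,|g|)\ge(Tg,g)$ for real $g$ (directly from $T(g_\pm)\ge 0$), so $\sup_{f\ge 0,\|f\|_2=1}(Tf,f)=\mu_0$. Take a maximizing sequence $f_n\ge 0$ with $\|f_n\|_2=1$ and $(Tf_n,f_n)\to\mu_0$; then $\|Tf_n\|_2\to\mu_0$ (since $(Tf_n,f_n)^2\le\|Tf_n\|_2^2\le\mu_0^2$), and
\begin{equation*}
\|Tf_n-\mu_0 f_n\|_2^2=\|Tf_n\|_2^2-2\mu_0(Tf_n,f_n)+\mu_0^2\longrightarrow 0.
\end{equation*}
By hypercontractivity, $g_n:=Tf_n$ is uniformly bounded in $L^p$; extract $g_n\to g$ weakly in $L^p$ (hence in $L^2$). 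Since $\mu_0 f_n=g_n+o(1)$ in $L^2$, the sequence $f_n$ converges weakly to $\psi:=g/\mu_0$ in $L^2$, and boundedness of $T$ on $L^2$ yields $T\psi=\mu_0\psi$. Non-triviality comes from the reverse interpolation: using $g_n\ge 0$ on a probability space and $p>2$, one has $\|g_n\|_2^2\le\|g_n\|_p\|g_n\|_{p'}\le K\|g_n\|_1^{(p-2)/p}\|g_n\|_2^{2/p}$, and since $\|g_n\|_2^2\to\mu_0^2>0$ this forces $\|g_n\|_1\ge c>0$ uniformly; hence $\int\psi\,dm>0$, so $\psi\ne 0$, and the spectral theorem transports $T\psi=\mu_0\psi$ to $H\psi=\lambda_0\psi$.

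For finite multiplicity let $E_0=\ker(H-\lambda_0)$, equivalently the $\mu_0$-eigenspace of $T$. A standard Perron--Frobenius argument shows $E_0$ is closed under $\phi\mapsto|\phi|$: positivity preservation gives the pointwise bound $T|\phi|\ge|T\phi|=\mu_0|\phi|$, and the norm bound $\|T|\phi|\|_2\le\mu_0\||\phi|\|_2$ then forces $T|\phi|=\mu_0|\phi|$ a.e. Hence $E_0$ is spanned by its non-negative cone $E_0^+$, and hypercontractivity yields $\|\phi\|_p=\mu_0^{-1}\|T\phi\|_p\le K\|\phi\|_2$ for every $\phi\in E_0$. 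If $\dim E_0=\infty$, the plan is to produce an infinite sequence $\{\phi_k\}\subset E_0^+$ of pairwise orthogonal unit vectors; since orthogonal non-negative functions have disjoint supports a.e., the sets $A_k=\{\phi_k>0\}$ would be pairwise disjoint, and H\"older would give $1=\|\phi_k\|_2^2\le K^2 m(A_k)^{1-2/p}$, so $m(A_k)\ge K^{-2p/(p-2)}>0$ uniformly, contradicting $\sum_k m(A_k)\le m(X)=1$; hence $\dim E_0<\infty$.

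The main technical hurdle is producing the infinite orthogonal family in $E_0^+$: the cone $E_0^+$ spans $E_0$, but a spanning cone need not automatically contain infinitely many pairwise orthogonal elements when the ambient space is infinite dimensional. My plan is a Zorn's-lemma argument on maximal orthogonal families in $E_0^+$. Either a maximal family is infinite, giving the contradiction above; or it is finite, say $\{\phi_1,\ldots,\phi_N\}$ with pairwise disjoint supports $A_1,\ldots,A_N$, in which case maximality together with closure of $E_0$ under $\phi\mapsto|\phi|$ forces every element of $E_0$ to be supported in $\bigcup_k A_k$. A ground-state transformation on each $A_k$ then identifies the corresponding component of $E_0$ with the kernel of a Dirichlet form operator on a connected subset, which is one-dimensional, yielding $\dim E_0\le N$.
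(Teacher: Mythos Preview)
The paper's own proof is a single citation to \cite[Theorem~1]{G1972}: once $e^{-tH}$ is positivity preserving and bounded $L^2\to L^{p}$ for some $p>2$ (both supplied by Theorem~\ref{thmesa2} and \eref{EU10}), Gross's 1972 result gives that $\|e^{-tH}\|_{2\to 2}$ is an eigenvalue of finite multiplicity. Your existence argument correctly reconstructs one half of that proof.

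The finite-multiplicity half has a gap. A minor point first: Zorn gives a \emph{maximal} (non-extendable) orthogonal family in $E_0^+$, not one of \emph{maximum size}, and for families constrained to a cone these can differ; but your disjoint-support inequality $m(A_k)\ge K^{-2p/(p-2)}$ already bounds the cardinality of \emph{every} orthogonal family in $E_0^+$, so a family of maximum size $N^*$ exists with no appeal to Zorn. The real gap is your last sentence: the sets $A_k$ are merely measurable, with no manifold or connectedness structure, so ``a ground-state transformation on each $A_k$ \ldots\ kernel of a Dirichlet form operator on a connected subset'' has no content here. The correct finish is lattice-theoretic. Closure of $E_0$ under $|\cdot|$ makes it a closed sublattice of $L^2(m)$; in particular $\phi\cdot\mathbf 1_{A_k}=\lim_n\,\phi\wedge(n\phi_k)\in E_0$ for every $\phi\in E_0^+$, which is what actually justifies your claim that every element of $E_0$ is supported in $\bigcup_k A_k$. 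Now if some component $E_0^{(k)}=\{\phi\in E_0:\mathrm{supp}\,\phi\subset A_k\}$ had dimension $\ge 2$, take $\psi\in E_0^{(k)}$ with $(\psi,\phi_k)=0$; since $\phi_k>0$ a.e.\ on $A_k$, $\psi$ must change sign there, so $\psi_+,\psi_-\in E_0^+$ are both nonzero with disjoint supports contained in $A_k$. Replacing $\phi_k$ by the pair $\psi_+,\psi_-$ yields an orthogonal family in $E_0^+$ of size $N^*+1$, contradicting maximality of $N^*$. Hence each $E_0^{(k)}$ is one-dimensional and $\dim E_0=N^*<\infty$.
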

   \begin{proof} We already know  from Theorem \ref{thmesa2} that $H$ is self-adjoint and bounded below,
   and that $e^{-tH}$ is positivity preserving for all  $t \ge 0$. 
   We need only show that the bottom of the spectrum of $H$ is an eigenvalue of finite multiplicity. 
   Referring to the notation in Lemma \ref{lemiv1},  
   choose a number $p_2 \in (2, p_0)$ and  a number $t \ge \tau(p_2)$.
   By  \eref{L289}, with $q = 2$,  $e^{-tH}$ is  bounded from $L^2$ to $L^{p_2}$.
   By  \cite[Theorem 1]{G1972}  
   the operator  norm $\|e^{-tH}\|_{2\to 2} \equiv :\mu$    is an eigenvalue of $e^{-tH}$ of finite multiplicity. 
   The spectral theorem shows that 
   $\l_0 \equiv -t^{-1}\log \mu$ is an eigenvalue of $H$ of finite multiplicity and $\l_0 = \inf\ spectrum\ H$.
   \end{proof}

 The techniques in the following theorem and lemma are distilled from
   \cite{GJ68,GJ70,Seg1970,G1972,SHk72}. 
   
\begin{theorem}\label{thmU2} $($Uniqueness of the ground state$)$.
   Let $m$ be a probability measure on some space $X$. 
 Suppose that $H_0$ is a non-negative self-adjoint operator on 
$L^2(m)$ and that $V$ is a potential in $L^2(m)$. Suppose that $H_0+V$ is essentially self-adjoint.
Denote its closure by $H$. Assume that
\begin{align}
&a.\ \text{The nullspace of $H_0$ is spanned by the constant functions.}     \label{U4} \\
&b.\ D(H)\cap L^\infty = D(H_0)\cap L^\infty.      \label{U5} \\
&c.\ \text{$e^{-tH}$ is positivity preserving for all $t >0$}  \label{U6}\\
&d.\ \text{$\l_0 \equiv \inf spectrum\ H$ is an eigenvalue.}      \label{U7}
\end{align}
\noindent
Then $\l_0$ has multiplicity 1 and belongs to an a.e. strictly positive eigenfunction.
\end{theorem}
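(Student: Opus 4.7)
The plan is to follow the classical Perron--Frobenius strategy, with three substantive steps: (i) if $\psi$ is a ground state then so is $|\psi|$; (ii) any non-negative ground state is strictly positive almost everywhere; (iii) the ground-state eigenspace is one-dimensional. Before starting I note that since $e^{-tH}$ is positivity preserving it sends real $L^2$ into real $L^2$, so real and imaginary parts of eigenfunctions are themselves eigenfunctions; hence I may restrict to real ground states throughout. For (i), because $H\psi=\lambda_0\psi$ we have $e^{-t(H-\lambda_0)}\psi=\psi$, and positivity preserving gives $|\psi|=|e^{-t(H-\lambda_0)}\psi|\le e^{-t(H-\lambda_0)}|\psi|$ pointwise. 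Pairing against $|\psi|$ yields $\||\psi|\|_2^{2}\le (e^{-t(H-\lambda_0)}|\psi|,|\psi|)\le \||\psi|\|_2^{2}$, where the upper bound uses that $H-\lambda_0\ge 0$ implies $\|e^{-t(H-\lambda_0)}\|_{2\to 2}\le 1$. By the spectral theorem this equality forces the spectral measure of $|\psi|$ (for $H-\lambda_0$) to be concentrated at zero, i.e.\ $H|\psi|=\lambda_0|\psi|$.

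The main obstacle is step (ii). Let $u=|\psi|\ge 0$ be a ground state, normalized, and suppose $A:=\{u=0\}$ has positive measure; I aim for a contradiction. For any $0\le f\in L^2\cap L^\infty$ supported in $A$, symmetry of $e^{-tH}$ gives $(e^{-tH}f,u)=e^{-\lambda_0 t}(f,u)=0$; combined with $e^{-tH}f\ge 0$ and $u>0$ on $A^c$, this forces $e^{-tH}f$ to be supported in $A$. By splitting into positive and negative parts, this extends to all bounded $f$ supported in $A$, and by density to all $f\in L^2$ supported in $A$. Thus the multiplication operator $P_A$ by $\chi_A$ satisfies $P_A e^{-tH}P_A=e^{-tH}P_A$; taking adjoints and subtracting shows $P_A$ commutes with $e^{-tH}$, hence with $H$ on $D(H)$. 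Now I invoke hypothesis (b): for $f\in D(H)\cap L^\infty=D(H_0)\cap L^\infty$, also $P_A f\in D(H)\cap L^\infty =D(H_0)\cap L^\infty$, so
\begin{equation*}
P_A H_0 f + V P_A f = P_A(H_0+V)f = P_A H f = H P_A f = H_0 P_A f + V P_A f,
\end{equation*}
giving $H_0 P_A f=P_A H_0 f$. Since hypothesis (a) provides $1\in D(H_0)\cap L^\infty$ with $H_0\cdot 1=0$, applying this identity at $f=1$ yields $\chi_A=P_A 1\in D(H_0)$ with $H_0\chi_A=0$. By (a) again, $\chi_A$ is constant, so $m(A)\in\{0,1\}$; and $m(A)=1$ is excluded because $u\not\equiv 0$. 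The bridge from commutation of $P_A$ with $H$ to commutation with $H_0$ on a domain containing the constants is exactly what (b) is for, and this is the delicate point.

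For (iii), having established (ii), write any real ground state $\psi$ as $\psi=\psi_+-\psi_-$; then $|\psi|$ and $\psi$ are both ground states, so $\psi_\pm=(|\psi|\pm\psi)/2$ are ground states too. Applying (ii) to each of $\psi_\pm$ shows each is either identically zero or a.e.\ strictly positive, and since $\psi_+\psi_-\equiv 0$ one of them must vanish: every real ground state is of one sign a.e. Now suppose $\psi_1,\psi_2$ are linearly independent normalized real ground states; WLOG $\psi_1>0$ a.e. Set $\phi=\psi_2-c\psi_1$ with $c=(\psi_2,\psi_1)$, so $(\phi,\psi_1)=0$ while $\phi$ remains a nonzero ground state, hence of strict, definite sign. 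But then $\int \phi\,\psi_1\,dm\ne 0$, contradicting orthogonality. Therefore $\dim\ker(H-\lambda_0)=1$, and by step (ii) its generator is a.e.\ strictly positive (after choice of sign).
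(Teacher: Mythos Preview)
Your proof is correct and follows essentially the same Perron--Frobenius strategy as the paper. The paper isolates your step (ii) into a separate lemma (if a bounded multiplication operator $M_f$ commutes with $e^{-tH}$ then $f$ is constant), proved by differentiating $M_f e^{-tH}1 = e^{-tH}f$ at $t=0$ and using hypothesis (b) to land in $D(H_0)$; you instead first pass from commutation with $e^{-tH}$ to commutation with $H$ on $D(H)$ and then specialize at $f=1$, but the content is the same.
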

 The proof depends on the following lemma.
 
\begin{lemma} \label{lemU2} 
Let $m$ be a probability measure on some space $X$. Suppose that $H_0$ is a non-negative self-adjoint operator on $L^2(m)$ and that
$V$ is a potential in $L^2(m)$. Suppose that $H_0 +V$ is essentially self-adjoint with closure $H$
and that conditions a. and b. of Theorem \ref{thmU2} hold.

If $f$ is a bounded measurable function such that its multiplication operator,  $M_f \equiv $ multiplication by $f$,
commutes with  $e^{-tH}$ for some $t >0$ then there is a real constant $C$ such that  $f= C$ a.e..
\end{lemma}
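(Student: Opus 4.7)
The strategy is to show that $f$ itself lies in $\ker H_0$; since hypothesis (a) forces $\ker H_0$ to consist of constants, this will give $f=C$ almost everywhere. The key intermediate claim is that $M_f$ commutes with $H_0$ on the core-sized subspace $D(H_0)\cap L^\infty$, and the punchline is to evaluate both sides of this commutation on the constant function $1$, which lies in $\ker H_0$ by hypothesis.

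The plan is first to promote the assumed commutation of $M_f$ with $e^{-tH}$ to commutation with $H$ itself. Since $H$ is self-adjoint, the map $\lambda\mapsto e^{-t\lambda}$ is a Borel injection on $\mathbb R$, so the bounded Borel functional calculus of $e^{-tH}$ generates the same von Neumann algebra as that of $H$. Hence the bounded operator $M_f$, commuting with $e^{-tH}$, commutes with every spectral projection $\chi_E(H)$, and in particular with each resolvent $(H-z)^{-1}$ for $z\notin\mathbb R$. A routine manipulation then yields $M_f D(H)\subseteq D(H)$ and $H(fu)=f\,Hu$ for every $u\in D(H)$. Because $f$ is bounded, $M_f$ preserves $L^\infty$, so in fact $M_f\bigl(D(H)\cap L^\infty\bigr)\subseteq D(H)\cap L^\infty$, which by hypothesis (b) equals $D(H_0)\cap L^\infty$.

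Next I would fix $u\in D(H_0)\cap L^\infty$. Since $u,fu\in L^\infty$ and $V\in L^2(m)$, both $Vu$ and $V(fu)$ lie in $L^2(m)$, and the identity $H=H_0+V$ is valid at both $u$ and $fu$. Combining this with the commutation from the previous step gives
\begin{align*}
H_0(fu)+V(fu)=H(fu)=f\,Hu=f\,H_0u+f\,Vu,
\end{align*}
and after cancelling the $V$-terms one obtains $H_0(fu)=f\,H_0u$. Thus $M_f$ commutes with $H_0$ on $D(H_0)\cap L^\infty$.

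Finally I would apply the commutation to the constant function $1$. By hypothesis (a), $1\in D(H_0)$ with $H_0\,1=0$; since $V\in L^2(m)$ and $1\in L^\infty$, we have $1\in D(H_0)\cap D(V)\subseteq D(H_0+V)\subseteq D(H)$, so $1\in D(H_0)\cap L^\infty$. The previous step therefore yields $f=M_f\cdot 1\in D(H_0)$ and
\begin{align*}
H_0 f = M_f(H_0\,1)=0,
\end{align*}
so by (a) $f$ equals an a.e.\ constant $C$. The step most deserving care is the initial spectral-calculus upgrade, passing from commutation with $e^{-tH}$ at a single $t>0$ to commutation with every bounded Borel function of $H$; this is standard, but hinges essentially on the self-adjointness of $H$ and the injectivity of $\lambda\mapsto e^{-t\lambda}$, and everything else in the proof is routine algebra within the domains given by (a) and (b).
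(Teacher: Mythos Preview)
Your proof is correct and follows essentially the same route as the paper: use the spectral theorem to promote commutation with $e^{-tH}$ to commutation with $H$, apply hypothesis (b) to place $f$ in $D(H_0)\cap L^\infty$, cancel the $V$-terms, and conclude $H_0 f=0$. The paper's execution is slightly more economical: rather than establishing $M_fD(H)\subseteq D(H)$ and commutation on all of $D(H)$ via resolvents, it simply notes that commutation for one $t>0$ gives commutation for all $t>0$, writes $M_f e^{-tH}1=e^{-tH}f$, observes $1\in D(H)$ so the left side is differentiable at $t=0$, hence so is the right, yielding $f\in D(H)$ and $fH1=Hf$ directly. Your version proves the stronger intermediate fact that $M_f$ commutes with $H_0$ on all of $D(H_0)\cap L^\infty$, which is not needed but costs little extra. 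Both arguments arrive at the same punchline by evaluating at $1$.
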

     \begin{proof} If $M_f$ commutes with $e^{-tH}$ for some $t>0$ then it commutes for all $t >0$ by the spectral theorem.
Since $M_f 1 = f$ we have 
\begin{align}
M_fe^{-tH} 1 = e^{-tH}f\ \ \ \forall t \ge0.     \label{U9}
\end{align}
Since $1$ is in the domains of $H_0$ and $V$ it is
in $D(H)$. Hence the left side is differentiable at $t = 0$ and therefore 
so also is the right side.  Thus $L^\infty \ni f \in D(H)$. 
By \eref{U5} we  have then $f \in D(H_0)\cap L^\infty$.  
Differentiating \eref{U9} at $t=0$  gives $M_fH1= Hf$. That is,
\begin{align}
f(H_0 + V) 1= (H_0 + V) f.      \label{U10}
\end{align} 
Since $H_0 1=0$ it follows from \eref{U10} that $H_0f = 0$. The lemma now follows from \eref{U4}. 
\end{proof}

\bigskip
\noindent
\begin{proof}[Proof of Theorem \ref{thmU2}]
Pick $t >0$. Any eigenfunction $\psi$ for $H$ with eigenvalue $\l$ satisfies $e^{-tH}\psi = e^{-t\l}\psi$.
Let $A = e^{-tH}$ and $\mu = e^{-t\l}$. Since $A$
is positivity preserving we have  $\mu |\psi| = |A\psi| \le A|\psi|$. 
Therefore $\mu (|\psi|, |\psi|)  \le (A|\psi|, |\psi|)$
and since $\mu = \sup \text{spectrum}\ A$, $|\psi|$ is also an eigenfunction of $A$ belonging to $\mu$.
So also are $|\psi| \pm \psi$. At least one of these is non zero.  
So $(A - \mu)\phi =0$ for some almost everywhere non-negative function $\phi$ which is not identically zero.

Let $E = \{ x: \phi(x) =0\}$ and let $f = \chi_E$. If $L^2 \ni g \ge 0$  then
\begin{align}
(A(fg), \phi) =  (f g, A\phi) = (f g, \mu \phi) =0\  \text{because} \ \ f\phi =0.
\end{align}
Since $A(f g) \ge 0$  and is orthogonal to $\phi$ it must be supported in  $E$. Therefore 
$A(f g)  = f A(f g)$ for all non-negative $g \in L^2$ and hence for all $g \in L^2$. Thus
$AM_f = M_f AM_f $,   where $M_f$ is the Hermitian projection consisting of multiplication by $f$. 
Take adjoints to find $M_fA = M_f AM_f = AM_f$.  So $M_f$ commutes  with $A$.
By Lemma  \ref{lemU2},  $f$ is constant and therefore a.e. equal to 0 or 1. 
  It can't be equal to 1 a.e. because $\phi > 0$ on a set of strictly positive measure. 
 Therefore $E$ is the empty set (up to a set of measure $0$).
So $\phi > 0 $ a.e..
      Thus either $|\psi| - \psi >0$ a.e., in which case $\psi = - |\psi|$ a.e, or else 
      $|\psi| + \psi >0$ a.e., in which case $\psi = |\psi|$ a.e..
    Thus any eigenfunction is either strictly positive a.e. or strictly negative a.e. Since 
 two such functions cannot be orthogonal  the eigenspace has dimension one.
\end{proof}

\bigskip
\noindent
\begin{proof}[Proof of Corollary \ref{corEU}] Since the hypotheses of Theorem \ref{thmns2} imply 
the hypotheses of Theorem \ref{thmE1} the latter theorem ensures that $H$ has an eigenvalue 
at the bottom of its spectrum of finite multiplicity. 
  Theorem \ref{thmU2} will show that the eigenvalue has multiplicity one and belongs to an
    a.e. positive eigenfunction once we verify the four conditions \eref{U4}-\eref{U7}. 

    We already know that \eref{U7} holds by Theorem \ref{thmE1}.
\eref{U6} was proven in Theorem \ref{thmesa2}. \eref{U5} follows from \eref{EU55}  by intersecting both sides
with $L^\infty$.    For the proof of \eref{U4} observe that in our case $H_0 =\n^*\n$ and the measure $m$
satisfies the logarithmic Sobolev inequality \eref{mt1}. 
The constant functions therefore span the nullspace,
and in fact there is a spectral gap,  by the Rothaus-Simon theorem  \cite{Rot1}, \cite{Simon1976}. For
a direct proof of the Rothaus-Simon theorem see \cite[Theorem 2.5 ]{G1993}  or \cite[Proposition 5.1.3]{BGL}.
This completes the proof of Corollary \ref{corEU}. 
\end{proof}

\subsection{Upper and lower bounds on $\|\psi\|_p$ for $p>0$}     \label{seculb}

\begin{theorem} \label{thmulb}
Assume that \eref{mt1} holds and that $\|e^{-V}\|_{L^\nu(m)} < \infty$ for some $\nu > 2c$. Suppose also
that $V \in L^{p_1}$ as in \eref{EU7}. Denote by $\psi$ the ground state of $\n^*\n +V$. Then
 \begin{align}
 \int_X \psi^2 \log\psi\ dm &\le c_\nu\log \|e^{\l_0 - V}\|_\nu  \label{L510}\\
 \|\psi\|_p &\le  \|e^{\l_0 - V}\|_\nu^{\tau(p)},\ \   2 \le p <p_0 \label{L511} \\
 \|\psi\|_r  &\ge    \|e^{\l_0-V}\|_\nu^{-\sigma} ,\ \ \  0 < r <2,  \label{L512}  
 \end{align}
 where $\tau(p)$ is given by \eref{L505}, $c_\nu$ is given by \eref{L329a},  and 
\begin{align}
\sigma = c_\nu(2r^{-1} -1).     \label{L513}
\end{align} 
 \end{theorem}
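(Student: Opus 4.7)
The three bounds will be proved in order, each using a different tool already developed earlier in the paper. Throughout, use the identity
\begin{align*}
\log \|e^{\l_0 - V}\|_\nu = \l_0 + \log \|e^{-V}\|_\nu,
\end{align*}
and note that the Federbush lower bound $\l_0 \ge -\log\|e^{-V}\|_\nu$ (a consequence of \eref{L291} for $q=2$ via the spectral theorem) guarantees $\|e^{\l_0-V}\|_\nu \ge 1$, so raising this quantity to larger positive powers makes it larger. Of the three parts I expect (L512) to involve the only nontrivial idea; (L510) and (L511) are direct applications.

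For (L510), the plan is to apply the defective logarithmic Sobolev inequality \eref{L325h} to the eigenfunction $u=\psi$. Since $H\psi = \l_0\psi$ and $\int \psi^2\,dm = 1$, the right side collapses to $2c_\nu(\l_0 + \log\|e^{-V}\|_\nu) = 2c_\nu \log\|e^{\l_0-V}\|_\nu$. On the left, $Ent_m(\psi^2) = 2\int_X \psi^2 \log\psi\,dm$ (the normalization term vanishes), so dividing by $2$ yields \eref{L510}.

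For (L511), the plan is to exploit that $\psi$ is an eigenvector of the semigroup: $e^{-tH}\psi = e^{-t\l_0}\psi$. By Corollary \ref{corhb2} applied with $q=2$, for any $p \in [2, p_0)$ and any $t \ge \tau(p) - \tau(2) = \tau(p)$,
\begin{align*}
\|\psi\|_p = e^{t\l_0}\|e^{-tH}\psi\|_p \le e^{t\l_0}\|e^{-V}\|_\nu^{t} \|\psi\|_2 = \|e^{\l_0-V}\|_\nu^{t}.
\end{align*}
Taking $t=\tau(p)$, which is $\ge 0$ by \eref{L325s} and \eref{L325t}, gives the tightest bound, namely \eref{L511}.

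For (L512), the plan is to use Jensen's inequality on the probability measure $d\mu := \psi^2 dm$. Writing $\int \psi^r dm = \int \psi^{r-2}\,d\mu = \int e^{(r-2)\log\psi}\,d\mu$, Jensen yields
\begin{align*}
\log \int_X \psi^r\, dm \ge (r-2)\int_X \log\psi\,d\mu = (r-2)\int_X \psi^2\log\psi\,dm.
\end{align*}
Since $r<2$ the factor $(r-2)$ is negative, so inserting the upper bound \eref{L510} reverses the inequality in our favor and gives
\begin{align*}
\log \int_X \psi^r\,dm \ge (r-2)c_\nu \log\|e^{\l_0-V}\|_\nu.
\end{align*}
Here we used $\log\|e^{\l_0-V}\|_\nu \ge 0$. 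Taking $1/r$ times the exponential and noting $(r-2)c_\nu/r = -c_\nu(2r^{-1}-1) = -\sigma$ delivers \eref{L512}. The only subtle point is keeping track of inequality directions against the sign of $(r-2)$ and of $\int \psi^2\log\psi\,dm$, which is why Federbush's inequality is needed at the start to ensure $\|e^{\l_0-V}\|_\nu \ge 1$.
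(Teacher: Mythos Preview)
Your proof is correct and follows essentially the same approach as the paper: apply \eref{L325h} to $u=\psi$ for \eref{L510}, use the eigenvector identity $e^{-tH}\psi=e^{-t\l_0}\psi$ together with \eref{L289} at $q=2$ for \eref{L511}, and combine Jensen's inequality on the probability measure $\psi^2\,dm$ with \eref{L510} for \eref{L512}. One minor remark: the Federbush bound $\|e^{\l_0-V}\|_\nu\ge 1$ that you invoke is not actually needed anywhere in the argument for \eref{L512}, since multiplying the inequality \eref{L510} by the negative number $(r-2)$ reverses it regardless of the sign of either side; the paper's proof does not use it.
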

    \begin{proof} Since $\|\psi\|_2 = 1$ we have   $Ent_m(\psi^2) = 2 \int \psi^2 \log \psi\ dm$. 
    Choosing $u = \psi$ in \eref{L325h} we therefore find
    \begin{align}
 \int_X \psi^2 \log \psi\  dm &\le c_\nu\<(H+\log \|e^{-V}\|_\nu)\psi, \psi\> \\
 &= c_\nu(\l_0 + \log \|e^{-V}\|_\nu),
 \end{align}
 which proves \eref{L510}.
    
     For the proof of \eref{L511}   we apply \eref{L289} with $q=2$. Since $\tau(2) = 0$ and
  $H\psi = \l_0 \psi$ it follows from \eref{L289} that  
\begin{align}
e^{-t\l_0} \|\psi\|_p = \|e^{-tH}\psi\|_p \le \|e^{-V}\|_\nu^t \|\psi\|_2   \ \text{if}\ \ \ t \ge  \tau(p). \label{L294a}
\end{align}
Hence  $\|\psi\|_{p} \le  \| e^{(\l_0-V)}\|_\nu^t \ \ \ \text{if}\ \ \ t \ge \tau(p)$.  In particular \eref{L511} holds.

Since $(2-r)c_\nu = \sigma r$ it follows from \eref{L510} that 
$ (2-r) \int_X \psi^2 \log\psi\ dm \le \log \|e^{\l_0 - V}\|_\nu^{\sigma r}$
 and therefore 
\begin{align}
(r-2) \int_X \psi^2 \log\psi\ dm &\ge \log \|e^{\l_0 - V}\|_\nu^{-\sigma r}.     \label{L514}
\end{align}
Now $m_\psi \equiv \psi^2 m$ is a probability measure. Using Jensen's inequality we find
\begin{align}
\exp\((r-2) \int_X \psi^2 \log \psi dm\) &=\exp\(\int_X (\log\psi^{r-2}) dm_\psi\)  \notag\\
&\le\int_X \exp(\log\psi^{r-2}) dm_\psi       \notag\\
&= \int_X \psi^{r} dm.    \label{L515}
\end{align}
Combine this with \eref{L514} to find \eref{L512}.
\footnote{The author thanks Barry Simon for a remark in a letter of  August 5, 1973 
that led to the simple proof of \eref{L512}.}
\footnote{Inequalities similar to those in the proof of Theorem \ref{thmulb} 
 can be found in \cite[page 33]{Mat98}}
\end{proof}
 
\begin{remark} {\rm
Theorem \ref{thmulb} relies on hypercontractivity in the hypothesis to achieve a lower 
bound on $\|\psi\|_r$ in terms of $\|\psi\|_2$. Without some extra condition 
on $\psi$ beyond $\|\psi\|_2 =1$ such a lower 
bound cannot hold.  For example if $\psi =\epsilon^{-1} \chi_{[0,\epsilon^2]}$ in $L^2([0,1])$
then $\int_0^1\psi^2 dx = 1$ while $\int_0^1 \psi(x) dx = \epsilon$. So the $L^2$ norm does not control the
$L^1$ norm in the absence of some further condition, such as hypercontractivity. 
}
\end{remark}

\section{The product of moments $\|\psi\|_r \|\psi^{-1}\|_s$} \label{secpm} 

\subsection{The moment product theorem }    \label{secmpt}

 In the previous section we were concerned with the existence and uniqueness
of the ground state $\psi$ and its growth properties, as measured by the $L^p$
 norms $\|\psi\|_{L^p(m)}$ with $  p >0$. The key determiner of these properties was
 the behavior of the negative part 
 of the potential, as measured by $\|e^{-V}\|_{L^\nu(m)}$. In the present section we will be concerned with 
 the decay behavior of $\psi$  where  it is small, as measured by the norms $\|\psi^{-1}\|_{L^s(m)}$. 
 The key determiner
 of this behavior will be the positive part of the potential, as measured by $\|e^{V}\|_{L^\kappa(m)}$.
 The maximum value of $s$ for which we can establish such bounds is given in the following notation
 and theorem.
 
\begin{notation}\label{notkappa3}{\rm For $\ka >0$ let
 \beq
 b_\ka = \sqrt{1 +(2c/\kappa)}.     \label{s98f} 
 \eeq
  The quadratic equation 
\begin{align}
t^2 -(2\kappa/c) (t+1) = 0   \label{W850}
\end{align} 
has two solutions of opposite sign, $s_0>0$ and $-r_0<0$,  given by  
\begin{align}
   s_0 &=(\kappa/c)\(  b_\ka +1 \)\ \ \text{and} \ \  
            r_0=  (\kappa/c)\( b_\ka -1 \) <1,                                 \label{W851g}
 \end{align} 
 in accordance with the quadratic formula: 
 \beq
 t =(1/2)\{2\ka/c \pm \( (2\ka/c)^ 2+4(2\ka/c)\)^{1/2} \}= (\ka/c)(1 \pm b_\ka).
 \eeq
 The assertion that $r_0<1$  in \eref{W851g}  follows directly from  \eref{W850}, 
 which shows that $t+1 >0$ for any solution,  and in particular for $t= -r_0$. 
      For later use note that\eref{W851g}  is equivalent to 
 \begin{align}
 s_0 = \frac{2}{b_\ka -1},\ r_0 = \frac{2}{b_\ka +1}\ \ \text{and also to} \ \  2s_0^{-1} +1 = b_\ka = 2r_0^{-1} -1.    \label{W851k}
 \end{align}
 The quadratic function in \eref{W850} factorizes as
 \begin{align}
 (2\ka/c)(t+1) - t^2 = (s_0-t)(t+r_0).    \label{W851f}
 \end{align} 
 With $c_\nu$ defined as in \eref{L329a}, define
 \begin{align}
 \ell(t) &= \frac{c}{2b_\ka}\log \frac{t + b_\ka c_\nu}{t- b_\ka c_\nu}, \ \ \text{for}\ \ t > b_\ka c_\nu. \label{W852}
 \end{align}
 For later use note that 
 \begin{align}
 &\ell(t) - t \ \ \text{has a unique zero point in} \ \  (b_\ka c_\nu, \infty)\ \  \text{and} \label{W854}\\
 &\ell(t) + t\ \   \text{has a unique minimum point in} \ \ (b_\ka c_\nu, \infty).    \label{W854a}
 \end{align}
 The  statement  \eref{W854} follows from the fact that $\ell(t)$ is strictly decreasing from $\infty$ down to
 zero  on the given interval while $t$ strictly increases to $\infty$ on the interval.
 \eref{W854a} follows from the computation
  \beq 
 \ell'(t) = \frac{-c c_\nu }{t^2 - (b_\ka c_\nu)^2},\ \ t > b_\ka c_\nu,    \label{W854c}
 \eeq
 which shows that the derivative $\ell'(t) + 1$ is strictly increasing from
 $-\infty$ to $1$ on the interval $\{t > b_\ka c_\nu\}$.
}
\end{notation}

\begin{theorem} \label{thmmp1}  $($Moment product theorem$)$. 
Assume the hypotheses of Theorem \ref{thmesa2} hold.
Let $\ka >0$ and define $r_0$ and $s_0$ 
as in Notation \ref{notkappa3}. Assume further that $\|e^V\|_\ka < \infty$. Suppose that 
\begin{align}
0<r< r_0\ \ \ \text{and}\ \ \  0 < s < s_0.    \label{W852a}
\end{align}
Let
\begin{align}
\sigma &=(2r^{-1} - 1) c_\nu \ \ \ \text{and}\ \ \  a = (2s^{-1} +1) c_\nu.  \label{W852b}
\end{align}
Then 
\begin{align}
\|\psi\|_r \| \psi^{-1}\|_s \le   \| e^{V-\l_0}\|_\kappa^{\ell(a) + \ell(\sigma)}     \label{W753g}
\end{align}
\end{theorem}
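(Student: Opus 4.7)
The plan is to feed Aida's identity \eqref{W30} into the LSI \eqref{mt1} via a matching choice of test functions, apply Young's inequality \eqref{BG500c} to absorb the potential into the entropy, and then integrate the resulting bound by Herbst's method. Set $F = -\log\psi$, $M(t) = \int_X e^{tF}\,dm$, and $K = \log\|e^{V-\l_0}\|_\ka$. Since $M(s) = \|\psi^{-1}\|_s^s$ and $M(-r) = \|\psi\|_r^r$, writing $h(w) := (\log M(w))/w$ (with $h(0) := \int_X F\,dm$), the target \eqref{W753g} is equivalent to $h(s)-h(-r) \le (\ell(a)+\ell(\sigma))K$.

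First I would apply Aida's identity with $v(u)=e^{tu}$, for which $(v'+v)(u)=(t+1)e^{tu}$, obtaining
\[
\int_X e^{tF}|\n F|^2\,dm \;=\; \frac{1}{t+1}\int_X e^{tF}(V-\l_0)\,dm,\qquad t\neq -1.
\]
Then the LSI \eqref{mt1} applied to $\phi\circ F$ with $\phi(u)=e^{tu/2}$, so that $(\phi\circ F)^2 = e^{tF}$ and $(\phi')^2\circ F = (t^2/4)e^{tF}$, gives
\[
Ent_m(e^{tF}) \;\le\; \frac{ct^2}{2}\int_X e^{tF}|\n F|^2\,dm \;=\; \frac{ct^2}{2(t+1)}\int_X e^{tF}(V-\l_0)\,dm.
\]
Now \eqref{BG500c} with $g=e^{tF}$ and $u=\ka(V-\l_0)$ yields $\int_X e^{tF}(V-\l_0)\,dm \le \ka^{-1}\,Ent_m(e^{tF}) + K\,M(t)$. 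Absorbing the entropy term to the left produces the coefficient $1-ct^2/[2\ka(t+1)]$, which by \eqref{W851f} equals $c(s_0-t)(t+r_0)/[2\ka(t+1)]$ and is strictly positive exactly on $t\in(-r_0,s_0)$. Rearrangement produces the key entropy estimate
\[
Ent_m(e^{tF}) \;\le\; \frac{\ka\,t^2\,K\,M(t)}{(s_0-t)(t+r_0)},\qquad t\in(-r_0,s_0).
\]

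Herbst's identity $h'(t) = Ent_m(e^{tF})/[t^2 M(t)]$ turns this into $h'(t) \le \ka K/[(s_0-t)(t+r_0)]$, which I integrate from $-r$ to $s$ using the partial fractions $\frac{1}{(s_0-w)(w+r_0)} = \frac{1}{s_0+r_0}\bigl(\frac{1}{s_0-w}+\frac{1}{w+r_0}\bigr)$ together with $s_0+r_0 = 2\ka b_\ka/c$. The result is
\[
\log\bigl(\|\psi\|_r\,\|\psi^{-1}\|_s\bigr) \;=\; h(s)-h(-r) \;\le\; \frac{cK}{2b_\ka}\log\frac{(s+r_0)(s_0+r)}{(s_0-s)(r_0-r)}.
\]
What remains is an algebraic identification of this right side with $(\ell(a)+\ell(\sigma))K$. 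Substituting $a=(2s^{-1}+1)c_\nu$, $\sigma=(2r^{-1}-1)c_\nu$ and $b_\ka c_\nu = (2s_0^{-1}+1)c_\nu = (2r_0^{-1}-1)c_\nu$ from \eqref{W851k}, the $c_\nu$'s cancel in each of $a\pm b_\ka c_\nu$, $\sigma\pm b_\ka c_\nu$, giving
\[
\frac{(a+b_\ka c_\nu)(\sigma+b_\ka c_\nu)}{(a-b_\ka c_\nu)(\sigma-b_\ka c_\nu)} \;=\; \frac{(s_0+s+ss_0)(r_0+r-rr_0)}{(s_0-s)(r_0-r)}.
\]
The elementary factorizations $s_0+s+ss_0 = (1+s_0)(s+r_0)$ and $r_0+r-rr_0 = (1-r_0)(s_0+r)$, combined with $(1+s_0)(1-r_0)=1$ (all following from $s_0-r_0 = s_0 r_0$, equivalently $s_0(1-r_0)=r_0$), collapse the right side to $(s+r_0)(s_0+r)/[(s_0-s)(r_0-r)]$, and the integrated bound is exactly $(\ell(a)+\ell(\sigma))K$, proving \eqref{W753g}.

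The main obstacle I anticipate is the rigorous justification of Aida's identity \eqref{W30} at the unbounded test function $v(u)=e^{tu}$. Since $\psi$ is only known a priori to lie in $L^{p_0}(m)$, the function $\psi^{-t}=e^{tF}$ need not be integrable and is certainly not in any obvious operator domain making the formal integration by parts from WKB valid; a truncation/approximation argument is required, and the author flags this by deferring the precise derivation to Theorem \ref{thmA3}. A secondary technicality is verifying $M(t)<\infty$ on an open neighborhood of $0$ before Herbst's integration can be carried out; this can be bootstrapped by first applying the entropy bound to bounded truncations of $F$ and passing to the limit.
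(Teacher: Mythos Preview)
Your proposal is correct and follows essentially the same route as the paper: Aida's identity feeds the LSI to give an entropy bound, Young's inequality with $e^{\ka(V-\l_0)}$ produces the factor $(s_0-t)(t+r_0)$ via \eqref{W851f}, and Herbst's identity $h'(t)=Ent_m(e^{tF})/[t^2 M(t)]$ is integrated over $(-r,s)$. You have also correctly identified the two technical points the paper handles rigorously---the truncation $\hat F=\phi\circ F$ with $0\le\phi'\le1$ to make Aida's identity and the Herbst differentiation valid (Lemmas~\ref{lemeb3}, \ref{lemH2}, \ref{lemptm}), and the subsequent passage to the limit.

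The only noteworthy difference is in the final algebraic identification. The paper's Sublemma~\ref{sublem5} rewrites the integrated logarithm via the substitution $t\mapsto 2t^{-1}+1$ and the relations $2s_0^{-1}+1=b_\ka=2r_0^{-1}-1$ from \eqref{W851k}, arriving at $\ell(a)+\ell(\sigma)$ by direct matching. Your route---the factorizations $s_0+s+ss_0=(1+s_0)(s+r_0)$, $r_0+r-rr_0=(1-r_0)(s_0+r)$, and $(1+s_0)(1-r_0)=1$, all consequences of the Vieta relation $s_0-r_0=s_0 r_0$---is a slightly cleaner alternative that makes the cancellation transparent without the intermediate change of variable. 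Both are equivalent and equally valid.
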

The proof will be given in the next four subsections.

\begin{remark} \label{rempr} {\rm We will see by example in Section \ref{secgp} that the peculiar 
restriction on $r$ and $s$ ensuing from the quadratic equation \eref{W850}
 is not  an artifact of the proof.
}
\end{remark}

\subsection{Aida's identity}

\begin{theorem}\label{thmA3}  {\rm  $($Aida, \cite[Equ. (3.26)]{Aida2001}$)$.}
Assume the hypotheses of Theorem \ref{thmesa2}. 
Denote by $\psi$ the ground state for $H \equiv \n^*\n +V$ $($closure$)$.   Informally, 
\begin{align}
\n^*\n \psi + V \psi = \l_0 \psi.     \label{sch}
\end{align}
Let 
\begin{align}
F = - \log \psi        \label{W6}
\end{align}
and let $v :\R \to \R$ be a bounded $C^1$ function with bounded derivative. 
Then, writing $v(F)$ for the composition of  $v$ with $F$, we have
\begin{align}
\int_X (v'(F) + v(F)) |\n F|^2 dm = \int_X v(F) (V -\l_0) dm. \ \  \text{Aida's identity.}      \label{W30a}
\end{align}
In particular,
\begin{align}
\ \ \ \int_X |\n F|^2 dm  = \int_X (V - \l_0) dm.    \label{W31}
\end{align}
\end{theorem}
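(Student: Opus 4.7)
The plan is to apply the weak form of the eigenvalue equation $H\psi = \l_0\psi$, namely
$$
\int_X \n\psi\cdot\n g\,dm + \int_X V\psi g\,dm = \l_0\int_X \psi g\,dm,\qquad g\in Q(H),
$$
valid for $g$ in $Q(H) = Q(\n^*\n)\cap Q(|V|)$ by Corollary \ref{cordom}, to the formal test function $g = v(F)/\psi$. On the $m$-conull set $\{\psi>0\}$ the chain rule gives $\n F = -\n\psi/\psi$ and $\n(1/\psi) = -\n\psi/\psi^2$, hence $\n g = -(v'(F)+v(F))\n\psi/\psi^2$, so that $\n\psi\cdot\n g = -(v'(F)+v(F))|\n F|^2$ and $\psi g = v(F)$. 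Formal substitution produces Aida's identity.

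Rigorously $g = v(F)/\psi$ need not lie in $Q(H)$, so I would regularize by setting $g_n := v(F)\min(1/\psi, n)$ for $n = 1,2,\ldots$. Since $t\mapsto \min(1/t, n)$ is Lipschitz on $(0,\infty)$ with derivative $-t^{-2}\chi_{\{t>1/n\}}$, the Sobolev chain rule gives $g_n\in Q(\n^*\n)$, and since $g_n$ is bounded and $V\in L^1(m)$ (a consequence of $V\in L^{p_1}(m)$ and $m(X)=1$) one has $g_n\in Q(|V|)$, so $g_n\in Q(H)$. A direct computation yields
$$
\n\psi\cdot\n g_n = -v'(F)\,|\n\psi|^2\,\min\!\bigl(1/\psi^2,\,n/\psi\bigr) - v(F)\,|\n F|^2\,\chi_{\{\psi>1/n\}},
$$
and $\psi g_n = v(F)\min(1, n\psi)$, uniformly bounded by $\|v\|_\infty$. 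For $v\equiv 1$ this reduces to $\n\psi\cdot\n g_n = -|\n F|^2\chi_{\{\psi>1/n\}}$, so monotone convergence gives $\int_X \n\psi\cdot\n g_n\,dm \to -L$ with $L:=\int_X|\n F|^2\,dm\in[0,\infty]$; dominated convergence on the right gives $\int V\psi g_n\,dm\to \int V\,dm$ and $\l_0\int\psi g_n\,dm\to\l_0$. The weak equation therefore forces $L<\infty$ and $L = \int(V-\l_0)\,dm$, establishing \eqref{W31}.

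With $|\n F|^2\in L^1(m)$ in hand, for general bounded $v$ with bounded $v'$ the two summands in $\n\psi\cdot\n g_n$ are pointwise dominated by $(\|v\|_\infty+\|v'\|_\infty)|\n F|^2\in L^1(m)$ and converge a.e.\ to $-v'(F)|\n F|^2$ and $-v(F)|\n F|^2$ respectively. Dominated convergence gives $\int_X \n\psi\cdot\n g_n\,dm \to -\int_X(v'(F)+v(F))|\n F|^2\,dm$, while $\int V\psi g_n\,dm\to\int V v(F)\,dm$ and $\l_0\int\psi g_n\,dm\to \l_0\int v(F)\,dm$ by dominated convergence with dominators $\|v\|_\infty|V|$ and $\|v\|_\infty$. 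Rearranging produces \eqref{W30a}. The main obstacle I anticipate is the $v\equiv 1$ bootstrap: obtaining $L<\infty$ (not merely $L\leq\infty$) is what unlocks dominated convergence in the general case, and it hinges on the single-sign structure of the regularized gradient term, which makes monotone convergence available without any a priori control on $1/\psi$.
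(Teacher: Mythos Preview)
Your approach is essentially the paper's: test the weak eigenvalue equation against a regularized version of $v(F)/\psi$, do $v\equiv 1$ first to obtain $|\n F|^2\in L^1(m)$, then pass to general $v$ by dominated convergence. The paper regularizes via the shift $\psi\mapsto\psi_\ep=\psi+\ep$, taking $w=v(F_\ep)\psi_\ep^{-1}$ with $F_\ep=-\log\psi_\ep$; you instead truncate, taking $g_n=v(F)\min(1/\psi,n)$.

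There is one wrinkle in your justification of $g_n\in Q(\n^*\n)$. You argue that $t\mapsto\min(1/t,n)$ is Lipschitz, but $g_n$ is not $h_n(\psi)$: it is $\Phi_n(\psi)$ with $\Phi_n(t)=v(-\log t)\min(1/t,n)$, and for $0<t<1/n$ one has $\Phi_n'(t)=-n\,v'(-\log t)/t$, which is unbounded as $t\downarrow 0$. So $\Phi_n$ is \emph{not} Lipschitz on $(0,\infty)$ and the chain rule alone does not give $g_n\in Q(\n^*\n)$ for general $v$. The paper's $\ep$-shift avoids this because $t\mapsto v(-\log(t+\ep))/(t+\ep)$ \emph{is} Lipschitz on $[0,\infty)$, so both the $1/\psi$ and the $F$ are regularized simultaneously.

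Your argument is easily repaired, and in fact your own ordering already provides the fix: for $v\equiv 1$ the test function is genuinely $h_n(\psi)$ with $h_n$ Lipschitz, so the bootstrap to $\n F\in L^2(m)$ is clean. Once that is in hand, $v(F)$ is bounded with weak gradient $v'(F)\n F\in L^2(m)$, hence $v(F)\in Q(\n^*\n)$; then $g_n=v(F)\cdot h_n(\psi)$ is a product of two bounded elements of $Q(\n^*\n)$ and therefore lies in $Q(\n^*\n)$. Just defer the membership claim for general $v$ until after the $v\equiv 1$ step.
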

        \begin{proof} 
  Since $\psi$ is in the domain of $H$ it is also in the form domain of $H$.  
  By\eref{EU60} it is therefore in the form domain of $\n^*\n$. That is, $\n \psi \in L^2(m)$.
  Let $\ep >0$ and define $\psi_\ep = \psi +\ep$ and $F_\ep = - \log \psi_\ep$.
   Since $\psi \ge 0$, $\psi_\ep^{-1}$ is bounded and $\psi/\psi_\ep \le 1$. 
  Moreover $\n F_\ep = -\psi_\ep^{-1} \n \psi$,  which is 
  in $L^2(m)$. 
   
    Suppose that $v:\R \to \R$ is $C^1$ and $v$ and $v'$ are bounded.
Let
\begin{align} 
w =v(F_\ep) e^{F_\ep} = v(F_\ep) \psi_\ep^{-1}.
\end{align}
Then $w$ is bounded and also
\begin{align}
\n w &= v'(F_\ep) e^{F_\ep} \n F_\ep +v e^{F_\ep} \n F_\ep \notag \\
&= (v' +v) e^{F_\ep} \n F_\ep,
\end{align}
which is in $L^2(m)$ because $(v'+v)$  and $e^{F_\ep}$ are bounded.  
 So $w\in Q(\n^*\n)\cap Q(|V|)$.
We may therefore compute the inner product of $w$ with both sides of the Schr\"odinger 
equation $H\psi = \l_0\psi$
as follows.
\begin{align*}
\l_0\int_X \psi w\, dm &= \int_X (H\psi) w\, dm \\
&=\int_X(\n^*\n \psi) w\, dm + \int_X V \psi w \, dm \\
&=\int_X(\n \psi)\cdot (\n w)\, dm + \int_X V\psi  w \, dm\\
&=\int_X (\n \psi)\cdot \((v' +v) \psi_\ep^{-1} \n F_\ep \) dm+ \int _X V\psi  w \, dm\\
&= -\int_X  (v' +v) \n F_\ep\cdot \n F_\ep dm + \int_X V\psi w \, dm.
\end{align*}
Therefore  
\begin{align}
\int_X  (v' +v) |\n F_\ep|^2  dm = \int_X (V-\l_0)v(F_\ep) (\psi/\psi_\ep) dm,    \label{W33}
\end{align}
where we have written $v$ for $v \circ F_\ep$.
Consider first the case $v \equiv 1$. Since 
$|\n F_\ep|^2 = \psi_\ep^{-2} |\n \psi|^2 \uparrow \psi^{-2}|\n \psi|^2 = |\n F|^2$ 
as $\ep \downarrow 0$, we may apply the monotone 
convergence theorem on the left and the dominated convergence theorem on the right to find
\begin{align}
\int_X |\n F|^2  dm = \int_X (V-\l_0) dm.
\end{align}
This proves \eref{W31}. Furthermore, since  $|(v' +v) |\n F_\ep|^2| \le \(\sup( |v + v')|\) | \n F|^2$, we can apply
the dominated convergence theorem to the left side of \eref{W33} (as well as on the right) to find \eref{W30a}.
  \end{proof}
\begin{remark} {\rm We will need different regularizations to carry out computations based 
on Aida's identity. The regularization of $F$ given by $\psi \to \psi+\ep$, which we used 
in the preceding proof was already used by Aida \cite{Aida2001}. 
}
\end{remark}
  
\begin{remark}\label{remWKB} {\rm (WKB equation).  
Aida's identity \eref{W30a} can be informally  derived, as in Remark \ref{remAWKB}, from the 
 WKB equation,
\begin{align}
\n^*\n F+ | \n F|^2 = V - \l_0, \ \ \      \text{ (WKB)}    \label{W7}
\end{align}
where $F$ is given by \eref{W6}.  
\eref{W7} itself follows from our form of the Schr\"odinger 
equation \eref{sch}
with the help of the product rule for the $m$ divergence operator $\n^*$, defined in \eref{div1}, namely,
 \begin{align}
\n^*(f \alpha ) &= f\n^* \alpha - \alpha \cdot \n f \ \ \ \    \text{$($product rule$)$},  \label{gs5}
\end{align}
where $\alpha$ is a vector field on $X$ and $f$ is a real valued function. The product rule follows
readily from the definition \eref{div1}.
To derive \eref{W7} observe that  \eref{W6} gives 
 $\n \psi = - \psi \n F$ and therefore    $\n^* \n \psi = - \n^* (\psi \n F) = \n \psi \cdot \n F - \psi \n^*\n F$.
It follows then from \eref{sch} 
that  $ (\l_0 - V)\psi =   \n \psi \cdot \n F - \psi \n^*\n F$.
Divide this equation by $\psi$ to find \eref{W7}.
}
\end{remark}

\begin{remark}\label{remAWKB} {\rm  (Aida's identity from WKB).
Proceeding informally, let $g$ be a ``differentiable'' real valued function on $X$.
Multiply \eref{W7}  
by $g$ and integrate over $X$ to find
\begin{align}
\int_X g(x)(V(x) - \l_0) dm(x) &= \int_X\( ( \n^*\n F)g + g|\n F|^2\) dm(x) \notag \\
&=  \int_X\( (\n F)\cdot (\n g) + g|\n F|^2\) dm(x).   \label{W26}
\end{align}
Let $g(x) = v(F(x))$. Then $\n g (x) = v'(F(x)) \n F(x)$. Insert this into \eref{W26} to find \eref{W30a}.
The integration by parts in \eref{W26} needs to be justified. This was done 
in our  proof of Theorem \ref{thmA3}. 
}
\end{remark}

\subsubsection{Examples} \label{secA2}

We will derive information about  the $L^p$ norms $\|\psi^{\pm 1}\|_p$ by first applying the 
logarithmic Sobolev inequality \eref{mt1} to compositions $w\circ F$, with $F = -\log \psi$. 
For this we will need bounds for integrals of the form $\int_X u(F(x)) |\n F|^2 dm$ because they 
 appear as the energy term in \eref{mt1}. Aida's identity, \eref{W30a}, allows us to express 
such an integral directly in terms of the potential $V$. To carry this procedure out it is necessary to find a function $v$ such that  
\beq
v'(s) + v(s) = u(s), \ \ \     s \in \R    \label{W41b}
\eeq
when $u$ is given, as we can see from Aida's identity.  
Depending on the choice of $u$, we  will derive  different 
entropy bounds in Section \ref{secEbound}
and then, via Herbst's method, norm bounds in Section \ref{secmombd}.

    In the following examples for Theorem \ref{thmA3} we ignore the previous boundedness 
    restrictions on $v$ and $v'$ because 
these restrictions can be removed once more information about $V$ is available. 

    \begin{example} \label{Ex1} {\rm Suppose that  $u(s) = e^{as}$ for some real $a\ne -1$. Then we 
may take $v(s) =(1+a)^{-1} e^{as}$ as a solution to \eref{W41b}. 
\eref{W30a} then shows that
\beq
\int_X e^{aF(x)}  |\n F|^2 dm(x) = (1+a)^{-1} \int_X e^{aF} (V -\l_0) dm.    \label{W35}
\eeq
This simple example, with $a+1 >0$,  underlies our main estimates. We will need
to truncate this function $v$ at first  to justify some technical steps.
}
\end{example}

\begin{example} \label{Ex10} 
{\rm        (A general class of examples). Suppose that $u:\R \to \R$ is a non-negative 
 continuous  function such that
\begin{align}
\int_{-\infty}^0 e^{r} u(r) dr <\infty.     \label{W39}
\end{align}
Define
\begin{align}
v(s) = e^{-s}\int_{-\infty}^s e^r u(r) dr,\ \ \ -\infty < s < \infty .    \label{W40}
\end{align}
Then
\beq
v'(s) + v(s) = u(s)                  \label{W41}
\eeq
and, by \eref{W30a}, 
\begin{align}
\int_X u(F) |\n F|^2 dm = \int_X v(F) (V -\l_0) dm.      \label{W42}
\end{align}
Of course in each application of this identity one must verify the integrability of both sides for the given 
functions $u$ and $v$.
}
\end{example}

\begin{example} \label{Ex3.6c} {\rm  $($Aida, \cite[Equ. (3.27)]{Aida2001}$)$. For any real number $a$ there holds
\begin{align}
\int_{F \ge a} |\n F|^2 dm \le \int_{F\ge a} (V - \l_0) dm       \qquad       (Aida)        \label{W43a}
\end{align}
and
\begin{align}
\int_{F > a} |\n F|^2 dm \le \int_{F> a} (V - \l_0) dm.       \qquad       (Aida)        \label{W43b}
\end{align} 
    \begin{proof} Choose $\ep >0$ and let $v$ be a smooth non-decreasing function 
on $\R$ which is zero on $(-\infty, a -\ep]$ and one on   $[a,\infty)$. Then
 \begin{align}
 \int_{F \ge a} |\n F|^2 dm &\le \int_X v(F(x))  |\n F|^2 dm \notag \\
 &\le \int_X (v + v')  |\n F|^2 dm  \notag \\
 &=\int_X v(F(x)) (V -\l_0)dm     \notag\\
 &=\int_{a-\ep < F < a} v(F(x))(V- \l_0) dm + \int_{F\ge a} (V - \l_0) dm  \label{W43b}
 \end{align}
 Since $|v| \le 1$ and $V$ is integrable we can let $\ep\downarrow 0$ and find that the first 
 term on the right of  \eref{W43b} goes to zero. This proves \eref{W43a}.
 
 Use \eref{W43a} and the dominated convergence theorem twice to find 
\begin{align*}
\int_{F> a}|\n F|^2 dm &=\lim_{n\to \infty} \int_{F\ge a+(1/n)}|\n F|^2 dm 
&\le \lim_{n\to \infty} \int_{F\ge a+(1/n)}(V-\l_0) dm \\
&=\int_{F >a} (V-\l_0)dm.
\end{align*} 
 \end{proof} 
}
\end{example}

Note: The set $\{ F= a\}$ could be a set of strictly positive measure. But, interestingly, one always has 
$\int_{F= a} |\n F|^2 dm =0 $. We will not need this fact.

          \subsection{Entropy bound from Aida's identity}  \label{secEbound}

The simple identity in Example \ref{Ex1} underlies the method of this section. 
We will combine variants of it with the logarithmic Sobolev inequality \eref{mt1} to derive bounds
on $L^p(m)$ norms of $1/\psi$.

To avoid technical problems we need to use first a bounded truncation 
of $F$, denoted $\hat F$ in the next theorem. 
 We will remove the truncation in Section \ref{secpfmp}. 
 
 If, for some real valued function $f$ on $X$, one puts $u = e^{f/2}$ into the 
 logarithmic Sobolev inequality \eref{mt1},
 one finds
 \begin{align}
Ent_m(e^f) \le (c/2)  \int_X |\n f|^2 e^f dm.     \label{W60}
\end{align}
This is actually equivalent to \eref{mt1}. It is 
convenient to use this form of the 
logarithmic Sobolev inequality.

        \begin{theorem} \label{thmeb2} $($Entropy bound$)$.   Assume that the  logarithmic 
        Sobolev inequality \eref{W60} holds for $m$.
       Assume that  the hypotheses of Theorem \ref{thmesa2} hold and also that $\|e^V\|_\ka < \infty$ for 
   some $\ka >0$. Denote by   $s_0$ and $r_0$  the roots of the quadratic equation 
   defined in Notation \ref{notkappa3}.
Let
\beq
\eta = \log \int_X e^{\kappa(V-\l_0)} dm.\ \ \ \ \ \ \ \ \ \ \    \label{W746}
\eeq
 Suppose that $\phi:\R \to \R$ is bounded, smooth and 
\beq
0 \le \phi' \le 1.              \label{W53}
\eeq
 Let
\beq
\hat F(x) = \phi (F(x)) .            \label{W54}
\eeq
Then
\begin{align}
Ent_m(e^{t\hat F} )&\le \frac{t^2}{(s_0-t)(t+r_0)}  \eta  E(e^{t\hat F})
  \ \  \  \text{if} \ t \in (-r_0, s_0). \label{W801f}  
\end{align}
\end{theorem}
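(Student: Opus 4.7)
The strategy is to combine the logarithmic Sobolev inequality (in the form \eref{W60}) with Aida's identity (Theorem \ref{thmA3}), then close the resulting implicit inequality via the Young-type bound \eref{BG500c}. Because $\phi$ is bounded and smooth, $\hat F$ is bounded, so $e^{t\hat F}$ is bounded above and below by positive constants; this makes every integral finite and eliminates any technical subtlety in what follows.

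I would first plug $f=t\hat F$ into \eref{W60}. Since $\n \hat F = \phi'(F)\n F$, this gives
\begin{align}
Ent_m(e^{t\hat F}) \le \frac{ct^2}{2}\int_X (\phi'(F))^2\, |\n F|^2\, e^{t\hat F}\, dm.  \notag
\end{align}
Next, I would apply Aida's identity with the test function $v(s) = (t+1)^{-1} e^{t\phi(s)}$, which is legitimate since $\phi$ and $\phi'$ are bounded and $t+1 > 0$ throughout $(-r_0,s_0)$ (because $r_0<1$ by \eref{W851g}). A short computation gives $v'(s)+v(s) = (t+1)^{-1}(t\phi'(s)+1)\,e^{t\phi(s)}$, so \eref{W30a} yields
\begin{align}
\int_X (t\phi'(F)+1)\,e^{t\hat F}\,|\n F|^2\, dm = \int_X e^{t\hat F}(V-\l_0)\, dm.   \notag
\end{align}
The key pointwise inequality is $(t+1)x^2 \le tx + 1$ for $x\in[0,1]$ and $t>-1$; this follows from the factorization $(t+1)x^2 - tx - 1 = (x-1)((t+1)x+1)$, whose factors have opposite signs on $[0,1]$ when $t>-1$. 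Applied with $x=\phi'(F)$, this lets me replace $(\phi'(F))^2$ by $(t+1)^{-1}(t\phi'(F)+1)$ under the integral, so by the Aida identity just obtained,
\begin{align}
\int_X (\phi'(F))^2\, |\n F|^2\, e^{t\hat F}\, dm \le \frac{1}{t+1}\int_X e^{t\hat F}(V-\l_0)\, dm.   \notag
\end{align}

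Combining this with the LSI bound gives $Ent_m(e^{t\hat F}) \le \frac{ct^2}{2(t+1)} E((V-\l_0)e^{t\hat F})$. Finally I would apply Young's inequality \eref{BG500c} with $g = e^{t\hat F}$ and $u = \kappa(V-\l_0)$ to deduce $\kappa E((V-\l_0)e^{t\hat F}) \le Ent_m(e^{t\hat F}) + \eta\, E(e^{t\hat F})$. Substituting and rearranging yields
\begin{align}
\left(1 - \frac{ct^2}{2\kappa(t+1)}\right) Ent_m(e^{t\hat F}) \le \frac{ct^2}{2\kappa(t+1)}\, \eta\, E(e^{t\hat F}).   \notag
\end{align}
For $t\in(-r_0,s_0)$ the prefactor on the left is strictly positive since $2\kappa(t+1)-ct^2 = c(s_0-t)(t+r_0)>0$ by \eref{W851f}, so dividing through produces exactly the claimed inequality \eref{W801f}. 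The main conceptual obstacle is identifying the right $v$ in Aida's identity: its role is precisely to convert the quadratic nonlinearity produced by LSI into the linear dependence on $(V-\l_0)$ that Young's inequality can then absorb, and the quadratic inequality $(t+1)x^2\le tx+1$ is exactly what forces the admissible range of $t$ to be governed by the roots of \eref{W850}.
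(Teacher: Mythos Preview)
Your proof is correct and follows essentially the same route as the paper's: apply the LSI \eref{W60} with $f=t\hat F$, use Aida's identity with $v(s)=(t+1)^{-1}e^{t\phi(s)}$ together with the pointwise bound $(t+1)(\phi')^2\le t\phi'+1$ on $[0,1]$, then close with Young's inequality \eref{BG500c} and the factorization \eref{W851f}. The only cosmetic difference is that you prove the pointwise inequality via the factorization $(x-1)((t+1)x+1)$, while the paper writes it as the chain $(1+t)(\phi')^2\le(1+t)\phi'\le 1+t\phi'$; and you apply Aida's identity before the pointwise inequality rather than after, which is a trivial reordering.
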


Note: $\phi$ is intended to be  a bounded approximation to the identity function $\phi(s) = s$. 
It will later be taken to be 
a smooth approximation to $ \phi_n(s) := (-n)\vee(s\wedge n)$.

   \begin{lemma} \label{lemeb3}  If $\phi: \R \to \R$ is smooth, bounded 
and $0 \le \phi' \le 1$ then, for $\hat F  = \phi \circ F$, we have
\begin{align}
Ent_m(e^{t\hat F}) \le \frac{ct^2}{2(1+t)} \int_X e^{t\hat F} (V- \l_0) dm \ \  \text{if}\ \ 1+t >0.  \label{W59g}
\end{align} 
\end{lemma}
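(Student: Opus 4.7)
The plan is to apply the logarithmic Sobolev inequality \eqref{W60} with $f=t\hat F$, convert the resulting energy term into an integral against $|\n F|^2$ using the chain rule, and then invoke Aida's identity (Theorem \ref{thmA3}) with a judiciously chosen $v$ to replace $|\n F|^2$ by $V-\l_0$.

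Concretely, applying \eqref{W60} to $t\hat F$ gives
\begin{align*}
Ent_m(e^{t\hat F}) \le \frac{c}{2}\int_X |\n(t\hat F)|^2 e^{t\hat F}\,dm = \frac{ct^2}{2}\int_X (\phi'(F))^2 |\n F|^2 e^{t\hat F}\,dm,
\end{align*}
and since $0\le \phi'\le 1$, we can bound $(\phi')^2 \le \phi'$ to get
\begin{align*}
Ent_m(e^{t\hat F}) \le \frac{ct^2}{2}\int_X \phi'(F)\, e^{t\hat F}|\n F|^2\,dm.
\end{align*}
The next step is the key trick: take $v(s)=\frac{1}{1+t}e^{t\phi(s)}$ in Aida's identity. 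Then $v$ is smooth and bounded with bounded derivative (since $\phi$ is), so Theorem \ref{thmA3} applies. A direct computation yields
\begin{align*}
v'(s)+v(s) = \frac{t\phi'(s)+1}{1+t}\, e^{t\phi(s)} = \phi'(s) e^{t\phi(s)} + \frac{1-\phi'(s)}{1+t}e^{t\phi(s)},
\end{align*}
and under the hypothesis $1+t>0$ together with $\phi'\le 1$, the second summand is non-negative. Thus
\begin{align*}
\phi'(F)e^{t\hat F} \le (v'+v)\circ F \ \ \text{pointwise},
\end{align*}
and multiplying by $|\n F|^2\ge 0$ and integrating, Aida's identity gives
\begin{align*}
\int_X \phi'(F) e^{t\hat F}|\n F|^2\,dm \le \int_X (v'(F)+v(F))|\n F|^2\,dm = \frac{1}{1+t}\int_X e^{t\hat F}(V-\l_0)\,dm.
\end{align*}
Combining with the entropy estimate produces exactly \eqref{W59g}.

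I expect no serious obstacle here, since the truncation $\phi$ was introduced precisely to make $v$ (and its derivative) bounded so that Aida's identity is directly applicable; the only subtle point is the algebraic identity $(t\phi'+1)/(1+t)-\phi' = (1-\phi')/(1+t)$ which converts the bound $\phi'\ge(\phi')^2$ into one that meshes with the output of Aida's identity through a careful choice of $v$. The non-trivial quadratic bookkeeping in \eqref{W801f} (roots $s_0$, $-r_0$ of \eqref{W850}) is the subject of the next step of the proof of Theorem \ref{thmeb2}, where the integral $\int e^{t\hat F}(V-\l_0)\,dm$ will be split using Young's inequality between a term controlled by $Ent_m(e^{t\hat F})$ and a term controlled by $\|e^{V-\l_0}\|_\kappa$, so as to absorb the entropy back to the left side; that step is separate from the present lemma.
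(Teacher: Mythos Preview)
Your proof is correct and essentially identical to the paper's. The paper uses the same $v(s)=(1+t)^{-1}e^{t\phi(s)}$ and proves $u(s):=e^{t\phi(s)}\phi'(s)^2\le v(s)+v'(s)$ in a single chain of inequalities, whereas you first pass from $(\phi')^2$ to $\phi'$ and then bound $\phi'(s)e^{t\phi(s)}\le v(s)+v'(s)$; this is only a cosmetic reordering of the same two uses of $0\le\phi'\le1$.
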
 
    \begin{proof}
Insert $f(x) = t\phi(F(x))$ into the logarithmic Sobolev inequality \eref{W60} to find
\begin{align}
Ent_m(e^{t \hat F}) &\le (c/2)\int_X e^{t\hat F} |\n (t \phi\circ F)|^2 dm  \notag\\
&=(ct^2/2) \int_X  e^{t\hat F} \phi' (F(x))^2 |\n F|^2 dm.         \label{W73}
\end{align}
Let      
\beq
u(s) = e^{t\phi(s)} \phi'(s)^2\ \ \ \text{ and}  \ \ v(s) = (1+t)^{-1} e^{t\phi(s)}.  \label{W74}
\eeq
We will show that
\beq
u(s) \le v(s) + v'(s).      \label{W59}
\eeq 
Since $1+t >0$ and $(1+t)(v(s) + v'(s)) =e^{t\phi(s)}\( 1+ t \phi'(s)\)$,  we have
\begin{align*}
(1+t) u(s) &= e^{t\phi(s)}\((1+t)  \phi'(s)^2\) \\
& \le  e^{t\phi(s)}\((1+t)  \phi'(s)\) \\
& =     e^{t\phi(s)} \( \phi'(s) + t \phi'(s)\) \\
 & \le e^{t\phi(s)} \( 1+ t\phi'(s)\) \\
 &= (1+t) (v(s) + v'(s)).
\end{align*}
Divide by $1+t$ to find \eref{W59}.
From \eref{W59} and Aida's identity  \eref{W30a} we find 
\begin{align}
\int_X  e^{t\phi(F)} \phi'(F)^2 |\n F|^2 dm &=   \int_X u(F)  |\n F|^2 dm  \notag\\
&\le  \int_X(v(F) + v'(F))  |\n F|^2 dm        \notag\\
&=  \int_X v(F) (V- \l_0) dm \notag\\
&=\frac{1}{(1+t)} \int_X e^{t\phi(F)} (V- \l_0) dm.  \label{W59f}
\end{align}
Combine this with \eref{W73}, using $\hat F = \phi \circ F$,  to find \eref{W59g}.
\end{proof}

\bigskip
\noindent
\begin{proof}[Proof of Theorem \ref{thmeb2}] 
 From Young's inequality   \eref{BG500c} we have
\begin{align}
 \int_X e^{t\hat F} \kappa(V- \l_0) dm &\le 
 Ent_m(e^{t\hat F}) + \(\log \int_X e^{\kappa(V-\l_0)} dm \) E(e^{t\hat F})   \notag\\ 
  &=   Ent_m(e^{t\hat F}) + \eta  E(e^{t\hat F}).    \label{W59y}
\end{align}
Note that $t+1 >0$ if $t \in (-r_0, s_0)$ because $r_0 <1$, by \eref{W851g}.
From \eref{W59g}  and \eref{W59y} we find
 \begin{align}
Ent_m(e^{t\hat F})  &\le  \frac{ct^2}{2\kappa(1+t)}\( Ent_m(e^{t\hat F}) + \eta  E(e^{t\hat F}) \)  \notag
\end{align}
and therefore
\begin{align}
 \(1-  \frac{ct^2}{2\kappa(1+t)}\)Ent_m(e^{t\hat F}) 
                         \le   \frac{t^2}{2(\kappa/c)(1+t)}\eta  E(e^{t\hat F}). \label{W801a}
 \end{align}
 But
  \begin{align}
 \(1-  \frac{ct^2}{2\kappa(1+t)}\) &= 1 -\frac{t^2}{2(\kappa/c)(1+t)} 
  = \frac{2(\kappa/c)(1+t) - t^2}{2(\kappa/c)(1+t)}  .    \notag
\end{align}
Insert this into \eref{W801a} and cancel denominators to find
\begin{align}
 \( 2(\kappa/c)(1+t) - t^2\)Ent_m(e^{t\hat F}) \le  t^2\eta  E(e^{t\hat F}). \label{W801b}
 \end{align}
 The coefficient of  $Ent_m(e^{t\hat F}) $  factorizes by \eref{W851f} into  $(s_0 - t)(t+r_0)$, 
 which is strictly positive for  $ t \in (-r_0, s_0)$. We may therefore divide by it to find \eref{W801f}.
 \end{proof}

\subsection{Moment bound from entropy: Herbst's method}    \label{secmombd}

Herbst's method for deriving bounds on the moments $E(\psi^{-s})$ consists 
first in deriving bounds on the ratios  $Ent_m(\psi^{-s})/ E(\psi^{-s})$, which we 
have already done in Section \ref{secEbound}
for the truncated versions of $\psi^{-s}$. (Recall $\psi^{-s} = e^{sF}$.) 
 Second, one expresses the derivative $(d/ds)\(s^{-1}\log E(\psi^{-s})\)$ in terms of this ratio
and then integrates 
 the resulting differential inequality.
     
             In the many applications,  
     \cite{Car79,AMS94,ASt94, AS94, BG99,DS84,Hino1997,Led1995,Rot8,GR98,Ust92,Ust93,Ust96,vanHandel}, 
 of this method, however, 
 one needs information about the initial condition at $s=0$ in order to derive information 
 at time $s$  from the differential inequality.   In our setting  
 this initial condition takes the form of an assumption on $E(\log \psi)$, which we cannot use
 because  the only size hypothesis available to us is the normalization 
 condition $E(\psi^2) = 1$. Instead, we will  continue the differential inequality  
 through the apparent singularity at $s=0$ and use for initial condition
 the value of $E(\psi^r)$ for some $r >0$. We will thereby derive an upper bound  
 for $\int_X \psi^{-s} dm$ in terms of a lower bound  for  
$\int_X \psi^r dm$.  The lower bound has already been derived in Section \ref{seculb}.
Further discussion of the impracticality of using the initial condition at $s =0$ is given 
in Remark \ref{reminit2}.

The next lemma carries out Herbst's method in the form we need for passing through 
the apparent singularity. 
We abstract this step in  Herbst's method by
replacing the truncated function $\hat F$ by a general bounded measurable function $g$.
Various forms of the identity \eref{H2} figure in many of the 
applications of Herbst's method  in the papers listed above.

\begin{lemma}\label{lemH2} 
 Let $g:X\to R$ be a  bounded measurable function.  Define 
 \begin{align}
 \|e^{g}\|_t = E(e^{tg})^{1/t} \ \ \text{for}\ \ \ t\ne0, \ \ \ t \in \R. \label{H1}
 \end{align}
  Then 
 \begin{align}
(d/dt)\log \|e^g\|_t  =\frac{Ent_m(e^{tg})}{t^2 E(e^{tg})}, \ \ \ t \ne0. \label{H2}
\end{align}
Moreover
\begin{align}
\lim_{t\downarrow 0} \log \|e^g\|_t  &=\int_X g\, dm =\lim_{t\uparrow 0} \log \|e^g\|_t .  \label{H3}
\end{align}
The singularity on the right hand side of \eref{H2}  at $t=0$  is removable in the sense that 
the right side  extends  to a continuous function on $\R$. 

Suppose that $\beta$ is a 
continuous function on an interval $(-r_0, s_0)$ including $0$ such that
\begin{align}
\frac{Ent_m(e^{tg})}{t^2 E(e^{tg})}  \le \beta(t),\ \ \ 0 \ne t \in (-r_0, s_0).    \label{H4}
\end{align}
Then
\begin{align}
\|e^{-g}\|_r \|e^g\|_s \le e^{\int_{-r}^s \beta(t) dt}\ \ \ \text{when}\ 0 <r < r_0\ \text{and}\ 0 < s < s_0. \label{H5}
\end{align}
\end{lemma}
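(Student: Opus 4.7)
The plan is to treat the four claims in order, reducing the whole statement to a single application of the fundamental theorem of calculus once the derivative identity and the removability of the singularity at $t=0$ are in hand.

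First I would prove \eref{H2} by straightforward differentiation. Writing $\log\|e^g\|_t = t^{-1}\log E(e^{tg})$ and differentiating gives
\begin{align*}
\frac{d}{dt}\log\|e^g\|_t = -\frac{\log E(e^{tg})}{t^2} + \frac{E(g\,e^{tg})}{t\,E(e^{tg})}
= \frac{E(tg\,e^{tg}) - E(e^{tg})\log E(e^{tg})}{t^2 E(e^{tg})},
\end{align*}
and recognizing the numerator as $Ent_m(e^{tg})$ yields \eref{H2}. The boundedness of $g$ justifies differentiation under the integral. For \eref{H3} I would Taylor-expand $E(e^{tg}) = 1 + tE(g) + O(t^2)$ (again using boundedness of $g$) so that $t^{-1}\log E(e^{tg}) \to E(g)$ as $t\to 0$ from either side.

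Next I would address the removability. A second-order expansion gives $Ent_m(e^{tg}) = \tfrac12 t^2\,\mathrm{Var}_m(g) + O(t^3)$ while $E(e^{tg}) \to 1$, so the right side of \eref{H2} tends to $\tfrac12\mathrm{Var}_m(g)$ as $t\to 0$. Combined with \eref{H3}, this shows that $t\mapsto \log\|e^g\|_t$ extends to a continuously differentiable function on all of $(-r_0,s_0)$, with derivative equal to the removed-singularity function everywhere, and bounded above by $\beta(t)$ on the whole interval by continuity of $\beta$ and the hypothesis \eref{H4} off of $t=0$.

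Finally, to obtain \eref{H5} I would use the identity
\begin{align*}
\log\|e^{-g}\|_r = \frac{1}{r}\log E(e^{-rg}) = -\frac{1}{-r}\log E(e^{-rg}) = -\log\|e^g\|_{-r},
\end{align*}
so that
\begin{align*}
\log\|e^{-g}\|_r + \log\|e^g\|_s = \log\|e^g\|_s - \log\|e^g\|_{-r} = \int_{-r}^{s}\frac{d}{dt}\log\|e^g\|_t\,dt \le \int_{-r}^{s}\beta(t)\,dt,
\end{align*}
where the middle equality is the fundamental theorem of calculus applied to the $C^1$ extension, and the inequality uses the extended bound. Exponentiating gives \eref{H5}.

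The only step that requires any care is checking that the derivative identity \eref{H2} really does extend continuously through $t=0$; I expect this to be the main obstacle in the sense that one must verify it carefully rather than just invoke it. Once continuity at $0$ is established, the rest is one line of the fundamental theorem of calculus together with the sign trick $\log\|e^{-g}\|_r = -\log\|e^g\|_{-r}$.
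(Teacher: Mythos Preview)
Your proposal is correct and follows essentially the same route as the paper: direct differentiation for \eref{H2}, a Taylor expansion for the behavior near $t=0$, and then the fundamental theorem of calculus together with the identity $\log\|e^{-g}\|_r=-\log\|e^g\|_{-r}$ to obtain \eref{H5}. The only stylistic difference is that you package the passage through $t=0$ as a $C^1$ extension and apply FTC once over $[-r,s]$, whereas the paper splits the integral at $0$ and uses the one-sided limits \eref{H3} directly to write $u(s)-u(-r)=(u(s)-u(0_+))+(u(0_-)-u(-r))$; both arguments are equivalent, and your computation of the limiting value $\tfrac12\mathrm{Var}_m(g)$ is in fact cleaner than the paper's expansion.
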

        \begin{proof}  Let $w(t) = E(e^{tg})$. If $t \ne0$ then
 \begin{align}
 w'(t) &=  E(e^{tg} g) = (1/t)E(e^{tg} \log e^{tg})  \notag\\
 &=(1/t) \(Ent_m(e^{tg}) + w(t) \log w(t)\).         \label{H6}
 \end{align} 
 Therefore
 \begin{align*}
 &(d/dt) \log \|e^g\|_t = (d/dt) \((1/t) \log (w(t)\) \\
 &=(1/t)w^{-1}w'  -(1/t^2)\log w(t)   \\
 & = (1/t^2) w(t)^{-1}   \(Ent_m(e^{tg}) + w(t) \log w(t)\)      -(1/t^2)\log w(t) \\
 &= (1/t^2)w(t)^{-1} Ent_m(e^{tg}).
 \end{align*}
 This proves \eref{H2}.
 
      If $t >0$ then $\|e^g\|_t = E((e^g)^t)^{1/t} \to \exp(\int_X g\, dm)$  as $t\downarrow 0$ by   
      \cite[Page 71, Problem 5]{Ru}.  
       This proves the first equality in \eref{H3}.
      If $t<0$ let $s = -t$. Then $\|e^g\|_t =  E(((e^{-g})^s)^{-1/s} \to \(\exp \int (-g) dm\)^{-1} = \exp \int g\, dm$
       as   $s \downarrow 0$.  This proves the second equality. 
       
       Concerning the removability of the  singularity in \eref{H2} observe that for small $t$ we have
\begin{align}
&Ent_m(e^{tg}) =E(e^{tg} tg) - E(e^{tg}) \log E(e^{tg}) \notag \\
&=   E( tg + t^2g^2 +o(t^2))      \notag\\
&\ \ \ \ \ \   -  \(1 + E(tg) +O(t^2)\)\log\(1+E(tg) +(t^2/2)E(g^2)  +o(t^2)\)  \notag \\
&  = tE(g) + t^2E(g^2) +o(t^2)                 \notag\\
&\ \ \ \ \ \  -  \(1 + E(tg) +O(t^2)\)\(E(tg) +(t^2/2)E(g^2) -E(tg)^2/2 +o(t^2)\) \notag \\
&= t^2E(g^2) +o(t^2)   -  \((t^2/2)E(g^2) -E(tg)^2/2 +o(t^2)\)  \notag\\
&\ \ \ \ \ \  - \( E(tg) +O(t^2)\)\(E(tg) +(t^2/2)E(g^2) -E(tg)^2/2 +o(t^2)\)          \notag\\
&  = (t^2/2) E(g^2)    -(t^2/2) E(g)^2      +o(t^2).         \notag 
\end{align}
Divide by $t^2$ to see that the right hand side of \eref{H2} has a common limit from the left and the right.

Now suppose that \eref{H4} holds. Writing $u(t) = \log E(e^{tg})^{1/t}$ for $t \ne 0$ we find from \eref{H2} 
and \eref{H4} that $du/dt \le \beta(t)$ for $t \ne 0$. Therefore, taking into account \eref{H3}, we have  
$u(s) - u(-r) = u(s) - u(0_+) +\(u(0_-) - u(-r)\) \le  \int_0^s\beta(t) dt + \int_{-r}^0 \beta(t) dt 
= \int_{-r}^s \beta(t)dt$.
Take the exponential of this inequality  to find
\begin{align}
e^{u(s)} e^{-u(-r)} \le \exp  \int_{-r}^s \beta(t)dt.    \notag
\end{align}
Since $e^{u(s)} = \|e^{g}\|_s$ and $e^{-u(-r)} = \|e^{-g}\|_r$ the inequality \eref{H5} is proved.
\end{proof}

\begin{remark} \label{reminit} {\rm
The proof of Lemma \ref{lemH2} shows that one can bound $\|e^g\|_s$ and $\|e^{-g}\|_r$  
 separately:
One has $u(s) - u(0_+) \le \int_0^s \beta(t) dt$, which gives 
\beq
\|e^g\|_s \le e^{E(g)} e^{\int_0^s \beta(t) dt}, \ \ \ 0 < s < s_0.   \label{H10}
\eeq
Similarly $u(0_-) - u(-r) \le \int_{-r}^0 \beta(t) dt$, which gives 
$e^{E(g)} \|e^{-g}\|_r \le  e^{\int_{-r}^0 \beta(t) dt}$.  \eref{H5} follows by multiplying these two inequalities
and canceling $e^{E(g)}$.

        It is the inequality \eref{H10}, with untruncated $g$, which is usually used in the application 
        of Herbst's method. 
         Information  about $E(g)$ is available  in these applications 
 and is sometimes taken as a hypothesis.  But in our application $g$ is a 
  truncated version of $\log\psi$. We have no useful information about $E(\log \psi)$.
  The usefulness of the product inequality \eref{H5} relies on the fact that $E(g)$ does not appear.
  See Remark \ref{reminit2} for further discussion of our case.
}
\end{remark}

\begin{remark} \label{remHerbsthist}  {\rm In many of the classical applications of the inequality \eref{H2}
one assumes that  a logarithmic Sobolev inequality, such as \eref{mt1}, holds and 
that $C\equiv \sup_X |\n g| < \infty$. In this case one has the simple entropy bound 
\begin{align*}
  Ent_m(e^{tg}) &\le 2c \int |\n e^{tg/2}|^2 dm \\
  &=2c(t/2)^2 \int e^{tg} |\n g|^2 dm \\
  & \le (ct^2/2)     E(e^{tg}) C^2,
  \end{align*}
  from which it follows that one can take $\beta(t) = cC^2/2$ for all $t \ne0$ in \eref{H4}.
 By \eref{H10} we have then
 \begin{align}
 E(e^{sg}) \le e^{sE(g)} e^{s^2cC^2/2}\ \ \text{for}\  s >0\ \text{and therefore for all}\ \ s \in \R.
 \end{align}
 One can remove the boundedness assumption on $g$ in this inequality while maintaining the bound
 $|\n g| \le C$ on its gradient. Such knowledge of the Laplace transform of $g$ can be used to deduce
 other bounds on functions of $g$. See for example \cite[page 100]{AMS94}. 
      The identity \eref{H6}, which is equivalent to \eref{H2}, was the form originally used by 
      Herbst (cf. \cite[Corollary 3.4]{AMS94}).
 It works well for the case $g(x) = a x^2$ on $\R$. \eref{H2} was already used by 
 van Handel \cite{vanHandel} in the study of sub Gaussian measures.

 For our application of \eref{H2} we will need to use  the entropy bound \eref{W801f}, which produces 
 the choice of $\beta(t)$ given in \eref{W756} and which is singular at the endpoints of the interval $(-r_0, s_0)$.
}
\end{remark}

\subsection{Proof of the moment product theorem} \label{secpfmp}

The next lemma  proves Theorem \ref{thmmp1} for a truncated version of $\psi$.
As  before, we write $\psi = e^{-F}$.

\begin{lemma}\label{lemptm}   $($Product of truncated moments$)$.
Assume the hypotheses and notation of Theorem \ref{thmmp1}. 
 Denote  by $\hat F$ the truncated function defined in \eref{W54}. Then
 \begin{align}
\|e^{-\hat F}\|_r \| e^{\hat F}\|_s \le   \| e^{V-\l_0}\|_\kappa^{\ell(a) + \ell(\sigma)}  .  \label{W753ga}
\end{align}
\end{lemma}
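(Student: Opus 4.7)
The plan is to combine the entropy bound of Theorem \ref{thmeb2} with Herbst's method (Lemma \ref{lemH2}), applied to $g = \hat F$. Since $\phi$ is bounded, so is $\hat F$, so Lemma \ref{lemH2} applies directly. From Theorem \ref{thmeb2} we get
\begin{align}
\frac{Ent_m(e^{t\hat F})}{t^2 E(e^{t\hat F})} \le \frac{\eta}{(s_0 - t)(t + r_0)} =: \beta(t), \qquad t\in(-r_0, s_0).
\end{align}
Plugging this $\beta$ into \eqref{H5} yields
\begin{align}
\|e^{-\hat F}\|_r \, \|e^{\hat F}\|_s \le \exp\!\left(\eta \int_{-r}^{s} \frac{dt}{(s_0 - t)(t + r_0)}\right)
\end{align}
for $0<r<r_0$ and $0<s<s_0$, so the problem reduces to computing the integral and matching it to $\ell(a)+\ell(\sigma)$.

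Next I would evaluate the integral by partial fractions:
\begin{align}
\frac{1}{(s_0-t)(t+r_0)} = \frac{1}{s_0+r_0}\left(\frac{1}{s_0-t} + \frac{1}{t+r_0}\right),
\end{align}
giving
\begin{align}
\int_{-r}^{s} \frac{dt}{(s_0-t)(t+r_0)} = \frac{1}{s_0+r_0}\log\frac{(s+r_0)(s_0+r)}{(s_0-s)(r_0-r)}.
\end{align}
From \eqref{W851k} we have $s_0 = 2/(b_\ka-1)$ and $r_0 = 2/(b_\ka+1)$, so $s_0+r_0 = 4b_\ka/(b_\ka^2-1) = 2\ka b_\ka/c$ since $b_\ka^2-1 = 2c/\ka$. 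Hence $1/(s_0+r_0) = c/(2\ka b_\ka)$, and using $\|e^{V-\l_0}\|_\ka = e^{\eta/\ka}$, the estimate becomes
\begin{align}
\|e^{-\hat F}\|_r \, \|e^{\hat F}\|_s \le \|e^{V-\l_0}\|_\ka^{\,(c/(2b_\ka))\log\frac{(s+r_0)(s_0+r)}{(s_0-s)(r_0-r)}}.
\end{align}

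The remaining step is purely algebraic: show that the exponent equals $\ell(a)+\ell(\sigma)$ with $a = (2s^{-1}+1)c_\nu$ and $\sigma = (2r^{-1}-1)c_\nu$. Using $1+b_\ka = 2r_0^{-1}$ and $1-b_\ka = -2s_0^{-1}$ one computes
\begin{align}
a + b_\ka c_\nu = \frac{2c_\nu(s+r_0)}{s\, r_0}, \qquad a - b_\ka c_\nu = \frac{2c_\nu(s_0-s)}{s\, s_0},
\end{align}
and similarly
\begin{align}
\sigma + b_\ka c_\nu = \frac{2c_\nu(r+s_0)}{r\, s_0}, \qquad \sigma - b_\ka c_\nu = \frac{2c_\nu(r_0-r)}{r\, r_0}.
\end{align}
Multiplying and canceling $s_0, r_0, s, r, c_\nu$ gives
\begin{align}
\frac{(a+b_\ka c_\nu)(\sigma+b_\ka c_\nu)}{(a-b_\ka c_\nu)(\sigma-b_\ka c_\nu)} = \frac{(s+r_0)(s_0+r)}{(s_0-s)(r_0-r)},
\end{align}
so the exponent is exactly $\ell(a)+\ell(\sigma)$ by the definition \eqref{W852}, yielding \eqref{W753ga}.

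The only real obstacle is the algebraic matching in the last step; everything else is a clean application of Lemma \ref{lemH2} together with a partial-fractions integration. The hypothesis $0<r<r_0$, $0<s<s_0$ ensures all arguments of $\ell$ lie in its domain of definition $(b_\ka c_\nu, \infty)$, since then $a - b_\ka c_\nu > 0$ and $\sigma - b_\ka c_\nu > 0$ by the computations above.
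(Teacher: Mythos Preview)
Your proof is correct and follows essentially the same approach as the paper: apply Theorem \ref{thmeb2} to get the entropy bound, feed it into Herbst's method via Lemma \ref{lemH2}, and then compute the resulting integral by partial fractions and match it to $\ell(a)+\ell(\sigma)$. The only cosmetic difference is in the final algebraic verification, where the paper manipulates the logarithm's argument through a chain of reciprocal substitutions while you compute $a\pm b_\ka c_\nu$ and $\sigma\pm b_\ka c_\nu$ directly; your route is slightly more transparent.
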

      \begin{proof} 
Choose $g = \hat F$ in Lemma \ref{lemH2}. By \eref{W801f} we have
\begin{align}
\frac{Ent_m(e^{t\hat F})}{t^2 E(e^{t\hat F})} \le \beta(t),\ \   0 \ne t \in (-r_0, s_0),    \label{W755}
\end{align}
where
\begin{align}
\beta(t) =  \frac{\eta}{(s_0 -t)(t+r_0)}, \ \ \ t \in (-r_0, s_0).     \label{W756}
\end{align}
It follows from \eref{H5} that 
\begin{align}
\|e^{-\hat F}\|_r \| e^{\hat F}\|_s \le e^{\int_{-r}^s \beta(t) dt}  \ \ 
                              \text{when}\ 0 <r < r_0\ \text{and}\ 0 < s < s_0.        \label{W550c}
\end{align}
It remains only to compute that the right side of  \eref{W550c} is equal to the right side of \eref{W753ga}.
We isolate this computation in the following sublemma.
\end{proof}

\begin{sublemma}  \label{sublem5}
\begin{align}
\exp\(\int_{-r}^s  \frac{\eta}{(s_0 -t)(t+r_0)} dt\) =  \| e^{V-\l_0}\|_\kappa^{\ell(a) + \ell(\sigma)} 
\end{align}
\end{sublemma}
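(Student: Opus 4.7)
The plan is a direct computation based on a partial-fraction expansion. First I would recall from \eqref{W746} and the definition of the $\kappa$-norm that
\begin{align*}
\eta = \log \int_X e^{\kappa(V-\lambda_0)} dm = \kappa \log \|e^{V-\lambda_0}\|_\kappa,
\end{align*}
so it suffices to show
\begin{align*}
\int_{-r}^s \frac{dt}{(s_0-t)(t+r_0)} = \frac{\ell(a)+\ell(\sigma)}{\kappa}.
\end{align*}

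For the left side, partial fractions give
\begin{align*}
\frac{1}{(s_0-t)(t+r_0)} = \frac{1}{s_0+r_0}\left(\frac{1}{s_0-t}+\frac{1}{t+r_0}\right),
\end{align*}
and integrating from $-r$ to $s$ produces
\begin{align*}
\frac{1}{s_0+r_0}\left(\log\frac{s+r_0}{s_0-s} + \log\frac{s_0+r}{r_0-r}\right).
\end{align*}
From \eqref{W851g} one reads off $s_0+r_0 = 2\kappa b_\kappa/c$, so this equals $\frac{c}{2\kappa b_\kappa}\left(\log\frac{s+r_0}{s_0-s} + \log\frac{s_0+r}{r_0-r}\right)$.

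For the right side, I will use the two identities $b_\kappa - 1 = 2/s_0$ and $b_\kappa + 1 = 2/r_0$, which come from \eqref{W851k}. With $a = (2/s+1)c_\nu$, these yield
\begin{align*}
a + b_\kappa c_\nu &= c_\nu\bigl(2/s + 1 + b_\kappa\bigr) = c_\nu\bigl(2/s + 2/r_0\bigr) = \tfrac{2c_\nu(s+r_0)}{sr_0},\\
a - b_\kappa c_\nu &= c_\nu\bigl(2/s + 1 - b_\kappa\bigr) = c_\nu\bigl(2/s - 2/s_0\bigr) = \tfrac{2c_\nu(s_0-s)}{ss_0},
\end{align*}
so $\ell(a) = \tfrac{c}{2b_\kappa}\bigl(\log\tfrac{s+r_0}{s_0-s} + \log\tfrac{s_0}{r_0}\bigr)$. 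The analogous computation for $\sigma = (2/r-1)c_\nu$ gives $\ell(\sigma) = \tfrac{c}{2b_\kappa}\bigl(\log\tfrac{s_0+r}{r_0-r} + \log\tfrac{r_0}{s_0}\bigr)$.

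Adding these, the spurious terms $\log(s_0/r_0)$ and $\log(r_0/s_0)$ cancel, leaving
\begin{align*}
\ell(a)+\ell(\sigma) = \frac{c}{2b_\kappa}\left(\log\frac{s+r_0}{s_0-s} + \log\frac{s_0+r}{r_0-r}\right),
\end{align*}
which is exactly $\kappa$ times the integral computed above. Exponentiating and invoking $\eta = \kappa \log\|e^{V-\lambda_0}\|_\kappa$ completes the proof. There is no real obstacle here; the only thing to watch is that the two "stray" logarithms arising from the factorizations of $a\pm b_\kappa c_\nu$ and $\sigma\pm b_\kappa c_\nu$ are exact opposites, which is guaranteed by the symmetry $b_\kappa \pm 1 = 2/r_0, 2/s_0$ packaged in \eqref{W851k}.
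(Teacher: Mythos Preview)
Your proof is correct and follows essentially the same approach as the paper: both use the partial-fraction decomposition together with $s_0+r_0=2(\kappa/c)b_\kappa$ and the identities $b_\kappa\pm 1=2/r_0,\,2/s_0$ from \eqref{W851k}. The only cosmetic difference is that the paper massages the evaluated integral into the form $\ell(a)+\ell(\sigma)$ via a chain of substitutions, whereas you compute $\ell(a)$ and $\ell(\sigma)$ separately and verify they match the integral; the cancellation of the $\log(s_0/r_0)$ terms you highlight is exactly the same mechanism that appears implicitly in the paper's chain \eqref{W853c}.
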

    \begin{proof}
From the definition \eref{W746} of $\eta$ we have
$e^{\eta y} = \|e^{V - \l_0}\|_\ka^{\ka y}$
for any real number $y$. Thus we need to show that
\begin{align}
\int_{-r}^s \frac{\ka}{(s_0-t)(t+r_0)} dt = \ell(a) + \ell(\sigma).    \label{W853a}
\end{align}
From \eref{s98f} and \eref{W851g} we see that
 $s_0+ r_0= 2(\ka/c) b_\ka$.
 Therefore
\begin{align} 
\int_{-r}^s \frac{\ka}{(s_0-t)(t+r_0)} dt  
&=  \ka (s_0+ r_0)^{-1} \int_{-r}^s \(\frac{1}{(s_0-t)} + \frac{1}{(t+r_0)} \)dt     \notag\\
&=    (c/2b_\ka)\log\frac{t+r_0}{s_0-t} \big|_{-r}^s .              \label{W853b} 
\end{align}
We want to rewrite this  in terms of the quantities $a$ and $\sigma$ defined in \eref{W852b} because
they will appear explicitly in the defective logarithmic Sobolev inequality \eref{gs805} - \eref{gs807b}. 
To this end we have, using \eref{W851k},
\begin{align}
\log\frac{t+r_0}{s_0-t} \Big|_{-r}^s  & = \log \(\frac{ r_0^{-1} +t^{-1} }{t^{-1} -s_0^{-1}  }\ \frac{r_0}{s_0}\)
  \Big|_{-r}^s                \notag \\
&= \log \(\frac{ r_0^{-1} +t^{-1} }{t^{-1} -s_0^{-1}  } \) \Big|_{-r}^s     \notag \\
&= \log \(\frac{ (2r_0^{-1} -1)  +(2t^{-1} +1)}{(2t^{-1} +1) -(2s_0^{-1} +1)  } \) \Big|_{-r}^s  \notag \\
&=\log \(\frac{ b_\ka  +(2t^{-1} +1)}{(2t^{-1} +1) -b_\ka  } \) \Big|_{-r}^s     \notag \\
&= \log \(\frac{ b_\ka  +(2s^{-1} +1)}{(2s^{-1} +1) -b_\ka  } \) 
     - \log \(\frac{ b_\ka  +(-2r^{-1} +1)}{(-2r^{-1} +1) -b_\ka  } \)     \notag \\
&= \log \(\frac{ b_\ka  +(2s^{-1} +1)}{(2s^{-1} +1) -b_\ka  } \)  
+  \log \(\frac{ (2r^{-1} -1) + b_\ka}{-b_\ka +(2r^{-1} -1) } \)  \notag  \\
& = \log\frac{a + b_\ka c_\nu}{a - b_\ka c_\nu} + \log\frac{\sigma + b_\ka c_\nu}{\sigma - b_\ka c_\nu}. 
                                                              \label{W853c}
\end{align}
This, together with  \eref{W853b} and the definition \eref{W852}, proves \eref{W853a}.
\end{proof}

\bigskip
\noindent
\begin{proof}[Proof of Theorem \ref{thmmp1}] 
 We will  choose a sequence $\phi_n$ of functions, each of which satisfies
the conditions in Theorem \ref{thmeb2} for $\phi$ and such that $\phi_n(s)$ converges
to $s$ in a suitable sense.
Taking $\hat F = \phi_n\circ F$ in \eref{W753ga}, we will show that the limit yields \eref{W753g}.

For an integer $n \ge 1$ the function $\R \ni y \mapsto f_n(y) \equiv (-n)\vee(y \wedge n)$ is 
linear on $[-n,n]$ and  constant outside this interval. 
Choose a smooth nondecreasing function $\phi_n$ which agrees with $f_n$ outside 
the two intervals $\{|s - (\pm n)| < 1/2\}$, satisfies $0 \le \phi_n' \le 1$ and lies below $f_n$
for positive $y$ and above $f_n$ for negative $y$. Clearly such functions exist.
Then  $0 \le \phi_n(y) \uparrow y$ for $y \ge0$ and $0 \ge \phi_n(y)\downarrow y$ for $y \le 0$.
The  functions $ F_n(x) := \phi_n(F(x))$ then converge monotonically  
upward  on $\{x: F(x) \ge 0\}$ and downward on $\{ x: F(x) <0\}$.  For $s >0$
the sequence $\int_Xe^{s F_n} dm$ therefore converges to $\int_X e^{s F} dm$ by applying the 
monotone convergence theorem over the first set and the dominated convergence 
theorem over the second set. 
 Similarly,  for $r >0$ the sequence $\int_X e^{-r F_n} dm$ converges by applying these two 
 theorems to the opposite sets. 
 
     Choose $\phi$ in Theorem \ref{thmeb2}  to be $\phi_n$. The left side of \eref{W753ga}
     is then $\| e^{-F_n}\|_r \|e^{F_n}\|_s$, which converges to $\| e^{-F}\|_r \|e^{F}\|_s$ as $n\to \infty$.
 The right side of \eref{W753ga} 
 is independent of $n$ and the inequality therefore holds in the limit. 
 Since $\psi = e^{-F}$ and $\psi^{-1} = e^F$, \eref{W753g} follows. 
\end{proof}

\bigskip

\begin{remark}\label{reminit2}  {\rm Remark \ref{reminit}, together with the limiting
procedure of the previous proof, shows, informally, that
\begin{align}
\|\psi^{-1}\|_s &\le  \| e^{V-\l_0}\|_\ka^{\ell(a)} \exp{\int  F dm}, \ \ \ 0< s< s_0 \ \ \text{and}  \label{W860}\\
\|\psi\|_r  &\le  \| e^{V-\l_0}\|_\ka^{\ell(\sigma)}  \exp{-\int F dm},  \ \ \ 0<r < r_0   \label{W861}
\end{align}
On the one hand, the two exponential factors are finite because
\begin{align}
2\int (-F) dm &= \int \log \psi^2 dm \le  \int \psi^2 dm =1\ \ \ \text{and} \notag \\
s\int F dm &= \int \log \psi^{-s} dm \le \int \psi^{-s} dm < \infty
\end{align}
if $0 < s < s_0$, by Theorem \ref{thmmp1}.
But these inequalities are not useful for us because  we do not 
have good control over the size of the exponential factor in \eref{W860}. 
Some bounds on $\pm \int_X F dm$ are derived by Aida in \cite[Lemma 3.3, Part (4)]{Aida2001}.
 He requires only that a Poincar\'e inequality hold for $m$.
}
\end{remark}

\begin{remark}\label{remnonu} {\rm The bound in the moment product inequality \eref{W753g}
depends on  $\|e^V\|_\ka , \ka$ and $\l_0$, but
only uses the  condition $\|e^{-V}\|_\nu < \infty$ for the 
 purpose of showing $\n^*\n + V$ is essentially self-adjoint and that a unique 
 ground state exists. The boundary values 
 $r_0, s_0$ depend only on $\ka$. The inequality \eref{W753g} therefore holds without any specific
 assumption on $e^{-V}$ if the essential self-adjointness and existence of a unique ground state 
 can be shown by some other method. The equation \eref{W853c} shows that the exponent 
 of $\|e^{V-\l_0}\|_\ka$ in \eref{W753g} depends only on $c, \ka$ and on $r,s$ but not on $\nu$.
}
\end{remark}

\section{$L^p$ bounds on the inverse of the ground state} \label{seclp} 

\subsection{The controlling functional of $V$} \label{secM}

 The upper bound \eref{W753g} on the product of moments is 
 dominated by a power of $\|e^{V-\l_0}\|_\ka$ while a lower bound on $\|\psi\|_r$  
 is dominated by a power of   $\|e^{\l_0-V}\|_\nu$, as in \eref{L512}. The ground state eigenvalue
 appears in both sets of estimates. We will see that when combining these estimates so as to get
 a bound on  $\|\psi^{-1}\|_s$ it is possible to arrange these two factors in a product so that the eigenvalue
 $\l_0$ cancels. As a result the following functional of $V$ appears naturally
 in almost all of the estimates.

      \begin{notation} \label{notM} {\rm Let
 \begin{align}
 M = \|e^V\|_{L^\ka(m)} \| e^{-V}\|_{L^\nu(m)}.  \label{M1}
\end{align}
$M$ depends on $\ka, \nu$ and $V$.
$M$ has the following general properties for any $\ka >0, \nu >0$ and $a \in \R$.
\begin{align}
\|e^{V-a}\|_\kappa  \|e^{a - V}\|_\nu  &=M. \label{M2} \\
M &\ge 1. \label{M3}
\end{align}
\eref{M2} holds because the  constant factors $e^{-a}$ and $e^a$ cancel.
For the proof of \eref{M3} observe that for any $p >0$ we have
 \begin{align}
 1 = \( \int e^{pV/2} e^{-pV/2} dm\)^2 \le \int e^{pV} dm \int e^{-pV} dm. \notag
 \end{align}
 Therefore $\|e^V\|_p \|e^{-V}\|_p \ge 1$. Choose $ p = \min (\ka, \nu)$. 
 If, say, $p=\nu$ then we have
 $1 \le \|e^V\|_\nu \|e^{-V}\|_\nu  \le  \|e^V\|_\ka \|e^{-V}\|_\nu$ by H\"older's inequality. A similar argument 
 holds if $p =\ka$. This proves \eref{M3}. 
 }
\end{notation}

       \begin{lemma}  $($Upper and lower bounds on $\l_0)$. 
Assume that the logarithmic Sobolev inequality \eref{mt1} 
 holds. Then
\begin{align}
e^{-\l_0} &\le \|e^{-V}\|_\nu, \ \ \ \nu  \ge 2c  \ \ \ (\text{Federbush})\  \ \label{s1}   \\
e^{\l_0 }&\le \|e^V\|_\kappa , \ \ \ \ \ \kappa >0\ \ \ \ \  (\text{Aida}) \label{s2} \\
e^{t \l_0} &\le e^{t\int_X V dm}, \ \ \  t \ge 0     \label{s2a}        \\
\|e^{V-\l_0}\|_\kappa &\le  M,     \ \ \    \ \ \ \ \ \ \  \nu  \ge 2c          \label{s3} \\
\|e^{\l_0 - V}\|_\nu  & \le   M,    \ \ \  \  \ \ \ \ \ \ \kappa >0\           \label{s4} \\
\|e^{V-\l_0}\|_\kappa  \|e^{\l_0 - V}\|_\nu  &=M, \qquad\ \   \forall\  \ka >0, \nu > 0   \label{s5}
\end{align}
\end{lemma}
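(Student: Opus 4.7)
The six inequalities break naturally into two independent estimates on $e^{\pm \lambda_0}$ (namely (s1) and (s2), with (s2a) as a sharper intermediate), and three bookkeeping consequences (s3)--(s5).

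The plan for (s1) is to read it off from the hyperboundedness already established. Corollary \ref{corhb2}, specifically \eqref{L291} at $q=2$, gives $\|e^{-tH}\|_{2\to 2} \le \|e^{-V}\|_\nu^t$ for all $t\ge 0$ and all $\nu > 2c$. Since $H$ is self-adjoint with $\lambda_0 = \inf\text{spectrum}(H)$, the spectral theorem yields $e^{-t\lambda_0} \le \|e^{-V}\|_\nu^t$, so $e^{-\lambda_0} \le \|e^{-V}\|_\nu$. The case $\nu = 2c$, if needed, follows either by the limiting Federbush argument in Remark \ref{remfed5} or by noting that on a probability space $\|e^{-V}\|_{2c} \le \|e^{-V}\|_\nu$ for $\nu \ge 2c$.

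The plan for (s2a) and (s2) is to apply the variational characterization of $\lambda_0$ to the constant function $u \equiv 1$. By Corollary \ref{cordom} the form domain of $H$ equals $Q(\nabla^*\nabla) \cap Q(|V|)$, so I must check that $1$ lies in both. Clearly $\nabla 1 = 0$ gives $1 \in Q(\nabla^*\nabla)$. Since $m$ is a probability measure and the hypotheses of Theorem \ref{thmesa2} force $V \in L^{p_1}(m) \subset L^1(m)$, we have $1 \in Q(|V|)$ as well. The associated quadratic form then evaluates to
\begin{align*}
\mathcal{E}_H(1,1) = \int_X |\nabla 1|^2\, dm + \int_X V\, dm = \int_X V\, dm,
\end{align*}
and the min--max principle, together with $\|1\|_{L^2(m)} = 1$, yields $\lambda_0 \le \int_X V\, dm$. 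This is (s2a) after exponentiation. For (s2), apply Jensen's inequality with the convex function $x \mapsto e^{\kappa x}$: $\exp(\kappa \int_X V\, dm) \le \int_X e^{\kappa V}\, dm$, so $\exp(\int_X V\, dm) \le \|e^V\|_\kappa$. Combining with (s2a) gives $e^{\lambda_0} \le \|e^V\|_\kappa$.

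The plan for (s3)--(s5) is then purely algebraic. Factor $\|e^{V-\lambda_0}\|_\kappa = e^{-\lambda_0}\|e^V\|_\kappa$ and $\|e^{\lambda_0 - V}\|_\nu = e^{\lambda_0}\|e^{-V}\|_\nu$. Substituting (s1) into the first identity gives (s3); substituting (s2) into the second identity gives (s4); and multiplying the two identities makes the $e^{\pm \lambda_0}$ factors cancel, leaving exactly $M = \|e^V\|_\kappa \|e^{-V}\|_\nu$, which is (s5) (this was also recorded as \eqref{M2} in Notation \ref{notM} with $a = \lambda_0$). The only subtle point in the whole lemma is the form-domain verification for $1$, and this was prepared for precisely by Corollary \ref{cordom}; the rest is the spectral theorem and Jensen.
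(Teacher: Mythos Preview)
Your proof is correct, and for (s3)--(s5) it matches the paper exactly. There are two small points of difference worth flagging.

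For (s1) you use Corollary~\ref{corhb2} (namely $\|e^{-tH}\|_{2\to 2}\le \|e^{-V}\|_\nu^t$) together with the spectral theorem. The paper instead quotes the Federbush bound $e^{-\lambda_0}\le \|e^{-V}\|_{2c}$ from Remark~\ref{remfed5} and then uses H\"older to pass to $\nu\ge 2c$. These are essentially the same argument, since Remark~\ref{remfed5} itself derives Federbush from \eqref{L291} by the spectral theorem and a limit $\nu\downarrow 2c$. One small slip in your aside: the monotonicity $\|e^{-V}\|_{2c}\le\|e^{-V}\|_\nu$ goes the wrong way to deduce the $\nu=2c$ case from the $\nu>2c$ case; you would need continuity of $\nu\mapsto\|e^{-V}\|_\nu$ at $2c$ (which does hold here), or simply invoke Remark~\ref{remfed5} as you also suggest.

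For (s2a) the approaches genuinely differ. You use the variational characterization $\lambda_0\le \mathcal E_H(1,1)=\int_X V\,dm$, after checking via Corollary~\ref{cordom} that the constant $1$ lies in the form domain. The paper instead invokes Aida's identity \eqref{W31}, which gives $\int_X|\nabla F|^2\,dm=\int_X(V-\lambda_0)\,dm$ and hence $\lambda_0=\int_X V\,dm-\int_X|\nabla F|^2\,dm\le\int_X V\,dm$. Your route is more elementary and avoids the machinery of Section~\ref{secpm}; the paper's route has the advantage of yielding the exact defect $\int_X|\nabla F|^2\,dm$, which explains the attribution to Aida. From (s2a) both proofs reach (s2) by the same application of Jensen.
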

                \begin{proof} 
The Federbush semi-boundedness theorem, see Remark \ref{remfed5}, 
asserts that $e^{-\l_0} \le \|e^{-V}\|_{2c}$ because $ \l_0 = \inf \{(H_0 +V)u,u): \|u\|_2 =1\}$. 
\eref{s1} now follows from H\"older's inequality.
From \eref{W31} we find that   
$\l_0 = \int_X V dm -\int_X|\n F|^2 dm \le \int_X V dm$. Therefore
\begin{align}
\l_0 \le \int_X V dm,        \label{s6}
\end{align}
from which \eref{s2a} follows.
But also    $\kappa \l_0 \le \int \kappa V dm \le \log \int e^{\kappa V} dm$
     by Jensen's inequality. Hence $\l_0 \le \log \| e^V\|_\kappa$ from which \eref{s2} follows.   
  In view of \eref{s1} we have 
  $\|e^{V -\l_0}\|_\ka = \|e^V\|_\ka e^{-\l_0} \le    \|e^V\|_\ka \|e^{-V}\|_\nu = M$, giving \eref{s3}.
   \eref{s4} follows similarly from \eref{s2}. The identity \eref{s5} is a special case of \eref{M2}.
\end{proof}

\subsection{Upper bound on $\int \psi^{-s} dm$ for $s > 0$} \label{secub}

The following is a corollary of Theorem \ref{thmmp1}.
  \begin{corollary} \label{corub1}  
  $($Upper bound on $\|\psi^{-1}\|_s)$.  Assume the hypotheses and notation of Theorem \ref{thmmp1}. 
  Suppose that $\sigma > b_\ka c_\nu$. Then
\begin{align}
\|\psi^{-1}\|_{L^s(m)} 
\le \| e^{V-\l_0}\|_\kappa^{\ell(a) + \ell(\sigma)} \|e^{\l_0-V}\|_\nu^{\sigma}, \ \ \  0 < s < s_0,    \label{W710a}
\end{align}
where 
 $a$ is given by \eref{W852b}.
In particular
\begin{align}
\|\psi^{-1}\|_{L^s(m)} \le M^{\ell(a) + \ell(\sigma) +\sigma}   .    \label{W710b}
\end{align}
If $ 0< s <\min\{s_0, 2\}$ then
\begin{align}
\int_X e^{s|\log \psi|} dm < \infty \ \  \text{and}\ \ Ent_m(e^{s|\log \psi|}) < \infty .          \label{W712}
\end{align}
\end{corollary}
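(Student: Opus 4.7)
The plan is to obtain \eref{W710a} by dividing the moment product bound of Theorem \ref{thmmp1} by the lower bound \eref{L512} on $\|\psi\|_r$; then \eref{W710b} follows from \eref{s3}--\eref{s4} together with $M \ge 1$; finally \eref{W712} is a short integrability argument using \eref{W710b} and the normalization $\|\psi\|_2 = 1$. All ingredients have already been established, so the only care required is to line up the parameter $\sigma$ that appears in both Theorem \ref{thmmp1} and \eref{L512}.

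Given $0 < s < s_0$ and $\sigma > b_\ka c_\nu$, I set $r = 2 c_\nu/(\sigma + c_\nu)$, so that $\sigma = (2r^{-1} - 1)c_\nu$ as in \eref{W852b}. The identity $b_\ka = 2 r_0^{-1} - 1$ from \eref{W851k} converts the hypothesis $\sigma > b_\ka c_\nu$ into $r < r_0$; in particular $r \in (0, r_0) \subset (0, 2)$, so the hypotheses of both Theorem \ref{thmmp1} and Theorem \ref{thmulb} are met. The companion identity $b_\ka = 2 s_0^{-1} + 1$ from \eref{W851k}, together with $s < s_0$, also shows $a = (2s^{-1}+1)c_\nu > b_\ka c_\nu$, so $\ell(a)$ and $\ell(\sigma)$ are both well-defined. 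Theorem \ref{thmmp1} then yields $\|\psi\|_r\,\|\psi^{-1}\|_s \le \|e^{V-\l_0}\|_\ka^{\ell(a)+\ell(\sigma)}$, while \eref{L512} gives $\|\psi\|_r \ge \|e^{\l_0-V}\|_\nu^{-\sigma}$; dividing proves \eref{W710a}. To deduce \eref{W710b}, I apply \eref{s3} and \eref{s4} to bound each of $\|e^{V-\l_0}\|_\ka$ and $\|e^{\l_0-V}\|_\nu$ by $M$, and note that $\ell(a)$, $\ell(\sigma)$ and $\sigma$ are all strictly positive while $M \ge 1$ by \eref{M3}.

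For \eref{W712} I use the pointwise bound $e^{s|\log\psi|} = \max(\psi^s, \psi^{-s}) \le \psi^s + \psi^{-s}$. When $0 < s < 2$, concavity of $x \mapsto x^{s/2}$ together with $\|\psi\|_2 = 1$ and Jensen's inequality yield $\int_X \psi^s\, dm \le 1$; meanwhile $\int_X \psi^{-s}\, dm < \infty$ for $s < s_0$ by \eref{W710b}. For the entropy, I pick $s' \in (s,\, \min(s_0,2))$ and invoke the elementary estimate $y \le C_{s,s'}\,e^{(s'-s)y}$ for $y \ge 0$, which gives the pointwise bound $s|\log\psi|\,e^{s|\log\psi|} \le C_{s,s'}\,e^{s'|\log\psi|}$; the right side is integrable by the integrability claim just proved, applied at $s'$. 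The only genuinely delicate point in the whole proof is the bookkeeping with \eref{W851k} ensuring that the condition $\sigma > b_\ka c_\nu$ translates exactly to the admissible range $r \in (0,r_0)$ for Theorem \ref{thmmp1}; beyond this, the argument is routine.
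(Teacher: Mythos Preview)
Your proof is correct and follows essentially the same route as the paper: define $r$ from $\sigma$ via \eref{W852b}, combine the moment product bound \eref{W753g} with the lower bound \eref{L512} to get \eref{W710a}, then apply \eref{s3}--\eref{s4} for \eref{W710b}, and for \eref{W712} bound $e^{s|\log\psi|}\le\psi^s+\psi^{-s}$ and pass to a slightly larger exponent for the entropy. Your explicit verification via \eref{W851k} that $\sigma>b_\ka c_\nu$ is equivalent to $r\in(0,r_0)$ (and that $a>b_\ka c_\nu$) is exactly the bookkeeping the paper leaves implicit.
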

     \begin{proof} Given $\sigma >b_\ka c_\nu$, define $r$ by  \eref{W852b}. Combine  \eref{W753g}  and \eref{L512}  to find \eref{W710a}. 
Use \eref{s3} and \eref{s4} to derive \eref{W710b} from \eref{W710a}.
 
For the proof of \eref{W712}    observe that  for $0 < s < s_0$, 
\eref{W710a} implies that $\int e^{s(-\log \psi)}dm < \infty$.
On the other hand if $0 < s \le 2$ then
$\int e^{s\log\psi} dm = \int \psi^s dm \le \|\psi\|_2^{s/2} =1$. Since 
$\int e^{s|\log \psi|} dm \le \int e^{s\log \psi} dm + \int e^{s(-\log \psi)} dm $ the first 
    assertion in \eref{W712} follows. 
   The second assertion  follows by choosing a slightly larger $s$ in the first assertion. 
\end{proof}

\begin{remark} \label{remsigmas} {\rm The bound \eref{W710b} arises from bounding each of the two
factors in \eref{W710a} by $M$ to a power, using \eref{s3} and \eref{s4}. But there is a loss in using 
\eref{s3} and \eref{s4} separately instead of using the 
combined product, as in \eref{s5}, where possible. If, given
$s$, one chooses $\sigma$ suitably then the two powers on the right side of \eref{W710a} can be made
equal and  the $\l_0$ independent bound \eref{s5} can be used.
For the proof of existence of such a $\sigma$ observe that the definition \eref{W852} shows that 
$\ell$ is strictly decreasing on the interval $(b_\ka c_\nu, \infty)$ and $-\ell$ is strictly increasing with 
range $(-\infty, 0)$. 
  Consequently  the function   $\sigma \to \sigma - \ell(\sigma)$ is strictly increasing on 
  this interval and has range $(-\infty, \infty)$.  
  Given $s \in (0, s_0)$, there is therefore a unique number $\sigma_s$ in this interval such that
  $ \sigma_s - \ell(\sigma_s) = \ell(a)$. With this choice of $\sigma$ we then have 
   \begin{align}
  \| e^{V-\l_0}\|_\kappa^{\ell(a) + \ell(\sigma_s)} \|e^{\l_0-V}\|_\nu^{\sigma_s}  
  =\(\| e^{V-\l_0}\|_\kappa\|e^{\l_0-V}\|_\nu\)^{\sigma_s} = M^{\sigma_s}      \label{W710c}
  \end{align}
  and therefore 
  \begin{align}
\|\psi^{-1}\|_{L^s(m)}  \le M^{\sigma_s}.   \label{W710d}
\end{align} 
Although this is a sharper bound than \eref{W710b} when $a$ and $\sigma$ do not have to be specified,
it may be difficult in applications to control $\sigma_s$.
}
\end{remark}

 \subsection{$V$ is large where $\psi$ is small}   \label{secbig}

  Suppose that $\ka >0$ and $s_0$ is defined as in Notation \ref{notkappa3}. 
 Theorem \ref{thmmp1} shows that if $\|e^V\|_\ka < \infty$ then $\|\psi^{-1}\|_s <\infty$ for all $s < s_0$.
  Contrapositively, if  $\|\psi^{-1}\|_s =\infty$ for some  $s < s_0$ then $\|e^V\|_\ka =\infty$.    
  We will show that a stronger contrapositve holds. Namely, if $\|\psi^{-1}\|_s =\infty$ 
  for some  $s < s_0$ then  $\int_{\psi < \delta} e^{\ka V} dm = \infty$ for all $\delta >0$.
   This is a quantitative version of the statement  that $V$ is large where $\psi$  is small. 
   A qualitative version, such as  
   ``$V$ is unbounded where $\psi^{-1}$ is unbounded",   does not hold
   in our context, nor  for  a Schr\"odinger operator $-\Delta + V$ acting in $L^2(\R^n, dx)$. 
   The latter is well known.
   We will describe in Example \ref{exunb}, a bounded potential in our context
   for which $\psi$ and $\psi^{-1}$ are both unbounded.

          The proof of the strong contrapositive inequality is a consequence of 
          the following local moment product theorem.

\begin{theorem} \label{thmlocmp2} $($A local moment product theorem$)$.
Suppose that the hypotheses of Theorem \ref{thmesa2} hold. Let $\ka >0$ and define $r_0$ and $s_0$
as in Notation \ref{notkappa3}.
Let $\delta >0$. Define
\begin{align}
\psi_\delta(x) = \min(\psi(x), \delta).     \label{big40}
\end{align}
If $0 < r < r_0$ and $0<s < s_0$ then
\begin{align}
\|\psi_\delta\|_r \| \psi_\delta^{-1}\|_s 
          \le \| e^{(V-\l_0)\chi_{\psi \le \delta}}\|_\ka^{\ell(a) + \ell(\sigma)}. \label{big42}
\end{align}
\end{theorem}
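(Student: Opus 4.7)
The plan is to rerun the proof of Theorem \ref{thmmp1} with $F = -\log\psi$ replaced by its floor truncation $G = F \vee (-\log\delta) = -\log\psi_\delta$, so that $\psi_\delta = e^{-G}$ and $\psi_\delta^{-1} = e^G$. The key point is that $\nabla G = \nabla F \cdot \chi_{\{F > -\log\delta\}} = \nabla F \cdot \chi_{\{\psi < \delta\}}$ almost everywhere, so every gradient computation is automatically cut off to the set $\{\psi \le \delta\}$. This is precisely what is needed in order to replace $V - \lambda_0$ by $(V - \lambda_0)\chi_{\{\psi \le \delta\}}$ in the final bound.

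First I would mimic Lemma \ref{lemeb3}: with a bounded smooth $\phi$ satisfying $0 \le \phi' \le 1$ and $\hat G = \phi \circ G$, inserting $f = t \hat G$ into the logarithmic Sobolev inequality \eref{W60} gives
\begin{align*}
Ent_m(e^{t\hat G}) \le \frac{ct^2}{2}\int_X e^{t\phi(F)}(\phi'(F))^2\,|\nabla F|^2 \chi_{\{\psi \le \delta\}}\,dm,
\end{align*}
where I used that $\hat G = \phi(F)$ on $\{\psi \le \delta\}$. The one genuinely new step is a localized Aida identity: for $t > -1$,
\begin{align*}
\int_X e^{t\phi(F)}(\phi'(F))^2 |\nabla F|^2 \chi_{\{\psi \le \delta\}}\,dm \le \frac{1}{1+t}\int_X e^{t\phi(F)}(V - \lambda_0)\chi_{\{\psi \le \delta\}}\,dm.
\end{align*}
I would derive this by applying Theorem \ref{thmA3} to $v(s) = (1+t)^{-1} e^{t\phi(s)} \chi_\epsilon(s)$, where $\chi_\epsilon$ is a smooth nondecreasing approximation of $\chi_{[-\log\delta,\infty)}$. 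The pointwise inequality $u \le v + v'$ used in Lemma \ref{lemeb3} survives the localization because the extra contribution $(1+t)^{-1} e^{t\phi}\chi_\epsilon'$ is nonnegative and cannot spoil the sign; then letting $\epsilon \downarrow 0$ by dominated convergence (using $V \in L^1(m)$ and $|\nabla F|^2 \in L^1(m)$ from \eref{W31}) gives the displayed bound. This is the only place where the argument is not a verbatim copy of the global case and is the main obstacle.

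Once the localized Aida inequality is in hand, the rest is bookkeeping along Theorem \ref{thmeb2}, Lemma \ref{lemptm}, and Sublemma \ref{sublem5}. Young's inequality \eref{BG500c} applied with $\kappa(V-\lambda_0)\chi_{\{\psi \le \delta\}}$ in place of $\kappa(V-\lambda_0)$ replaces $\eta$ by $\eta_\delta := \kappa \log \|e^{(V-\lambda_0)\chi_{\psi \le \delta}}\|_\kappa$, and the same algebraic rearrangement as in Theorem \ref{thmeb2} yields
\begin{align*}
Ent_m(e^{t\hat G}) \le \frac{t^2}{(s_0 - t)(t + r_0)}\,\eta_\delta\,E(e^{t\hat G}), \qquad t \in (-r_0, s_0).
\end{align*}
Herbst's method (Lemma \ref{lemH2}) with $g = \hat G$ and $\beta(t) = \eta_\delta / ((s_0-t)(t+r_0))$ then gives
\begin{align*}
\|e^{-\hat G}\|_r\,\|e^{\hat G}\|_s \le \|e^{(V-\lambda_0)\chi_{\psi \le \delta}}\|_\kappa^{\ell(a) + \ell(\sigma)},
\end{align*}
the exponent coming from Sublemma \ref{sublem5}, whose computation depends only on $\kappa, c, r, s$ and is unchanged. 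Finally I would pass to the limit along the sequence $\phi_n$ constructed at the end of the proof of Theorem \ref{thmmp1}, so that $\hat G = \phi_n \circ G \to G$ monotonically on $\{G \ge 0\}$ and $\{G < 0\}$. Since $e^{-G} = \psi_\delta$ and $e^G = \psi_\delta^{-1}$, the monotone and dominated convergence theorems convert the preceding inequality into \eref{big42}.
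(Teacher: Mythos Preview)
Your proposal is correct and follows essentially the same route as the paper. The paper packages the argument slightly differently: rather than introducing $G = F \vee b$ (with $b=-\log\delta$) and composing a general $\phi$ with $G$, it keeps $F$ and requires the truncation $\phi$ itself to vanish on $(-\infty,b]$, so that $\hat F=\phi\circ F$ plays the role of your $\hat G$ up to the additive constant $b$; in the limit the paper's $F_b=(F-b)\vee 0$ equals your $G-b$, and the constant cancels in the product $\|e^{-(\cdot)}\|_r\|e^{(\cdot)}\|_s$. Your ``one genuinely new step,'' the localized Aida inequality obtained by inserting a smooth nondecreasing cutoff $\chi_\epsilon$ into $v$, is exactly the content of the paper's Lemma~\ref{lemeb5} (there the cutoff is called $h$), and the subsequent Young/entropy/Herbst/Sublemma~\ref{sublem5} steps are identical.
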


The proof depends on the following lemma, which  is a small variant of  Lemma \ref{lemeb3}. 
           \begin{lemma} \label{lemeb5}  Let $\phi: \R \to \R$ be a smooth bounded function 
           which is zero on $(-\infty, b]$ and 
 such that $0 \le \phi' \le 1$ everywhere.
Let $F = - \log \psi$ and define $\hat F =\phi\circ F$. 
Then 
 \begin{align}
 Ent_m(e^{t\hat F})\le\frac{ct^2}{2(1+t)} \int_{F \ge b} e^{t\hat F}  (V- \l_0) dm.  \label{W59k}
\end{align}
 Define $r_0$ and $s_0$ as in Notation \ref{notkappa3} 
 and let 
\begin{align}
\eta_b = \log \int_X e^{\kappa  (V- \l_0)\chi_{F\ge b}} dm.    \label{big54}
\end{align}
Then  
\begin{align}
Ent_m(e^{t\hat F} )&\le \frac{t^2}{(s_0-t)(t+r_0)}  \eta_b  E(e^{t\hat F})
  \ \  \  \text{if} \ t \in (-r_0, s_0), \label{W801g}  
\end{align}
\end{lemma}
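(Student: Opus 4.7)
The strategy is to adapt the proof of Lemma \ref{lemeb3}, replacing the auxiliary function $v$ by a localized version so that Aida's identity produces an integral restricted to $\{F \ge b\}$. First I would apply the logarithmic Sobolev inequality in the form \eref{W60} to $f = t\hat F$, obtaining
\begin{align*}
Ent_m(e^{t\hat F}) \le \frac{ct^2}{2}\int_X e^{t\hat F}\phi'(F)^2 |\n F|^2 dm,
\end{align*}
and observe that the integrand is already supported on $\{F \ge b\}$ because $\phi$ is smooth and identically zero on $(-\infty, b]$, forcing $\phi'(s) = 0$ for $s \le b$.

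The key step is then to bound this right side by $(1+t)^{-1}\int_{F \ge b}e^{t\hat F}(V-\l_0)dm$. To this end I would apply Aida's identity \eref{W30a} with a localized test function: let $\chi_\delta$ be a smooth nondecreasing approximation of $\chi_{[b,\infty)}$ that vanishes on $(-\infty, b - \delta]$ and equals $1$ on $[b, \infty)$, and set $v_\delta(s) = (1+t)^{-1}e^{t\phi(s)}\chi_\delta(s)$. I would then verify the pointwise inequality $e^{t\phi(s)}\phi'(s)^2 \le v_\delta(s) + v_\delta'(s)$, valid for $t > -1$ (hence on $t \in (-r_0, s_0)$ since $r_0 < 1$): on $\{s \ge b\}$ it reduces to the inequality of Lemma \ref{lemeb3}, using $\phi'(s)^2 \le \phi'(s) \le 1$; on $\{s < b\}$ the left side vanishes while the right side equals $(1+t)^{-1}(\chi_\delta(s) + \chi_\delta'(s)) \ge 0$ because $\chi_\delta$ and $\chi_\delta'$ are non-negative. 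Applying Aida's identity to $v_\delta$ (which is smooth, bounded, with bounded derivative) and letting $\delta \downarrow 0$, with dominated convergence justified by the boundedness of $e^{t\hat F}$ together with $V - \l_0 \in L^1(m)$ from $V \in L^{p_1}(m)$, yields \eref{W59k}.

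For the second inequality \eref{W801g}, the argument parallels Theorem \ref{thmeb2}: I would apply Young's inequality \eref{BG500c} with $g = e^{t\hat F}$ and $u = \ka(V - \l_0)\chi_{F \ge b}$ to get $\int_{F \ge b} e^{t\hat F}\ka(V - \l_0)dm \le Ent_m(e^{t\hat F}) + \eta_b E(e^{t\hat F})$, then substitute into \eref{W59k} and rearrange using the factorization $2(\ka/c)(1+t) - t^2 = (s_0 - t)(t + r_0)$ from \eref{W851f} (which is positive for $t \in (-r_0, s_0)$). The main obstacle is the localization step, specifically arranging for $u \le v_\delta + v_\delta'$ to hold across the transition region $\{b - \delta < s < b\}$; the cutoff $\chi_\delta$ with non-negative derivative is chosen precisely so that the $\chi_\delta'$-term contributes with the correct sign, sidestepping any need for delicate cancellation between $V - \l_0$ and $|\n F|^2$ on $\{F < b\}$.
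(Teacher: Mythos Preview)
Your proposal is correct and follows essentially the same approach as the paper: the paper also introduces a smooth nondecreasing cutoff $h$ (your $\chi_\delta$) that vanishes on $(-\infty, b-\ep)$ and equals $1$ on $[b,\infty)$, defines $v_1(s) = (1+t)^{-1}e^{t\phi(s)}h(s)$, verifies $u \le v_1 + v_1'$ by the same three-region case analysis you outline, applies Aida's identity, and passes to the limit by dominated convergence. The derivation of \eref{W801g} from \eref{W59k} via Young's inequality and the factorization \eref{W851f} is likewise identical.
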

            \begin{proof}  Let $h$ be a non-decreasing smooth function on $\R$ which is $0$ on 
$ (-\infty, b-\ep)$ and $1$ on $[b, \infty)$. Define  
\beq
u(s) = e^{t\phi(s)} \phi'(s)^2\ \ \ \text{ and}  \ \ v_1(s) = (1+t)^{-1} e^{t\phi(s)}h(s).   \label{W74b}
\eeq
Then 
\begin{align}
u(s) \le v_1(s) + v_1'(s)       \label{W74c}
\end{align}
 because on $[b, \infty)$,  $u$ and $v_1$ are equal to the functions 
  $u$ and $v$, respectively, given in \eref{W74}. Therefore \eref{W74c} holds over this 
  interval by virtue of \eref{W59}.  
  For $s\le b-\ep$ both sides of \eref{W74c} are zero, while for $b-\ep < s < b$, $u(s)$ is zero while
 $(1+t) (v_1(s) + v_1'(s)) = (1+t)e^{t\phi(s)}(h(s) + h'(s)) \ge 0$.
 So \eref{W74c} holds everywhere.
  
As in the derivation of \eref{W59f},  we then find 
\begin{align}
\int_X  e^{t\phi(F)} \phi'(F)^2 |\n F|^2 dm &\le  \int_X v_1(F) (V- \l_0) dm  \notag\\
&=\frac{1}{1+t}\int_X e^{t\phi(F)} h(F) (V-\l_0) dm.          \notag
\end{align}
From \eref{W73} it follows that  $Ent_m(e^{t\hat F})\le\frac{ct^2}{2(1+t)} \int_X e^{t\hat F} h(F) (V- \l_0) dm$.
Since $e^{t\hat F}$ is bounded and $V - \l_0$ is integrable we can let $\ep \downarrow 0$ and conclude
from  the dominated convergence theorem  that 
\begin{align}
Ent_m(e^{t\hat F})\le\frac{ct^2}{2(1+t)} \int_X e^{t\hat F} (V- \l_0)  \chi_{F \ge b} dm, \label{big55}
\end{align}
which is \eref{W59k}.
      The proof of \eref{W801g} follows from \eref{big55} 
      the same way that \eref{W801f}  follows from \eref{W59g}. 
      One need only replace $\eta$ by $\eta_b$ in \eref{W59y}.
\end{proof}

\bigskip
\noindent
\begin{proof}[Proof of Theorem \ref{thmlocmp2}] (9/30/22, 10/4/22, 7/23/23) 
Let
\begin{align} 
\beta_b(t)  =  \frac{\eta_b}{(s_0 -t)(t+r_0)}, \ \ \ t \in (-r_0, s_0).
\end{align}
Then, from \eref{W801g}, we find
\begin{align}
\frac{Ent(e^{t \hat F})}{t^2 E(e^{t\hat F})} \le \beta_b(t)
\end{align}
 As in the derivation of \eref{W550c} it follows that 
$\| e^{-\hat F}\|_r \| e^{\hat F}\|_s \le e^{\int_{-r}^s \beta_b(t) dt }$.
Since $e^{\eta_b} = \|e^{(V-\l_0)\chi_{F \ge b}}\|_\ka^\ka$,
we find, by Sublemma \ref{sublem5} 
\begin{align}
\| e^{-\hat F}\|_r \| e^{\hat F}\|_s \le \| e^{(V - \l_0)\chi_{F \ge b}}\|_\ka^{\ell(a) + \ell(\sigma) }.  \label{big58}
\end{align}
Let $g(y) = 0\vee (y-b)$ for $ y \in \R$. Choose a sequence $\phi_n$ of smooth  functions, each of which 
is bounded, such that $\phi_n = 0$ on $(-\infty, b]$, $0 \le \phi_n' \le 1$, and such 
that $\phi_n(y) \uparrow g(y)$. 
Such a sequence is easily seen to exist. In Lemma \ref{lemeb5} choose $\phi = \phi_n$. Let
$F_b = g \circ F$. Then $\phi_n \circ F \uparrow F_b$ on $X$. Replacing $\hat F$ by $\phi_n\circ F$ in
\eref{big58} we can apply the dominated convergence theorem on the first factor on the left and the 
monotone convergence theorem on the second factor to find
\begin{align}
\|e^{-F_b}\|_r \|e^{F_b}\|_s \le  \|e^{(V- \l_0)\chi_{F\ge b}}\|_\ka^{\ell(a) + \ell(\sigma)}. \label{big44}
\end{align}
Given $\delta >0$ choose $b$ so that $e^{-b} = \delta$. We claim that
\begin{align}
e^{-F_b(x)} = \delta^{-1}\psi_\delta(x)\ \  \forall x \in X.    \label{big60}
\end{align}
Indeed $F_b(x) = 0 \vee(F(x) - b)$. So if $F(x) < b$ then $F_b(x) = 0$.  
      So $e^{-F_b(x)} = 1$. But $\psi(x) =e^{-F(x)} >  e^{-b} = \delta$.
 So \eref{big60} holds by the definition \eref{big40}. 
On the other hand, if $F(x) \ge b$ then 
     $\psi(x) = e^{-F(x)}\le e^{-b}  = \delta$. 
     So $\psi_\delta(x) = \psi(x)=  e^{-F(x)} = e^{-F_b(x)}e^{-b} = \delta e^{-F_b(x)}$. 
     This proves \eref{big60}.  Moreover $\{\psi \le \delta\} = \{F\ge b\}$.
     
     Therefore we may write \eref{big44} as $\| \delta^{-1} \psi_\delta\|_r \|\delta \psi_\delta^{-1}\|_s \le
       \| e^{(V- \l_0)\chi_{\psi \le \delta} }\|_\ka^{\ell(a) + \ell(\sigma)}$, 
       which is \eref{big42} after canceling $\delta$.     
\end{proof}

\begin{corollary} \label{corvlps2}  $(V$ is large where $\psi$ is small$)$.
 Given $\kappa >0$, define $s_0$ as in Notation \ref{notkappa3}. Suppose that $0<s < s_0$.
 If
\begin{align}
\int_X \psi^{-s} dm = \infty              \label{big71}   
\end{align}
then 
\begin{align}
\int_{\psi \le\delta} e^{\ka V} dm  = \infty\ \ \ \text{for all} \ \   \delta >0.   \label{big72}  
\end{align}
\end{corollary}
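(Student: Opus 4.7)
The plan is to argue by contrapositive, using the local moment product theorem (Theorem \ref{thmlocmp2}) essentially as a black box. Fix $\delta > 0$ and assume, for contradiction, that $\int_{\psi \le \delta} e^{\ka V} dm < \infty$; I will deduce that $\int_X \psi^{-s} dm < \infty$, contradicting \eref{big71}.

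First I would unpack the right-hand side of \eref{big42}. Since the indicator $\chi_{\psi \le \delta}$ kills the exponent on $\{\psi > \delta\}$, one has
\begin{align*}
\int_X e^{\ka(V-\l_0)\chi_{\psi\le\delta}} dm
= m(\psi > \delta) + e^{-\ka\l_0}\int_{\psi \le \delta} e^{\ka V}\, dm,
\end{align*}
which is finite by the standing assumption. Hence $\|e^{(V-\l_0)\chi_{\psi\le\delta}}\|_\ka < \infty$, so the right-hand side of \eref{big42} is finite.

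Next, choose any $r \in (0, r_0)$ so that $\sigma$ and $a$ are defined as in Theorem \ref{thmmp1}; since $\psi > 0$ a.e., $\psi_\delta = \min(\psi,\delta) > 0$ a.e., so $0 < \|\psi_\delta\|_r \le \delta < \infty$. Theorem \ref{thmlocmp2} then gives $\|\psi_\delta^{-1}\|_s < \infty$. Finally, since $\psi_\delta \le \psi$ pointwise, one has $\psi_\delta^{-1} \ge \psi^{-1}$ pointwise, so
\begin{align*}
\int_X \psi^{-s}\, dm \le \int_X \psi_\delta^{-s}\, dm = \|\psi_\delta^{-1}\|_s^s < \infty,
\end{align*}
which is the desired contradiction.

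I expect no real obstacle here: the work is entirely contained in Theorem \ref{thmlocmp2}, and the corollary is a clean bookkeeping consequence once one observes the two elementary facts (i) that the exponential integral splits into a bounded piece on $\{\psi > \delta\}$ plus the putative finite piece on $\{\psi \le \delta\}$, and (ii) that $\psi_\delta^{-1}$ dominates $\psi^{-1}$ pointwise. The only mild subtlety is verifying $\|\psi_\delta\|_r > 0$ so that one may actually divide in \eref{big42}, which is immediate from $\psi > 0$ a.e.
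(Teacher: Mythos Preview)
Your proof is correct and follows essentially the same approach as the paper: both use the local moment product theorem (Theorem \ref{thmlocmp2}) as the key input, together with the observation that $\|\psi_\delta\|_r>0$ and the splitting of $\int_X e^{\ka(V-\l_0)\chi_{\psi\le\delta}}dm$ over $\{\psi>\delta\}$ and $\{\psi\le\delta\}$. The only cosmetic difference is that the paper argues directly (from $\|\psi^{-1}\|_s=\infty$ it deduces $\|\psi_\delta^{-1}\|_s=\infty$ via the bounded difference $\psi_\delta^{-1}-\psi^{-1}$, hence the right side of \eref{big42} is infinite), whereas you phrase it as a contrapositive using the pointwise domination $\psi^{-1}\le\psi_\delta^{-1}$.
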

   \begin{proof} 
Let $\delta >0$.
 Choose a number $r \in (0, r_0)$. Then
$\|\psi_\delta\|_r > 0$ because $\psi_\delta > 0$ a.e..  From \eref{W852} we see that   $\ell(t) >0$ 
for all allowed $t$ and therefore $\ell(a) + \ell(\sigma) >0$. 
Since $\psi_\delta^{-1} - \psi^{-1}$ is bounded, it follows from \eref{big71} that  $\|\psi_\delta^{-1}\|_s = \infty$.
The local moment product formula \eref{big42}  shows then that  
$ \|e^{(V - \l_0)\chi_{\psi \le\delta}}\|_\ka^\ka =\infty$. 
That is, 
$\int_{\psi > \delta} 1 dm + \int_{\psi \le \delta} e^{\ka(V-\l_0)} dm = \infty$.  \eref{big72} follows.
\end{proof}   

\begin{remark} {\rm   If one takes $s$ as given in the condition \eref{big71} then the condition on $\ka$
that ensures ``largeness'' in the sense of \eref{big72} is 
\beq
\ka/c > \frac{ s^2}{ 2(s+1)} .       \label{big70}  
\eeq
Indeed $\ka/c = s_0^2/2(s_0+1)$ by \eref{W850}. The condition \eref{big70} is therefore
equivalent to $s < s_0$ for $s >0$ because the right side of \eref{big70} is increasing.

}
\end{remark}

\bigskip
 Even if the potential is bounded, neither the ground 
state $\psi$ nor its inverse $1/\psi$ 
  need be bounded, even in the presence of \eref{mt1}.
 Here is a simple example.

\begin{example}\label{exunb} {\rm (Bounded $V$ but unbounded $\psi$ and $\psi^{-1}$).
Take $m$ to be the Gauss measure $dm= (2\pi c)^{-1/2} e^{-x^2/(2c)} dx$. It is known that $m$ satisfies the logarithmic  Sobolev inequality   \eref{mt1}, \cite{G1}. Let
\begin{align}
\psi(x) = Z^{-1}
\begin{cases} &(1+x^2), \ \ \ x >1 \\
     & (1+x^2)^{-1}, \ \ \  x < -1 \\
     & \text{smooth and $>0$ on} \ \  \ [-2,2]
\end{cases}
\end{align}
Let $F = - \log \psi$. Then outside the interval $[-1,1]$ we have 
$F(x) = -\delta \log (1+x^2)$, where $\delta = sgn\ x$. Therefore, for $|x| >1$ we find
$F'(x) =-2\delta x/(1+x^2)$ and $F''(x) = -2\delta \(\frac{1-x^2}{1+x^2}\)/(1+x^2)$.
The definition \eref{div1} shows that for our measure $m$ we have  
$\n^*v(x) = -v'(x) +  c^{-1}x v(x)$ for any smooth vector field $v$ on $\R$.
 Therefore $|\n F|^2 + \n^*\n F = (F')^2 - F'' + c^{-1} xF'$.
We  take this to be our potential. Explicitly, we have then, for $|x| >1$,
\begin{align}
V=  4 x^2/(1+x^2)^2  +2\delta \(\frac{1-x^2}{1+x^2}\)/(1+x^2)   -2\delta c^{-1}x^2/(1+x^2).
\end{align}
By the WKB equation \eref{W7} the ground state for $\n^*\n +V$ is $\psi$.
$V$ is bounded on $\R$ but $\psi$ and $\psi^{-1}$ are both unbounded. Theorem \ref{thmM}
is applicable to this example and therefore the ground state measure $\psi^2 dm$ satisfies
a logarithmic Sobolev inequality.
}
\end{example}

\section{Defective LSI   for the ground state measure} \label{secDLSI}

\subsection{The ground state transformation} \label{secgs}

 In the previous sections we established properties of the Schr\"odinger 
operator $\n^*\n +V$  and its heat semigroup
 in the spaces $L^p(m)$. We also established properties of the ground
state $\psi$ and its inverse $1/\psi$ in the spaces $L^p(m)$.  
The ground state measure associated to $\psi$ is the 
probability measure $m_\psi$ defined by
\begin{align}
dm_\psi : = \psi^2 dm.
\end{align}
In the present section we will relate the Schr\"odinger operator $\n^*\n + V$ to the Dirichlet form operator
$\hat H$ for $m_\psi$.
$\hat H$  acts densely in $L^2(m_\psi)$.  Define
\begin{align}
U: L^2(m_\psi) &\to L^2(m)\ \ \ \ \ \text{by} \\
Uu &= u \psi, \ \ \ \ u \in L^2(m_\psi)
\end{align}
The identity $\int_X |u\psi|^2 dm = \int_X |u|^2 dm_\psi$ shows that the map  $U$ is unitary.

 Denote  by $H$ the closure of $\n^*\n +V$ in $L^2(m)$. 
 In the next lemma we will make 
a computation, frequently made in this context, which shows that $U^{-1}(H - \l_0)U = \hat H$,
and which at the same time exhibits the quantities which need to be estimated 
for proving invariance of intrinsic hypercontractivity.
This computation is sketched in \cite[Section 4]{G1}, derived and used in \cite{KS85} 
and derived again in many similar contexts. We make no effort to identify the domains of 
operators in this partly informal computation or to justify some of the technical steps 
because in the cases of interest the final identities will be easily justifiable.

\begin{lemma} \label{lemgs1}  Let $\psi =e^{-F}$ be a strictly positive $($a.e.$)$ function 
with $\|\psi\|_{L^2(m)} = 1$. 
The adjoint of $\n$ with respect to the measure $m$, defined in \eref{div1}, is denoted $\n^*$.
If $u$ is bounded and $|\n u|$ is  in $L^2(m_\psi)$ then
\begin{align}
\int_X |\n (u\psi)|^2 dm =\int_X|\n u|^2\, dm_\psi -\int_X u^2(\n^*\n F + |\n F|^2) dm_\psi.          \label{gs709F}
\end{align}
In particular, if $\psi$ is the ground state of $\n^*\n+V$ then
\begin{align}
\int_X |\n (u\psi)|^2 dm =\int_X|\n u|^2\, dm_\psi + \int_X u^2(\l_0 -V) dm_\psi          \label{gs709G}
\end{align}
and
\begin{align}
U^{-1}(H- \l_0) U = \hat H        \label{gs709H}
\end{align}
\end{lemma}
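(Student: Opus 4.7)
The plan is to prove the three identities in order, with the first being the computational core and the latter two easy consequences. The starting point is the Leibniz rule together with $\nabla\psi=-\psi\nabla F$, which gives
\[
\nabla(u\psi)=\psi\nabla u - u\psi\nabla F.
\]
Squaring and integrating against $dm$, and using $\psi^2 dm=dm_\psi$, I get
\[
\int_X|\nabla(u\psi)|^2\,dm
=\int_X|\nabla u|^2\,dm_\psi
-2\int_X u\,(\nabla u)\!\cdot\!\nabla F\,dm_\psi
+\int_X u^2|\nabla F|^2\,dm_\psi.
\]
So the whole identity \eqref{gs709F} reduces to showing
\[
-2\int_X u(\nabla u)\!\cdot\!\nabla F\,dm_\psi
=-\int_X u^2(\nabla^*\nabla F)\,dm_\psi-2\int_X u^2|\nabla F|^2\,dm_\psi.
\]

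To establish this, I would apply the product rule \eqref{gs5} for $\nabla^{*}$ (the adjoint with respect to $m$, not $m_\psi$) to the vector field $u^2\psi^2\nabla F$. This gives, after integrating against $dm$ and discarding the $\nabla^*$-of-a-vector-field term (justified in the cases of interest as the paper notes),
\[
\int_X u^2\psi^2(\nabla^*\nabla F)\,dm
=\int_X \nabla F\!\cdot\!\nabla(u^2\psi^2)\,dm.
\]
Expanding $\nabla(u^2\psi^2)=2u\psi^2\nabla u-2u^2\psi^2\nabla F$ and rewriting the $\psi^2$ factors as $dm_\psi$ yields exactly the identity needed above. Substituting back produces \eqref{gs709F}.

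The second identity \eqref{gs709G} then follows immediately by invoking the WKB equation \eqref{W7} for the ground state $\psi=e^{-F}$, namely $\nabla^*\nabla F+|\nabla F|^2=V-\lambda_0$, so that the parenthesized term in \eqref{gs709F} becomes $V-\lambda_0$ and the sign flips to give $\lambda_0-V$. For the operator identity \eqref{gs709H}, I would compute the quadratic form of $H-\lambda_0$ on $Uu=u\psi$: since $H=\nabla^*\nabla+V$ (closure), one has
\[
\langle(H-\lambda_0)Uu,Uu\rangle_{L^2(m)}
=\int_X|\nabla(u\psi)|^2\,dm+\int_X(V-\lambda_0)(u\psi)^2\,dm,
\]
and plugging in \eqref{gs709G} cancels the potential terms, leaving $\int_X|\nabla u|^2\,dm_\psi=\langle \hat H u,u\rangle_{L^2(m_\psi)}$. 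Polarizing (or by a standard density argument on a common form core) then gives $U^{-1}(H-\lambda_0)U=\hat H$.

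The only step that requires any care is the integration by parts that moves $\nabla^*$ off $\nabla F$: one must know that $u^2\psi^2\nabla F$ lies in a class for which $\int_X\nabla^*(\cdot)\,dm=0$. The hypotheses (boundedness of $u$, $|\nabla u|\in L^2(m_\psi)$, and in the second part membership of $\psi$ in the form domain of $H$, which by Corollary \ref{cordom} equals $Q(\nabla^*\nabla)\cap Q(|V|)$) make this justifiable; but, as the authors explicitly allow, I would present the argument formally and defer the domain bookkeeping to the applications where it is routine.
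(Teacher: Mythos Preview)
Your proposal is correct and follows essentially the same route as the paper's proof: expand $\nabla(u\psi)=(\nabla u - u\nabla F)\psi$, square, and handle the cross term by an integration by parts, then invoke the WKB equation for \eqref{gs709G} and pass to quadratic forms for \eqref{gs709H}. The only cosmetic difference is in how the integration by parts is organized: the paper applies the product rule \eqref{gs5} to $e^{-2F}\nabla F$ to get $\nabla^*(e^{-2F}\nabla F)=(\nabla^*\nabla F+2|\nabla F|^2)e^{-2F}$ and then pairs against $u^2$, whereas you pair $\nabla^*\nabla F$ directly against $u^2\psi^2$ via the defining duality of $\nabla^*$; these are the same computation (your phrase ``discarding the $\nabla^*$-of-a-vector-field term'' is slightly misleading---what you actually use is just the duality \eqref{div1}, not the product rule).
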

       \begin{proof} Since $\n\psi = -\psi \n F$ the product rule gives  
       $\n (u\psi) = (\n u)\psi + u \n \psi =  (\n u - u\n F)\psi$.
    The product rule \eref{gs5} for $\n^*$ implies 
 $\n^* (e^{-2F} \n F)= (\n^*\n F +2 |\n F|^2)e^{-2F}$. Hence
 \begin{align}
\int_X |\n (u\psi)|^2 dm &=\int_X |\n u - u\n F) |^2 \psi^2 dm        \notag\\
&   = \int_X \(|\n u|^2 + u^2 |\n F|^2 - 2 u\n u \cdot \n F\)\psi^2 dm                \label{gs1000} \\
&= \int_X\(|\n u|^2 + u^2 |\n F|^2) dm_\psi  -\int_X \n u^2 \cdot e^{-2F} \n F dm        \notag \\
&= \int_X\(|\n u|^2 + u^2 |\n F|^2) dm_\psi -\int_X  u^2 \n^*\( e^{-2F} \n F\) dm  \notag\\
& =  \int_X\(|\n u|^2 + u^2 |\n F|^2) dm_\psi  -\int_X u^2\( \n^* \n F + 2 |\n F|^2\) dm_\psi, \notag
 \end{align}   
 which proves \eref{gs709F}. 
     If $\psi$ is the ground state of $\n^*\n +V$ then \eref{gs709G} follows from \eref{W7}. 
    
     The left side of \eref{gs709G}  is $((\n^*\n) Uu, Uu)_{L^2(m)}$.  
 The right side is $(\hat H u, u)_{L^2(m_\psi)} + ((\l_0 - V) Uu, Uu)_{L^2(m)}$.
 Therefore 
 \beq
 ((\n^*\n) Uu, Uu)_{L^2(m)} + ((V- \l_0) Uu, Uu)_{L^2(m)} =(\hat H u, u)_{L^2(m_\psi)}. \notag
 \eeq
 Hence $(U^{-1}(H- \l_0) U u, u)_{L^2(m_\psi)} = (\hat H u, u)_{L^2(m_\psi)}$.
 Since $H$ and $\hat H$ are both symmetric \eref{gs709H} holds.
  \end{proof}

\begin{corollary} \label{corgs2} Suppose that $m$ satisfies the logarithmic 
Sobolev inequality \eref{mt1} :
\begin{align}
Ent_m(f^2) \le 2c \int_X |\n f|^2 dm                 \label{LSc3}
\end{align}
Then the ground state measure satisfies the inequality
\begin{align}
Ent_{m_\psi}(u^2) \le  2c\int_X |\n u|^2 dm_\psi + \int_X u^2(2c(\l_0 -V) + 2F) dm_\psi \label{gs720}
\end{align}
\end{corollary}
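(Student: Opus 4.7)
The plan is to apply the hypothesis logarithmic Sobolev inequality \eref{LSc3} to the test function $f = u\psi$, then translate both sides back into quantities involving $m_\psi$ instead of $m$. The energy side is handled for free by the identity \eref{gs709G} of Lemma \ref{lemgs1}, which already expresses $\int_X |\n(u\psi)|^2 dm$ as $\int_X |\n u|^2 dm_\psi$ plus the potential correction $\int_X u^2(\l_0 - V) dm_\psi$. So the only real work is on the entropy side, where I need to compare $Ent_m((u\psi)^2)$ with $Ent_{m_\psi}(u^2)$.

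The comparison is a one-line computation: since $(u\psi)^2 dm = u^2 dm_\psi$, the normalization terms match up as $\int_X (u\psi)^2 dm = \int_X u^2 dm_\psi$, and the remaining piece is
\begin{align*}
Ent_m((u\psi)^2) &= \int_X u^2 \log(u^2\psi^2)\, dm_\psi - \|u\|_{L^2(m_\psi)}^2 \log \|u\|_{L^2(m_\psi)}^2 \\
&= Ent_{m_\psi}(u^2) + \int_X u^2 \log\psi^2\, dm_\psi \\
&= Ent_{m_\psi}(u^2) - 2\int_X u^2 F\, dm_\psi,
\end{align*}
using $F = -\log\psi$. Rearranging gives $Ent_{m_\psi}(u^2) = Ent_m((u\psi)^2) + 2\int_X u^2 F\, dm_\psi$.

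Combining the two ingredients yields
\begin{align*}
Ent_{m_\psi}(u^2) &\le 2c\int_X |\n(u\psi)|^2\, dm + 2\int_X u^2 F\, dm_\psi \\
&= 2c\int_X |\n u|^2\, dm_\psi + \int_X u^2\bigl(2c(\l_0 - V) + 2F\bigr) dm_\psi,
\end{align*}
which is exactly \eref{gs720}. I expect no serious obstacle: Lemma \ref{lemgs1} was already established for bounded $u$ with $|\n u| \in L^2(m_\psi)$, so the identity is at my disposal, and the entropy computation is purely algebraic. The only mild care needed is a density/truncation argument to extend \eref{gs720} from the class of $u$ covered by Lemma \ref{lemgs1} to whatever class of $u$ will be used later (typically bounded $u$ in the form domain of $\hat H$, where both sides are finite, the right-hand side being controlled by the moment bounds on $\psi^{\pm 1}$ and on $V$ that were developed in the preceding sections).
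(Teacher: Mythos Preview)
Your proposal is correct and follows essentially the same approach as the paper: set $f = u\psi$, derive the entropy identity $Ent_{m_\psi}(u^2) = Ent_m(f^2) + 2\int_X u^2 F\, dm_\psi$, apply the LSI \eref{LSc3} to $f$, and then invoke \eref{gs709G} to convert the Dirichlet energy. The paper's proof is line-for-line the same; your added remark about density/truncation goes slightly beyond what the paper spells out at this point.
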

      \begin{proof} Putting 
      $f = u\psi =ue^{-F}$ we have 
 \begin{align}
 \int_X u^2 \log u^2 dm_\psi  &= \int f^2 \log (f^2e^{2F}) dm  \notag\\
  & = \int f^2 \log f^2 dm + \int f^2 2F dm.        \notag
 \end{align}
  Since $\|f\|_{L^2(m)}^2 = \|u\|_{L^2(m_\psi)}^2 $ we therefore have
  \begin{align}
  Ent_{m_\psi}(u^2) = Ent_m(f^2) + \int f^2 2F dm.     \label{gs724}
  \end{align}
  Combine this with \eref{LSc3} and then use \eref{gs709G} to find
  \begin{align}
   Ent_{m_\psi}(u^2) &\le 2c\int_X|\n(u\psi)|^2 dm +\int u^2 2F dm_\psi \label{gs725a} \\
   &=  2c\int_X|\n u|^2\, dm_\psi +  \int_X u^2(2c(\l_0 -V)+2F) dm_\psi.  \label{gs725}
   \end{align}
\end{proof}

\begin{remark} \label{remearly} {\rm  Many of the early approaches to 
the derivation  of a DLSI from a perturbation
of either $V$ or $F$ hinge on estimating the last integral in \eref{gs720}  
We  compare some of these approaches in Section \ref{secwp}. 
}
\end{remark}

\begin{remark} \label{remearly2}{\rm  If one assumes only that a defective logarithmic 
Sobolev inequality  holds, namely 
\beq
Ent_m(f^2) \le 2c \int_X |\n f|^2 dm   + D \|f\|_{L^2(m)}^2,
\eeq
 instead of \eref{LSc3}, then the ground state transformation yields, instead of \eref{gs720},
the inequality 
\begin{align}
Ent_{m_\psi}(u^2) &\le  2c\int |\n u|^2 dm_\psi       \notag \\
&+ \int_X u^2(2c(\l_0 -V) + 2F) dm_\psi 
                                + D\|u\|_{L^2(m_\psi)}^2 . \label{gs720b}
\end{align}
If one knew  that  $F- cV$ were 
 bounded above  then \eref{gs720b} would show
that $m_\psi$ also  satisfies  a DLSI. 
The discussion in Section  \ref{secwp} includes a history of  conditions that relate $F$ and $V$
in  such a pointwise manner. 
Such pointwise conditions do not  fall within the purview of this paper.
}
\end{remark}

\begin{remark}\label{remrosen2}{\rm  We can borrow a bit of the kinetic energy
from \eref{gs720} and shift it to the last term in \eref{gs720} to derive a condition on $\log \psi$ 
ensuring a DLSI:
Using the the relation $f = u\psi$ and the identity \eref{gs709G} 
 we find
\begin{align*}
&2c \int_X |\n (u\psi)|^2 dm  = 2(c+a)  \int_X |\n (u\psi)|^2 dm  - 2a \int_X |\n f|^2 dm\\
&=2(c+a)\(\int_X|\n u|^2\, dm_\psi + \int_X (\l_0 -V)f^2 dm\) -  2a \int |\n f|^2 dm\\
& =2(c+a)\int_X|\n u|^2\, dm_\psi  -2\int_X\( a |\n f|^2 +(c+a)(V - \l_0)f^2\) dm.
\end{align*}
Since, by \eref{gs725a}, $Ent_{m_\psi}(u^2) \le 2c\int_X |\n (u\psi)|^2 dm + 2\int_X F f^2 dm$,
we have
\begin{align}
Ent_{m_\psi}(u^2)   &  \le 2(c+a) \int_X|\n u|^2\, dm_\psi    \notag   \\
& + 2\int_X\Big\{ F f^2- a |\n f|^2 -(c+a)(V-\l_0) f^2\Big\}dm. \label{gs740b}
\end{align}

Suppose then  that there is a number $b$ such that $-\log \psi$ satisfies the form inequality
\beq
-\log \psi \le \{a \n^*\n +(c+a)(V-\l_0)\} +b      \label{gs741b}
\eeq
in $L^2(m)$.  Then  line \eref{gs740b} is at most $2b\| f\|_{L^2(m)}^2$ and we have the DLSI
\beq
Ent_{m_\psi}(u^2)  \le 2(c+a) \int_X|\n u|^2\, dm_\psi   + 2b \|u\|_{L^2(m_\psi)}^2      \label{gs742}
\eeq
This is a perturbation version of Rosen's lemma, \cite{DS84}. In practice one proves or assumes that
$-\log \psi \le  (c+a)(V-\l_0)  +b$, which implies \eref{gs741b} and is slightly more general than the condition
in Remark \ref{remearly2} but is still a pointwise condition.  Example \ref{exunb} shows how easily
this condition can fail even though the perturbed measure is hypercontractive. In that example $-\log \psi$
is unbounded above and below while $V$ is bounded.
}
\end{remark}

\subsection{The defective logarithmic Sobolev inequality} \label{secdlsi}

In the following two theorems we derive a defective logarithmic Sobolev inequality
for the ground state measure $m_\psi$ using progressively stronger conditions 
on the potential $V$. In the first theorem we assume  
that $\|e^{-V}\|_{L^\nu(m)} < \infty$ under the usual condition that $\nu > 2c$. 
We describe the defect partly in
terms of $\|\psi^{-1}\|_{L^s(m)}$ to illustrate how this quantity plays a central role. 
In the second theorem
we add on the hypothesis that $\|e^{V}\|_{L^\ka(m)} < \infty$ and use the 
bounds on $\|\psi^{-1}\|_{L^s(m)}$ derived in Section \ref{seclp}. The constants $c_\nu $ and $b_\ka$ that
occur repeatedly are defined in \eref{L329a} and \eref{s98f} respectively.
 
\begin{theorem} \label{thmDLSI2} 
Assume the hypotheses of Theorem \ref{thmesa2}.  
Suppose that
\begin{align}
a > c_\nu\ \ \ \text{and let}\ \ \  s =   \frac{2 c_\nu}{a - c_\nu} .\ \ \ 
\text{Equivalently,}  \ \ \   a =   (1 +\frac{2}{s})c_\nu.    \label{gs803} 
\end{align}
Assume that $\|\psi^{-1}\|_s < \infty$. Then
\begin{align}
Ent_{m_\psi} (f^2) &\le 2a \int_X |\n f|^2 dm_\psi   \notag \\
 & \ \ \ \ \ +2 \| f\|_{L^2(m_\psi)}^2\Big\{ \log \(\|\psi^{-1}\|_s\|e^{\l_0 - V}\|_\nu^a\) \Big\} 
 \label{gs805}
\end{align}
\end{theorem}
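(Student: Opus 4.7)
The plan is to combine three ingredients already in hand: the defective logarithmic Sobolev inequality for $\nabla^*\nabla+V$ on $L^2(m)$ from Theorem \ref{thmns2}, the ground state transformation identity \eqref{gs709G}, and Young's inequality \eqref{BG500c} applied on the probability space $(X,m_\psi)$ to absorb a $\log\psi$ term.

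First, apply the defective LSI \eqref{L325h} with $u=f\psi$. Using $H(f\psi)=\langle\cdot\rangle$ and \eqref{gs709G}, the right side becomes
\begin{align*}
\langle(H+\log\|e^{-V}\|_\nu)(f\psi),f\psi\rangle_{L^2(m)}
&=\int_X|\nabla f|^2\,dm_\psi+\bigl(\lambda_0+\log\|e^{-V}\|_\nu\bigr)\|f\|_{L^2(m_\psi)}^2 \\
&=\int_X|\nabla f|^2\,dm_\psi+\log\|e^{\lambda_0-V}\|_\nu\cdot\|f\|_{L^2(m_\psi)}^2.
\end{align*}
Next, pass from $Ent_m((f\psi)^2)$ to $Ent_{m_\psi}(f^2)$ via identity \eqref{gs724}, which contributes $\int(f\psi)^2\,2F\,dm=-2\int f^2\log\psi\,dm_\psi$. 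Thus
\[
Ent_{m_\psi}(f^2)\le 2c_\nu\int_X|\nabla f|^2\,dm_\psi+2c_\nu\log\|e^{\lambda_0-V}\|_\nu\,\|f\|_{L^2(m_\psi)}^2-2\int_Xf^2\log\psi\,dm_\psi.
\]

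The remaining task is to control the last term. Apply \eqref{BG500c} on $(X,m_\psi)$ with $g=f^2$ and $u=-2t\log\psi$ for a parameter $t>1$ to be tuned. Since $\int\psi^{-2t}\,dm_\psi=\int\psi^{2-2t}\,dm=\|\psi^{-1}\|_{2t-2}^{2t-2}$, setting $s=2t-2$ (so $t=(s+2)/2$) and dividing through by $t$ yields
\[
-2\int_Xf^2\log\psi\,dm_\psi\le\frac{2}{s+2}\,Ent_{m_\psi}(f^2)+\frac{2s}{s+2}\log\|\psi^{-1}\|_s\cdot\|f\|_{L^2(m_\psi)}^2.
\]
Plug this bound into the previous display and move the $Ent_{m_\psi}(f^2)$ term to the left. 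The coefficient becomes $1-\tfrac{2}{s+2}=\tfrac{s}{s+2}$, so after multiplying through by $(s+2)/s$ the kinetic coefficient is $\tfrac{s+2}{s}\cdot 2c_\nu=2(1+2/s)c_\nu=2a$, which matches the statement by the definition of $a$ in \eqref{gs803}. The $\|f\|_{L^2(m_\psi)}^2$ contribution is then $2a\log\|e^{\lambda_0-V}\|_\nu+2\log\|\psi^{-1}\|_s=2\log(\|\psi^{-1}\|_s\|e^{\lambda_0-V}\|_\nu^a)$, giving \eqref{gs805}.

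There is no serious obstacle here; the delicate point is simply the choice of the parameter $t$ in Young's inequality. The algebraic relation $a=(1+2/s)c_\nu$ in the hypothesis is precisely what makes $t=(s+2)/2$ convert the prefactor $2c_\nu/(1-\tfrac{2}{s+2})$ into $2a$, and it also forces $s$ to be the exponent at which we demand integrability of $\psi^{-1}$ (hence the assumption $\|\psi^{-1}\|_s<\infty$, which in the next theorem will be supplied by the moment product bound \eqref{W710b}). All quantitative steps are explicit, so the bound for $D$ can be read off directly.
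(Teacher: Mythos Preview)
Your proof is correct and arrives at exactly \eqref{gs805}. The route, however, is organized differently from the paper's.

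The paper starts from Corollary \ref{corgs2}, i.e.\ from the unperturbed LSI \eqref{mt1} with constant $c$, producing the inequality \eqref{gs720} with the mixed term $\int u^2\bigl(2c(\lambda_0-V)+2F\bigr)\,dm_\psi$. It then invokes Lemma \ref{lemDLS5}, which handles this term by applying Young's inequality \emph{twice}: once to $\nu(\lambda_0-V)+2F$ (this is \eqref{gs607b1}, where $\|e^{\lambda_0-V}\|_\nu$ appears) and once to $(2+s)F$ (this is \eqref{gs609b1}, where $\|\psi^{-1}\|_s$ appears). A weighted combination of the two, with weights $2c/\nu$ and $1-2c/\nu$, produces \eqref{gs605b1a}, and after rearrangement one lands on \eqref{gs805}.

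You instead start from the $H$-level DLSI \eqref{L325h}, which already has the potential absorbed and carries the constant $c_\nu$ from the outset. Via the ground state identity \eqref{gs709G} (equivalently \eqref{gs709H}) the quadratic form $\langle H(f\psi),f\psi\rangle$ collapses to $\int|\nabla f|^2\,dm_\psi+\lambda_0\|f\|_{L^2(m_\psi)}^2$, so only the single $-2\int f^2\log\psi\,dm_\psi$ term survives, and one Young application with $u=-(s+2)\log\psi$ finishes the job. This is cleaner: the first Young application in the paper's Lemma \ref{lemDLS5} is precisely the step that converts $c$ into $c_\nu$ in the proof of \eqref{L325h}, so you are reusing that work rather than redoing it. Both routes are at the same level of rigor regarding domains (cf.\ the remark preceding Lemma \ref{lemgs1}); neither approach gains or loses anything there.
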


     \begin{theorem} \label{thmDLSI3} 
   In addition to the hypotheses of Theorem \ref{thmesa2} 
assume that       $\|e^V\|_\ka < \infty$ for some $\ka >0$. 
Suppose that 
\begin{align}
a > c_\nu b_\ka \ \ \ \text{and}\ \ \  \sigma > c_\nu b_\ka.       \label{gs803b}
\end{align}
Then
\begin{align}
Ent_{m_\psi} (f^2) &\le 2a \int_X |\n f|^2 dm_\psi   \notag \\
& \ \ \ \ \ +2 \| f\|_{L^2(m_\psi)}^2  \log\(\|e^{\l_0 - V}\|_\nu^{a+\sigma} \, 
\| e^{V-\l_0}\|_\kappa^{\ell(a) + \ell(\sigma)} \).    \label{gs806}
  \end{align}
  In particular, if $a = \sigma = t$, the unique point at which $\ell(t) = t$, $($cf. \eref{W854}$)$
  then the right side is independent of $\l_0$ and  there holds
  \begin{align}
Ent_{m_\psi} (f^2) &\le 2a \int_X |\n f|^2 dm_\psi   
 +2 \| f\|_{L^2(m_\psi)}^2  \log M^{2a}, \label{gs807}
  \end{align}  
with $M$ defined in \eref{M1}. 

 For arbitrary $a$ and $\sigma$ 
in the allowed range $(c_{\nu} b_{\ka}, \infty)$ there holds
 the $\l_0$ independent bound
\begin{align}
Ent_{m_\psi} (f^2) &\le 2a \int_X |\n f|^2 dm_\psi  +2 \| f\|_{L^2(m_\psi)}^2 \log M^{a+ \ell(a) +\sigma+\ell(\sigma)} .    \label{gs807b}
  \end{align}
  \end{theorem}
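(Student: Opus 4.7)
The plan is to apply Theorem \ref{thmDLSI2}, which gives a DLSI in terms of $\|\psi^{-1}\|_s$, and then replace that moment by the explicit bound from Corollary \ref{corub1}. The only real work is checking that the parameter ranges for $a$ and $\sigma$ in the statement correspond exactly to the intervals $s \in (0, s_0)$ and $r \in (0, r_0)$ demanded by Theorem \ref{thmmp1}, and then simplifying the special case via the identity \eref{s5}.

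First I would verify the dictionary between the parameters. From \eref{gs803}, setting $a = (1+2/s) c_\nu$ corresponds to $s = 2c_\nu/(a-c_\nu)$, and the condition $a > c_\nu b_\ka$ rearranges to $2/s > b_\ka - 1$, i.e.\ $s < 2/(b_\ka -1) = s_0$ by \eref{W851k}. Analogously, defining $\sigma = (2r^{-1}-1)c_\nu$ as in \eref{W852b}, the condition $\sigma > c_\nu b_\ka$ is equivalent to $r < 2/(b_\ka+1) = r_0$. Thus the hypotheses \eref{gs803b} bring us into exactly the domain in which Corollary \ref{corub1} applies, and we may invoke \eref{W710a} to obtain
\begin{align}
\|\psi^{-1}\|_s \le \|e^{V-\l_0}\|_\kappa^{\ell(a)+\ell(\sigma)}\,\|e^{\l_0 - V}\|_\nu^{\sigma}. \notag
\end{align}

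Next, Theorem \ref{thmDLSI2} applied with this same $a$ (and associated $s$) yields
\begin{align}
Ent_{m_\psi}(f^2) \le 2a\int_X|\n f|^2\, dm_\psi + 2\|f\|_{L^2(m_\psi)}^2 \log\bigl(\|\psi^{-1}\|_s\,\|e^{\l_0-V}\|_\nu^a\bigr). \notag
\end{align}
Substituting the moment bound above into the logarithm and combining the two powers of $\|e^{\l_0-V}\|_\nu$ gives exactly \eref{gs806}. This is the main inequality.

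For the sharpened $\l_0$-independent form \eref{gs807}, I would specialize $a=\sigma=t$ at the unique fixed point $\ell(t)=t$ guaranteed by \eref{W854}. Then the bracketed expression in \eref{gs806} becomes $\bigl(\|e^{\l_0-V}\|_\nu\, \|e^{V-\l_0}\|_\kappa\bigr)^{2a}$, which equals $M^{2a}$ by the cancellation identity \eref{s5}. For arbitrary $a,\sigma > c_\nu b_\ka$, \eref{gs807b} follows immediately by applying the crude bounds \eref{s3} and \eref{s4}, namely $\|e^{V-\l_0}\|_\kappa \le M$ and $\|e^{\l_0-V}\|_\nu \le M$, to each factor separately; one pays for this convenience with the larger exponent $a+\ell(a)+\sigma+\ell(\sigma)$ rather than the optimum $2a$ achieved at the fixed point.

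The only conceptual obstacle is that Theorem \ref{thmDLSI2} itself must be in hand; its proof is essentially the computation of Corollary \ref{corgs2} followed by absorbing the $\int u^2\, 2F\, dm_\psi$ term via Young's inequality \eref{BG500c} applied to $g = f^2$ and a linear function of $F$, using the entropy-from-moment estimate that $\|\psi^{-1}\|_s < \infty$ controls $\int e^{sF}dm$. Given Theorem \ref{thmDLSI2}, the present theorem is purely algebraic; I do not foresee any technical difficulty beyond the bookkeeping of exponents described above.
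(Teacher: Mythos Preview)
Your proposal is correct and follows essentially the same approach as the paper: translate the conditions $a,\sigma > c_\nu b_\ka$ into $s<s_0$, $r<r_0$ via \eref{W851k}, insert the bound \eref{W710a} from Corollary \ref{corub1} into \eref{gs805} of Theorem \ref{thmDLSI2}, and then handle the two special cases using \eref{s5} and \eref{s3}--\eref{s4} respectively. The paper's proof is line-for-line the same reasoning.
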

 
  Note that the lower bound, $c_\nu b_\ka$,   required of 
 $a$ and $\sigma$ in \eref{gs803b} depends  only on $c, \nu$ and $ \ka$. 
The proofs depend on the following lemma.

\begin{lemma} \label{lemDLS5} If $\|u\|_{L^2(m_\psi)} < \infty$ and   $Ent_{m_\psi}(u^2) < \infty$ then 
\begin{align}
 &\int_X u^2 F dm_\psi  
   \le  \frac{1}{2+s}\(Ent_{m_\psi}(u^2) + s\|u\|_{L^2(m_\psi)}^2 \log \|\psi^{-1}\|_{L^s(m)}\). \label{gs609b1}\\
  & \int_X u^2\(\nu(\l_0 -V) +2F\)  dm_\psi                      \notag \\
  &\qquad \qquad \ \ \ \  \le     \(Ent_{m_\psi}(u^2) 
                                  + \nu\|u\|_{L^2(m_\psi)}^2 \log\| e^{\l_0 - V}\|_{L^\nu(m)}\). \label{gs607b1}  \\
\ \notag \\
 & \int_X u^2\(2c(\l_0 -V) + 2F\) dm_\psi \le\(1 -(c/a)\)  Ent_{m_\psi}(u^2)     \label{gs605b1a}\\ 
&\ \ \ \ \ \ \ \ \ +(2c/a)\Big\{a\log\| e^{\l_0 - V}\|_{L^\nu(m)} 
        + \log \|\psi^{-1}\|_{L^s(m)} \Big\} \|u\|_{L^2(m_\psi)}^2  .     \notag
\end{align}
\end{lemma}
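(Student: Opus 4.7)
\textbf{Proof plan for Lemma \ref{lemDLS5}.} The plan is to derive all three inequalities from a single tool, namely Young's inequality \eqref{BG500c} applied on the probability space $(X,m_\psi)$ with the choice $g = u^2$. The guiding identity is
\begin{align*}
E_{m_\psi}\bigl(e^{\alpha F + \beta(\l_0-V)}\bigr)
   = \int_X e^{\alpha F + \beta(\l_0-V)}\,\psi^2\,dm
   = \int_X \psi^{\,2-\alpha}\,e^{\beta(\l_0-V)}\,dm,
\end{align*}
which is what lets $m_\psi$-integrals be converted back into quantities controlled by the $L^s(m)$ norm of $\psi^{-1}$ and the $L^\nu(m)$ norm of $e^{\l_0-V}$ exactly when the exponents $\alpha,\beta$ are chosen to cancel the weight $\psi^2$ cleanly.

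For \eqref{gs609b1} I would apply \eqref{BG500c} with $g = u^2$ and the function $w = (2+s)F$ in the probability space $(X,m_\psi)$. The integrability $E_{m_\psi}(e^w) = \int \psi^{-s}\,dm = \|\psi^{-1}\|_{L^s(m)}^{\,s}$ follows from the displayed identity, and dividing through by $2+s$ delivers \eqref{gs609b1}. For \eqref{gs607b1} I would apply \eqref{BG500c} with $w = \nu(\l_0-V) + 2F$; the choice $\alpha = 2$ exactly neutralizes the Radon--Nikodym density $\psi^2$, so that $E_{m_\psi}(e^w) = \int e^{\nu(\l_0-V)}\,dm = \|e^{\l_0-V}\|_{L^\nu(m)}^{\,\nu}$, and \eqref{gs607b1} pops out after taking the logarithm.

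The third inequality \eqref{gs605b1a} will be obtained as a specific convex combination of the first two. I would write the integrand as
\begin{align*}
2c(\l_0-V) + 2F
  = \frac{2c}{\nu}\bigl(\nu(\l_0-V) + 2F\bigr) + \Bigl(2 - \frac{4c}{\nu}\Bigr)F
\end{align*}
and plug in the bounds \eqref{gs607b1} and \eqref{gs609b1}. The main (but routine) obstacle is the algebraic verification that, with $a = (1 + 2/s)c_\nu$ and $c_\nu = c/(1 - 2c/\nu)$, the resulting coefficient of $Ent_{m_\psi}(u^2)$ equals $1 - c/a$ and the coefficient of $\log\|\psi^{-1}\|_{L^s(m)}$ equals $2c/a$. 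Using $1/c_\nu = 1/c - 2/\nu$ one checks
\begin{align*}
\frac{2c}{\nu} + \frac{2-4c/\nu}{2+s} = \frac{2 + 2cs/\nu}{s+2} = 1 - \frac{c s(1-2c/\nu)}{(s+2)c_\nu} = 1 - \frac{c}{a},
\end{align*}
and similarly the $\log\|\psi^{-1}\|_s$ coefficient becomes $(2-4c/\nu)s/(s+2) = 2cs/((s+2)c_\nu) = 2c/a$, while the $\log\|e^{\l_0-V}\|_\nu$ coefficient is $(2c/\nu)\cdot\nu = 2c$, which matches $(2c/a)\cdot a$. Combining the three matching coefficients yields \eqref{gs605b1a}.
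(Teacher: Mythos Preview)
Your proposal is correct and follows essentially the same route as the paper: both apply Young's inequality \eqref{BG500c} with $w=(2+s)F$ for \eqref{gs609b1}, with $w=\nu(\l_0-V)+2F$ for \eqref{gs607b1}, and then form the identical convex decomposition $2c(\l_0-V)+2F=\tfrac{2c}{\nu}\bigl(\nu(\l_0-V)+2F\bigr)+(1-2c/\nu)\cdot 2F$ to obtain \eqref{gs605b1a}. One small slip: in your displayed chain the intermediate expression $1-\tfrac{cs(1-2c/\nu)}{(s+2)c_\nu}$ should read $1-\tfrac{s(1-2c/\nu)}{s+2}=1-\tfrac{cs}{(s+2)c_\nu}$, but your final identification $=1-c/a$ is correct.
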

          \begin{proof} Apply Young's inequality \eref{BG500c} to find
\begin{align}
 (2+s)\int_X u^2 F dm_\psi &=\int_X u^2 \{(2+s)F\}  dm_\psi    \notag\\
 &\le Ent_{m_\psi}(u^2) + \|u\|_{L^2(m_\psi)}^2 \log \int_X e^{(2+s)F} dm_\psi  \notag\\
 &= Ent_{m_\psi}(u^2) + \|u\|_{L^2(m_\psi)}^2 \log \int_X e^{s F} dm    \notag\\  
  &= Ent_{m_\psi}(u^2) + s\|u\|_{L^2(m_\psi)}^2 \log \|\psi^{-1}\|_{L^s(m)},   \notag 
\end{align}
proving \eref{gs609b1}. To prove  \eref{gs607b1} apply Young's inequality again to find
\begin{align}
\int_X &u^2\(\nu(\l_0 -V) +2F\)  dm_\psi  \notag \\
      &  \le  Ent_{m_\psi}(u^2) + \|u\|_{L^2(m_\psi)}^2 \log \int e^{\nu(\l_0 -V) +2F} dm_\psi \notag \\
&=  Ent_{m_\psi}(u^2) + \|u\|_{L^2(m_\psi)}^2 \log\int_X e^{\nu(\l_0 -V)} dm  \notag\\
&= Ent_{m_\psi}(u^2) + \nu\|u\|_{L^2(m_\psi)}^2 \log \|e^{\l_0 -V}\|_\nu.  
\end{align}
For the proof of \eref{gs605b1a} we can apply \eref{gs609b1} and \eref{gs607b1} after 
decomposing the left side    of \eref{gs605b1a} as
\begin{align}
&\int_X u^2\(2c(\l_0 -V) + 2F\) dm_\psi       \notag  \\
&= \frac{2c}{\nu}\int_X u^2\(\nu(\l_0 -V) +2F\)  dm_\psi  +(1- (2c/\nu))\int_X u^2\, 2F dm_\psi \notag \\ 
& \le \frac{2c}{\nu}\(Ent_{m_\psi}(u^2) 
                                  + \nu\|u\|_{L^2(m_\psi)}^2 \log\| e^{\l_0 - V}\|_{L^\nu(m)}\) \notag  \\
 &+ (1- (2c/\nu)) \frac{2}{2+s}\(Ent_{m_\psi}(u^2) + s\|u\|_{L^2(m_\psi)}^2 \log \|\psi^{-1}\|_{L^s(m)}\).  \label{gs606b}
\end{align}
The definition \eref{gs803} gives $s/(2+s) = c_\nu a^{-1}$ and therefore $2/(2+s) = 1 - c_\nu a^{-1}$.
The combined coefficient of $Ent_{m_\psi}(u^2) $  in the last two lines is therefore 
$(2c/\nu) + (1- (2c/\nu)) (1 - c_\nu a^{-1}) =1-ca^{-1}$, since  $(1-(2c/\nu)) c_\nu = c$. This agrees
with the coefficient of $Ent_{m_\psi}(u^2)$ in \eref{gs605b1a}. 

       The coefficient of   $\log \|e^{\l_0- V}\|_\nu$ in \eref{gs606b} is clearly in agreement with that
in  \eref{gs605b1a}.
       
       The coefficient of $\|u\|_{L^2(m_\psi)}^2 \log \|\psi^{-1}\|_{L^s(m)}$   in \eref{gs606b} is 
 $2  (1- (2c/\nu)) s/(2+s)= 2  (1- (2c/\nu)) c_\nu a^{-1} = 2c/a$, giving agreement with \eref{gs605b1a}.
 \end{proof}

\bigskip
\noindent
\begin{proof}[Proof of Theorem \ref{thmDLSI2}] Combining \eref{gs720} with \eref{gs605b1a} we find
\begin{align*}
&Ent_{m_\psi}(u^2) \le  2c\int |\n u|^2 dm_\psi + \int_X u^2(2c(\l_0 -V) + 2F) dm_\psi \\
&\qquad  \qquad \ \ \  \le 2c\int |\n u|^2 dm_\psi   
+\(1 -(c/a)\)  Ent_{m_\psi}(u^2)    \\  
& \ \ \ \ \ \ \ \ \ 
+\(2c \log\| e^{\l_0 - V}\|_{L^\nu(m)} 
        + (2c/a) \log \|\psi^{-1}\|_{L^s(m)} \) \|u\|_{L^2(m_\psi)}^2 .
\end{align*}
Transfer the term $\(1 -(c/a)\)  Ent_{m_\psi}(u^2)$ to the left side and multiply by $a/c$ to find 
\begin{align*}
&Ent_{m_\psi}(u^2) \le    2a\int |\n u|^2 dm_\psi  \\
&+2\(a \log\| e^{\l_0 - V}\|_{L^\nu(m)} 
        +  \log \|\psi^{-1}\|_{L^s(m)} \) \|u\|_{L^2(m_\psi)}^2,
        \end{align*}
        which is \eref{gs805}.
\end{proof}

\bigskip
\noindent
\begin{proof}[Proof of Theorem \ref{thmDLSI3}]
We will bound the factor $ \|\psi^{-1}\|_{L^s(m)}$ in \eref{gs805} using the bound  \eref{W710a}, given in 
Corollary \ref{corub1}. To apply Corollary \ref{corub1} we must show that $s$,  
defined in \eref{gs803} is at most $s_0$. Using 
\eref{gs803b} and \eref{W851k}, we find
\begin{align}
s&<\frac{2c_\nu}{c_\nu b_\ka - c_\nu}= \frac{2}{b_\ka - 1} = s_0. 
\end{align}
We also need to verify that $r$, defined in \eref{W852b}, lies in $(0, r_0)$. 
But the condition \eref{gs803b} for $\sigma$ gives $(2r^{-1} - 1) > b_\ka$ and 
therefore $r < 2/(b_\ka +1) = r_0$, by \eref{W851k}. 
Insert the bound on $\|\psi^{-1}\|_{L^s(m)}$ from \eref{W710a} into \eref{gs805} to find \eref{gs806}. 

  By \eref{W854} there is a unique point $t \in (b_\ka c_\nu, \infty)$ such that $t= \ell(t)$.
If $a= t$ then $ \|e^{\l_0 - V}\|_\nu^{a} \,\| e^{V-\l_0}\|_\kappa^{\ell(a)} =  
\(\|e^{\l_0 - V}\|_\nu \,\| e^{V-\l_0}\|_\kappa\)^a = M^a$.
Choosing $a = \sigma = t$ we see then that \eref{gs807} follows from \eref{gs806}. 

If $a$ and $\sigma$ are chosen arbitrarily in the allowed range $(c_\nu b_\ka, \infty)$ then
\eref{gs807b} follows from   \eref{s3}, \eref{s4} and \eref{gs806}. 
\end{proof}

\begin{remark} \label{remparam} {\rm  The parameters $a$ and $\sigma$ in \eref{gs806}
 are at our disposal as long as both are chosen greater than $c_\nu b_\ka$. 
 We saw that  if $a = \sigma = t$, with $t$ chosen to make $\ell(t) = t$ as in \eref{W854}, then the bound
 \eref{gs806} reduces to the $\l_0$ independent bound \eref{gs807}. 
 But if  we choose $a = \sigma = t$, with $t$ chosen  to minimize $\ell(t) + t$, as in \eref{W854a} then the 
 estimate in the $\l_0$ independent defect given in \eref{gs807b} would be minimized.
The choice of some special values of $a$ and $\sigma$, as well as the behavior of these special 
values as $\ka\downarrow 0$ and $\nu \downarrow 2c$, may be of significance in some applications.
}
\end{remark}

\subsection{Cases: $V$ is bounded below, or above, or both} \label{seccases}

If $V$ is bounded below or above then one can let  $\nu\uparrow \infty$ or $\ka \uparrow \infty$, respectively,
in the formulas of the preceding sections,  
 giving some clarifying simplifications.

Throughout this section we assume that the logarithmic Sobolev inequality \eref{mt1} holds.
    
\subsubsection{$V$ bounded below} \label{secbelow}
If $V$ is bounded below then all of the significant  quantities in Theorem \ref{thmns2} and 
Corollary \ref{corhb2} have limits  as $\nu \uparrow  \infty$. The following corollary shows that
the Sobolev coefficient, $pc_\nu(p)$in \eref{L325} converges to the classical one, 
the interval of validity $(q_0,p_0)$ converges to $(1,\infty)$ and  
the minimum time $\tau(p) - \tau(q)$ 
to boundedness converges to Nelson's shortest time.

\begin{corollary} \label{corbelow} $(V$ bounded below$)$. Suppose that the logarithmic 
Sobolev inequality  \eref{mt1} holds.    
 Assume that $V$ is bounded below and that $V \in L^{p_1}(m)$ for some $p_1 > 2$. 
 Then  $\n^*\n +V$ is essentially self-adjoint. Its closure $H$ has a unique positive $($a.e.$)$ 
 ground state $\psi$.      There holds
\begin{align}
 &Ent_m(|u|^p)  
 \le \frac{c p^2}{2(p-1)} \int_X \<(H -(\inf V)) u, u_p\> dm\ \  \text{if}\ \ p\in (1, \infty).  \label{L325below}
 \end{align} 
 Furthermore
 \beq
\| e^{-tH}\|_{q\to p} \le e^{- t\inf\, V}         
 \ \text{if} \ \ \ e^{-t/c} \le \sqrt{\frac{q-1}{p-1}}, \ \ \ 1 < q < p < \infty.   \label{L289a}
\eeq
In particular the time to boundedness, \eref{L505}, reduces to Nelson's classical time to contraction,
which is determined by $\tau_0(p) = (c/2) \log (p-1), 1 < p <\infty$. 
\end{corollary}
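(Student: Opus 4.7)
The plan is to derive Corollary \ref{corbelow} by taking the limit $\nu \uparrow \infty$ in the statements of Theorem \ref{thmns2} and Corollary \ref{corhb2}. Since $V$ is bounded below, $\|e^{-V}\|_\nu \le e^{-\inf V}$ for all $\nu > 0$, and in fact $\|e^{-V}\|_\nu \to e^{-\inf V}$ as $\nu \to \infty$ by the standard $L^\nu \to L^\infty$ convergence on a probability space. Moreover, by Lemma \ref{lemiv1}, $p_0 = (\nu/c)(1 + a_\nu) \to \infty$ and $q_0 = (\nu/c)(1 - a_\nu) \to 1$ as $\nu \to \infty$, so $2p_0/(p_0-2) \downarrow 2$. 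Hence the hypothesis $V \in L^{p_1}(m)$ with $p_1 > 2$ ensures that $p_1 \ge 2p_0/(p_0-2)$ for all sufficiently large $\nu$, so Theorem \ref{thmns2} and Corollary \ref{corEU} apply. This delivers essential self-adjointness of $\n^*\n + V$ and produces the unique a.e. strictly positive ground state $\psi$ of its closure $H$.

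Next I would derive \eref{L325below}. From \eref{L325}, for each $p \in (1, \infty)$ and each $\nu$ large enough that $p \in (q_0, p_0)$,
\begin{align*}
Ent_m(|u|^p) \le p\, c_\nu(p)\, \<(H + \log \|e^{-V}\|_\nu) u, u_p\>_{L^2(m)}.
\end{align*}
Using $(p_0 - p)(p - q_0) = (2\nu/c)(p-1) - p^2$ from \eref{L313a}, we get
\begin{align*}
p\, c_\nu(p) \;=\; \frac{\nu p^2}{(p_0 - p)(p - q_0)} \;=\; \frac{\nu p^2}{(2\nu/c)(p-1) - p^2} \;\longrightarrow\; \frac{c p^2}{2(p-1)}
\end{align*}
as $\nu \to \infty$. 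Combined with $\log \|e^{-V}\|_\nu \to -\inf V$, the limit of the displayed inequality on a fixed $u$ (in a suitable dense core, e.g.\ $\D(\n^*\n) \cap L^\infty$) gives \eref{L325below}.

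For the hyperboundedness bound \eref{L289a}, I would start from \eref{L289}, namely $\|e^{-tH}\|_{q \to p} \le \|e^{-V}\|_\nu^t$ for $t \ge \tau(p) - \tau(q)$, where $\tau$ is given by \eref{L505}. Using $a_\nu \to 1$, $q_0^{-1} \to 1$, $p_0^{-1} \to 0$, the expression $\tau(p) - \tau(q) = (c/(2a_\nu))\log\bigl[(q_0^{-1} - p^{-1})(q^{-1} - p_0^{-1})/((p^{-1} - p_0^{-1})(q_0^{-1} - q^{-1}))\bigr]$ converges to $(c/2)\log\bigl((p-1)/(q-1)\bigr) = \tau_0(p) - \tau_0(q)$. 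Given $1 < q < p < \infty$ and $t$ with $e^{-t/c} \le \sqrt{(q-1)/(p-1)}$, i.e.\ $t \ge \tau_0(p) - \tau_0(q)$, for any $t' > t$ we have $t' > \tau(p) - \tau(q)$ for all sufficiently large $\nu$; applying \eref{L289} for such $\nu$ and letting $\nu \to \infty$ gives $\|e^{-t'H}\|_{q \to p} \le e^{-t' \inf V}$. The boundary case $t' = t$ follows by the semigroup property and the strong continuity of $e^{-sH}$ established in Theorem \ref{thmesa2}. The final clause identifying $\tau_0(p) = (c/2)\log(p-1)$ with Nelson's time is immediate from the computation of the limit.

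The only substantive issue is bookkeeping: one must carry the parameter $\nu$ through the explicit formulas of Lemma \ref{lemiv1} and verify asymptotic behavior uniformly in the relevant quantities. Nothing in this proof introduces new analytic obstacles once the machinery of Sections \ref{secnsh} and \ref{secesa} is in hand; the argument amounts to showing that all the non-standard ingredients of Theorem \ref{thmns2} degenerate continuously into their classical analogs as $\nu \uparrow \infty$.
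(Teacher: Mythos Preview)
Your proposal is correct and follows essentially the same approach as the paper: let $\nu\uparrow\infty$ in Theorem \ref{thmns2} and Corollary \ref{corhb2}, using $p_0\to\infty$, $q_0\to 1$, $a_\nu\to 1$, and $\|e^{-V}\|_\nu\to e^{-\inf V}$ to recover the classical coefficients and Nelson's time, with strong continuity handling the boundary value of $t$. The only cosmetic difference is that the paper invokes the monotonicity of $\tau(p)-\tau(q)$ in $\nu$ (last line of Corollary \ref{corhb2}) and works directly with $t>\tau_0(p)-\tau_0(q)$ rather than introducing an auxiliary $t'>t$, but your version is equally valid.
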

   \begin{proof} If  $V$ is bounded below we have $\lim_{\nu\uparrow \infty} \|e^{-V}\|_\nu \to e^{-\inf V}$.
 Moreover $p_0\uparrow \infty$ and $q_0\downarrow 1$ when $\nu \uparrow \infty$, as we see from \eref{L341d} and \eref{L313qi}. 
  From \eref{L505} we see that $\tau(p)$ 
  converges to $\tau_0(p) :=(c/2)\log (p-1)$.
  If $t >(c/2)\log (p-1) -(c/2)\log (q-1)$ then, by \eref{L289},  $ \| e^{-tH}\|_{q\to p} \le \|e^{-V}\|_\nu^t $ 
  holds for large enough $\nu$, leaving aside for a moment the technical issue of self-adjointness of $H$.
  Therefore  $ \| e^{-tH}\|_{q\to p} \le e^{- t\inf\, V}$ if $t> (c/2)\log\frac{p-1}{q-1}$ and also 
  if $t\ge (c/2)\log\frac{p-1}{q-1}$ by strong continuity of $e^{-tH}$.  This proves \eref{L289a}.
  It may be of use to note monotonicity:  $\tau(p)- \tau(q) \downarrow \tau_0(p) - \tau_0(q)$ 
  as $\nu \uparrow \infty$ by Corollary \ref{corhb2}.

   Concerning the Sobolev coefficient in \eref{L325below}
   observe first that $\nu/p_0 \to c/2$ 
   as $\nu \to \infty$, as we see from   \eref{L313q}. 
In view of  \eref{L329} we therefore have $ c_\nu(p) = \frac{\nu}{p_0 - p} \frac{p}{p-q_0} \to (c/2) \frac{p}{p-1}$. 
 If    $1 < p < \infty$ then $ q_0 < p < p_0$ for sufficiently large $\nu$. 
   Keeping $p$ and $u$
   fixed, we may take the limit in the inequality \eref{L325} as $\nu \to \infty$.  
   Since $p  c_\nu(p) \to \frac{cp^2}{2(p-1)}$ while $\log\|e^{-V}\|_\nu \to - \inf V$,   
   the inequality \eref{L325}     goes over to \eref{L325below} as $\nu \to \infty$.
   
        Since $V \in L^{p_1}$ for some $p_1 > 2$ the hypothesis \eref{EU7} of Theorem \ref{thmesa2} holds for 
 some large enough finite  $p_0$.  $\n^*\n + V$ is essentially self-adjoint on its domain and
 all the conclusions of Theorem \ref{thmesa2} hold.
   \end{proof}

\begin{corollary} \label{corpolbelow}   
$(Polynomial\ growth\ of\ \|\psi\|_p)$.  If the hypotheses of Corollary \ref{corbelow} hold then
\begin{align}
\|\psi\|_p \le (p-1)^{(c/2) \sup (\l_0 - V)}, \ \ \ \ p \ge 2.    \label{L804}
\end{align}
In particular 
\beq
\psi \in \cap_{p < \infty} L^p(X,m). \label{L804b}
\eeq
Further,
\begin{align}
\|\psi\|_r \ge e^{-\sigma \sup(\l_0 - V)},\ \  0 < r < 2, \ \ \ \ \sigma =  c(2r^{-1} - 1).     \label{L808}
\end{align}
\end{corollary}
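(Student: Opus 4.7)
The plan is to obtain both \eref{L804} and \eref{L808} by passing to the limit $\nu \uparrow \infty$ in the conclusions of Theorem \ref{thmulb}. Because $V$ is bounded below and $V \in L^{p_1}(m)$, the hypotheses of Theorem \ref{thmulb} are satisfied for every $\nu > 2c$. Moreover, since $\l_0 - V$ is essentially bounded above (by $\l_0 - \inf V$), the standard convergence of $L^\nu$ norms to the essential supremum on a probability space gives $\|e^{\l_0-V}\|_\nu \uparrow e^{\l_0-\inf V} = e^{\sup(\l_0-V)}$ as $\nu \uparrow \infty$. All other limits needed have already been recorded in the proof of Corollary \ref{corbelow}: $p_0 \uparrow \infty$, $q_0 \downarrow 1$, $c_\nu \downarrow c$, and $\tau(p) \to (c/2)\log(p-1)$ for each fixed $p > 1$.

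For the upper bound \eref{L804}, I would fix $p \ge 2$; since $p_0 \uparrow \infty$, we have $p < p_0$ for all sufficiently large $\nu$, and \eref{L511} then gives
\[
\|\psi\|_p \le \|e^{\l_0 - V}\|_\nu^{\tau(p)}.
\]
The left side is independent of $\nu$, so letting $\nu \uparrow \infty$ and combining the two limits yields
\[
\|\psi\|_p \le e^{(c/2)(\log(p-1))\,\sup(\l_0 - V)} = (p-1)^{(c/2)\sup(\l_0 - V)},
\]
which is \eref{L804}. The containment \eref{L804b} is then immediate, since the right hand side is finite for every $p < \infty$.

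For the lower bound \eref{L808}, fix $r \in (0,2)$ and set $\sigma_\nu = c_\nu(2r^{-1} - 1)$; by \eref{L329a}, $c_\nu \downarrow c$ and hence $\sigma_\nu \downarrow \sigma = c(2r^{-1} - 1)$. Apply \eref{L512} with this $\sigma_\nu$, take logarithms, and let $\nu \uparrow \infty$:
\[
\log \|\psi\|_r \ge -\sigma_\nu \log \|e^{\l_0 - V}\|_\nu \longrightarrow -\sigma \, \sup(\l_0 - V),
\]
which yields \eref{L808} upon exponentiating.

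The proof is essentially bookkeeping of limits already worked out in Corollary \ref{corbelow}, and I foresee no serious obstacle. The only point not already in place there is the convergence $\|e^{\l_0 - V}\|_\nu \to e^{\sup(\l_0 - V)}$, which is a standard fact about $L^\nu$ norms of essentially bounded functions on a probability space and is available here precisely because $V$ is bounded below.
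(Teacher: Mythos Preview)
Your proof is correct and takes essentially the same approach as the paper. For \eref{L808} the paper does exactly what you do, passing to the limit $\nu\uparrow\infty$ in \eref{L512}; for \eref{L804} the paper instead applies the already-limited hyperboundedness bound \eref{L289a} directly to $\psi$ with $q=2$ and $t=\tau_0(p)=(c/2)\log(p-1)$, but this is just your limit of \eref{L511} with the order of ``apply to $\psi$'' and ``let $\nu\uparrow\infty$'' interchanged.
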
 
     \begin{proof} 
  Using \eref{L289a} with $q = 2$ and $p>2 $ we find that   
  \begin{align}
  \|\psi\|_p &= \|e^{t\l_0}e^{-tH} \psi\|_p  \notag\\
  & \le e^{t\l_0} \| \psi\|_2  e^{-t\inf V}\ \ \ \text{if}\ \ e^{-t/c} \le \sqrt{\frac{1}{p-1}}   \label{L805}\\
  &= \|\psi\|_2 e^{t\sup(\l_0 - V)}  .  \notag  
  \end{align}
  Take 
  \beq
  t = \tau_0(p) := (c/2) \log (p-1),   \label{L806}  
  \eeq
   which is the optimal value allowed in \eref{L805}. 
  Since $\|\psi\|_2 = 1$ we find \eref{L804}.
 
  For the proof of \eref{L808} observe that as $\nu\uparrow \infty$,  
  $c_\nu \to c$ by \eref{L329a} 
  while the  right side of  \eref{L512} converges to the right side of \eref{L808}.
  \end{proof}

\subsubsection{$V$ bounded above}  \label{secabove}

\begin{corollary} \label{corpolabove} 
 $($Polynomial growth of $\|\psi^{-1}\|_s$ $).$ 
 Assume that \eref{mt1} holds, that $\|e^{-V}\|_\nu < \infty$ for some $\nu > 2c$ and that
  $V$ is bounded above. Then 
$\|\psi^{-1}\|_s$ has at most polynomial growth as $ s\uparrow \infty$. In particular
\begin{align}
\| \psi^{-1}\|_s \le (1+s)^{(c/2)\sup (V - \l_0)} \(\|e^{\l_0-V}\|_\nu^{3c_\nu} 2^{(c/2)\sup (V - \l_0)}\), \ 0 < s <\infty. 
\label{L819}
\end{align}
\end{corollary}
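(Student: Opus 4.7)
The plan is to apply Corollary \ref{corub1} with a particular choice of the free parameter $\sigma$, and then let $\kappa \uparrow \infty$. Since $V$ is bounded above, $\|e^V\|_\kappa < \infty$ for every $\kappa>0$, so the hypotheses of Theorem \ref{thmmp1} and Corollary \ref{corub1} are satisfied for every such $\kappa$. Fix $s>0$. I will choose $\sigma = 3c_\nu$ (corresponding via \eref{W852b} to $r = 1/2$) and take $a = (2s^{-1}+1)c_\nu$ as dictated by \eref{W852b}. Using $r_0 = 2/(b_\ka+1)$ and $s_0 = 2/(b_\ka-1)$ from \eref{W851k}, one checks that for all sufficiently large $\kappa$ the constraints $\sigma > b_\ka c_\nu$ (i.e.\ $b_\ka <3$) and $s < s_0(\kappa)$ (so that $a > b_\ka c_\nu$) are satisfied, since $b_\ka\downarrow 1$ and $s_0(\kappa)\uparrow \infty$ as $\kappa\uparrow\infty$. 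Hence \eref{W710a} gives, for all sufficiently large $\kappa$,
\begin{align}
\|\psi^{-1}\|_s \le \|e^{V-\l_0}\|_\kappa^{\ell(a)+\ell(\sigma)}\,\|e^{\l_0 - V}\|_\nu^{3c_\nu}.  \label{plan1}
\end{align}

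The second step is to compute the limit of the right-hand side as $\kappa\uparrow\infty$. Three convergences are needed. First, since $V-\l_0$ is essentially bounded above on the probability space $(X,m)$, the standard fact $\|f\|_\kappa \to \|f\|_\infty$ yields $\|e^{V-\l_0}\|_\kappa \to e^{\sup(V-\l_0)}$. Second, $b_\ka = \sqrt{1+2c/\kappa}\to 1$. Third, from the definition \eref{W852} of $\ell$, a direct calculation with $a - c_\nu = 2c_\nu/s$ and $a + c_\nu = 2(s+1)c_\nu/s$ gives
\begin{align}
\ell(a) = \frac{c}{2b_\ka}\log\frac{a + b_\ka c_\nu}{a - b_\ka c_\nu} \;\longrightarrow\; \frac{c}{2}\log(s+1),
\end{align}
while for $\sigma = 3c_\nu$ we have $\sigma + c_\nu = 4c_\nu$ and $\sigma - c_\nu = 2c_\nu$, yielding $\ell(\sigma) \to (c/2)\log 2$.

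Combining these, the right-hand side of \eref{plan1} converges to
\begin{align}
e^{\sup(V-\l_0)\bigl[(c/2)\log(s+1)+(c/2)\log 2\bigr]}\cdot\|e^{\l_0-V}\|_\nu^{3c_\nu}
= (s+1)^{(c/2)\sup(V-\l_0)}\,2^{(c/2)\sup(V-\l_0)}\,\|e^{\l_0-V}\|_\nu^{3c_\nu},
\end{align}
which is precisely \eref{L819}. Since $\|\psi^{-1}\|_s$ does not depend on $\kappa$, passing to the limit in \eref{plan1} concludes the proof. I do not foresee a serious obstacle: all steps are elementary once the strategy of freezing $\sigma = 3c_\nu$ and sending $\kappa\to\infty$ is fixed. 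The only mild point to verify is that for any fixed $s>0$ the admissibility condition $s < s_0(\kappa)$ can be achieved by taking $\kappa$ large enough, which is automatic from $s_0(\kappa)\uparrow\infty$.
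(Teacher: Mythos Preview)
Your proof is correct and follows essentially the same approach as the paper: both fix $r=1/2$ (equivalently $\sigma=3c_\nu$), let $\kappa\uparrow\infty$, and compute the resulting limits of $\ell(a)$ and $\ell(\sigma)$. The only cosmetic difference is that the paper first records the $\kappa=\infty$ limiting formulas for $b_\kappa,r_0,s_0,\ell$ and applies the moment product inequality \eref{W753g} directly before invoking the lower bound \eref{L512}, whereas you invoke Corollary~\ref{corub1} (which already packages these together) and then pass to the limit; the substance is identical.
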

\begin{proof} Since $V$ is bounded above we have $\| e^V\|_\ka < \infty$ for $0 < \ka \le \infty$.
In Theorem \ref{thmmp1} all of the significant quantities, $b_\ka$, $\ell_\ka$,
$r_0$ and $s_0$ have limits as $\ka \uparrow \infty$, which we can summarize as follows.
\begin{align}
b_\infty &= 1, \ \    r_0= 1,\ \  s_0 =\infty, \ \ \ell_\infty(t) 
                  =  (c/2) \log \frac{t+ c_\nu}{t-c_\nu},\ t > c_\nu .                     \label{L820}
 \end{align}
 The first of these identities follows from the definition \eref{s98f}, the second and third from \eref{W851g},
 while the fourth  follows from the definition \eref{W852}.

 Since $s_0 = \infty$  
 the moment  product theorem, Theorem \ref{thmmp1}, shows that  
 $\|\psi^{-1}\|_s < \infty$ for all $s < \infty$. 
To make use of the   moment product inequality \eref{W753g}  observe that $a$ is given in terms of $s$
by \eref{W852b}. Thus 
\begin{align}
\ell_\infty(a) &=(c/2) \log \frac{(2s^{-1} +1)c_\nu+ c_\nu}{(2s^{-1} +1)c_\nu-c_\nu},\ \ \ 0 < s < \infty \notag\\
&=(c/2) \log (1+s), \ \ \ 0 < s <\infty.
\end{align}
Similarly we have
\begin{align}
\ell_\infty(\sigma) &= (c/2) \log\frac{(2r^{-1} - 1) + 1}{(2r^{-1} - 1) -1} \notag\\
&= -(c/2) \log(1-r).
\end{align}
Inserting these values into \eref{W753g} we find
\begin{align}
\|\psi\|_r \| \psi^{-1}\|_s &\le   \| e^{V-\l_0}\|_\infty^{(c/2) (\log (1+s) - \log(1-r))}    \notag \\
&=  \frac{(1+s)^{(c/2)\log\| e^{V-\l_0}\|_\infty}}{(1-r)^{(c/2)\log\| e^{V-\l_0}\|_\infty}}, 
                         \ \ 0 < s < \infty, \ 0 < r <1 \notag\\
& =\frac{(1+s)^{(c/2)\sup (V - \l_0)}}{(1-r)^{(c/2)\sup (V - \l_0)}}, 
                         \ \ 0 < s < \infty, \ 0 < r <1      \label{L822}
\end{align}
Choose $r = 1/2$ for simplicity. \eref{L512} and \eref{L513}  show 
that $\|\psi\|_{1/2} \ge  \|e^{\l_0-V}\|_\nu^{-3c_\nu}$.
Therefore
\begin{align}
\| \psi^{-1}\|_s &\le  (1+s)^{(c/2)\sup (V - \l_0)}
 \(\frac{ \|e^{\l_0-V}\|_\nu^{3c_\nu}}{(1/2)^{(c/2)\sup (V - \l_0)}}\) .
\end{align}
This proves \eref{L819}.
\end{proof}

\subsubsection{$V$ bounded.}

When $V$ is bounded several expressions take a simpler form.
Observe first that
\begin{align}
\inf V \le \l_0 &\le \sup V.  \label{L830}\\
\sup (V - \l_0) &\le \sup V - \inf V = Osc (V).         \label{L831}\\
\sup (\l_0 - V) &\le \sup V - \inf V = Osc (V).         \label{L832}
\end{align}

\begin{corollary} \label{corallp} 
 $($Bounds on $\|\psi^{\pm1}\|_p$ when $V$ is bounded$)$.
Assume that $V$ is bounded. Then
\begin{align}
\|\psi\|_p &\le (p-1)^{(c/2) Osc(V)}, \ \ \ \ p \ge 2.    \label{L833a} \\
\|\psi\|_r &\ge   e^{-\sigma Osc (V)},      \ \ \  0<r< 2, \ \ \sigma = c(2r^{-1} -1).   \label{L834a}\\
\|\psi\|_r \|\psi^{-1}\|_s &\le \(\frac{1+s}{1-r}\)^{(c/2)Osc(V)} , \ \ \   0 < s <\infty, \ \ 0 < r <1 .          \label{L835a}\\
\|\psi^{-1}\|_s &\le  \(\frac{1+s}{1-r}\)^{(c/2)Osc(V)} e^{\sigma Osc (V)} ,        
\ \ \ 0 < s < \infty,       \ \ 0 < r <1 .        \label{L836a}
\end{align}
\end{corollary}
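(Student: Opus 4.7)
The plan is to derive all four bounds by specializing Corollaries~\ref{corpolbelow} and~\ref{corpolabove} to the case of two-sided bounded $V$, using the elementary chain $\inf V \le \l_0 \le \sup V$ recorded in \eref{L830}. This immediately yields $\sup(\l_0 - V) \le Osc(V)$ and $\sup(V - \l_0) \le Osc(V)$, so every earlier exponent of the form $\sup(\l_0-V)$ or $\sup(V-\l_0)$ can be uniformly replaced by $Osc(V)$, provided one checks that the monotonicity of the resulting power function is compatible with the direction of the inequality.

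First I would establish \eref{L833a} and \eref{L834a} by invoking Corollary~\ref{corpolbelow}. Since $V$ is bounded it lies in $L^{p_1}(m)$ for every $p_1 > 2$, so the hypotheses hold. Corollary~\ref{corpolbelow} gives $\|\psi\|_p \le (p-1)^{(c/2)\sup(\l_0-V)}$ for $p \ge 2$; since $p-1 \ge 1$ and $Osc(V) \ge \sup(\l_0-V)$, the right side is bounded by $(p-1)^{(c/2)Osc(V)}$, proving \eref{L833a}. Similarly, Corollary~\ref{corpolbelow} gives $\|\psi\|_r \ge e^{-\sigma\sup(\l_0-V)}$; since the exponent is negative and $\sigma > 0$, enlarging $\sup(\l_0-V)$ to $Osc(V)$ only decreases the lower bound, giving \eref{L834a}.

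For \eref{L835a} I would not apply Corollary~\ref{corpolabove} verbatim (its statement isolates $\|\psi^{-1}\|_s$), but instead extract the intermediate product estimate \eref{L822} from its proof:
\[
\|\psi\|_r \|\psi^{-1}\|_s \le \Big(\frac{1+s}{1-r}\Big)^{(c/2)\sup(V-\l_0)},\qquad 0 < r < 1,\ 0 < s < \infty.
\]
This bound is obtained by taking $\ka \uparrow \infty$ in the moment product theorem (Theorem~\ref{thmmp1}) using the limits \eref{L820}, which is legitimate because $V$ bounded above gives $\|e^V\|_\ka < \infty$ for every $\ka$. Since $(1+s)/(1-r) \ge 1$, enlarging the exponent from $\sup(V-\l_0)$ to $Osc(V)$ preserves the inequality, yielding \eref{L835a}.

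Finally, \eref{L836a} follows by combining the previous two: dividing \eref{L835a} by $\|\psi\|_r$ and using the lower bound \eref{L834a} in the form $\|\psi\|_r^{-1} \le e^{\sigma Osc(V)}$ produces the claimed one-sided estimate on $\|\psi^{-1}\|_s$. The argument is essentially bookkeeping—no genuine analytic obstacle arises—but the one point requiring care is to verify that the direction of each replacement (raising a base $\ge 1$ to a larger exponent, or lowering a base $\le 1$ to a larger exponent via negative sign) is consistent with the inequality being preserved in each of the four cases.
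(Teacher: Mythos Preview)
Your proposal is correct and follows essentially the same route as the paper: the paper's proof simply cites \eref{L804}, \eref{L808}, \eref{L822} together with \eref{L831}--\eref{L832}, and then combines \eref{L835a} with \eref{L834a} to get \eref{L836a}, exactly as you describe. Your explicit check of the monotonicity directions is a useful gloss the paper leaves implicit.
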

      \begin{proof} 
 \eref{L833a} follows from \eref{L804}  and \eref{L832}.      
 \eref{L834a} follows from \eref{L808} and \eref{L832}. 
 \eref{L835a} follows from \eref{L822} and \eref{L831}. 
       \eref{L836a} follows from \eref{L835a} and \eref{L834a}.       
\end{proof}

\begin{remark}  {\rm 
We saw in Example \ref{exunb} that $\psi$ and $\psi^{-1}$ can be unbounded even if $V$ is bounded.
But Corollary \ref{corallp} shows that the $L^p$ norms  of $\psi$ and $\psi^{-1}$ can only grow
 polynomially in $p$ when $V$ is bounded.
}
\end{remark}

   \begin{corollary} \label{coreF2} If $V$  is bounded then
\begin{align}
\int_X e^{b F^2} dm < \infty\ \ \  \forall\ \ \ b< \infty,      \label{L841}
\end{align}
where $F = - \log \psi$.
\end{corollary}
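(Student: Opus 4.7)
The strategy is to upgrade the polynomial $L^p$-norm bounds on $\psi$ and $\psi^{-1}$ furnished by Corollary \ref{corallp} into double-exponential tail estimates for $|F|$, which are far stronger than what is needed for $\int_X e^{bF^2}\, dm < \infty$. From \eref{L836a} with $r = 1/2$ and from \eref{L833a}, one extracts constants $K, B > 0$ depending only on $c$ and $\mathrm{Osc}(V)$ (with $B = (c/2)\,\mathrm{Osc}(V)$) such that
\[
  \|\psi^{-1}\|_{L^s(m)} \le K(1+s)^B \quad (s>0), \qquad \|\psi\|_{L^p(m)} \le (p-1)^B \quad (p\ge 2).
\]

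Next I would convert these into tail bounds by Markov's inequality. Since $\psi^{-s} = e^{sF}$, for any $s,t>0$,
\[
  m(F>t) \le e^{-st}\int_X e^{sF}\,dm = e^{-st}\|\psi^{-1}\|_s^{\,s} \le \exp\!\bigl(s\,[-t + \log K + B\log(1+s)]\bigr).
\]
Optimizing in $s$ by differentiation gives $s^\ast \sim e^{t/B}$ for large $t$; at this $s^\ast$ the bracket behaves like $-Bs/(1+s) \to -B$, so $m(F>t) \le \exp(-C_1 e^{t/B})$ for some $C_1>0$ and all sufficiently large $t$. Symmetrically, $m(-F>t) = m(\psi > e^t) \le e^{-pt}(p-1)^{Bp}$, and optimizing over $p \ge 2$ (again with $p \sim e^{t/B}$) yields $m(-F>t) \le \exp(-C_2 e^{t/B})$.

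Finally, the layer-cake representation applied to the nonnegative random variable $F^2$, split into the regions $\{F \ge 0\}$ and $\{F < 0\}$, gives
\[
  \int_X e^{bF^2}\,dm \;\le\; 1 + \int_0^\infty 2bt\,e^{bt^2}\bigl(m(F>t) + m(-F>t)\bigr)\,dt.
\]
For any fixed $b$ the integrand is eventually bounded above by $4bt\exp\bigl(bt^2 - C\,e^{t/B}\bigr)$, and the super-exponential term $C e^{t/B}$ dominates the Gaussian term $bt^2$ for every $b$. The integral therefore converges for all $b<\infty$, proving \eref{L841}. The only non-trivial step is the asymptotic optimization in the Markov bound; all remaining work is bookkeeping of the constants $K,B,C_1,C_2$ in terms of $c$ and $\mathrm{Osc}(V)$.
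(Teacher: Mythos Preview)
Your proof is correct but takes a different route from the paper's. Both arguments start from the same input, namely the polynomial growth $\|\psi^{\pm 1}\|_t \le C_1(1+t)^{C_2}$ from Corollary~\ref{corallp}, which is equivalent to the moment bound $\int_X e^{tF}\,dm \le C_1^{|t|}(1+|t|)^{C_2|t|}$ for all $t\in\R$. You then pass through Markov's inequality and an optimization in $s$ to obtain the double-exponential tail $m(|F|>t)\lesssim \exp(-C\,e^{t/B})$, and finish with a layer-cake integral. The paper instead inserts the Gaussian identity
\[
  e^{bF^2/2} = (2\pi b)^{-1/2}\int_{-\infty}^\infty e^{tF}\,e^{-t^2/(2b)}\,dt
\]
and applies Tonelli to swap the $x$- and $t$-integrals, reducing the claim immediately to the finiteness of $\int_{-\infty}^\infty C_1^{|t|}(1+|t|)^{C_2|t|}e^{-t^2/(2b)}\,dt$, which is clear since $|t|\log(1+|t|)=o(t^2)$. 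The paper's argument is shorter and avoids the optimization step entirely; your approach does more work but yields the sharper intermediate statement that $|F|$ has double-exponential tails, which is not recorded in the paper.
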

    \begin{proof} The polynomial growth conditions in  Corollary \ref{corallp}  imply that there are constants
$C_1, C_2$ independent of $t$ such that
\begin{align}
\|\psi^{\pm 1} \|_t \le C_1(1+t)^{C_2}\ \ \ \ \ \ t > 0:    \label{L840}
\end{align}
Indeed, put $r = 1/2$ in \eref{L836a} to derive \eref{L840}  for  $\| \psi^{-1}\|_t$, while
 \eref{L833a}  gives \eref{L840} for $\|\psi\|_t$ in case $t \ge 2$.    Use $\|\psi\|_ t \le \|\psi\|_2$
 for $0 < t <2$.
 
     We can write \eref{L840}  in terms of $F$ in the equivalent form
\beq
\int_X e^{tF(x)} dm(x) \le C_1^{|t|} e^{C_2|t| \log (1+|t|)}, \ \ \ t \in \R.
\eeq
Suppose that $b >0$. In the identity 
$
e^{by^2/2} = (2\pi b)^{-1/2} \int_{-\infty}^\infty e^{ty} e^{-t^2/(2b)} dt
$
insert $y = F(x)$ and take expectation to find
\begin{align}
\int_X e^{bF(x)^2/2} dm(x) &= (2\pi b)^{-1/2} \int_{-\infty}^\infty \int_Xe^{tF(x)}dm(x)e^{-t^2/(2b)} dt \notag\\
&\le  (2\pi b)^{-1/2} \int_{-\infty}^\infty C_1^{|t|} e^{C_2|t| \log(1+ |t|)} e^{-t^2/(2b)} dt  \notag\\
& < \infty.     \notag
\end{align}
\end{proof}

\begin{remark}\label{remtv} {\rm (Two variants of Corollary \ref{coreF2}).  The two inequalities in \eref{L840} correspond to the two conditions:
$V$ is bounded below or above, from which they were derived. If just one of these two conditions holds
then \eref{L841} can be replaced by
\begin{align}
\int_{F \ge 0} e^{bF^2/2} dm &< \infty\ \ \ \text{if $V$ is bounded above.}  \label{L801}\\
\int_{F \le 0} e^{bF^2/2} dm &< \infty\ \ \ \text{if $V$ is bounded below.}   \label{L801a}
\end{align}

One need only start with the inequality  
$
e^{by^2/2} \le 2  (2\pi b)^{-1/2} \int_0^\infty e^{t|y|} e^{-t^2/(2b)} dt
$
and proceed as in the proof.  For example on the set $\{ F \le 0\}$ we have 
$e^{t|F(x)|} = e^{-tF(x)} = \psi(x)^t$ for $t >0$ and therefore, using \eref{L840} with $\|\psi^{+1}\|_t$ we have
\begin{align}
\int_{ F \le 0} e^{bF(x)^2/2} dm(x)
&\le  2(2\pi b)^{-1/2} \int_0^\infty C_1^{|t|} e^{C_2|t| \log(1+ |t|)} e^{-t^2/(2b)} dt < \infty , \notag
\end{align}
A similar argument, using the bound \eref{L840} for $\|\psi^{-1}\|_t$, 
 gives the same bound for $\int_{F \ge 0} e^{bF^2} dm$.
}
\end{remark}

\begin{corollary}  \label{corDLSIbV}  $($DLSI for bounded $V$ $)$. Assume that \eref{mt1} holds and 
that $V$ is bounded. Let $\psi$ denote the normalized ground state for $\n^*\n +V$. Then
\begin{align}
Ent_{m_\psi}(f^2) \le 2a \int_X |\n f|^2 dm_\psi 
 + \{2D_{a,\sigma}  Osc(V)\} \| f\|_{L^2(m_\psi)}^2   \label{L855}
\end{align}
for any $a > c$ and  $\sigma > c$, where
\begin{align}
D_{a, \sigma} &= a + \sigma + \ell_0(a) + \ell_0(\sigma)\ \ \ \text{and}  \label{L856}\\
\ell_0(t) &= (c/2)\log \frac{t+c}{t-c},\ \ \ t > c .      \label{L857}
\end{align}
In particular, choosing $a = \sigma =2c$ we have
\begin{align}
Ent_{m_\psi}(f^2) \le 4c \int_X |\n f|^2 dm_\psi  + \{2c (4 + \log 3) Osc(V)\} \|f\|_{L^2(m_\psi)}^2. \label{L858}
\end{align}
\end{corollary}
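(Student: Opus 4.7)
The plan is to obtain Corollary \ref{corDLSIbV} as the limiting case of Theorem \ref{thmDLSI3} upon letting $\ka\uparrow\infty$ and $\nu\uparrow\infty$. Since $V$ is bounded, we have $\|e^V\|_\ka<\infty$ and $\|e^{-V}\|_\nu<\infty$ for every finite $\ka,\nu>0$, so the hypotheses of Theorem \ref{thmDLSI3} are available throughout this limit.

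First I would record the relevant limits. From the definitions \eref{L329a} and \eref{s98f}, $c_\nu=c/(1-2c/\nu)\downarrow c$ as $\nu\uparrow\infty$ and $b_\ka=\sqrt{1+2c/\ka}\downarrow 1$ as $\ka\uparrow\infty$, hence $c_\nu b_\ka\downarrow c$. The function $\ell(t)$ defined in \eref{W852} converges pointwise on $(c,\infty)$ to $\ell_0(t)=(c/2)\log\frac{t+c}{t-c}$. Moreover, by monotone/dominated convergence, $\|e^V\|_\ka\uparrow e^{\sup V}$ and $\|e^{-V}\|_\nu\uparrow e^{-\inf V}$, so that
\begin{align}
\log M = \log\bigl(\|e^V\|_\ka\|e^{-V}\|_\nu\bigr)\longrightarrow \sup V-\inf V = Osc(V).   \notag
\end{align}

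Next, fix $a>c$ and $\sigma>c$. Because $c_\nu b_\ka\downarrow c$, we may choose $\ka,\nu$ so large that $a>c_\nu b_\ka$ and $\sigma>c_\nu b_\ka$, i.e.\ condition \eref{gs803b} is satisfied. Theorem \ref{thmDLSI3}, equation \eref{gs807b}, then yields, for any $f$,
\begin{align}
Ent_{m_\psi}(f^2) \le 2a\int_X|\n f|^2\,dm_\psi + 2\bigl(a+\ell(a)+\sigma+\ell(\sigma)\bigr)(\log M)\,\|f\|_{L^2(m_\psi)}^2. \notag
\end{align}
The left side is independent of $\ka,\nu$. Letting $\ka\uparrow\infty$ and $\nu\uparrow\infty$ in the right side, the bounds just recorded give the inequality \eref{L855} with $D_{a,\sigma}=a+\sigma+\ell_0(a)+\ell_0(\sigma)$.

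Finally, to obtain \eref{L858} I would simply specialize $a=\sigma=2c$ and compute $\ell_0(2c)=(c/2)\log\frac{2c+c}{2c-c}=(c/2)\log 3$, giving $D_{2c,2c}=4c+c\log 3=c(4+\log 3)$, and hence the stated inequality. There is no real obstacle in this argument; the only thing to be careful about is ensuring \eref{gs803b} before passing to the limit, which is handled by taking $\ka,\nu$ large enough after $a,\sigma>c$ have been chosen.
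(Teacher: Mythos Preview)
Your proof is correct and takes essentially the same approach as the paper: both argue by letting $\ka\uparrow\infty$ and $\nu\uparrow\infty$ in Theorem \ref{thmDLSI3}, using $c_\nu b_\ka\downarrow c$ to justify \eref{gs803b} and $\ell\to\ell_0$. The only cosmetic difference is that the paper takes the limit in the sharper inequality \eref{gs806}, obtaining the intermediate form $(a+\sigma)\sup(\l_0-V)+(\ell_0(a)+\ell_0(\sigma))\sup(V-\l_0)$ and then bounding each supremum by $Osc(V)$ via \eref{L831}--\eref{L832}, whereas you take the limit directly in \eref{gs807b} using $\log M\to Osc(V)$; both routes yield the same bound \eref{L855}.
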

    \begin{proof} Since $V$ is bounded we can let $\nu$ and $\ka$ increase to infinity in \eref{gs806}.
First observe that $c_\nu\downarrow c, a_\nu \uparrow 1$ as $\nu\uparrow \infty$ 
and $b_\ka \downarrow 1$ as 
$\ka \uparrow \infty$ by their definitions \eref{L329a}, \eref{L341d} and \eref{s98f}.  Thus if $a>c$ and 
$\sigma >c$ then \eref{gs803b} holds for large enough $\nu$ and $\ka$ and we may  apply 
 \eref{gs806}.
 We have $\|e^{\l_0 - V}\|_\infty = e^{\sup(\l_0 - V)}$ and $ \|e^{V- \l_0}\|_\infty = e^{\sup(V-\l_0)}$.
Further, $\ell(t) $, defined in \eref{W852} goes over to $\ell_0(t)$, defined in \eref{L857}, as 
$\nu\uparrow \infty$ and $\ka \uparrow \infty$.
Therefore \eref{gs806} goes over to
\begin{align}
&Ent_{m_\psi} (f^2) \le 2a \int_X |\n f|^2 dm_\psi   \notag \\
& +2 \| f\|_{L^2(m_\psi)}^2 \( (a+\sigma) \sup(\l_0 - V)  +(\ell_0(a) + \ell_0(\sigma))\sup (V - \l_0)\).
                                                                  \label{gs806a}
  \end{align}    
  \eref{L855} now follows from \eref{gs806a}, \eref{L831} and \eref{L832}. 
  
  Now $\ell_0(2c) = (c/2)\log 3$.  Hence, for $a = \sigma = 2c$ we have $D_{a,\sigma} = 4c + c\log 3$. 
This proves \eref{L858}. 
  \end{proof}

\begin{remark}\label{remrec} {\rm  (Recovery of \eref{mt1}.) In case $V=0$ (or constant) 
the ground state for $ \n^*\n + V$ is the constant $1$. Therefore $m_\psi = m$. Moreover
$Osc(V)= 0$. Hence \eref{L855} reduces to   $Ent_m (f^2) \le 2a \int_X |\n f|^2 dm $, which is valid
for all $a >c$. Taking the limit $a\downarrow c$ yields the original LSI, \eref{mt1}, again.
}
\end{remark}

\begin{remark}\label{remhb} {\rm (Invariance of DLSI).
 In their fundamental paper  \cite{DS84}, Davies and Simon were interested mainly
in intrinsic ultracontractivity. They proved that intrinsic ultracontractivity is invariant under perturbation
of the Sch\"odinger potential by a bounded potential. They raised the question as to whether intrinsic
hyperboundedness (referred to as intrinsic hypercontractivity at that time) 
 was also invariant under perturbation by a bounded potential. 
  At the infinitesimal level this amounts to asking whether, for  a bounded potential $V_1$,  
  the ground state measure for 
$-\Delta + V_0 + V_1$ satisfies a DLSI  when the ground state measure $\psi_0^2 dx$ 
for $-\Delta + V_0$ does.

 We have answered this affirmatively in this 
paper but only when the defect for  $\psi_0^2 dx$ is zero. In this case the
 perturbed ground state measure also has defect zero. 
 As to whether perturbation of a DLSI yields
another DLSI under some conditions on the perturbing potential is still an open question.
}
\end{remark}

\section{Spectral gap} \label{secsg}
  In this section we will prove that the Dirichlet form operator for $m_\psi$
has a spectral gap and that it can be estimated  from below by a function of $c, \ka, \nu$ and $M$. 

An irreducible Dirichlet form operator will have a spectral gap under some mild qualitative conditions, 
but the size of the gap might not be quantifiable by these conditions.
 Feng-Yu Wang has shown, \cite{Wang2014a},
that if the Dirichlet form is irreducible - which ours is - and satisfies a              
defective Poincar\'e inequality  then there is a spectral gap.  
 Miclo, \cite{Mic2015}, has shown that if the Dirichlet form is irreducible
and satisfies a defective logarithmic Sobolev inequality then it also has a spectral gap.
It does not appear that the methods used in these papers will yield a quantitative lower bound on the gap.

\subsection{Small perturbations: Wang's method}
Feng-Yu Wang, \cite[Corollary 1.2]{Wang2004},  
 showed that if the defect in a defective logarithmic 
Sobolev inequality is sufficiently small then there is a spectral gap that  can be quantitatively estimated. 
 In our context his theorem shows  that if, for some probability measure $\mu$ on a 
 Riemannian manifold $X$, there holds 
 \begin{align}
 Ent_\mu(f^2) \le 2C_1 \int_X|\n f|^2 d\mu + C_2 \|f\|_2^2     \label{T12}
 \end{align}
 with $C_2 < \log 2$ then $\n^*\n$ has a spectral gap that can be estimated by 
 \begin{align}
 Gap\ \n^*\n \ge  \frac{ \log(3 - 4b)}{C_1\log3},\ \ \ \text{where}\ b =\sqrt{(1 - e^{-C_2})/2}. \label{T14}
 \end{align}

 Wang first proved an equivalent,  exponentiated version of this Corollary. 
 It asserts that if 
  \begin{align}
  \|e^{-t\n^*\n}\|_{L^2 \to L^4}^4 \le A               \label{T15}
  \end{align}
   with $1 \le A <2$ for some $t >0$ then $\n^*\n$ has a spectral gap. 
 This extends a theorem of 
 B. Simon \cite[Theorem 2]{Simon1976}, who showed $\n^*\n$ has a spectral gap if $A=1$ for some $t >0$. 
 Miclo, \cite[Proposition 11]{Mic2015}, gave an example of a  
 similar form that showed that if one only knows \eref{T15} holds for some $A\ge2$ then, 
 although there is still a spectral gap, 
 there can be  no quantitative bound on the spectral gap dependent only on $A$ and $t$ 
 (or equivalently, $C_1,C_2$ and $t$).
 His example strongly suggests that the same is true in our context.

\bigskip 
In our setting, the number $M\equiv  \|e^V\|_\ka \|e^{-V}\|_\nu $, introduced in Section \ref{secM}, controls
the defect, as we see in \eref{gs807b}. Combining this estimate of the defect with Wang's theorem 
gives  the following consequence.

\begin{theorem} \label{thmws}  Suppose that
\beq
\(a + \sigma +\ell(a)  + \ell(\sigma)\)\log M< (1/2) \log 2    \label{T5}
\eeq
for some $a$ and $\sigma$ in the allowed range $(c_\nu b_\ka, \infty)$.
 Then there is a constant $C_3$ such that
 \begin{align}
 Ent_{m_\psi}(f^2)    \le C_3\int_X|\n f|^2 dm_\psi .
 \end{align}
 \end{theorem}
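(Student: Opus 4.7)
The plan is to combine three ingredients already assembled earlier in the paper: the $\l_0$-independent defective logarithmic Sobolev inequality of Theorem \ref{thmDLSI3}, Wang's spectral gap criterion recalled in \eref{T12}--\eref{T14}, and Rothaus' theorem (quoted in the introduction and cited as \cite[Proposition 5.1.3]{BGL}) for removing a small enough defect.

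First I would apply \eref{gs807b} with the given $a$ and $\sigma$ in the allowed range $(c_\nu b_\ka, \infty)$. This directly yields
\begin{align}
Ent_{m_\psi}(f^2) \le 2a \int_X |\n f|^2 dm_\psi + C_2\, \|f\|_{L^2(m_\psi)}^2,   \notag
\end{align}
where $C_2 = 2\bigl(a + \sigma + \ell(a) + \ell(\sigma)\bigr) \log M$. The standing hypothesis \eref{T5} is exactly the statement $C_2 < \log 2$.

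Next I would invoke Wang's theorem in the form \eref{T12}--\eref{T14}, applied to $\mu = m_\psi$ and to the Dirichlet form operator $\n_\psi^*\n$, with $C_1 = a$. Because $C_2 < \log 2$, the quantity $b = \sqrt{(1-e^{-C_2})/2}$ satisfies $b < 1/2$, hence $3 - 4b > 1$, and \eref{T14} gives a strictly positive lower bound for the spectral gap of $\n_\psi^*\n$. This bound depends only on $a$ and $C_2$, i.e.\ only on $c, \nu, \ka$, $M$ and the chosen $a, \sigma$. Finally I would apply Rothaus' theorem to convert the DLSI plus positive spectral gap into a tight LSI; this produces a constant $C_3$ of the form $2a$ plus an additive correction proportional to $C_2/\,Gap\,\n_\psi^*\n$, and completes the proof.

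The main obstacle is not conceptual but purely arithmetic bookkeeping: one has to verify that the normalization factor of $2$ relating Wang's threshold $C_2 < \log 2$ to the defect coming out of \eref{gs807b} matches the factor of $1/2$ appearing in the hypothesis \eref{T5}. Since the theorem asserts only the existence of $C_3$, no further optimization over the parameters $a$ and $\sigma$ is required, and no new analytic input beyond Theorem \ref{thmDLSI3} is needed.
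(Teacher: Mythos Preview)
Your proposal is correct and follows essentially the same route as the paper: apply the $\lambda_0$-independent DLSI \eref{gs807b} to obtain defect $C_2 = 2(a+\sigma+\ell(a)+\ell(\sigma))\log M$, observe that \eref{T5} is exactly Wang's condition $C_2 < \log 2$, invoke Wang's gap bound \eref{T14}, and then tighten via Rothaus' theorem. The paper's proof is slightly terser but identical in content; your remark about the factor-of-two bookkeeping between \eref{T5} and Wang's threshold is exactly the point.
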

        \begin{proof} The inequality \eref{gs807b} shows that the measure $m_\psi$ satisfies
        a defective logarithmic Sobolev inequality with defect $2 (a + \sigma +\ell(a)  + \ell(\sigma))\log M$
        for any choice of $a$ and $\sigma$ in the allowed range $(c_\nu b_\ka, \infty)$. The condition \eref{T5}
        shows therefore that Wang's condition on the defect is satisfied.
            From Wang's  gap,  \eref{T14},  and \eref{gs807b} we can compute $C_3$ via  
            Rothaus'  tightening theorem. 
            See  \cite[Proposition 5.1.3]{BGL}  for an efficient exposition of this method or 
            Proposition \ref{propbgl}  below.
\end{proof}

\begin{remark} {\rm For any choice of $a$ and $\sigma$ there are clearly potentials $V$ for 
which \eref{T5} is satisfied because $\log M =0$ when  $V=0$.
}
\end{remark}

\begin{example}\label{exwang}{\rm  In case $V$ is bounded we can use the estimate
\eref{L858} for the defect. We find that Wang's criterion for a spectral gap holds if
\begin{align}
 Osc(V)< \frac{\log 2}{ 2c (4 + \log 3) }.
 \end{align}
 A lower bound for the spectral gap can be computed from \eref{T14} with $C_1 = 2c$ and
 $C_2 = \{2c(4 + \log 3\} Osc(V)$.
}
\end{example}

 \subsection{General perturbations: Aida's method} \label{seclargeV}

 We will prove that the Dirichlet form operator for the probability measure $\psi^2 dm$
has a spectral gap at zero.
 This will allow us, in Section \ref{sectightening},  to remove 
the defect in \eref{gs807b}.    
 
The Deuschel-Holley-Stroock theorem, \cite{HS1987,DS1990}, asserts that if  the logarithmic 
Sobolev inequality  \eref{mt1} 
holds for a probability measure  $m$ and if $w$ is a strictly positive  weight, which is bounded and 
bounded away from zero, then the  logarithmic  Sobolev inequality  
\begin{align}
Ent_{wm}(u^2) \le 2c_1\int_X |\n u|^2 wdm\ \  \text{holds with}\ \  c_1 = c\frac{\sup w}{\inf w} .  \label{sg202}
\end{align}
 See e.g. \cite[Proposition 5.1.6]{BGL} for an efficient proof. 
 Similarly, if $m$ satisfies a Poincar\'e inequality, 
 $Var_m(f) \le \gamma \int |\n f|^2 dm$, then
 \begin{align}
 Var_{wm}(f) \le \gamma_1\int |\n f|^2 wdm, \ \ \  \gamma_1 = \gamma \frac{\sup w}{\inf w}.\label{sg203}
 \end{align}
 The latter follows easily from the inequalities 
 $Var_{wm}(f) \le \int (f - \int f dm)^2 wdm \le (\sup w) Var_m(f)$ and 
 $\int |\n f|^2 dm \le (1/\inf w)\int |\n f|^2 wdm$.

  The density $\psi^2$ for the ground state measure is typically neither bounded  
 nor bounded away from zero. 
 The DHS theorem was consequently  inapplicable in Section \ref{secDLSI} for proving a DLSI.
 The inequality \eref{sg203} is similarly inapplicable for proving that $m_\psi$ satisfies a Poincar\'e inequality,
 even  though $m$ does.

  Aida  developed a method in \cite{Aida2001}    
  for proving that $\psi^2 dm$ satisfies a Poincar\'e inequality even when $\psi$ is not bounded 
  nor bounded away from zero.
  He
 decomposed the space $X$ into
 the three regions $\{ \psi < \ep\}, \{ \ep \le  \psi \le K\}$ and $ \{\psi >K\}$ and used the idea 
 behind the DHS  theorem on the middle region. 
 He established bounds associated to the remaining two regions
 in terms of  the quantities defined below in Notation \ref{notA3}.
 In this section we will derive Aida's lower bound on the spectral gap in terms of these quantities. 
 We will also  make use  
 of the already established defective logarithmic Sobolev inequality \eref{gs807b}, which  
 was  unavailable to Aida at 
 the time of his paper  \cite{Aida2001}.
 In the next section we will show how our assumptions
 on the potential $V$ allow us to make quantitative estimates of these quantities, 
 thereby producing a quantitative estimate of the spectral gap.

 $\psi$ need not be a solution to the Schr\"odinger equation in this section.

 \begin{notation} \label{notA3}  {\rm Suppose that $\psi$ is an a.e.  strictly positive function in $L^2(m)$
 with $\|\psi\|_{L^2(m)} = 1$. 
   Define
\begin{align}
dm_\psi = \psi^2 dm\ \ \ \    \text{and}\ \ \ F = - \log \psi.     \label{505}
\end{align}  
 Let $0 < \ep < 1 < K < \infty$. Define
 \begin{align}
 A_\ep &= m(\psi \le \ep). \label{sg10}\\ 
 B_\ep & = \int_{\psi \le\ep}|\n F|^2 dm.\label{sg11}\\
 C_K &= \int_{\psi > K} \psi^2 dm. \label{sg12}
 \end{align}
 }
 \end{notation}

 \begin{theorem} \label{thmsg2}  $($Aida's Theorem$)$.  
 Referring to Notation \ref{notA3},  
 assume that 
\beq
\int_X |\n F|^2 dm < \infty.    \label{506}
\eeq
 Suppose that for some numbers $ \gamma >0, B >0, D \ge 0$ and for all real valued functions $u$ of finite energy there holds
 \begin{align}
 \int_X\(u^2 - \<u\>_m^2\)dm 
                               &\le \gamma \int_X|\n u|^2 dm\ \ \ \ \  \label{503f}
                               \ \  \ \text{and} \\
Ent_{m_\psi}(u^2) &\le B \int_X |\n u|^2 dm_\psi + \|u\|_2^2 D . \label{504}    
 \end{align}
Then there is a number $\gamma_1$               
 such that
 \begin{align}
 \int_X\(u^2 - \<u\>_{m_\psi}^2\)dm_\psi  
             &\le \gamma_1 \int_X|\n u|^2 dm_\psi.    \label{A363}
 \end{align} 
 If, for some $\ep > 0$ and $K > 1$,
 there holds 
 \begin{align}
 \(2K^2\( 2\gamma B_\ep +A_\ep \)+4C_K\)  e^{ 12(D + e^{-1})}  &\le 1/3   \label{A360}
 \end{align}
 then one may choose 
 \begin{align}
 \gamma_1 \le  B + 8\gamma\(\frac{K}{\epsilon}\)^2 .   
                       \label{A365b}
 \end{align}
 Such an $\ep$ and $K$ always exist.
 \end{theorem}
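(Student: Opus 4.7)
The plan is to follow Aida's three-region decomposition. For a function $u$ with $\int u\, dm_\psi = 0$, I split $X = \{\psi < \ep\} \cup \{\ep \le \psi \le K\} \cup \{\psi > K\}$ and estimate $\int u^2\, dm_\psi$ on each piece. On the middle region the density $\psi^2$ is pinched between $\ep^2$ and $K^2$, so switching between $m_\psi$ and $m$ costs at most a multiplicative factor $(K/\ep)^2$; combined with the Poincar\'e inequality \eqref{503f} for $m$ applied to $u$ minus an appropriate constant, this gives a contribution bounded by a multiple of $\gamma(K/\ep)^2 \int|\n u|^2\, dm_\psi$. The factor $8$ in \eqref{A365b} should come from absorbing cross terms that arise both from replacing the $m_\psi$-mean by the $m$-mean (a two-term Cauchy--Schwarz) and from cutting off outside $\{\ep \le \psi \le K\}$.

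On the two tails the smallness of $A_\ep, B_\ep, C_K$ alone is insufficient, because a function with zero $m_\psi$-mean can still be large on small sets. The idea is to convert the DLSI \eqref{504} into an $L^\infty$-style tail bound for $u$ via Herbst's method (Lemma \ref{lemH2}) applied to $u$ itself: exponentiating in $\int e^{\lambda u}\, dm_\psi$ and controlling the entropy by \eqref{504} yields a sub-Gaussian concentration for $u$ whose constants depend on $B$ and $D$. The factor $e^{12(D+e^{-1})}$ in \eqref{A360} should emerge precisely from optimizing the resulting Herbst inequality, with the $e^{-1}$ coming from the elementary bound $x\log x \ge -e^{-1}$. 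Pairing this concentration with Chebyshev/Cauchy--Schwarz bounds $\int_{\{\psi > K\}} u^2\, dm_\psi$ by $4C_K \, e^{12(D+e^{-1})}$ times $\int u^2\, dm_\psi$ (after normalization), while the $\{\psi \le \ep\}$ contribution is bounded by $2K^2(2\gamma B_\ep + A_\ep)\, e^{12(D+e^{-1})}$ times the same; the role of $B_\ep$, via Aida's identity \eqref{W30a} and the WKB equation, is to turn oscillation of $u$ on $\{\psi \le \ep\}$ into genuine gradient energy of $u$ so that this tail can be closed against $\int |\n u|^2 dm_\psi$ rather than only against $\int u^2 dm_\psi$.

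Collecting the three contributions, the middle region delivers $(B + 8\gamma(K/\ep)^2) \int |\n u|^2\, dm_\psi$ on the right, while the two tails produce a term of the form $\bigl[(2K^2(2\gamma B_\ep + A_\ep) + 4C_K)\, e^{12(D+e^{-1})}\bigr] \int u^2\, dm_\psi$. The $1/3$ threshold in \eqref{A360} is tuned so that this self-referential term may be transferred to the left-hand side and absorbed into $\tfrac23 \int u^2\, dm_\psi$, leaving a genuine Poincar\'e inequality with the stated $\gamma_1$. The main obstacle is the tail bound on $\{\psi < \ep\}$: there is no direct Poincar\'e-type control of $u$ on this set because its $m$-mean is not a priori close to its $m_\psi$-mean, so one must use $B_\ep$ and the DLSI together, rather than the $m$-Poincar\'e inequality, to obtain a bound that is linear in the energy of $u$. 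Existence of admissible $\ep, K$ satisfying \eqref{A360} is then immediate: $C_K \to 0$ as $K \to \infty$ since $\psi \in L^2(m)$, $A_\ep \to 0$ as $\ep \to 0$ since $\psi > 0$ a.e., and $B_\ep \to 0$ as $\ep \to 0$ by dominated convergence using the hypothesis \eqref{506}; the required inequality is then achieved by first choosing $K$ large, then $\ep$ small.
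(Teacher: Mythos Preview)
Your overall architecture is close to the paper's, but two of your key mechanisms are wrong, and the first one is a genuine gap.

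\textbf{The tail of $u$.} You propose to obtain sub-Gaussian concentration for $u$ by applying Herbst's method to $u$ itself, i.e.\ by studying $\int e^{\lambda u}\,dm_\psi$. This does not work: Herbst's argument needs a pointwise bound on $|\nabla u|$ to close the inequality $Ent_{m_\psi}(e^{\lambda u}) \le (B\lambda^2/4)\int e^{\lambda u}|\nabla u|^2\,dm_\psi$, and your $u$ has only finite energy. The paper takes a completely different route here. It \emph{truncates} $u$ at a level $R$, setting $u_R = (u\wedge R)\vee(-R)$, and uses the DLSI \eqref{504} not on $e^{\lambda u}$ but directly on $u^2$ via a Chebyshev--entropy bound: for $\|u\|_{L^2(m_\psi)}=1$ and $R>1$,
\[
(\log R^2)\int_{|u|\ge R} u^2\,dm_\psi \le \int u^2\log_+ u^2\,dm_\psi \le Ent_{m_\psi}(u^2)+e^{-1} \le B\int|\nabla u|^2\,dm_\psi + D + e^{-1}.
\]
This is where the $e^{-1}$ in \eqref{A360} comes from (via $s\log_+ s \le s\log s + e^{-1}$), and choosing $R$ so that $(D+e^{-1})/\log R = 1/6$ explains the factor $e^{12(D+e^{-1})}=R^2$. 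The weak Poincar\'e inequality is then applied to the bounded function $u_R$, producing an error $\zeta\|u_R\|_\infty^2 = \zeta R^2$; condition \eqref{A360} says precisely $\zeta R^2 \le 1/3$, which together with the $1/6$ from the entropy tail leaves $1/2$ on the right, to be absorbed into the left side $\|u\|^2=1$.

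\textbf{The role of $B_\ep$.} You say $B_\ep$ enters via Aida's identity and the WKB equation to ``turn oscillation of $u$ on $\{\psi\le\ep\}$ into gradient energy of $u$''. This is not what happens. In this proof $B_\ep$ has nothing to do with Aida's identity; it arises purely from the gradient of the cutoff $\chi = \phi(\psi)$ used to localize the $m$-Poincar\'e inequality to $\{\psi\gtrsim\ep\}$. One writes $\nabla\chi = -\phi'(\psi)\psi\,\nabla F$, so $\int|\nabla\chi|^2\,dm \lesssim \int_{\psi\le\ep}|\nabla F|^2\,dm = B_\ep$. This term is multiplied by $\|u\|_\infty^2$ (hence by $R^2$ after truncation) and sits in the remainder $\zeta$, not in the energy of $u$. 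The paper never invokes \eqref{W30a} or \eqref{W7} in this section.

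With these two corrections---truncate $u$ rather than exponentiate, and derive the weak Poincar\'e remainder $\zeta = 2K^2(2\gamma B_\ep + A_\ep)+4C_K$ from the cutoff $\phi(\psi)$---your outline becomes the paper's proof.
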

 
 The proof depends on the following  lemmas. The invariance of a weak Poincar\'e inequality 
 under  
 perturbation of a measure by insertion of a density was proven by  Rockner and Wang 
 \cite[Theorem 6.1]{RW2001}  and by Aida \cite[Lemma 2.2]{Aida2001}. 
 The next lemma is a form of Aida's weak Poincar\'e inequality 
 in the case that the unperturbed measure satisfies a Poincar\'e inequality, which is 
 the only case that we need.

\begin{lemma} \label{lemA1} $($Weak Poincar\'e inequality$)$. Referring to  
Notation \ref{notA3} again,
assume that \eref{506} and \eref{503f} hold. Suppose that $u$ is bounded and has finite energy.
  Let $0 < \epsilon < 1 < K < \infty$.
Then
\begin{align}
\| u - \<u\>_{m_\psi} \|_{L^2(m_\psi)}^2 &\le \gamma\(\frac{2K}{\epsilon}\)^2 
\int_X |\n u|^2  dm_\psi     +\zeta \|u\|_\infty^2                                          \label{520e}
\end{align}
where
\begin{align}
\zeta =2K^2\( 2\gamma B_\ep   
                        +A_\ep \)+4C_K.                       \label{422e1}
\end{align}
For any $\delta >0$ there exists $\epsilon$ and $K$ such that $\zeta < \delta$.
\end{lemma}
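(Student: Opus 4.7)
I would follow Aida's decomposition of $X$ into the three regions $\{\psi<\epsilon\}$, $\{\epsilon\le\psi\le K\}$ and $\{\psi>K\}$, combining the Poincar\'e inequality \eqref{503f} for $m$ with the weighted Poincar\'e bound \eqref{sg203} on the middle region where the density $\psi^2$ is pinched between $\epsilon^2$ and $K^2$. Since $\|u-\langle u\rangle_{m_\psi}\|_{L^2(m_\psi)}^2\le\int(u-c)^2\,dm_\psi$ for any constant $c$, I would pick $c$ with $|c|\le\|u\|_\infty$ so that $(u-c)^2\le 4\|u\|_\infty^2$ pointwise. The tail $\{\psi>K\}$ then contributes at most $4\|u\|_\infty^2\int_{\{\psi>K\}}\psi^2\,dm=4\|u\|_\infty^2 C_K$, accounting for the last term of $\zeta$.

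For the main term I would form the doubly truncated density $\hat\psi=(\psi\vee\epsilon)\wedge K$ and apply the Poincar\'e inequality for $m$ to the auxiliary function $v=(u-c)\hat\psi$, specifying $c=\langle u\hat\psi\rangle_m/\langle\hat\psi\rangle_m$ so that $\langle v\rangle_m=0$; this $c$ still satisfies $|c|\le\|u\|_\infty$ because $\hat\psi\ge 0$. Poincar\'e then yields
\[
\int(u-c)^2\hat\psi^2\,dm\le\gamma\int|\nabla v|^2\,dm\le 2\gamma\int|\nabla u|^2\hat\psi^2\,dm+2\gamma\int(u-c)^2|\nabla\hat\psi|^2\,dm,
\]
the factor $2$ coming from the product-rule estimate $|\nabla(fg)|^2\le 2|\nabla f|^2 g^2+2f^2|\nabla g|^2$. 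Splitting the first right-hand term by region and using $\hat\psi\le K$ together with $\hat\psi=\psi$ on the middle region produces the principal coefficient $\gamma(2K/\epsilon)^2\int|\nabla u|^2\,dm_\psi$. The ``perimeter'' integral $\int(u-c)^2|\nabla\hat\psi|^2\,dm$ is supported on $\{\epsilon\le\psi\le K\}$, where $|\nabla\hat\psi|^2=\psi^2|\nabla F|^2\le K^2|\nabla F|^2$; after localising its contribution to the boundary layer $\{\psi\le\epsilon\}$ via a Lipschitz cutoff $\tau(F)$ and a secondary application of Poincar\'e for $m$, it becomes the $2\gamma K^2 B_\epsilon\|u\|_\infty^2$ contribution. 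The mismatch between $\int(u-c)^2\hat\psi^2\,dm$ and $\int(u-c)^2\,dm_\psi$ on the left is supported on $\{\psi<\epsilon\}\cup\{\psi>K\}$ and is absorbed into $2K^2 A_\epsilon\|u\|_\infty^2$ and into the $4C_K\|u\|_\infty^2$ already handled, using $\epsilon\le K$ and $(u-c)^2\le 4\|u\|_\infty^2$.

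For the final claim that $\zeta$ can be made arbitrarily small: as $K\uparrow\infty$, $C_K\downarrow 0$ by dominated convergence using $\|\psi\|_{L^2(m)}=1$; and as $\epsilon\downarrow 0$, $A_\epsilon=m(\psi\le\epsilon)\downarrow 0$ since $\psi>0$ a.e., while $B_\epsilon\downarrow 0$ by dominated convergence using the standing hypothesis \eqref{506}. Thus given $\delta>0$ one first fixes $K$ so that $4C_K<\delta/2$, and then $\epsilon$ (depending on $K$) small enough that $2K^2(2\gamma B_\epsilon+A_\epsilon)<\delta/2$. The main technical obstacle I anticipate is the delicate bookkeeping required to convert the perimeter integral, which a priori involves $|\nabla F|^2$ on the entire middle region, into a quantity controlled by $B_\epsilon$ (which lives only on $\{\psi\le\epsilon\}$) with the correct $K^2$ weighting; this is where the specific choice of cutoff and the interplay between $\psi$ and $F$ must be orchestrated carefully.
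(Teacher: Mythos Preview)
Your approach has a genuine gap at exactly the point you flag as the ``main technical obstacle.'' With the auxiliary function $\hat\psi=(\psi\vee\epsilon)\wedge K$, the gradient $\nabla\hat\psi$ equals $\nabla\psi$ on the \emph{entire middle region} $\{\epsilon<\psi<K\}$, so your perimeter term is
\[
2\gamma\int_{\epsilon<\psi<K}(u-c)^2\,\psi^2|\nabla F|^2\,dm,
\]
an integral whose support is \emph{disjoint} from $\{\psi\le\epsilon\}$. No Lipschitz cutoff or secondary Poincar\'e application can convert this into a multiple of $B_\epsilon=\int_{\psi\le\epsilon}|\nabla F|^2\,dm$; the two integrals have no overlap. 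A related problem hits your energy term: on $\{\psi<\epsilon\}$ one has $\hat\psi^2=\epsilon^2>\psi^2$, so $\int_{\psi<\epsilon}|\nabla u|^2\hat\psi^2\,dm$ is not dominated by any fixed multiple of $\int_{\psi<\epsilon}|\nabla u|^2\,dm_\psi$, and your claimed coefficient $\gamma(2K/\epsilon)^2$ does not emerge.

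The paper's (Aida's) cutoff is different and resolves both issues. First (Sublemma~\ref{lemA2b}) the large-$\psi$ tail is split off by the crude bound $\psi^2\le K^2$ on $\{\psi\le K\}$, reducing the $m_\psi$-variance to $K^2$ times the $m$-variance plus $4C_K\|u\|_\infty^2$. Then (Sublemma~\ref{lemA3}) the $m$-variance is bounded by applying Poincar\'e not to $(u-c)\hat\psi$ but to $u\chi$, where $\chi=\phi(\psi)$ and $\phi(t)$ is a smoothing of $\min(t/\epsilon,1)$. This $\chi$ is the key object: it satisfies $\chi\le\psi/\epsilon$ everywhere, so $|\nabla u|^2\chi^2\le\epsilon^{-2}|\nabla u|^2\psi^2$ and the energy converts to $dm_\psi$-energy with the correct $\epsilon^{-2}$; and $\nabla\chi$ is supported on $\{\psi\le\epsilon(1+\delta)\}$ with $|\nabla\chi|^2\le(1+\delta)^2|\nabla F|^2$ there, producing $B_\epsilon$ after $\delta\downarrow0$. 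The $A_\epsilon$ term comes from $\|u(1-\chi)\|_{L^2(m)}^2$, since $1-\chi$ vanishes on $\{\psi\ge\epsilon(1+\delta)\}$. Your argument for the final smallness claim about $\zeta$ is correct.
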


\begin{sublemma}\label{lemA2b}  Referring to  
Notation \ref{notA3} again, for any bounded real  \linebreak 
valued function $u$ on $X$ we have
\begin{align}
\| u - \<u\>_{m_\psi} \|_{L^2(m_\psi)}^2  
           \le K^2 \| u - \<u\>_m\|_{L^2(m)}^2 + 
           4C_K \|u\|_\infty^2.          \label{A100}
\end{align}
\end{sublemma}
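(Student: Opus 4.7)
The plan is to exploit two standard facts: first, variance is minimized among all constants by the mean, so $\<u\>_m$ is an admissible (though suboptimal) ``center'' for computing the variance with respect to $m_\psi$; second, the density $\psi^2$ can be split across the threshold $K$ into a bounded piece and a piece whose total $m_\psi$-mass is exactly $C_K$.

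More concretely, I would begin from the variational characterization
\begin{align*}
\| u - \<u\>_{m_\psi} \|_{L^2(m_\psi)}^2
  \;=\; \inf_{a \in \mathbb{R}} \int_X (u-a)^2 \, dm_\psi
  \;\le\; \int_X (u - \<u\>_m)^2 \, dm_\psi.
\end{align*}
Next I would split the integral according to the two regions $\{\psi \le K\}$ and $\{\psi > K\}$ and use $\psi^2 \le K^2$ on the first region:
\begin{align*}
\int_X (u-\<u\>_m)^2 \, dm_\psi
   &= \int_{\psi \le K}(u-\<u\>_m)^2 \psi^2 \, dm + \int_{\psi > K}(u-\<u\>_m)^2 \psi^2 \, dm \\
   &\le K^2 \|u-\<u\>_m\|_{L^2(m)}^2 + \int_{\psi > K}(u-\<u\>_m)^2 \psi^2 \, dm,
\end{align*}
where I extended the first domain of integration back to all of $X$ at no cost since the integrand is nonnegative.

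For the remaining tail, since $|\<u\>_m| \le \|u\|_\infty$ (as $m$ is a probability measure), we have $|u-\<u\>_m| \le 2\|u\|_\infty$ pointwise, hence
\begin{align*}
\int_{\psi > K}(u-\<u\>_m)^2 \psi^2 \, dm
  \;\le\; 4\|u\|_\infty^2 \int_{\psi > K} \psi^2 \, dm
  \;=\; 4 C_K \|u\|_\infty^2.
\end{align*}
Combining these two estimates yields \eqref{A100}. No serious obstacle is expected; the only subtlety is remembering that $m$ is a probability measure so that the uniform bound $|\<u\>_m|\le \|u\|_\infty$ is available.
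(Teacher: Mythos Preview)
Your proof is correct and follows essentially the same approach as the paper's: both use the variational characterization of variance to replace the center $\<u\>_{m_\psi}$ by $\<u\>_m$, split the integral over $\{\psi\le K\}$ and $\{\psi>K\}$, and bound the tail via $\|u-\<u\>_m\|_\infty\le 2\|u\|_\infty$. The only cosmetic difference is that the paper keeps the constant $a$ generic until the final step before specializing to $a=\<u\>_m$, whereas you specialize at the outset.
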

 \begin{proof} For any real number $a$ we have
            \begin{align*}
  \| u - \<u\>_{m_\psi} \|_{L^2(m_\psi)}^2  &\le \| u - a\|_{L^2(m_\psi)}^2  \\
  &\le K^2\int_{\psi \le K} (u-a)^2 dm + \(\int_{\psi >K} \psi^2 dm \)\|u-a\|_\infty^2 \\
  &\le K^2\int_X (u-a)^2 dm + C_K \|u-a\|_\infty^2.
\end{align*}
Choose $a = \<u\>_m$. Then $ \|u-a\|_\infty \le 2\|u\|_\infty$ and \eref{A100} follows.
\end{proof}

 \begin{sublemma} \label{lemA3}   
 Assume that the hypotheses of Lemma \ref{lemA1} hold.
  Then   
\begin{align}
\|u - \<u\>_m\|_{L^2(m)}^2  \le  \frac{4\gamma}{\epsilon^2}\int_X |\n u|^2 \psi^2 dm    
+  \Big\{4\gamma  B_\ep  +2 A_\ep\Big\}  \|u\|_\infty^2        \label{191}
\end{align}
\end{sublemma}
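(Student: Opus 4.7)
The plan is to apply the Poincar\'e inequality \eref{503f} not to $u$ itself but to the truncation $\tilde u := u\,\Phi$, where $\Phi(x) = \min(\psi(x)/\ep,\,1)$. This cutoff equals $1$ on $\{\psi \ge \ep\}$ and damps $u$ to zero where $\psi$ is small; the key point is that on $\{\psi < \ep\}$ the factor $(\psi/\ep)^2$ is bounded by $1$, so a factor of $\psi^2$ is freely available wherever $|\n u|^2$ appears. That is precisely what is needed to convert the ordinary $m$-energy $\int|\n u|^2 dm$ delivered by Poincar\'e into the weighted energy $\int|\n u|^2 \psi^2 dm$ required on the right-hand side of \eref{191}.

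The first step is to transfer the $m$-variance from $u$ to $\tilde u$. Since $\<u\>_m$ minimizes $a\mapsto \|u-a\|_{L^2(m)}$, combined with $(a+b)^2\le 2a^2+2b^2$,
\begin{align*}
\|u-\<u\>_m\|_{L^2(m)}^2 \le \|u-\<\tilde u\>_m\|_{L^2(m)}^2 \le 2\|u-\tilde u\|_{L^2(m)}^2 + 2\|\tilde u-\<\tilde u\>_m\|_{L^2(m)}^2.
\end{align*}
The function $1-\Phi$ is supported in $\{\psi<\ep\}$ and bounded by $1$ there, so $\|u-\tilde u\|_{L^2(m)}^2 \le A_\ep \|u\|_\infty^2$, which accounts for the summand $2A_\ep\|u\|_\infty^2$ in \eref{191}. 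Applying the Poincar\'e hypothesis \eref{503f} to $\tilde u$ bounds the second summand by $2\gamma\int|\n\tilde u|^2 dm$.

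The second step estimates $\int|\n\tilde u|^2 dm$. Using $\n\psi=-\psi\n F$, one has $\n\tilde u = \n u$ on $\{\psi\ge\ep\}$, so $\int_{\psi\ge\ep}|\n u|^2 dm \le \ep^{-2}\int_{\psi\ge\ep}\psi^2|\n u|^2 dm$; and $\n\tilde u = (\psi/\ep)\n u - (u\psi/\ep)\n F$ on $\{\psi<\ep\}$, so using $|a+b|^2\le 2a^2+2b^2$ together with $\psi^2\le\ep^2$ on this set,
\begin{align*}
\int_{\psi<\ep}|\n\tilde u|^2 dm \le 2\ep^{-2}\int_{\psi<\ep}\psi^2|\n u|^2 dm + 2\|u\|_\infty^2 B_\ep.
\end{align*}
Summing the two contributions yields $\int|\n\tilde u|^2 dm \le 2\ep^{-2}\int|\n u|^2\psi^2 dm + 2 B_\ep \|u\|_\infty^2$. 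Multiplying by $2\gamma$ and combining with the first step produces \eref{191} with exactly the stated constants $4\gamma/\ep^2$, $4\gamma B_\ep$ and $2A_\ep$.

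The only genuine obstacle is a minor technicality: since $\Phi$ is merely Lipschitz, one must justify that $\tilde u$ lies in the form domain of $\n^*\n$ before invoking \eref{503f}. This holds because the weak gradient of $\Phi$ equals $-(\psi/\ep)\chi_{\psi<\ep}\n F$, which is in $L^2(m)$ by the hypothesis $\int|\n F|^2 dm<\infty$ from \eref{506}, while $u$ is bounded with finite energy by assumption; if desired, one may first approximate $t\mapsto \min(t/\ep,1)$ by smooth monotone functions and pass to the limit via dominated convergence.
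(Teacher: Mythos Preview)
Your proof is correct and follows essentially the same approach as the paper's: both multiply $u$ by a cutoff $\Phi(x)\approx\min(\psi(x)/\ep,1)$, use the variational characterization of variance plus $(a+b)^2\le 2a^2+2b^2$ to split off the $2A_\ep\|u\|_\infty^2$ term, apply Poincar\'e to $u\Phi$, and expand $\n(u\Phi)$ to extract the weighted energy and the $B_\ep$ contribution. The only cosmetic difference is that the paper works from the outset with a smooth approximation to $\min(t/\ep,1)$ and lets the smoothing parameter $\delta\downarrow0$ at the end, whereas you use the Lipschitz cutoff directly and mention the smoothing as a technical option; your final paragraph is exactly the paper's argument.
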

     \begin{proof}  
     We need to use a regularized version of the function  $[0,\infty) \ni t \to min(t, 1)$.         
Let  $ 0 < \delta < 1/2$ and let $f$ be a smooth non-decreasing real valued 
function on $[0,\infty)$ such that  $f(t)= t$ for $0\le t \le 1-\delta$ and $f(t) = 1$ for $ t \ge 1+\delta$ and such that $f'(t) \le 1$ everywhere.  In the end we will let $\delta \downarrow 0$.

 Let $\phi(t) = f(t/\ep)$. 
Then $\phi'(t) \le \ep^{-1}$ everywhere, $\phi(t) \le t/\ep$ everywhere, and, when $t\ge\ep(1+\delta)$ 
we have $\phi(t)=1$ and $ \phi'(t) =0$.
 Let 
\beq
\chi(x) = \phi(\psi(x)).           \label{201e}
\eeq
Since  $\|u - \<u\>_m\|_2^2 \le \|u - a\|_2^2$ for any real number $a$ we have
\begin{align} 
\|u - \<u\>_m\|_2^2 &\le \|u - \<u\chi\>_m\|_2^2   \notag\\
&=\| (u- u\chi) + (u\chi - \<u\chi\>_m) \|_2^2       \notag \\
&\le 2 \|u\chi - \<u\chi\>_m\|_2^2 + 2\|u(1-\chi)\|_2^2   \notag\\
&\le 2 \|u\chi - \<u\chi\>_m\|_2^2 + 2m(\psi <\epsilon(1+\delta)) \|u\|_\infty^2,            \label{205}
\end{align}
wherein we have used in the last line the fact that  $1- \chi =0$ wherever $\psi \ge\ep(1+\delta).$
From the Poincar\'e inequality   \eref{503f}  we find
\begin{align*}
 \|u\chi - \<u\chi\>_m\|_2^2 &\le \gamma \int_X |\n (u \chi)|^2 dm \\
 &= \gamma \int_X\( |u \n \chi+ \chi \n u|^2 dm  \notag\\
&\le 2\gamma \int_X |\n u|^2 \chi^2 dm +2 \gamma\int_X u^2 |\n \chi|^2 dm \\
&\le 2\gamma \int_X |\n u|^2 (\psi/\ep)^2 dm + 2\gamma \|u\|_\infty^2 \int_X  |\n \chi|^2 dm.
\end{align*}
Now $\n \chi = \phi'(\psi)\n \psi = -\phi'(\psi) \psi \n F$. Therefore 
$|\n \chi|^2 = \phi'(\psi)^2 \psi^2 |\n F|^2 \le \ep^{-2} \psi^2 |\n F|^2$ wherever $\psi < \ep(1+\delta)$ and 
is zero elsewhere. Therefore
\begin{align}
 \|u\chi - \<u\chi\>_m\|_2^2  \le \frac{2\gamma}{\ep^2} \int_X |\n u|^2 \psi^2 dm 
 + 2\gamma \|u\|_\infty^2 (1+\delta)^2 \int_{\psi < \ep(1+\delta)} | \n F|^2 dm \notag.
  \end{align}
  Insert this bound into \eref{205} to find
  \begin{align}
  \|u - \<u\>_m\|_2^2 &\le \frac{4\gamma}{\ep^2} \int_X |\n u|^2 \psi^2 dm 
 + 4\gamma \|u\|_\infty^2 (1+\delta)^2 \int_{\psi < \ep(1+\delta)} | \n F|^2 dm   \notag\\
 &+ 2m(\psi <\epsilon(1+\delta)) \|u\|_\infty^2 .\notag
  \end{align}
  We can now let $\delta\downarrow 0$ and use the dominated convergence 
  theorem on the second term  to find \eref{191}.
 \end{proof}

 \bigskip
\noindent
\begin{proof}[Proof of Lemma \ref{lemA1}] Insert \eref{191} into \eref{A100}  
 to find \eref{520e}. To prove  the last assertion of the lemma choose $K$ so large that $4C_K < \delta/2$.
 Then choose $\ep$ so small that the first term in $\zeta$ is also $< \delta/2$. These choices can be made
 because $\psi \in L^2(m)$ while $\psi >0$ a.e. and \eref{506} holds.
\end{proof}

 \begin{lemma} \label{lemA4}  $($Truncation of $u)$.    
  Let $\psi$ be a non-negative function satisfying $\int_X\psi^2 dm = 1$.
 Let $u \in L^2(m_\psi)$ and assume that $\int_X u\, dm_\psi =0$. 
 For all $R > 0$ define $u_R = (u\wedge R)\vee (-R)$.
  Then
\begin{align}
\|u\|_{L^2(m_\psi)}^2 \le \|u_R -\<u_R\>_{m_\psi} \|_{L^2(m_\psi)}^2 + 2 \int_{|u| > R} u^2  dm_\psi\ \  
                                                             \label{605b}
\end{align}
\end{lemma}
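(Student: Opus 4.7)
The plan is to use the mean-zero hypothesis $\langle u\rangle_{m_\psi}=0$ to replace the a priori unknown quantity $\langle u_R\rangle_{m_\psi}$ by something controlled by the ``tail'' integral $\int_{|u|>R} u^2\, dm_\psi$. Throughout, write $\langle\cdot\rangle=\langle\cdot\rangle_{m_\psi}$.

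First I would split the squared norm of $u$ over the two regions $\{|u|\le R\}$ and $\{|u|>R\}$. On $\{|u|\le R\}$, $u=u_R$, so $\int_{|u|\le R}u^2\,dm_\psi=\int_{|u|\le R}u_R^2\,dm_\psi=\|u_R\|_{L^2(m_\psi)}^2 - R^2\, m_\psi(|u|>R)$. Adding the tail integral yields
\begin{equation*}
\|u\|_{L^2(m_\psi)}^2 = \|u_R\|_{L^2(m_\psi)}^2 - R^2\,m_\psi(|u|>R) + \int_{|u|>R}u^2\,dm_\psi.
\end{equation*}
Next, the Pythagorean identity $\|u_R\|_{L^2(m_\psi)}^2=\|u_R-\langle u_R\rangle\|_{L^2(m_\psi)}^2+\langle u_R\rangle^2$ converts the first term on the right into the quantity appearing in the target inequality, at the cost of an extra $\langle u_R\rangle^2$ term.

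The remaining task — and the only real content — is to bound $\langle u_R\rangle^2$ by $\int_{|u|>R}u^2\,dm_\psi$. Here I would use the hypothesis $\langle u\rangle=0$ to write $\langle u_R\rangle=\langle u_R-u\rangle=-\int_{|u|>R}(u-u_R)\,dm_\psi$, since $u-u_R$ vanishes on $\{|u|\le R\}$. Applying Cauchy--Schwarz, together with $m_\psi(|u|>R)\le 1$ (as $m_\psi$ is a probability measure) and the pointwise bound $|u-u_R|\le |u|$ on $\{|u|>R\}$ (immediate from $u_R=R\,\mathrm{sgn}(u)$ there), gives
\begin{equation*}
\langle u_R\rangle^2 \le m_\psi(|u|>R)\int_{|u|>R}(u-u_R)^2\,dm_\psi \le \int_{|u|>R}u^2\,dm_\psi.
\end{equation*}

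Combining the three displays and dropping the nonpositive term $-R^2 m_\psi(|u|>R)$ yields \eqref{605b}. No step is genuinely hard; the only point requiring care is that one must not use the triangle inequality in the form $\|u-\langle u_R\rangle\|\le\|u_R-\langle u_R\rangle\|+\|u-u_R\|$, since squaring would produce an unwanted factor of $2$ on the leading term. Splitting the integral over $\{|u|\le R\}$ and $\{|u|>R\}$ before introducing $\langle u_R\rangle$ is what allows the coefficient $1$ in front of $\|u_R-\langle u_R\rangle\|^2$ on the right-hand side of \eqref{605b}.
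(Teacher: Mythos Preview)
Your proof is correct and follows essentially the same approach as the paper's own argument: both rewrite $\|u\|^2-\|u_R-\langle u_R\rangle\|^2$ as $\int(u^2-u_R^2)\,dm_\psi+\langle u_R\rangle^2$, bound the first piece by $\int_{|u|>R}u^2\,dm_\psi$ (you do this by splitting and dropping $-R^2 m_\psi(|u|>R)$), and bound $\langle u_R\rangle^2$ via $\langle u\rangle=0$, Cauchy--Schwarz, and $|u-u_R|\le|u|$ on $\{|u|>R\}$. The organization differs only cosmetically.
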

       \begin{proof}  Writing $m_\psi = \mu$ for ease in reading we have
 \begin{align}
\Big| \|u\|_{L^2(\mu)}^2 -&\|u_R - \< u_R\>_\mu \|_{L^2(\mu)}^2 \Big|    \notag
 = \Big| \int_X (u^2 - u_R^2) d\mu + \(\int_X u_R d\mu\)^2  \Big|.
 \end{align}
 But  $u - u_R =0$ wherever $|u| \le R$ and $|u - u_R| \le |u|$ everywhere. Therefore 
 $ \Big|\int_X (u^2 - u_R^2) d\mu\Big| = \int_{|u| >R} (u^2 -R^2) d \mu \le \int_{|u| > R} u^2 d\mu$. 
 Further, since $\int_X u d\mu = 0$, it follows that 
 $\(\int_X u_R d\mu\)^2  =  \(\int_X (u_R -u) d\mu\)^2 =  \(\int_{|u| >R} (u_R -u) d\mu\)^2 
 \le \int_{|u| >R} |u|^2 d\mu$. 
\end{proof}

\begin{lemma} \label{lemA5} Assume that the hypotheses of Theorem \ref{thmsg2}  hold.
Suppose that $\|u\|_{L^2(m_\psi)} =1$ and that $R >1$. Then 
\begin{align}
\int_{|u| \ge R} u^2 dm_\psi \le \frac{1}{\log R^2}\(B \int_X |\n u|^2 dm_\psi +  D +e^{-1}\). \label{gs216d}
\end{align}
\end{lemma}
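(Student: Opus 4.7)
The plan is to exploit the defective logarithmic Sobolev inequality \eqref{504} as a Chebyshev-type bound on the tail $\{|u|\ge R\}$, using the observation that on this set $u^2$ is bounded below by $R^2>1$, so $\log u^2 \ge \log R^2 > 0$, while globally $u^2\log u^2$ is bounded below by $-e^{-1}$.

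More precisely, first I would note that since $\|u\|_{L^2(m_\psi)}=1$, the normalizing term in the entropy vanishes, giving
\begin{align*}
Ent_{m_\psi}(u^2) = \int_X u^2 \log u^2\, dm_\psi.
\end{align*}
Split this integral according to whether $|u| \ge R$ or $|u| < R$. On $\{|u|\ge R\}$, since $R>1$, we have $\log u^2 \ge \log R^2 \ge 0$, so $u^2\log u^2 \ge (\log R^2) u^2$. On the complementary set, we use the elementary pointwise bound $t\log t \ge -e^{-1}$ for all $t\ge 0$ (attained at $t=e^{-1}$), which yields
\begin{align*}
\int_{|u|<R} u^2 \log u^2\, dm_\psi \ge -e^{-1} m_\psi(|u|<R) \ge -e^{-1},
\end{align*}
since $m_\psi$ is a probability measure.

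Combining these two lower bounds gives
\begin{align*}
Ent_{m_\psi}(u^2) \ge (\log R^2)\int_{|u|\ge R} u^2\, dm_\psi - e^{-1}.
\end{align*}
Now apply the hypothesized DLSI \eqref{504} (with $\|u\|_2^2=1$) to the left-hand side:
\begin{align*}
(\log R^2)\int_{|u|\ge R} u^2\, dm_\psi - e^{-1} \le B\int_X |\n u|^2\, dm_\psi + D.
\end{align*}
Rearranging and dividing by $\log R^2 > 0$ yields \eqref{gs216d}. There is no genuine obstacle here; the only care required is to verify that the constant $e^{-1}$ arising from the global lower bound of $t\log t$ accounts correctly for the negative contribution from the tame part of the integral, which it does by virtue of $m_\psi$ being a probability measure.
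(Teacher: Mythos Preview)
Your proof is correct and takes essentially the same approach as the paper: both use the pointwise bound $t\log t \ge -e^{-1}$ to control the negative part of the entropy integrand and then bound the tail via $\log u^2 \ge \log R^2$ on $\{|u|\ge R\}$, followed by the DLSI \eqref{504}. The paper packages the first step as the inequality $s\log_+ s \le s\log s + e^{-1}$ and then uses $\int u^2\log_+ u^2 \ge (\log R^2)\int_{|u|\ge R} u^2$, but this is just a cosmetic reformulation of your split.
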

   \begin{proof} Since $s\log_+ s \le s\log s + e^{-1}$ for all $s \ge 0$ 
   we have, in case  $\|u\|_{L^2(m_\psi)} =1$,
\begin{align}
\int_X  u^2 \log_+ u^2 dm_\psi &\le Ent_\psi(u^2) + e^{-1} 
\le   B \int_X |\n u|^2 dm_\psi +  D +e^{-1}  .            \notag
\end{align}
Therefore if $R >1$  then
\begin{align}
\log R^2 \int_{|u| \ge R} u^2 dm_\psi &\le \int_X u^2\log_+ u^2 dm_\psi
&\le B \int_X |\n u|^2 dm_\psi +  D +e^{-1} .     \notag
\end{align}
\end{proof}

\bigskip
\noindent  
\begin{proof}[Proof of Theorem \ref{thmsg2}]
Suppose that $\|u\|_{L^2(m_\psi)}^2=1$ and $\int_X u\ dm_\psi = 0$. Using first \eref{605b}, 
then  \eref{520e}  and\eref{gs216d} we find
\begin{align}
1&=\|u\|_{L^2(m_\psi)}^2   \notag \\
&\le \|u_R  -\<u_R\>_{m_\psi}\|_{L^2(m_\psi)}^2 
+2\int_{|u| > R} u^2  dm_\psi \notag \\
&\le \Big\{\gamma\(\frac{2K}{\epsilon}\)^2 \int_X |\n u|^2  dm_\psi     +\zeta R^2\Big\}    \notag\\
&\qquad \qquad\qquad+ \Big\{ \frac{2}{\log R^2}\(B \int_X |\n u|^2 dm_\psi +  D +e^{-1}\)\Big\}    \notag \\
&= \Big\{\gamma\(\frac{2K}{\epsilon}\)^2 +\frac{B}{\log R}       
              \Big\}  \int_X |\n u|^2 dm_\psi
+ \Big\{\zeta R^2    +\frac{D+e^{-1}}{\log R}       
         \Big\}                                                       \label{gs280b}
\end{align}
It suffices to show that the second expression in braces in \eref{gs280b} can be made 
less than $1/2$ by choosing $R, K$ and $\ep$ suitably. 

 We may choose $R>1$ so that
\beq
\frac{(D+e^{-1})}{\log R}  = 1/6.                                   \label{gs281} 
\eeq
From \eref{422e1} we find $\zeta R^2 =2K^2\( 2\gamma B_\ep +A_\ep \)R^2 +4C_K R^2$. 
Choose $K \ge 1$ so large that $4C_K R^2 \le 1/6$. 
Then choose $\ep$ so small that 
\begin{align}
2K^2\( 2\gamma B_\ep    +A_\ep \)R^2 \le 1/6.   \notag
 \end{align}
 Then $\zeta R^2 \le 1/3$. From the definitions of $C_K, A_\ep$ and $B_\ep$ it's clear 
 that such $\ep$ and $K$  exist. Since $R^2 = e^{12(D+ e^{-1})}$, \eref{A360} is satisfied.
Inserting these bounds into \eref{gs280b} we find
\begin{align}
1 \le & \Big\{\gamma\(\frac{2K}{\epsilon}\)^2 + \frac{B}{6(D + e^{-1})} \Big\} \int_X |\n u|^2 dm_\psi   
                         + 1/2        \notag
\end{align}
and therefore 
\begin{align}
1 \le  \Big\{2\gamma\(\frac{2K}{\epsilon}\)^2 + \frac{B}{3(D + e^{-1})} \Big\}  \int_X |\n u|^2 dm_\psi. \notag
\end{align}
Thus we have a spectral gap and we may take 
\begin{align}
\gamma_1 =  \Big\{2\gamma\(\frac{2K}{\epsilon}\)^2 + \frac{B}{3(D + e^{-1})} \Big\}.  \label{A340}
\end{align}
in \eref{A363}.   
But $3(D + e^{-1}) >1$ because  $D \ge 0$. The second term in
\eref{A340} is therefore at most $B$. \eref{A365b} now follows. 
\end{proof}

\subsection{Bounds on Aida's  spectral gap} \label{secbasg}

\subsubsection{The distribution of $\psi$} \label{secdist}

The three quantities that determine most of the estimates needed in Aida's bound on the spectral gap
are given in Notation    \ref{notA3}.
We will give bounds on these three quantities in terms
of the given data $c,\nu, \kappa, \|e^{-V}\|_\nu$ and $\|e^{V}\|_\kappa$. There are parameters at  our disposal
whose choice of values may be significant in some applications. But we will use values which 
keep our bounds simple and  serve the purposes of this paper. 
      
      In order to apply Theorem  \ref{thmDLSI3} we need to choose $a$ and $\sigma$ satisfying 
      \eref{gs803b}. We will take  $a = \sigma = 2c_\nu b_\ka$, which will simplify some formulas.

 \begin{theorem} \label{thmdist}  Let
\begin{align}
a = \sigma = 2c_\nu b_\ka, \ \ s_1 =  (b_\ka - (1/2))^{-1} \ \ \text{and}\ \ \ 
\alpha_1  &= a + (c \log 3)/b_\ka             \label{Adist0}
 \end{align}
Then 
\begin{align}
A_\ep &\le (\ep M ^{\alpha_1})^{s_1},         \label{Adist1c} \\
B_\ep&\le A_\ep^{1/2} \Big\| V +  \log \|e^{-V}\|_{2c}\Big\|_2\ \ \ \ \ \ \ \text{and} \label{Adist2c} \\
 B_\ep &\le A_\ep^{1/2} \ka^{-1} M^\ka                          \label{Adist2c3}\\
C_K &\le \(M^{ \frac{\log 3}{(2c)^{-1} - \nu^{-1}} }/K^2\)^{a_\nu/(2-a_\nu)} . \label{Adist3c}
\end{align}
\end{theorem}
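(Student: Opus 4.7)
\bigskip
\noindent\textbf{Proof plan.}
The three bounds are independent, each resting on a Chebyshev-style argument coupled with an $L^p$ estimate already established in the paper. The parameter choices $a=\sigma = 2c_\nu b_\ka$ are made precisely so that $\ell(a)=\ell(\sigma)=(c\log 3)/(2b_\ka)$, which explains the appearance of the constant $\alpha_1 = a + (c\log 3)/b_\ka = \sigma + \ell(a)+\ell(\sigma)$ and of $s_1 = 2c_\nu/(a-c_\nu)$ in the statement (this last identity just rewrites \eqref{gs803}). Throughout, the factor $\|e^{\l_0-V}\|_\nu$ or $\|e^{V-\l_0}\|_\ka$ that would otherwise clutter the estimates is collapsed to a power of $M$ via \eqref{s3}--\eqref{s4}, so that the results become independent of $\l_0$.

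For \eqref{Adist1c}, I would start from the tautology
$A_\ep = m(\psi\le\ep) = m(\psi^{-s_1}\ge \ep^{-s_1}) \le \ep^{s_1}\int_X\psi^{-s_1}\,dm$,
i.e.\ Chebyshev's inequality for the function $\psi^{-s_1}$. With $a=\sigma=2c_\nu b_\ka$ one checks that $s_1$ satisfies the requirement $0<s_1<s_0$ of Corollary~\ref{corub1} (the constraint $s_1=1/(b_\ka-1/2)<s_0=2/(b_\ka-1)$ is automatic since $b_\ka>1$), so the bound \eqref{W710b} gives $\|\psi^{-1}\|_{s_1}\le M^{\alpha_1}$, and raising to the $s_1$th power yields exactly \eqref{Adist1c}.

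For the $B_\ep$ bounds I would apply the sharp localized inequality of Example~\ref{Ex3.6c}: writing $F=-\log\psi$ and setting $b=-\log\ep$, the inequality \eqref{W43a} reads
$B_\ep = \int_{F\ge b}|\n F|^2\,dm \le \int_{\psi\le\ep}(V-\l_0)\,dm$.
To eliminate $\l_0$ in favor of data, use the Federbush bound \eqref{s1} in the form $-\l_0\le \log\|e^{-V}\|_{2c}$. Setting $W:=V+\log\|e^{-V}\|_{2c}$, one has $V-\l_0\le W$ pointwise, and since the left side of the above is non-negative, one can pass to the positive part and then to absolute value:
$B_\ep\le \int_{\psi\le\ep}W^+\,dm \le \int_{\psi\le\ep}|W|\,dm \le A_\ep^{1/2}\|W\|_2$
by Cauchy--Schwarz, which is \eqref{Adist2c}. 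The alternative bound \eqref{Adist2c3} is obtained similarly, replacing the linear estimate $V-\l_0\le W$ by the exponential $V-\l_0\le e^{V-\l_0}$ and then using $\|e^{V-\l_0}\|_\ka\le M$ from \eqref{s3} together with H\"older's inequality to dominate $\|e^{V-\l_0}\|_2$.

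For \eqref{Adist3c} the idea is again Chebyshev, but now applied to $\psi^p$ for a cleverly chosen $p\in(2,p_0)$:
$C_K = \int_{\psi>K}\psi^2\,dm\le K^{2-p}\int_X\psi^p\,dm \le K^{2-p}\|\psi\|_p^p$,
and the hyperboundedness bound \eqref{L511} combined with \eqref{s4} yields $\|\psi\|_p\le M^{\tau(p)}$. The exponents in the target inequality force the choice $p = 4/(2-a_\nu)$. A short computation using \eqref{L313qi} shows $q_0^{-1}-p^{-1}=3a_\nu/4$ and $p^{-1}-p_0^{-1}=a_\nu/4$, whose ratio is $3$, so by \eqref{L505} one obtains $\tau(p)=(c\log 3)/(2a_\nu)$. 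Then $2-p=-2a_\nu/(2-a_\nu)$ and $p\tau(p)=(2c\log 3)/(a_\nu(2-a_\nu))$; recalling that $2c_\nu = 2c/a_\nu^2$ and that $(2c)^{-1}-\nu^{-1}=a_\nu^2/(2c)$, both sides of \eqref{Adist3c} match. I expect the main bookkeeping obstacle to be only the verification that $p=4/(2-a_\nu)<p_0$ so Theorem~\ref{thmulb} applies; this reduces to the elementary inequality $(\nu/c)(1+a_\nu)>2$, which holds since $\nu>2c$.
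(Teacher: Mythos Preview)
Your arguments for \eqref{Adist1c}, \eqref{Adist2c}, and \eqref{Adist3c} are correct and match the paper's own proofs almost line by line; in particular your choice $p=4/(2-a_\nu)$ coincides with the paper's choice $p^{-1}=1/4+(1/2)p_0^{-1}$ (equation \eqref{A424c}), and your computation of $\tau(p)$ is the same. (A small slip: the verification $p<p_0$ does not reduce to $(\nu/c)(1+a_\nu)>2$; it reduces to $4(1-a_\nu)<2(2-a_\nu)$, i.e.\ $a_\nu>0$, which is immediate.)

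There is, however, a genuine gap in your argument for \eqref{Adist2c3}. Your plan is to use $V-\l_0\le e^{V-\l_0}$, then Cauchy--Schwarz to get $B_\ep\le A_\ep^{1/2}\|e^{V-\l_0}\|_2$, and finally H\"older to dominate $\|e^{V-\l_0}\|_2$ by $\|e^{V-\l_0}\|_\ka\le M$. But H\"older goes in that direction only when $\ka\ge 2$, whereas the standing hypothesis is merely $\ka>0$; for small $\ka$ there is no reason for $e^{V-\l_0}$ to lie in $L^2(m)$ at all. The paper instead applies Young's inequality \eqref{BG500c} with $g=\chi_{\{F\ge b\}}$ and $u=\ka(V-\l_0)$ to obtain
\[
\int_{F\ge b}(V-\l_0)\,dm \le Ent_m(\chi_{\{F\ge b\}}) + m(F\ge b)\,\log\|e^{V-\l_0}\|_\ka,
\]
then uses the elementary bound $Ent_m(\chi_E)=-m(E)\log m(E)\le m(E)^{1/2}$ (since $-t^{1/2}\log t\le 2/e<1$ on $[0,1]$) together with $m(F\ge b)\le m(F\ge b)^{1/2}$ and \eqref{s3} to arrive at $B_\ep\le A_\ep^{1/2}(1+\log M)\le A_\ep^{1/2}M$. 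This route works uniformly in $\ka>0$.
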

    \begin{proof} [Proof of \eref{Adist1c}] 
If  $\epsilon = e^{-b}$ then $m(\psi \le \epsilon)  = m(F \ge b)$. 
From  Chebyshev's inequality we find  $m(F\ge b)  e^{sb} \le \int_X e^{sF} dm$  for all $s \ge 0$ 
and  therefore, by \eref{W710b}, 
 we have
\begin{align}
m(\psi \le \epsilon) &\le e^{-sb}  \int e^{sF} dm   
= \epsilon^s  \|\psi^{-1}\|_s^s           \notag \\
&\le  \epsilon^s M^{s\(\ell(a) + \ell(\sigma) +\sigma\)}\ \ \  \text{for}\ \ s \in (0,s_0)  \label{Adist1d} 
\end{align}
whenever  $a > c_\nu b_\ka$, $\sigma > c_\nu b_\ka$ and $a = (2s^{-1} +1) c_\nu$, as in \eref{W852b}. 
The choice for $a$ and $s_1$ given in \eref{Adist0} is consistent with this link between 
$a$ and $s_1$ because $(2s_1^{-1} +1) c_\nu =(2b_\ka - 1 +1)c_\nu = 2 b_\ka c_\nu =a$. 
Put $t =a$  in \eref{W852} to find 
\begin{align}
\ell(a) =(c\log 3)/(2b_\ka)          \label{Adist5}
\end{align}
when $a = 2c_\nu b_\ka$.  Therefore 
       $\ell(a) + \ell(\sigma) + \sigma =  (c\log 3)/b_\ka + 2c_\nu b_\ka= \alpha_1$,
 which, inserted into \eref{Adist1d}  gives \eref{Adist1c}.
\end{proof}

\bigskip
\noindent
\begin{proof}[Proof of \eref{Adist2c} - \eref{Adist2c3}]
For the proof of \eref{Adist2c} use the Federbush semi-boundedness theorem
\eref{s1} to find  $- \l_0 \le \log \|e^{-V}\|_{2c}$ and insert this into \eref{W43a}. We get
\begin{align*}
\int_{F \ge b} |\n F|^2 dm &\le \int_{F \ge b} (V +   \log \|e^{-V}\|_{2c}) dm \\
&\le m(F \ge b)^{1/2}  \Big\| V +  \log \|e^{-V}\|_{2c}\Big\|_2.
\end{align*}
Choose $b $ so that  $\epsilon = e^{-b}$ again. Since $\{ F \ge b\} = \{ \psi \le \ep\}$, 
\eref{Adist2c} follows. 

The inequality \eref{W43a} together with Young's inequality give 
\begin{align}
\int_{F \ge b} |\n F|^2 dm &\le \int_{F \ge b} (V-\l_0)  dm   \notag \\
&\le {\color{red} \ka^{-1}}Ent_m(\chi_{F\ge b}) + m(F \ge b) \ka^{-1}\log\int_X e^{\ka(V-\l_0)} dm  \label{Adist2c2}
\end{align}
Since $\chi_F \log \chi_F = 0$ we have
 $Ent_m(\chi_{F\ge b})  =  -m(F \ge b)\log(m(F \ge b)) \le m(F \ge b)^{1/2} $
because  $-t^{1/2} \log t \le 2/ e <1$ for $0 \le t <1$.  
The second term in \eref{Adist2c2} is $ m(F \ge b)\log\|e^{V-\l_0}\|_\ka$,
which, in view of \eref{s3} is at most   $m(F \ge b)^{1/2} \log M$. Choose $b$ again 
so that $e^{-b} = \ep$ to find 
\begin{align}
B_\ep&\le A_\ep^{1/2} (\ka^{-1}+ \log M).          \label{Adist2c1}
\end{align}
 Since $M^\ka \ge 1$, we have $( \ka^{-1}  + \log M)  =\ka^{-1}( 1 + \log M^\ka)  
 \le  \ka^{-1} M^\ka$, from which \eref{Adist2c3} follows.
\end{proof}

The proof of \eref{Adist3c} depends on the following lemma, which 
 implements  a standard method of getting $L^p$ bounds from hyperboundedness.

\begin{lemma} \label{lemdistK} Let $K >0$ and  $2 <p <p_0$. Then
\begin{align}
C_K &\le  \|e^{\l_0 -V}\|_\nu^{p \tau(p)}/K^{p-2},                      \label{Adist3b}
\end{align}
\end{lemma}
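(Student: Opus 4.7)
The plan is to bound $C_K$ by a standard Chebyshev-type argument on the set $\{\psi > K\}$, combined with the $L^p$ bound on the ground state that was already established in Theorem \ref{thmulb}. The whole point of introducing the index $p \in (2, p_0)$ in the statement is so that one can trade the $\psi^2$ weight in the definition of $C_K$ against $\psi^p$ at the cost of a factor $K^{2-p}$.

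Concretely, since $2 - p < 0$ and $\psi(x) > K$ on the region of integration, one has $\psi^{2-p} \le K^{2-p}$ there, so
\begin{align*}
C_K = \int_{\psi > K} \psi^2\, dm = \int_{\psi > K} \psi^p\, \psi^{2-p}\, dm \le K^{2-p} \int_{\psi > K} \psi^p\, dm \le K^{2-p} \|\psi\|_p^p.
\end{align*}
The first step would be to record this elementary inequality. The second step would be to quote Theorem \ref{thmulb}, specifically the bound \eref{L511}, which gives $\|\psi\|_p \le \|e^{\l_0 - V}\|_\nu^{\tau(p)}$ for $2 \le p < p_0$ under the assumption \eref{mt1} together with $\|e^{-V}\|_\nu < \infty$ for some $\nu > 2c$ (which is part of the hypotheses standing in this section). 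Raising to the $p$-th power yields $\|\psi\|_p^p \le \|e^{\l_0 - V}\|_\nu^{p\tau(p)}$, and combining the two displays gives exactly \eref{Adist3b}.

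There is no real obstacle here: the only thing to notice is that the hypothesis $2 < p < p_0$ makes $\tau(p)$ finite and strictly positive (by \eref{L325s} and \eref{L325t}), so the $L^p$ bound from Theorem \ref{thmulb} is nontrivial, and also ensures $p - 2 > 0$ so that $K^{p-2}$ appears in the denominator as claimed. The lemma is really a one-line consequence of Theorem \ref{thmulb} once one has written down the Chebyshev step. The interesting use of the lemma — as indicated by the statement of Theorem \ref{thmdist} and the form of \eref{Adist3c} — will come later, when one optimizes in $p$ (or chooses a convenient $p$ close enough to $p_0$) in order to convert the bound into one expressed purely in terms of $M$ and the combinatorial quantity $a_\nu/(2-a_\nu)$; that optimization, however, belongs to the proof of \eref{Adist3c}, not of this lemma.
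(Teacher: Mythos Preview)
Your proof is correct and follows essentially the same route as the paper's: the Chebyshev step $C_K \le K^{2-p}\|\psi\|_p^p$ followed by the bound $\|\psi\|_p \le \|e^{\l_0-V}\|_\nu^{\tau(p)}$. The only cosmetic difference is that you cite \eref{L511} directly, whereas the paper inlines its derivation from the hyperboundedness inequality \eref{L289} (writing $\psi = e^{t\l_0}e^{-tH}\psi$ and taking $t=\tau(p)$); the content is identical.
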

     \begin{proof}
       For any $K>0$  and $p >2$ we have  $\psi^2 \le \psi^p/K^{p-2}$  wherever 
 $\psi \ge K$. 
 Apply the hyperboundedness inequality  \eref{L289}  with $q=2$ to find
 \begin{align*}
 \int_{\psi\ge K} \psi^2 dm 
& \le \int_{\psi \ge K} K^{2-p} \psi^p dm \\
 &\le K^{2-p} \int_X \psi^p dm \\
 &= K^{2-p} \|e^{t\l_0}e^{-tH} \psi\|_p^p \\
 &\le K^{2-p} \(e^{t\l_0}\|e^{-tH}\|_{2\to p} \|\psi\|_2\)^p\ \ \\
 &\le   K^{2-p} \(e^{t\l_0} \| e^{-V}\|_\nu^t\)^p   \qquad \qquad        \ \ \forall t \ge  \tau(p) \\ 
 &=  K^{2-p}  \|e^{\l_0-V}\|_\nu^{tp}   \qquad\qquad \qquad    \ \ \ \forall t \ge  \tau(p).  
 \end{align*}
 Now  choose $t=\tau(p)$ to find \eref{Adist3b}.
\end{proof}

\bigskip
\noindent
\begin{proof}[Proof of \eref{Adist3c}] We will choose a special value of $p$ in \eref{Adist3b} 
that makes the dependence of the exponents on $\nu$ simple and explicit.
 Define $p$ by
\beq 
 p^{-1} = (1/4) + (1/2) p_0^{-1}. \label{A424c}
 \eeq
      Since $p_0 >2$ we have $(1/2)p_0^{-1} < 1/4$ and therefore
 \begin{align}
p_0^{-1}  = (1/2)p_0^{-1} + (1/2)p_0^{-1} <  1/4 +(1/2)p_0^{-1}  < 1/2. \notag
\end{align}
Hence $p_0> p >2$.   To evaluate $\tau(p)$, observe that, in view of \eref{L313b}, one has
\begin{align*}
q_0^{-1} - p^{-1}  &= 1-p_0^{-1} - \(1/4 +(1/2)p_0^{-1} \) = 3\(1/4 -(1/2)p_0^{-1}\) \\
&= 3(p^{-1} - p_0^{-1}\).
\end{align*}
It follows from \eref{L505} that  $\tau(p) =  \frac{c}{2a_\nu} \log 3 .$
From the expression \eref{L313qi} for $p_0^{-1}$ we find
 $p^{-1} = (1/4)(2- a_\nu)$. Therefore
\begin{align}
p\tau(p) = \frac{2c\log 3}{(2-a_\nu)a_{\nu}},
\end{align}
while $p-2 = \frac{4}{2-a_\nu} - 2 = \frac{2a_\nu}{2-a_\nu}$.
Inserting these values into \eref{Adist3b} we find
\begin{align}
C_K &\le  \|e^{\l_0 -V}\|_\nu^{ \frac{2c\log 3}{(2-a_\nu)a_{\nu}}}/K^{\frac{2a_\nu}{2-a_\nu}} .\label{A425}
\end{align}
It will be useful to write this in terms of $K^2$.
From \eref{L341d} we see that
\begin{align}
\frac{2c}{a_\nu^2} =  \frac{1}{(2c)^{-1} - \nu^{-1}} .       \notag
\end{align}
So we may rewrite 
\begin{align}
C_K \le \(\| e^{\l_0 -V}\|_\nu^{ \frac{\log 3}{(2c)^{-1} - \nu^{-1}} }/K^2\)^{a_\nu/(2-a_\nu)} .
\end{align}
Finally, the bound \eref{s4}  gives \eref{Adist3c}.
\end{proof}

\begin{remark} \label{remCK} {\rm  The choice of $p$ given by  \eref{A424c}   
simplifies $\tau(p)$ in \eref{Adist3b} and gives the simple form \eref{Adist3c} of the bound on $C_K$. 
 But  one can also simplify $\tau(p)$ by choosing $p$ so that 
 $(1/2) - p^{-1} = y^{-1}((1/2) - p_0^{-1})$,
for some $y >1$. (This reduces to \eref{A424c} when $y=2$.) 
 In this case one finds
$\tau(p) = (c/(2a_\nu)) \log x$, where $x=(y+1)/(y-1) $. The resulting bound on $C_K$ 
 is given by
\begin{align}
C_K  &\le  \(\|e^{\l_0 -V}\|_\nu^{\frac{cy\log x}{2a_\nu^{2}} }/K\)^{2 \frac{a_\nu}{y-a_\nu}} \notag \\
&=  \(\|e^{\l_0 -V}\|_\nu^{\frac{cy\log x}{a_\nu^{2}} }/K^2\)^{ \frac{a_\nu}{y-a_\nu}} .      \label{A471}
\end{align}
In some applications it might be useful to choose $y$ large. But in this paper the estimate
\eref{Adist3c} serves our purposes.
}
\end{remark}

\begin{remark} \label{remA3.3,4} {\rm   The estimates of $A_\ep, B_\ep$ and $C_K$ that we gave in 
Theorem \ref{thmdist}  depend on $\|e^{-V}\|_\nu$ and $\|e^V\|_\ka$.
It may be desirable for future
applications to avoid use of $\|e^V\|_\ka$, because a bound on this, although almost necessary
for bounds on $\|\psi^{-s}\|_{L^1(m)}$, as we see in Corollary \ref{corvlps2}, do not seem to be anywhere near necessary
for establishing a defective logarithmic Sobolev inequality, as examples show. It is  possible, however,
to get bounds on $A_\ep, B_\ep$ and $C_K$ just in terms of   $c,\nu,\|e^{-V}\|_\nu$ and $\|V\|_p$ for  
any $p > 1$.
The key steps in one such 
 procedure have been carried out by Aida in \cite[Lemma 3.3, Part (4)]{Aida2001}.
}
\end{remark}

\subsubsection{Aida's spectral gap}   \label{secAsg}

 Theorem \ref{thmsg2}, gives a bound,  \eref{A365b},  on the coefficient $\gamma_1$
in Poincar\'e's inequality \eref{A363} for the ground state measure $m_\psi$.
        The bound depends on the choice of a region $\{ \ep < \psi < K\}$, outside of which  the contributions
        to the Poincar\'e inequality are well controlled  by the energy.
        $\ep$ and $K$ must be chosen so as to satisfy the inequality \eref{A360}.
 In this section we will use the bounds on the distribution of $\psi$,
  derived in Theorem \ref{thmdist}, to make a choice of  $\ep$ and $K$ satisfying \eref{A360}, 
 from which we can derive 
  a quantitative bound on the   Poincar\'e coefficient $\gamma_1$ in terms
of the given data $c,\nu, \kappa$ and $M\equiv \|e^{-V}\|_\nu \|e^V\|_\kappa$.

\begin{theorem} \label{thmspecbd} Under the  hypotheses of Theorem \ref{thmM} there exists
a number $\gamma_1$ such that 
\begin{align}
 \int_X\(u^2 - \<u\>_{m_\psi}^2\)dm_\psi  
             &\le \gamma_1 \int_X|\n u|^2 dm_\psi.    \label{A363a}
 \end{align} 
 $\gamma_1$ may be chosen so as to satisfy the bound
 \begin{align}
 \gamma_1 \le d_1 M^{e_1}                    \label{A363b}
 \end{align}
 for constants $d_1, e_1$ 
 depending only on $c, \nu, \kappa$.
 \end{theorem}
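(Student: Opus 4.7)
The plan is to apply Aida's spectral gap theorem (Theorem \ref{thmsg2}) using as input the defective logarithmic Sobolev inequality already proved in Theorem \ref{thmDLSI3} and the distribution estimates of Theorem \ref{thmdist}. First, since $m$ satisfies the logarithmic Sobolev inequality \eqref{mt1} with constant $c$, it satisfies the Poincar\'e inequality \eqref{503f} with $\gamma = c$ by the Rothaus--Simon theorem (as cited in the proof of Corollary \ref{corEU}). Next, we specialize Theorem \ref{thmDLSI3} to the choice $a = \sigma = 2c_\nu b_\ka$ used in Theorem \ref{thmdist}, so that \eqref{gs807b} gives a DLSI of the form (504) with
\[
B = 4 c_\nu b_\ka \quad\text{and}\quad D = 2\bigl(2a + 2\ell(a)\bigr)\log M,
\]
where $\ell(a) = (c\log 3)/(2 b_\ka)$ by \eqref{Adist5}. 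Thus $D = \beta \log M$ for a constant $\beta = \beta(c,\nu,\kappa)$, and consequently $e^{12(D + e^{-1})} \le C_0 M^{12\beta}$ for a constant $C_0 = C_0(c,\nu,\kappa)$.

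The central task is to choose $\epsilon$ and $K$ satisfying the Aida condition \eqref{A360}. We will pick $K^2 = M^{e_K}$ and $\epsilon = M^{-e_\epsilon}$, where $e_K$ and $e_\epsilon$ are positive constants depending only on $c,\nu,\kappa$, chosen large enough that each of the three summands on the left of \eqref{A360} becomes at most $1/9$. Specifically, \eqref{Adist3c} yields
\[
4 C_K \cdot e^{12(D+e^{-1})} \;\le\; 4 C_0\, M^{12\beta}\, \bigl(M^{e_C - e_K}\bigr)^{a_\nu/(2-a_\nu)},
\]
with $e_C$ depending only on $c,\nu$; picking $e_K$ large enough (depending on $\beta$, $a_\nu$, $e_C$) makes this $\le 1/9$. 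Then \eqref{Adist1c} and \eqref{Adist2c3} give $A_\epsilon \le M^{-(e_\epsilon - \alpha_1)s_1}$ and $B_\epsilon \le M^{1 - (e_\epsilon - \alpha_1) s_1/2}$, so that for $e_\epsilon$ large enough (depending on the already-fixed $e_K$, on $\beta$, and on $s_1, \alpha_1$), both
\[
2 K^2 A_\epsilon \cdot e^{12(D+e^{-1})} \quad\text{and}\quad 4\gamma K^2 B_\epsilon \cdot e^{12(D+e^{-1})}
\]
are likewise bounded by $1/9$. Condition \eqref{A360} then holds, and \eqref{A365b} yields
\[
\gamma_1 \;\le\; 4 c_\nu b_\ka + 8 c\, M^{e_K + 2 e_\epsilon},
\]
which, after absorbing constants, has the required form $d_1 M^{e_1}$ with $d_1, e_1$ depending only on $c,\nu,\kappa$.

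The main obstacle is purely one of exponent bookkeeping: the factor $e^{12(D+e^{-1})}$ in \eqref{A360} grows polynomially in $M$, so the negative powers of $M$ produced by the tail estimates for $C_K$, $A_\epsilon$, and $B_\epsilon$ must be made strong enough to overwhelm it, while $K$ and $1/\epsilon$ themselves remain only polynomial in $M$ so that the final constant $(K/\epsilon)^2$ in \eqref{A365b} is bounded by a power of $M$. The reason this works is that the rate-exponents $a_\nu/(2-a_\nu)$ and $s_1$ governing the decay of $C_K$ in $K$ and of $A_\epsilon$ in $1/\epsilon$ are strictly positive constants determined only by $c,\nu,\kappa$, so one can always choose $e_K$ and then $e_\epsilon$ large enough. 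No cancellation beyond this is needed; the proof is a direct quantitative implementation of Aida's abstract result combined with Sections \ref{secDLSI} and \ref{secdist}.
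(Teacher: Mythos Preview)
Your approach is the same as the paper's: feed the DLSI from Theorem \ref{thmDLSI3}, the Poincar\'e inequality for $m$ (with $\gamma=c$), and the distribution bounds from Theorem \ref{thmdist} into Aida's abstract spectral gap result, Theorem \ref{thmsg2}, and then read off \eqref{A365b}. The paper carries this out with explicit constants rather than your generic ``choose the exponents large enough'' phrasing, but the structure is identical.

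There is, however, one real slip in your execution. You set $K^2 = M^{e_K}$ and $\epsilon = M^{-e_\epsilon}$ with $e_K, e_\epsilon$ fixed constants, and then claim that by taking $e_K$ large you can force
\[
4\,C_K\, e^{12(D+e^{-1})}\;\le\; 4\,C_0\,M^{12\beta + (e_C - e_K)a_\nu/(2-a_\nu)}\;\le\;1/9.
\]
But $M$ can equal $1$ (take $V\equiv 0$), and then the left side reduces to $4C_0 = 4e^{12/e} \gg 1/9$ regardless of $e_K$. The same failure occurs in your $A_\epsilon$ and $B_\epsilon$ bounds at $M=1$. No choice of exponent alone can absorb the fixed constant $4C_0$; you must also include a constant multiplicative factor in $K$ and $\epsilon$. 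The paper handles this by defining $K$ via \emph{equality} in \eqref{A480b}, which produces $K^2 = M^{\text{power}}\cdot(24\,e^{12(D+e^{-1})})^{(2-a_\nu)/a_\nu}$, so that even at $M=1$ one has $K^2 = (24\,e^{12/e})^{(2-a_\nu)/a_\nu} > 1$; similarly for $\epsilon$ via \eqref{A484}. The fix to your argument is simply to take $K^2 = c_K M^{e_K}$ and $\epsilon = c_\epsilon^{-1} M^{-e_\epsilon}$ with $c_K, c_\epsilon$ sufficiently large constants depending only on $c,\nu,\kappa$. This does not affect the form of the final bound $\gamma_1 \le d_1 M^{e_1}$, since $(K/\epsilon)^2$ still grows only polynomially in $M$.
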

\begin{proof}  With the goal of implementing the procedure of Theorem \ref{thmsg2}, we first 
choose $K$ so large that $4C_K e^{12(D + e^{-1})} \le 1/6$.
 For this it suffices by \eref{Adist3c}  to take  $K$ so that 
\begin{align}
4 \(M^{ \frac{\log 3}{(2c)^{-1} - \nu^{-1}} }/K^2\)^{a_\nu/(2-a_\nu)} e^{12(D + e^{-1})} \le 1/6 . \label{A480b}
\end{align}
Define $K$ by equality  in \eref{A480b}. Then 
\begin{align}
K^2 =    M^{ \frac{\log 3}{(2c)^{-1} - \nu^{-1}} } 
                     \(24  e^{12(D + e^{-1})} \)^{\frac{2- a_\nu}{a_\nu}}.       \label{A480}
\end{align}
 Second, we  choose $\ep$ so small that
\begin{align}
2K^2\( 2\gamma B_\ep +A_\ep \)  e^{ 12(D + e^{-1})}  &\le 1/6.
\end{align}
That is,      $\( 2\gamma B_\ep +A_\ep \) \le \frac{ e^{ -12(D + e^{-1})}}{12 K^2}.$		
	For this it suffices, by \eref{Adist2c3}, to take $\ep$ such that
\begin{align}
2\gamma A_\ep^{1/2}\ka^{-1}M^\ka+ A_\ep 
               \le \frac{ e^{ -12(D + e^{-1})}}{12 K^2}.              \label{A482}
\end{align}
Since, by its definition \eref{sg10}, we have $A_\ep \le 1$, we also have $A_\ep \le A_\ep^{1/2}$.
Thus it suffices to choose $\ep$ such that
\begin{align}
 A_\ep^{1/2} 
     \le   \frac{ e^{ -12(D + e^{-1})}}{12 K^2\(1 + 2\gamma \ka^{-1} M^\ka\)}. \label{A483}
\end{align} 
From  \eref{Adist1c}, we see that \eref{A483} will hold if
\begin{align}
(\ep M^{\alpha_1})^{s_1/2}    \le   \frac{ e^{ -12(D + e^{-1})}}{12 K^2\(1 + 2\gamma\ka^{-1}  M^\ka\)}. \label{A484}
\end{align}
Define $\ep$ by equality in \eref{A484}. 
 Then   
\begin{align}
\ep^2 = \Big\{ \frac{ e^{ -12(D + e^{-1})}}{12 K^2\(1 + 2\gamma\ka^{-1}  M^\ka\)}
                  \Big\}^{4/s_1}    M^{-2\alpha_1} .  \label{A485}
\end{align} 
The values of $K$ and $\ep$ defined in \eref{A480} and \eref{A485} satisfy \eref{A360}. 
We may therefore use them to bound $\gamma_1$ by   \eref{A365b}.            
We find
 \begin{align}
&K^2/\ep^2 
 = K^2 \Big\{\frac{12 K^2\(1 + 2\gamma\ka^{-1}  M^\ka\)}
 { e^{ -12(D + e^{-1})}}\Big\}^{4/s_1}  M^{2\alpha_1}    \notag \\
 &=\Big\{K^{2\{1+(4/s_1)\}}\(12e^{ 12(D + e^{-1})}\)^{4/s_1}\Big\} \(1 + 2\gamma \ka^{-1} M^\ka\)^{4/s_1}
                     M^{2\alpha_1}.  \label{A488a}
\end{align}
In view of \eref{A480} the factor in braces is
\begin{align}
&\Big\{K^{2\{1+(4/s_1)\}}\(12e^{ 12(D + e^{-1})}\)^{4/s_1}\Big\} \notag \\
& = \Big\{M^{ \frac{\log 3}{(2c)^{-1} - \nu^{-1}} } 
                     \(24  e^{12(D + e^{-1})} \)^{\frac{2- a_\nu}{a_\nu}}\Big\}^{1 +(4/s_1)}   
                        \(12e^{ 12(D + e^{-1})}\)^{4/s_1}    \notag\\
 & =M^{ \frac{\log 3}{(2c)^{-1} - \nu^{-1}} (1 +(4/s_1))} 
   \(24 e^{12(D+e^{-1})}\)^{\(\frac{2- a_\nu}{a_\nu}(1 +(4/s_1))+ 4/s_1\) } 2^{-4/s_1}. \label{A488b}
\end{align}
From \eref{Adist0} we see that $4/s_1= 4b_\ka - 2$, and therefore   
$$
\frac{2- a_\nu}{a_\nu}(1 +(4/s_1))+ 4/s_1 = \frac{2(4b_\ka -1)}{a_\nu} -1.
$$
Inserting this into \eref{A488b}, we find from \eref{A488a}
\begin{align}
K^2/\ep^2  = M^{\beta_1} e^{\beta_2 D}  
           \(1 + 2\gamma \ka^{-1}  M^\ka\)^{\beta_3} \beta_4,                      \label{A488c}
\end{align}
where
\begin{align}
\beta_ 1 & =   \frac{\log 3}{(2c)^{-1} - \nu^{-1}} (4b_\ka -1)  + 2\alpha_1      \label{A489a}\\
\beta_2  & =12\(\frac{2(4b_\ka -1)}{a_\nu} -1\) \label{A489b}\\
\beta_3 &= 4b_\ka -2         \label{A489c}\\
 \beta_4 &= \(24 e^{12/e}\)^{\(\frac{2(4b_\ka -1)}{a_\nu} -1\)} 2^{2- b_\kappa}, \label{A489d}
\end{align}
and $a_\nu =\sqrt{1 - (2c/\nu)} , b_\ka =  \sqrt{1 + (2c/\ka)} $, by \eref{L341d} and  \eref{s98f},  
 and $\alpha_1$ is given in \eref{Adist0}. 
All of the four constants $\beta_j$ depend only on $c, \nu$ and $\ka$ and are non-negative.

       In   Theorem \ref{thmsg2}, the constants $B$ and $D$ are arbitrary. 
 To apply our bounds from Theorem \ref{thmDLSI3}  we 
 use the form of the defective logarithmic Sobolev inequality given in 
  \eref{gs807b}. Thus we take   
  \beq
  B= 2a \ \ \ \text{and} \ \ D = 2 \log M^{a + \ell(a) + \sigma + \ell(\sigma)}. \label{A490}
  \eeq
For our choices, $a = \sigma = 2c_\nu b_\ka$, we see from \eref{Adist5} that 
\begin{align}
e^D = M^{2(2a + (c\log 3)/b_\ka)}       \label{A490a} 
\end{align}
and therefore $ e^{\beta_2 D} = M^{2\beta_2(2a + (c\log 3)/b_\ka)}$. 
Combining the first two factors in \eref{A488c} we find
\begin{align}
K^2/\ep^2  = M^{\beta_5}  
           \(1 + 2\gamma \ka^{-1} M^\ka\)^{\beta_3} \beta_4,  \label{A488d}
\end{align}
where
\begin{align}
\beta_5 = \beta_1 + 2\beta_2(2a + (c\log 3)/b_\ka).     \label{A489e}
\end{align}
Our  assumed logarithmic Sobolev inequality \eref{mt1} implies that the Poincar\'e inequality 
 \eref{503f} in Aida's hypothesis holds in our case with $\gamma = c$. See \cite[Theorem 2.5]{G1993}
 or \cite[Proposition 5.1.3]{BGL}   for a proof of this.

 Thus the bound \eref{A365b} yields  in our case 
\begin{align}
\gamma_1 \le 2a + 8c  M^{\beta_5}  
           \(1 + 2c\ka^{-1} M^\ka\)^{\beta_3} \beta_4,          \label{A489f}
\end{align}  
with $a = 2c_\nu b_\ka$. To reach the simple looking form \eref{A363b} we can use the overestimate
$1 \le M$, as in \eref{M3}, from which follows that  $1+ 2c\ka^{-1} M^\ka \le( 1+ 2c/\ka) M^{\ka}$
 and  $2a \le 2a M^{\beta_5 + \beta_3}$. Inserting these two bounds into \eref{A489f} we find
\begin{align}
\gamma_1 \le  d_1M^{e_1},        \label{A489h}
\end{align}
  where 
 \begin{align}
 d_1 = 2a + 8c (1+2c/\ka)^{\beta_3} \beta_4, \ \ \ \ \ \ \ e_1 =\beta_5+ \ka\beta_3      \label{A489g}
 \end{align}
 and $a = 2c_\nu b_\ka$ as usual.   This proves Theorem \ref{thmspecbd}.
\end{proof}

\subsection{Tightening: Proof of the main theorem}   \label{sectightening}

If the generator, $H$,  of a hyperbounded semigroup  has a spectral gap at the bottom of its spectrum
then the semigroup is in fact hypercontractive.
 This was first proven by 
J. Glimm, \cite[Lemma 5.1]{Glimm68},  and  later amplified by I. E. Segal \cite[Section 1]{Seg70}.
 In view of the equivalence of
hyperboundedness with logarithmic  Sobolev inequalities, \cite{G1}, one can restate this 
at an infinitesimal level: If a Dirichlet form satisfies both a defective logarithmic Sobolev inequality
and a Poincar\'e inequality  then it also satisfies  a logarithmic Sobolev inequality (without defect).
The initial form of this theorem was given by O. Rothaus, \cite{Rot5}, wherein he proved a key lemma for this
theorem, \cite[Lemma 9]{Rot5},  and applied it then to a specific geometric circumstance in the context of
isoperimetric inequalities, \cite[Theorem 10]{Rot5}, to remove the defect.   
        Deuschel and Stroock, \cite{DS1989}, gave another  proof of Rothaus' theorem and  
  Carlen and Loss, \cite{Ca2004}, gave another different proof. 
  We will use  the form of this theorem given in \cite[Proposition 5.1.3]{BGL}, which we quote here.

\begin{proposition} \label{propbgl} {\rm  (\cite[Proposition 5.1.3]{BGL}). } 
Suppose that $\mu$ is a probability measure on a Riemannian manifold. If
\begin{align}
Ent_\mu (f^2) \le 2C \int |\n f|^2 d\mu + D \int f^2 d\mu        \label{sls5}
\end{align}
and
\begin{align}
Var_\mu (f) \le C' \int |\n f|^2 d\mu                          \label{sls6}
\end{align}
then
\begin{align}
Ent_\mu (f^2) \le 2\(C +C'((D/2) + 1)\) \int |\n f|^2 d\mu.  \label{sls7}
\end{align}
\end{proposition}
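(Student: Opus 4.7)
The plan is to follow the standard tightening argument, whose linchpin is Rothaus' lemma: if $g \in L^2(\mu)$ has zero mean, then for any real $a$,
\begin{align}
\text{Ent}_\mu\big((g+a)^2\big) \le \text{Ent}_\mu(g^2) + 2\int g^2\, d\mu. \label{Rothaus}
\end{align}
Granting \eqref{Rothaus} for the moment, the proof is a short chain of substitutions. Given $f \in L^2(\mu)$ with $|\nabla f| \in L^2(\mu)$, set $\bar f = \int f\, d\mu$ and $g = f - \bar f$. Apply \eqref{Rothaus} with $a = \bar f$ to obtain
\begin{align}
\text{Ent}_\mu(f^2) \le \text{Ent}_\mu(g^2) + 2\,\text{Var}_\mu(f). \notag
\end{align}
Next apply the defective LSI \eqref{sls5} to $g$, using $\nabla g = \nabla f$ and $\int g^2\, d\mu = \text{Var}_\mu(f)$:
\begin{align}
\text{Ent}_\mu(g^2) \le 2C\int|\nabla f|^2\, d\mu + D\,\text{Var}_\mu(f). \notag
\end{align}
Combining these and invoking the Poincar\'e inequality \eqref{sls6} to bound $\text{Var}_\mu(f) \le C'\int|\nabla f|^2\, d\mu$ yields
\begin{align}
\text{Ent}_\mu(f^2) \le 2C\int|\nabla f|^2\, d\mu + (D+2)\,\text{Var}_\mu(f) \le \big(2C + (D+2)C'\big)\int|\nabla f|^2\, d\mu, \notag
\end{align}
which is \eqref{sls7} after factoring out $2$.

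The real work is the proof of \eqref{Rothaus}. I would derive it by a one-variable calculus argument. Set $\Phi(a) = \text{Ent}_\mu((g+a)^2) - \text{Ent}_\mu(g^2) - 2\int g^2\, d\mu$. Since $g$ has mean zero, a direct differentiation (justified under the mild integrability that is built into the hypothesis that both sides of \eqref{sls5} and \eqref{sls6} are finite) gives $\Phi(0) \le 0$ and allows one to estimate $\Phi'(a)$ and $\Phi''(a)$ in terms of moments of $g+a$. The cleanest presentation is to use the elementary scalar inequality
\begin{align}
(y+a)^2\log(y+a)^2 \le y^2\log y^2 + a^2\log a^2 + 2ay\log a^2 + 3(y+a)^2 - y^2 - 3a^2, \notag
\end{align}
integrate against $\mu$, and then subtract $\int(g+a)^2\, d\mu\, \log\int(g+a)^2\, d\mu$, using $\int g\, d\mu = 0$ to cancel the cross terms; the residual Jensen-type error absorbs into the $2\int g^2\, d\mu$ on the right. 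This is the same manipulation used in \cite[Proposition 5.1.3]{BGL} and the computational step I expect to be the main obstacle, since one must track the signs carefully to keep the constants sharp.

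Two remarks on robustness. First, the argument only uses \eqref{sls5} applied to the mean-zero function $g$, so no additional regularization of $f$ is needed beyond what is already available on the common form core. Second, the bound is tight in the flat case ($V \equiv 0$, $D=0$): one recovers the original logarithmic Sobolev inequality with coefficient $2(C + C')$, which is how it will be invoked downstream in deriving \eqref{mt5} from the defective LSI \eqref{gs807b} together with the Poincar\'e inequality \eqref{A363a}, with the quantitative bounds $C = a$, $D = 2\log M^{a+\ell(a)+\sigma+\ell(\sigma)}$ and $C' = \gamma_1 \le d_1 M^{e_1}$ propagating to give the promised estimate $c_1 \le \alpha M^\beta$ of item $e$ of Theorem \ref{thmM}.
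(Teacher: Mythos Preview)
The paper does not prove this proposition; it merely quotes it from \cite[Proposition 5.1.3]{BGL}. Your approach --- Rothaus' lemma applied to the centered function $g = f - \bar f$, followed by the defective LSI for $g$ and then the Poincar\'e inequality --- is exactly the standard argument found in that reference, and your chain of inequalities leading from \eqref{Rothaus} to \eqref{sls7} is correct, with the constants matching precisely: $2C + (D+2)C' = 2\big(C + C'((D/2)+1)\big)$.

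One small caveat: the scalar inequality you wrote down as the route to \eqref{Rothaus} is not quite the standard formulation, and you should check it carefully (or simply cite Rothaus' lemma directly, as the paper itself does via \cite{Rot5}). The cleaner way to prove \eqref{Rothaus} is by homogeneity to reduce to $\int g^2\,d\mu = 1$ and then study $\phi(a) = \text{Ent}_\mu((g+a)^2)$ as a function of $a$, showing $\phi(a) \le \phi(0) + 2$ via differentiation in $a$. But since both the paper and \cite{BGL} treat this as a black box, your level of detail is already more than what the paper provides.
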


We will apply this proposition to the defective logarithmic Sobolev inequality derived in Theorem \ref{thmDLSI3}
in combination with Aida's spectral gap estimate derived in Section \ref{secbasg}. As in both of those
inequalities, there  are  parameters that can be chosen according to needs in applications. 
We will use the  choices we made before to arrive at bounds of the simple form described 
in Theorem \ref{thmM}.
 
\bigskip
\noindent
\begin{proof}[Proof of Theorem \ref{thmM}] 
Items a. and b. in Theorem \ref{thmM} have been proved in Section \ref{sechyperp}.

For the proof of item c. we take $\mu = m_\psi$ in Proposition \ref{propbgl}.
Choose $a = \sigma = 2c_\nu b_\ka$ as we did  in \eref{Adist0}  (in Section \ref{secdist})
and take  $C= a$ in \eref{sls5}.  Then \eref{sls5} holds with 
\begin{align}
D = \log M^{2(2a + (c\log 3)/b_\ka)}   
\end{align}
by  \eref{A490a} and Section \ref{secAsg}.
In \eref{sls6}   we may, by Theorem \ref{thmspecbd},  take $C' = \gamma_1$, 
 where $\gamma_1 \le d_1 M^{e_1}$ and $d_1, e_1$ are given by \eref{A489g}. 
Proposition \ref{propbgl} then assures that
\begin{align}
Ent_{m_\psi}(f^2) \le 2c_1 \int_X |\n f|^2 dm_\psi
\end{align}
with
\begin{align}
c_1 \le a + \gamma_1( 1 +  \log M^{(2a + (c\log 3)/b_\ka)}).          \label{T10}
\end{align}

Item d. of Theorem  \ref{thmM} follows from item c. and the Rothaus-Simon theorem
 \cite{Rot1}, \cite{Simon1976}. A direct proof of the Rothaus-Simon theorem may be found in
  \cite[Theorem 2.5 ]{G1993}  or \cite[Proposition 5.1.3]{BGL}.

In item e. of Theorem \ref{thmM} the form of the bound on $c_1$ can be derived from \eref{T10}
by overestimating again,  
using $M \ge 1$, to find $1 +  \log M^{(2a + (c\log 3)/b_\ka)} \le M^{(2a + (c\log 3)/b_\ka)}$ and
therefore
\begin{align}
c_1& \le a +  d_1M^{e_1} M^{(2a + (c\log 3)/b_\ka)}     \notag \\
&\le (a+d_1) M^{e_1} M^{(2a + (c\log 3)/b_\ka)}. \notag\\
&=\alpha M^\beta
\end{align}
where $ \alpha = a + d_1$ and $ \beta = e_1 + (2a + (c\log 3)/b_\ka)$. The constants $d_1$ and $e_1$
are defined in \eref{A489g} and depend only on $c,\nu$ and $\kappa$.
This proves item e. of Theorem \ref{thmM} .
\end{proof}

\begin{remark} \label{remsg2} {\rm  The spectral gap for $m_\psi$ listed in 
item d. of Theorem \ref{thmM} is derived 
from the logarithmic Sobolev inequality \eref{mt5} described in item c..  
But our procedure for deriving \eref{mt5} includes deriving first the Poincar\'e inequality \eref{A363a} 
for $m_\psi$. The constant $\gamma_1$ in \eref{A363a} is much smaller than the Sobolev constant 
$c_1$ as one can see from \eref{T10}.  Therefore  we have actually a  
smaller Poincar\'e constant than that derived from $c_1$. 
 In particular the spectral gap is at least $d_1^{-1}M^{-e_1}$.
}
\end{remark}

\section{Examples and Applications}

\subsection{Consecutive ground state transforms}  \label{secgst}

If the potential in a Schr\"odinger operator is a sum of two potentials then the ground state 
transformation may factor into two ground state transformations, one for each potential, 
in the following sense.

\begin{lemma} \label{lemcgst} 
$($Consecutive  ground state transformations$)$. 
Suppose that $m$ is a smooth measure on a Riemannian manifold and that
$V_1$ and $V_2$ are two potentials. 

Assume that $H\equiv \n^*\n + V_1 + V_2$ has a unique ground state $\psi \in L^2(m)$. 
Denote by $U:L^2(m_\psi) \to L^2(m)$ the ground state  transformation.

Assume further that 
$H_1 \equiv \n^*\n +V_1$ has a unique ground state $\psi_1 \in L^2(m)$. Let $U_1:L^2(m_{\psi_1}) \to L^2(m)$ be the ground state transformation for  $H_1$.
 Further, suppose that the Schrodinger operator $\n_{m_{\psi_1}}^* \n + V_2$ has a unique 
ground state $\psi_2 \in L^2(m_{\psi_1})$. Denote by 
$U_2: L^2(\psi_2^2 m_{\psi_1}) \to L^2(m_{\psi_1})$ the ground state transformation. 
Then
\begin{align}
\psi &= \psi_1 \psi_2.          \label{E5}\\
m_\psi &= \psi_2^2 m_{\psi_1}.         \label{E6} \\ 
U &= U_1U_2.       \label{E7}
\end{align}
That is, the first line below factors as the second line.
\begin{align}
&L^2(m)  \xleftarrow{\ \ \ \ \ \ \ \ U\ \ \ \ \ \ \ \ \  } L^2( m_{\psi})       \label{E8}\\
 &L^2(m)\xleftarrow{U_1} L^2(m_{\psi_1})\xleftarrow{U_2}L^2(\psi_2^2 m_{\psi_1})  \label{E9}
\end{align}
\end{lemma}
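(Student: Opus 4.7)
The plan is to first verify \eqref{E5} by showing that $\psi_1\psi_2$ is a positive normalized eigenfunction at the bottom of the spectrum of $H$, and then to derive \eqref{E6} and \eqref{E7} as direct consequences.

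First I would record that $m_{\psi_1} = \psi_1^2 m$ and observe the measure identity $(\psi_1\psi_2)^2\, m = \psi_2^2\, m_{\psi_1}$, so that normalization of $\psi_1\psi_2$ in $L^2(m)$ is equivalent to normalization of $\psi_2$ in $L^2(m_{\psi_1})$, which holds by hypothesis. Positivity a.e. of $\psi_1\psi_2$ follows from the a.e.\ positivity of both factors.

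The core step is to use the intertwining identity \eqref{gs709H} at the level of the first ground state transformation, namely $U_1^{-1}(H_1-\lambda_1)U_1 = \nabla_{m_{\psi_1}}^*\nabla$, where $\lambda_1$ is the ground state eigenvalue of $H_1$. Let $\lambda_2$ denote the ground state eigenvalue of $\nabla_{m_{\psi_1}}^*\nabla + V_2$, so that $(\nabla_{m_{\psi_1}}^*\nabla + V_2)\psi_2 = \lambda_2\psi_2$. Since $V_2$ commutes with the unitary multiplication operator $U_1$ (it is just multiplication by a function), applying $U_1$ to this eigenvalue equation and substituting the intertwining identity yields
\begin{equation}
(H_1 - \lambda_1)(U_1\psi_2) + V_2(U_1\psi_2) = \lambda_2(U_1\psi_2),
\end{equation}
which rearranges to $H(\psi_1\psi_2) = (\lambda_1+\lambda_2)(\psi_1\psi_2)$, since $U_1\psi_2 = \psi_1\psi_2$ and $H = H_1 + V_2$. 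Because $\psi_1\psi_2>0$ a.e.\ and $e^{-tH}$ is positivity preserving, any positive eigenfunction of $H$ must lie at the bottom of the spectrum, and uniqueness of the ground state $\psi$ then forces $\psi = \psi_1\psi_2$. This proves \eqref{E5}, and \eqref{E6} follows immediately since $m_\psi = \psi^2 m = (\psi_1\psi_2)^2 m = \psi_2^2\, m_{\psi_1}$.

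Finally, \eqref{E7} is a direct computation: for any $f \in L^2(\psi_2^2 m_{\psi_1}) = L^2(m_\psi)$, we have $(U_1 U_2)f = U_1(\psi_2 f) = \psi_1\psi_2 f = \psi f = Uf$. The main obstacle, such as it is, is the routine verification that $\psi_1\psi_2$ lies in the domain of $H$ rather than merely being a formal eigenfunction; but since the lemma is stated in the same partly informal spirit as Lemma \ref{lemgs1}, and since in the applications of interest the Schrödinger operators are essentially self-adjoint on a common domain (by Theorem \ref{thmesa2}), the eigenvalue manipulations above can be justified by working on appropriate cores or by noting that positive $L^2$ eigenfunctions of positivity-preserving semigroups generated by such operators automatically lie in the operator domains.
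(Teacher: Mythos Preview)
Your proof is correct and follows essentially the same approach as the paper: both use the intertwining identity $U_1^{-1}(H_1-\lambda_1)U_1 = \nabla_{m_{\psi_1}}^*\nabla$, exploit that multiplication by $V_2$ commutes with $U_1$, arrive at the eigenvalue equation $H(\psi_1\psi_2)=(\lambda_1+\lambda_2)(\psi_1\psi_2)$, and then invoke uniqueness of the ground state. The paper phrases the intertwining step slightly differently, composing both ground state transformations into a single operator identity and then applying it to the constant function $1$, but this is a cosmetic difference; your explicit remark that a positive eigenfunction of a positivity-preserving semigroup must lie at the bottom of the spectrum is a small clarification the paper leaves implicit.
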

     \begin{proof} Let us write $m_1 = \psi_1^2 m$ and $m_2 = \psi_2^2m_1 = \psi_2^2\psi_1^2 m$.
     If $\l_1$ is the bottom of the spectrum of $\n^*\n + V_1$ then
  \begin{align}
  U_1^{-1}(\n^*\n + V_1 - \l_1) U_1 = (\n)_{m_1}^*\n              \notag
  \end{align}
  by the definition of the ground state transformation for $\n^*\n + V_1$.
  Since $U_1$ is a multiplication operator it commutes with multiplication by $V_2$.
Therefore
\begin{align}
U_1^{-1}(\n^*\n + V_1+V_2- \l_1) U_1 = (\n)_{m_1}^*\n + V_2.        \label{E20}
\end{align}
By the definition of $U_2$ we have
\begin{align}
U_2^{-1}\((\n)^*_{m_1} \n +V_2-\l_2\)U_2 = (\n)_{m_2}^* \n. \label{E12}
\end{align}
where $\l_2 = \inf$ spectrum $(\n)^*_{m_1} \n +V_2$.
Insert \eref{E20} into \eref{E12} to find
\begin{align}
(U_1U_2)^{-1} \(\n^*\n + (V_1 + V_2) - (\l_1 + \l_2)\) U_1U_2  =   (\n)_{m_2}^* \n.   \notag
\end{align}
Apply this identity to the function identically one, which is a unit vector in $L^2(m_2)$. 
The right hand side is zero while $U_1U_2 1 = \psi_1\psi_2$. Therefore
\begin{align}
 \(\n^*\n + (V_1 + V_2) - (\l_1 + \l_2)\) \psi_1\psi_2 =0.             \label{E15}
\end{align}
Since  $\n^*\n + (V_1 + V_2)$ has a unique ground state $\psi$, and $\psi_1\psi_2$ is a positive normalized
function in $L^2(m)$ satisfying \eref{E15},  it follows that $\psi_1 \psi_2 = \psi$ and $\l_1 + \l_2 = \l$ and 
$U_1U_2 f= \psi_1\psi_2 f = \psi f = Uf$.  This proves \eref{E5} - \eref{E9}.
\end{proof}
      \begin{remark} {\rm  In case $X = \R^n$ and $m$ is Lebesgue measure 
      then $\n^*\n = -\Delta$ and we have the usual Schr\"odinger operator in the hypothesis of this lemma.
}
\end{remark}

\begin{remark} \label{remKS}   {\rm The use of consecutive ground state transforms
is implicit in \cite[Theorem 1.4]{KS87}.
}
\end{remark}

\bigskip

\begin{example} \label{expos} 
 Let $m$ be a smooth measure $($not necessarily finite$)$ on a Riemannian manifold $X$.
Let $V$ be a measurable potential and suppose that  $ V= V_0 + V_1$ with $V_1 \ge 0$. Assume that 
the Schr\"odinger operator $\n^*\n + V_0$ has a unique (positive) ground state  $\psi_0$ whose 
ground state measure $m_{\psi_0}$ satisfies a LSI. Assume also that 
\begin{align}
\int_Xe^{\ka V_1} dm_{\psi_0} < \infty\ \ \ \text{for some}\ \ \ka >0. \label{sls20}
\end{align}
Then the Schr\"odinger operator $\n^*\n +V$ has a unique (positive) ground state $\psi$ and the
ground state measure $m_\psi$ satisfies a LSI.
\end{example}
    \begin{proof}  Since $V_1 \ge 0$ and \eref{sls20} holds, the condition  \eref{mt2} holds for $V_1$ and 
    the  measure $m_{\psi_0}$.
    We can apply Theorem \ref{thmM} 
    to find a ground state $\psi_1$ for 
    the Schr\"odinger operator  $\n^*_{m_{\psi_0}} \n + V_1$ and the  ground state 
    measure $\psi_1^2 dm_{\psi_0}$
    satisfies a LSI.  By Lemma \ref{lemcgst}      
     the function     $\psi \equiv \psi_1 \psi_0$ is the ground state of the 
     Schr\"odinger operator $\n^*\n + V$. Moreover
    $m_\psi = \psi^2 dm= \psi_1^2 \psi_0^2 dm = \psi_1^2 dm_{\psi_0}$. Therefore $m_\psi$ satisfies a LSI.
    \end{proof}

\subsection{Gaussian precision} \label{secgp}

The two 
quadratic equations  \eref{W850} and \eref{L313} determine intervals of 
Lebesgue indices for which various moment bounds and hypercontractive
bounds hold. \eref{W850} is key in case the potential is positive and \eref{L313} is key in case the potential
is negative.
In this section we will show that the intervals of validity of these bounds are exact for 
Gaussians and therefore the intervals determined by these peculiar quadratic equations 
are not  just artifacts  of the proof.

\subsubsection{Negative potentials} \label{secnp}

Corollary \ref{corhb2} 
 shows that the semigroup $e^{-t(\n^*\n +V)}$ is bounded from $L^q(m)$ to $L^p(m)$
if the Dirichlet form for $m$ satisfies a logarithmic Sobolev inequality and if $t, q, p$ 
and $\|e^{-V}\|_{L^\nu(m)}$ are suitably related. There is, in addition, a surprising restriction on the 
allowed range 
 of $q$ and $p$, unlike in the usual hyperboundedness theorems. The restriction is determined by the
quadratic equation \eref{L313m}, whose two roots $q_0, p_0$ are conjugate indices. Boundedness,
$\|e^{-t(\n^*\n + V)}\|_{L^q(m)\to L^p(m)} < \infty$, is  
 assured by the corollary   
for large $t$, 
 but only in case $q_0 \le q \le p \le p_0$. In particular, the corollary shows that
  $e^{-t(\n^*\n +V)}$ is a strongly continuous semigroup
in $L^p(m)$ if $q_0 \le p \le p_0$.  
We will give an example  in which the latter fails
 if $p \notin [q_0,p_0]$. 
 
 Let
\begin{align}  
m = \gamma = \pi^{-1/2} e^{-x^2} dx, \ \ V(x) = - ax^2,\ \ a >0, \ \ \text{and}\ \  H = \n^*\n + V \label{gp3a}
\end{align}
Then \eref{mt1} holds with $ c = 1/2$,  \cite{G1}.

\begin{theorem} \label{thmgp1}   Let $\nu > 1$. Define $q_0$ and $p_0$ by \eref{L313q} with $c = 1/2$. 
Let $p_1 > p_0$. Then there exists  a real number $a >0$  
 such that
\begin{align}
&\int_\R e^{-\nu V} d\gamma  < \infty\ \ \ \ \ \ \ \text{and}       \label{gp4}     \\
&e^{-tH} g  \notin L^{p_1}(\gamma)\ \  \text{for some}\ \ 
                               g \in L^{p_1}(\gamma)  \text{ and some}\  t >0. \label{gp5}
\end{align}
In particular  $e^{-tH}$ does not operate as a strongly continuous semigroup in $L^{p_1}(\gamma)$.
\end{theorem}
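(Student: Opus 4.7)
The strategy is to compute $e^{-tH}$ explicitly via the ground state transformation and Mehler's formula, applied to the constant function $g \equiv \mathbf 1$ (which lies in every $L^{p_1}(\gamma)$ since $\gamma$ is a probability measure), and to show that for $a$ chosen appropriately $e^{-tH}\mathbf 1$ eventually leaves $L^{p_1}(\gamma)$. The whole argument reduces to two quadratic equations.

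First I would find the ground state of $H = -\partial^2 + 2x\partial - ax^2$ by seeking $\psi(x) = Ce^{\alpha x^2}$; substitution into $H\psi = \lambda_0 \psi$ produces $4\alpha^2 - 4\alpha + a = 0$, whose only root in $(0,1/2)$ (so that $\psi \in L^2(\gamma)$) is $\alpha_0 = (1-\sqrt{1-a})/2$, with eigenvalue $\lambda_0 = -2\alpha_0 = \sqrt{1-a}-1$. The ground state measure $m_\psi = \psi^2\gamma$ is then Gaussian with density $\sqrt{\kappa/\pi}\,e^{-\kappa x^2}$ where $\kappa := \sqrt{1-a}$, and its Dirichlet form operator $\hat H = -\partial^2 + 2\kappa x\partial$ admits the Mehler representation
\begin{equation*}
(e^{-t\hat H}f)(x) = \int_\R f\bigl(e^{-2\kappa t}x + s_t y\bigr)\,d\gamma(y),\qquad s_t^2 = (1-e^{-4\kappa t})/\kappa.
\end{equation*}

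Applying this formula to the Gaussian $\psi^{-1}$ reduces to a one-dimensional Gaussian integral (completion of squares) giving
\begin{equation*}
(e^{-t\hat H}\psi^{-1})(x) = C^{-1}(1+\alpha_0 s_t^2)^{-1/2}\exp(\gamma_t x^2),\qquad \gamma_t = \frac{-\alpha_0 e^{-4\kappa t}}{1+\alpha_0 s_t^2}.
\end{equation*}
The ground state identity $U^{-1}(H-\lambda_0)U = \hat H$ from Lemma \ref{lemgs1} then yields $e^{-tH}\mathbf 1 = e^{-t\lambda_0}\psi\cdot e^{-t\hat H}(\psi^{-1})$, which is again a pure Gaussian with quadratic exponent $\delta_t := \alpha_0 + \gamma_t$. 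Writing $w := 1-e^{-4\kappa t} \in [0,1)$ one computes $\delta_t = \alpha_0\, w\,(1+\alpha_0/\kappa)/(1+(\alpha_0/\kappa)w)$, a strictly increasing continuous function of $t$ with $\delta_0 = 0$ and $\delta_t \uparrow \alpha_0$ as $t \to \infty$. Since the Gaussian $A_t e^{\delta_t x^2}$ lies in $L^{p_1}(\gamma)$ iff $p_1\delta_t < 1$, the failure $e^{-tH}\mathbf 1 \notin L^{p_1}(\gamma)$ for some $t > 0$ holds precisely when $p_1\alpha_0 > 1$.

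Finally I would match parameters. Condition \eref{gp4} is $\int_\R e^{-(1-\nu a)x^2}\pi^{-1/2}dx < \infty$, i.e.\ $a < 1/\nu$; the blowup condition $p_1\alpha_0 > 1$ is equivalent to $a > 4(p_1-1)/p_1^2$. The open interval $(4(p_1-1)/p_1^2,\,1/\nu)$ is nonempty iff $p_1^2 - 4\nu p_1 + 4\nu > 0$, whose positive root is $2\nu + 2\sqrt{\nu(\nu-1)}$. With $c=1/2$ Lemma \ref{lemiv1} gives $p_0 = 2\nu(1+\sqrt{1-1/\nu}) = 2\nu + 2\sqrt{\nu(\nu-1)}$, so compatibility holds exactly when $p_1 > p_0$, matching the hypothesis; any $a$ in this open interval witnesses both \eref{gp4} and \eref{gp5}. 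The calculation is entirely explicit, so the main obstacle is only bookkeeping: the paper's convention $d\gamma = \pi^{-1/2}e^{-x^2}dx$ has variance $1/2$ rather than $1$, which puts factors of $2$ into the OU operator and into the Mehler width $s_t$; these must be tracked consistently so that the threshold emerges as the $p_0$ of Lemma \ref{lemiv1} and not a rescaled multiple.
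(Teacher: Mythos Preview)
Your proof is correct but takes a different route from the paper's. Both arguments track the quadratic exponent of a Gaussian under $e^{-tH}$, but the paper does this by solving the Riccati ODE $\dot s = 4s^2 - 4s + a$ directly (Lemma \ref{lemnp1}), choosing an initial function $g = e^{s_2 x^2}$ with $s_2$ just below $s_1 = 1/p_1$ so that $s(t)$ crosses the threshold $s_1$ in finite time. You instead take $g = \mathbf 1$, pass through the ground state transformation, and use Mehler's formula to obtain the closed form $\delta_t = \alpha_0 w(1+\alpha_0/\kappa)/(1+(\alpha_0/\kappa)w)$, which is simply the explicit solution of the same Riccati equation with initial value $0$.

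The paper's approach is more self-contained: it needs only that $f(t,x) = e^{b(t)+s(t)x^2}$ solves $\dot f = -Hf$ and basic ODE theory, without invoking either the ground state machinery of Section \ref{secgs} or the Mehler kernel. Your approach buys the simplest possible witness $g = \mathbf 1$, an explicit formula for the exponent at every time, and a cleaner identification of the critical parameter as the ground-state exponent $\alpha_0$; it also ties the example back to the consecutive ground state viewpoint used elsewhere in the paper. Both routes reduce the compatibility of \eref{gp4} and \eref{gp5} to the same quadratic inequality $p_1^2 - 4\nu(p_1-1) > 0$, i.e.\ $p_1 > p_0$. One small point worth making explicit in your write-up: the ground state identity $e^{-tH}\mathbf 1 = e^{-t\lambda_0}\psi\, e^{-t\hat H}(\psi^{-1})$ is rigorous here because $\psi^{-1} = C^{-1}e^{-\alpha_0 x^2}$ is bounded and hence lies in $L^2(m_\psi)$, so the unitary equivalence \eref{gs709H} applies without domain issues.
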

For the proof, we will first  show,  in the next lemma, that the family of functions 
 \begin{align}
 f(t, x) =e^{b(t) + s(t) x^2} ,\ \ t \ge0      \label{gp3}
 \end{align}
  includes  functions of the form $e^{-tH} g$.  
  
  \begin{lemma} \label{lemnp1}  Define $\gamma$ as in \eref{gp3a}. 
  Let $H= \n^*\n -ax^2$. Then 
\begin{align}
a. \ & \n^*\n g (x)= - g''(x) +2x g'(x),\ \ \ g \in C^{\infty}(\R)\cap \D(\n^*\n).      \label{gp6}\\
b.\  & (H f)(t,x) = \( -2s + \{-4s^2 +4s  -a\}x^2\) f(t,x)\ \     \text{if}\ \ s(t) < 1/2.   \label{gp7}\\
c.\  &\dot f = (\dot b +\dot s x^2) f .     \label{gp8}
\end{align}
In particular, if $s(t) < 1/2$ for $t$ in an interval $[0, t_1]$ and
\begin{align}
\dot s &=  4s^2 -4s  +a\ \   \text{on} 
 \ \ [0, t_1] \ \ \ \text{and}\ \    \label{gp9}\\
 \dot b &= 2s      \label{gp10}
\end{align}
then  $ \dot f  = -Hf $  and 
\begin{align}
e^{-tH} f(0) = f(t) \ \ \ \text{on}\ \ \ [0, t_1]      \label{gp12}
\end{align}
\end{lemma}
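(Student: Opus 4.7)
The proposal is to verify parts (a), (b), (c) by direct computation and then use these identities together with a uniqueness argument for the abstract Cauchy problem $\dot u = -Hu$ to conclude \eqref{gp12}.

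For part (a), I would apply the definition \eqref{div1} of $\n^*$. For the Gauss measure on $\R$ with density $\pi^{-1/2}e^{-x^2}$, integration by parts gives, for $h \in C_c^\infty(\R)$ and smooth $v$,
\[
\int_\R v h'\, e^{-x^2}\, dx = -\int_\R (v' e^{-x^2} - 2x v e^{-x^2}) h\, dx = \int_\R(-v' + 2xv) h\, d\gamma,
\]
so $\n^* v = -v' + 2xv$ on smooth vector fields. Applying this to $v = g'$ gives \eqref{gp6}. (This is the special case $c=1/2$ of the formula already used in Example \ref{exunb}.)

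For part (b), I would differentiate $f = e^{b+sx^2}$ twice in $x$: $f_x = 2sx f$, $f_{xx} = (2s + 4s^2 x^2) f$. Hence
\[
\n^*\n f = -f_{xx} + 2x f_x = (-2s - 4s^2 x^2 + 4s x^2) f,
\]
and adding $V f = -a x^2 f$ yields \eqref{gp7}. The restriction $s(t) < 1/2$ is imposed exactly so that $f(t,\cdot) \in L^2(\gamma)$, since $\int e^{2b+2sx^2}e^{-x^2}dx<\infty$ iff $s < 1/2$; under this restriction the smooth rapidly decaying function $f(t,\cdot)$ also lies in $D(\n^*\n)$ and hence in $D(H)$ (noting $V=-ax^2$ satisfies the hypotheses of Theorem \ref{thmesa2} for this $\gamma$). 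Part (c) is immediate from the chain rule applied to the explicit formula for $f$.

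Combining (b) and (c): if $\dot s = 4s^2 - 4s + a$ and $\dot b = 2s$ on $[0,t_1]$, then
\[
\dot f = (\dot b + \dot s\, x^2)f = (2s + (4s^2 - 4s + a)x^2)f = -Hf.
\]
The remaining point is to upgrade this pointwise identity to the operator identity $e^{-tH}f(0) = f(t)$. For this, the plan is: on $[0,t_1]$ the map $t \mapsto f(t,\cdot) \in L^2(\gamma)$ is continuously differentiable (differentiating under the integral, using $s(t)<1/2$ uniformly on $[0,t_1]$ to control $L^2$ norms and dominated convergence), its $L^2$ derivative equals the pointwise derivative $\dot f = -Hf(t)$, and $f(t)$ stays in $D(H)$. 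Then $u(t) := e^{-tH}f(0) - f(t)$ solves $\dot u = -Hu$ with $u(0)=0$ in the strong $L^2$ sense; self-adjointness of $H$ (from Theorem \ref{thmesa2}) gives $u \equiv 0$ on $[0,t_1]$, proving \eqref{gp12}.

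The main obstacle I expect is the technical justification that $f(t,\cdot)\in D(H)$ and that $t\mapsto f(t,\cdot)$ is strongly $C^1$ into $L^2(\gamma)$ with derivative $-Hf(t)$; once that is in hand, the semigroup identity follows from standard uniqueness for the Cauchy problem generated by a self-adjoint operator. Everything else is a one-line verification.
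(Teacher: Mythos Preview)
Your proposal is correct and follows essentially the same route as the paper for parts (a), (b), (c): integration by parts against the Gaussian density for (a), and direct differentiation of $f=e^{b+sx^2}$ for (b) and (c), with the domain condition $s(t)<1/2$ ensuring $f(t,\cdot)$ and $x^2 f(t,\cdot)$ lie in $L^2(\gamma)$.

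The only notable difference is in the final step. You propose to verify strong $C^1$ differentiability of $t\mapsto f(t,\cdot)$ into $L^2(\gamma)$ and then invoke uniqueness for the abstract Cauchy problem $\dot u=-Hu$. The paper instead uses a weak-form argument: for fixed $t_2\in(0,t_1)$ and $h\in C_c^\infty$, it differentiates the pairing $t\mapsto (f(t),e^{(t-t_2)H}h)$, uses self-adjointness of $H$ to move $H$ across, and obtains zero; constancy of the pairing then gives $(e^{-t_2 H}f(0),h)=(f(t_2),h)$ for all test $h$. This sidesteps precisely the technical point you flag as the ``main obstacle'': one never needs strong $C^1$ differentiability of $f$, only that $f(t)\in D(H)$ and that the scalar function $t\mapsto (f(t),g)$ is differentiable for smooth $g$, which is immediate from the explicit formula. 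Both arguments are standard and essentially equivalent, but the paper's pairing trick is a slightly cleaner way to close the loop.
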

\begin{proof} 
If $g \in C^{\infty}(\R)$ then the definition $(\n^*\n g, h)_{L^2(\gamma)} = \int_\R g'(x) h'(x) d \gamma(x)$,
valid for $h \in C_c^\infty(\R)$, together with an integration by parts proves \eref{gp6}. 
With $f$ given by \eref{gp3}, the identities \eref{gp7} and \eref{gp8} follow from  straight forward computations.
The technical issue as to whether $f(t,\cdot)$ is actually in the $L^2$ domain of $H$ is 
easily deduced from the fact 
that $f(t,\cdot)$ and $x\to x^2f(t,x) $ are in $L^2(\gamma)$ when $s(t) < 1/2$. The identity $\dot f = -Hf$ now follows from 
\eref{gp7} - \eref{gp10}.  The exponentiated version  of this, \eref{gp12}, follows by 
observing that, for $0 < t_2 < t_1$ and $h \in C_c^{\infty}(X)$, the
identity    
\begin{align*}
 (d/dt)(f(t), e^{(t - t_2)H}h)  &= (-H f(t), e^{(t - t_2)H}h) + (f(t), H e^{(t - t_2)H}h) \\
 &= 0\ \ \ \text{for}  \ \ 0 < t <t_2,
 \end{align*}
 implies that $ (f(t), e^{(t - t_2)H}h)$ is constant on $(0, t_2)$ and by strong continuity on $[0, t_2]$.
Therefore   $(e^{-t_2 H} f(0), h) =(f(0), e^{-t_2H}h) = (f(t_2), h)$ for all $h \in C_c^{\infty}(X)$.
Hence \eref{gp12} holds for all $t \in [0, t_1)$ and by strong continuity for all $t \in [0, t_1]$.
\end{proof}

\bigskip
\noindent
\begin{proof}[Proof of Theorem \ref{thmgp1}]   
We will construct a function $s(t)$  satisfying \eref{gp9}, 
which increases on $[0, t_1]$ and such that $f(0, \cdot) \in L^{p_1}(\gamma)$ but 
$f(t_1, \cdot) \notin L^{p_1}(\gamma)$. The theorem will then follow from \eref{gp12}.

 Since, by \eref{L313q},  $p_0$ is the larger zero of the upward opening parabola 
 $p\mapsto  p^2 -4\nu(p-1)$ and $p_1 > p_0$ 
 it follows that $p_1^2 - 4\nu(p_1 -1) >0$. Let $s_1 = 1/p_1$. Then 
\begin{align}
4s_1^2  - 4s_1 +1/\nu = \{s_1^2/\nu\}\(4\nu - 4\nu p_1 + p_1^2\)  >0.
\end{align}
Choose $\ep >0 $ so small that
\begin{align}
a \equiv \nu^{-1} - \ep &>0\ \ \ \text{and} \\
4s_1^2  - 4s_1 + \nu^{-1} - \ep &>0.
\end{align}
Then  $\int_\R e^{-\nu V} d\gamma = \pi^{-1/2}\int_\R e^{(\nu a -1)x^2} dx < \infty$ and therefore \eref{gp4} holds.
Choose $s_2 < s_1$ such that $4s^2  - 4s + \nu^{-1} - \ep >0$ on the interval $[s_2, s_1]$.  
Continuity of the quadratic polynomial ensures the existence of such a point $s_2$.
 Denote by $s(t)$ the solution to \eref{gp9}  with initial condition $s(0) = s_2$. 
Since $a = \nu^{-1} - \ep$, the right side of \eref{gp9} is strictly positive  for $s \in [s_2, s_1]$.
The solution will therefore increase and reach $s_1$ in a finite time $t_1 >0$. 
In particular  $s(t)\le s_1 =1/p_1 < 1/2$ for $0\le t \le t_1$.
Define $b(t) = \int_0^t 2s(t') dt'$. Then \eref{gp9} and \eref{gp10} are both satisfied 
on the interval $[0, t_1]$   and \eref{gp12}  holds  on this interval.  Let $g(x) = f(0, x)$.
Now 
\begin{align}
\int_\R (e^{b +sx^2})^{p} d\gamma(x) &= \pi^{-1/2}\ e^{pb} \int_\R e^{(ps -1)x^2}  dx  
                                            <\infty\ \ \text{iff}\ \ ps <1.                    \label{gp15}
 \end{align}  
 Since $p_1s(0) = p_1s_2 <p_1s_1 = 1$ we see from \eref{gp15} that  $g \in  L^{p_1}(\gamma)$.
On the other hand $p_1 s(t_1) = p_1 s_1 = 1$.  From \eref{gp15} we therefore find  that 
$f(t_1, \cdot) \notin L^{p_1}(\gamma)$. That is, $e^{-t_1 H} g \notin L^{p_1}(\gamma)$.
\end{proof}

\subsubsection{Positive potentials}\label{secpp}

 Theorem \ref{thmmp1} and Corollary \ref{corub1} give a sufficient condition on the 
 growth of $V$ that ensures that $\psi$ 
stays away from zero well enough that  $\int_X \psi^{-s} dm < \infty$  
 for all $s$ in an interval $(0, s_0)$. (cf \eref{W710a}.)
      We will show here, by example,  that this interval, determined by the quadratic equation \eref{W850},
       is not just an artifact of 
 the proof, and is close to necessary  in the sense that if $s >s_0$ then there is a potential
 $V$ such that $\|e^V\|_\ka < \infty$ while $\|\psi^{-1}\|_s = \infty$.

\begin{notation}     {\rm  For a real number $\w >0$ the Hamiltonian
\begin{align}
H_\w = - d^2/dx^2 + \w^2 x^2      \label{gp160}
\end{align}
has the normalized  ground state and, respectively, ground state measure
\begin{align}
\phi_\w(x)  = (\w/\pi)^{1/4} e^{-\w x^2/2}, \ \ \ \ m_\w = (\w/\pi)^{1/2} e^{-\w x^2} dx, \label{gp161}
\end{align}
as can be easily verified. The Gaussian measure $m_\w$ satisfies, by \cite{G1}, the logarithmic Sobolev inequality
\begin{align}
Ent_{m_\w}(f^2) \le \w^{-1} \int_\R |\n f|^2 dm_\w.        \label{gp163}
\end{align} 
}
\end{notation}

     In the preceding subsection we perturbed the potential
$\w^2 x^2$ (for $\w = 1$) by adding a negative quadratic potential $V \equiv -a x^2$ to $H_\w$ and found 
that in the ground state representation of $H_\w$ the resulting perturbed 
semigroup  $e^{-t(\hat H_\w +V)}$ had pathological behavior outside the 
interval of validity $[q_0, p_0]$ allowed by Corollary \ref{corhb2}. 
      In the present section we will
add a positive quadratic potential $V \equiv ax^2$ onto $H_\w$  to show that, for any $\ka >0$, 
\eref{gp170} and \eref{gp171}
can both hold  for suitable $a$.

      We should note in this example that the new ground state measure
associated to the perturbed Hamiltonian $-d^2/dx^2 +(\w^2 +a)x^2$ has a smaller Sobolev coefficient,
$(\w^2 + a)^{-1/2}$, than the unperturbed ground state  measure, whereas the perturbation method
we are using will always produce a bigger Sobolev coefficient than the unperturbed one.

\begin{theorem}  Let $\w >0$. Denote by $\n^*\n$ the Dirichlet 
form operator for $m_\w$. 
Let $\ka >0$ and define $s_0$ as in \eref{W851g}. 
Suppose that $s > s_0$.  
Then there is a potential $V \equiv a x^2$ such that
\begin{align}
\|e^V\|_{L^\ka(m_\w)} < \infty                       \label{gp170}
\end{align}
while the ground state $\psi$ for  $\n^*\n +V$ in $L^2(m_\w)$ satisfies
\begin{align}
\|\psi^{-1}\|_{L^s(m_\w)} = \infty.    \label{gp171}
\end{align}
\end{theorem}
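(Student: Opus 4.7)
The plan is to compute everything explicitly, exploiting the fact that adding a quadratic potential to a harmonic oscillator yields another harmonic oscillator. First, I observe that $m_\omega$ satisfies the logarithmic Sobolev inequality \eqref{mt1} with $c = 1/(2\omega)$ by \eqref{gp163}, so the quadratic equation \eqref{W850} for this setup becomes $t^2 - 4\omega\kappa(t+1) = 0$, with positive root
$$s_0 \,=\, 2\omega\kappa\Bigl(1 + \sqrt{1 + \tfrac{1}{\omega\kappa}}\Bigr).$$
The ground state transformation of Lemma \ref{lemgs1} identifies $\nabla^*\nabla$ on $L^2(m_\omega)$ with $H_\omega - \omega$ on $L^2(dx)$ via $Uf = f\phi_\omega$; since multiplication by $V = ax^2$ commutes with $U$, the operator $\nabla^*\nabla + V$ on $L^2(m_\omega)$ is unitarily equivalent to $H_{\omega'} - \omega$ on $L^2(dx)$, where $\omega' := \sqrt{\omega^2 + a}$.

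Second, I will read off $\psi$ from this equivalence. The ground state of $H_{\omega'}$ on $L^2(dx)$ is $\phi_{\omega'}$ with eigenvalue $\omega'$, so the ground state of $\nabla^*\nabla + V$ on $L^2(m_\omega)$ is
$$\psi(x) \,=\, \phi_{\omega'}(x)/\phi_\omega(x) \,=\, (\omega'/\omega)^{1/4}\,\exp\!\Bigl(-\tfrac12(\omega' - \omega)x^2\Bigr),$$
which is manifestly normalized in $L^2(m_\omega)$ since $\int\psi^2\,dm_\omega = \int\phi_{\omega'}^2\,dx = 1$. With $\psi$ in hand, the two relevant quantities become elementary Gaussian integrals:
$$\int_\R e^{\kappa V}\,dm_\omega \,=\, (\omega/\pi)^{1/2}\int_\R e^{-(\omega - \kappa a)x^2}\,dx,$$
which is finite iff $a < \omega/\kappa$; and
$$\int_\R \psi^{-s}\,dm_\omega \,=\, (\omega/\omega')^{s/4}(\omega/\pi)^{1/2}\int_\R e^{-[\omega - s(\omega'-\omega)/2]\,x^2}\,dx,$$
which diverges iff $s(\omega' - \omega) \ge 2\omega$, i.e.\ iff $a = \omega'^2 - \omega^2 \ge 4\omega^2(s+1)/s^2$.

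The final step is simply to realize that both conditions can hold simultaneously. Concretely, I need $a$ in the interval
$$\frac{4\omega^2(s+1)}{s^2} \,\le\, a \,<\, \frac{\omega}{\kappa},$$
which is nonempty precisely when $4\omega\kappa(s+1) < s^2$, i.e.\ $s^2 - 4\omega\kappa s - 4\omega\kappa > 0$. But this is exactly the statement that $s$ exceeds the positive root of \eqref{W850} with $c = 1/(2\omega)$, namely $s > s_0$. So for any $s > s_0$ the interval is nonempty, and choosing any $a$ in it gives \eqref{gp170} and \eqref{gp171} at once.

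The main obstacle, such as it is, lies not in any single computation but in verifying that the threshold value of $a$ coming from the explicit Gaussian integral for $\|\psi^{-1}\|_s$ matches \emph{exactly} the threshold $a = \omega/\kappa$ governing $\|e^V\|_\kappa$ at the critical index $s = s_0$. This coincidence, which is what makes the example tight, is ultimately forced by the same quadratic \eqref{W850} appearing on both sides of the calculation, and it confirms that the peculiar restriction $s < s_0$ in Corollary \ref{corub1} is genuinely sharp rather than an artifact of the proof technique.
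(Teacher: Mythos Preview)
Your proof is correct and follows essentially the same approach as the paper: both compute the ground state explicitly as $\psi = \phi_{\omega'}/\phi_\omega$ via the ground state transformation, reduce the two conditions \eqref{gp170} and \eqref{gp171} to explicit Gaussian integrals with the thresholds $a < \omega/\kappa$ and $a \ge 4\omega^2(s+1)/s^2$, and then verify that the nonemptiness of the resulting interval is exactly the condition $s > s_0$ from the quadratic \eqref{W850}. The organization and intermediate algebra are nearly identical.
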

      \begin{proof} For $a >0$ let $\alpha =\sqrt{\w^2 +a}$. Using the consecutive ground state
      theorem of Section \ref{secgst} we can compute the ground state $\psi$ for $\n^*\n + V$ by the ratio
      $\psi = \phi_\alpha/\phi_\w$. Thus 
      \begin{align}
      \psi(x) = (\alpha/\w)^{1/4} e^{(\w - \alpha)x^2/2}.       \label{gp172}
      \end{align}
      Hence
      \begin{align}
      \int_\R \psi^{-s} dm_\w = const. \int_\R e^{\(\frac{s(\alpha - \w)}{2}\  -\w\)x^2} dx.    \label{gp140a}
      \end{align}
This integral will be infinite if and only if the coefficient of $x^2$ is non-negative.
 That is, if and only if $s\alpha \ge (s+2)\w$. Squaring, we find the equivalent condition
 $s^2( \w^2 +a) \ge (s^2 + 4s + 4)\w^2$, and, equivalently, $s^2 a \ge 4(s+1) \w^2$, and, equivalently,
 $a\w^{-2} \ge 4(s^{-1} + s^{-2})$.     Thus the integrals in \eref{gp140a}
 are infinite if and only if $a\w^{-2} \ge 4(s^{-1} + s^{-2})$. 
  Now $\int_\R e^{\ka V} dm_\w  = const. \int_\R e^{(\ka a - \w) x^2} dx$, which is
        finite if and only if $a <\ka^{-1} \w$. That is, if and only if $ a\w^{-2} < (\w\ka)^{-1}$.
        Therefore  \eref{gp170} and \eref{gp171} both hold for some $a >0$ if and only if
    $4(s^{-1} + s^{-2}) <  (\w\ka)^{-1}$.

   Comparing \eref{gp163} with \eref{mt1} we see that $\w^{-1} = 2c$. Hence the equation  \eref{W850}
   for $s_0$ may be written $t^2 - 4\w\ka(t+1) = 0$. Therefore $(\w\ka)^{-1} = 4(s_0^{-1} +s_0^{-2})$.
   Since $s >s_0$ it follows that   $(\w\ka)^{-1} > 4(s^{-1} +s^{-2})$.  
      \end{proof}

\begin{remark}\label{remA4.13} {\rm  In  \cite[Remark 4.13 and Lemma 5.5]{Aida2001} 
Aida showed  that when $m$ is Gaussian 
it is sufficient for $\int e^{\ep V} dm < \infty$ for some $\epsilon >0$ 
(plus some conditions on the negative part of $V$) in order
for $\psi^{-1}$ to be in $L^{p}(m)$ for some $p > 0$.
}
\end{remark}

\subsection{Eckmann's theorem} \label{secEck}
   
 We  apply our techniques in this section to prove intrinsic hypercontractivity for  
 the one   dimensional Schr\"odinger operator 
\begin{align}
H\equiv - d^2/dx^2 + V.        \label{E200}
\end{align}
J.-P. Eckmann, \cite{Eck74}, described a class of potentials $V$ on $\R$ for 
  which the ground state measure of $H$ 
  satisfies a defective logarithmic  Sobolev inequality. 
  We will  derive a version of Eckmann's  theorem by a method 
  that illustrates  how to combine use
  of the Bakry-Emery criterion with Theorem \ref{thmM}, the main perturbation theorem of this paper.

 Suppose that $F$ is a continuous real valued function on $\R$ such that $dm \equiv e^{-2F} dx$ 
 is a probability measure. The Bakry-Emery criterion, \cite[Corollary 5.7.2]{BGL}, assures that 
  $m$ satisfies a logarithmic  Sobolev inequality if $F$ is uniformly  convex  on $\R$.
  Given a potential $V$,  we will construct  a uniformly convex function $F$ such that $e^{-F}$ is 
an approximate ground state, in some sense, for $H$.  
 We then use the  WKB identity \eref{W7} to produce a potential $W$ from $F$, whose ground 
 state is exactly $e^{-F}$. The probability measure $m \equiv e^{-2F} dx$ is therefore hypercontractive,
 but is the ground state measure for $W$, not $V$. We then apply our perturbation theorem, 
 Theorem \ref{thmM}, to the pair $m, V-W$ 
  to find another ground state measure satisfying a LSI and which, 
  by the consecutive transformation method  in Section \ref{secgst},   is exactly 
  the ground state measure for $V$. 

The notions of  intrinsic supercontractivity, \cite{Ros}, and intrinsic ultracontractivity,  \cite{DS84},
 are closely  related to intrinsic hypercontractivity and 
 have a large literature using very different techniques
from Eckmann's  and this paper's. See Remark \ref{reminap} for further discussion.

\bigskip

The following theorem is stated for an even potential for ease in reading. It's minor extension 
to more general potentials is  explained in Remark \ref{rem7.1}. 

 \begin{theorem} \label{thmeck4} $($Eckmann's Theorem$)$. 
 Let $V \in C^1(\R)$ and assume that $V$ is even. Suppose that there are constants $a >0$ and $k >0$
  and a number  $x_0 >0$ such that  $V(x) > 0$ when $x \ge x_0$ and
  \begin{align}
 &a.\ \  (d/dx) \sqrt{V(x)} \ge a\ \ \ \ \ \text{when}\ \ x \ge x_0 \ \   (\text{Eckmann's\  condition})   \label{E316}\\ 
 &b.\ \   (d/dx) V(x) \le k V(x)\ \      \text{when}\ \ x \ge x_0.          \label{E311b}
\end{align}
Then 
\begin{align}
-(d^2/dx^2) + V           \label{E312}
\end{align}
is bounded below. The bottom of the spectrum belongs to a unique positive ground state $\psi$. The
ground state measure $m_\psi = \psi^2 dx$ satisfies a logarithmic Sobolev inequality.
 \end{theorem}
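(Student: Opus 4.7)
The plan is to reduce to Theorem \ref{thmM} via the consecutive ground state transformation (Lemma \ref{lemcgst}). I will build a uniformly convex even $F \in C^2(\R)$ whose WKB potential $W := (F')^2 - F''$ matches $V$ up to a controlled remainder $\tilde V := V - W$; the Bakry--Emery criterion will provide a LSI for $m := Z^{-1}e^{-2F}dx$, and Theorem \ref{thmM} applied to $(m, \tilde V)$ will then yield a LSI for the ground state measure of $\nabla_m^*\nabla + \tilde V$, which by Lemma \ref{lemcgst} coincides with $m_\psi$.

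\textbf{Construction of $F$.} By Eckmann's condition \eref{E316}, $\sqrt{V(x)} \ge a(x - x_0) + \sqrt{V(x_0)}$ for $x \ge x_0$, so $\sqrt{V(x)}/x \to a$. Choose $x_1 \ge x_0$ with $\sqrt{V(x_1)}/x_1 \ge a/2$, and let $F$ be the even function with $F'(x) = \mathrm{sgn}(x)\sqrt{V(x)}$ for $|x| \ge x_1$ and $F'(x) = (\sqrt{V(x_1)}/x_1)\,x$ for $|x| \le x_1$, mollified near $\pm x_1$ so that $F \in C^2$. Then $F'' \ge \rho := a/2$ everywhere: on the outer region by Eckmann, on the inner region by the choice of $x_1$. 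Since $F$ grows at least quadratically, $Z := \int e^{-2F}dx < \infty$, and $m := Z^{-1}e^{-2F}dx$ is a probability measure which, by the Bakry--Emery criterion (\cite[Corollary 5.7.2]{BGL}), satisfies \eref{mt1} with $c = 1/\rho$.

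\textbf{Verifying the hypotheses of Theorem \ref{thmM} for $(m,\tilde V)$.} Set $W = (F')^2 - F''$, so by the WKB relation \eref{W7} the function $e^{-F}$ is an eigenfunction of $-d^2/dx^2 + W$ with eigenvalue $0$. Then $\tilde V = F''$ on $\{|x| \ge x_1\}$ and is bounded on $[-x_1,x_1]$. Since $F'' \ge \rho > 0$, the negative part of $\tilde V$ is bounded, so $\|e^{-\tilde V}\|_{L^\nu(m)} < \infty$ for every $\nu > 0$, in particular for some $\nu > 2c$. For the upper tail, hypothesis \eref{E311b} gives
\begin{align*}
F''(x) \;=\; \frac{V'(x)}{2\sqrt{V(x)}} \;\le\; \frac{k}{2}\,F'(x), \qquad x \ge x_1.
\end{align*}
Fix any $\kappa \in (0, 8/k^2)$ and set $H(x) := (k\kappa/2)F'(x) - 2F(x)$; using $F' \ge a(\,\cdot\, - x_0)$,
\begin{align*}
H'(x) \;\le\; \frac{k^2\kappa - 8}{4}\,F'(x) \;\le\; -\delta\,a\,(x - x_0), \qquad x \ge x_1,
\end{align*}
for some $\delta > 0$. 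Integration yields $H(x) \to -\infty$ at least quadratically, and $\kappa F''(x) - 2F(x) \le H(x)$, so $\int_{x_1}^\infty e^{\kappa \tilde V(x)}\,dm(x) < \infty$. A symmetric estimate covers $x \le -x_1$ and the bounded interval contributes finitely, giving $\|e^{\tilde V}\|_{L^\kappa(m)} < \infty$. A simpler polynomial estimate shows $\tilde V \in L^p(m)$ for every $p < \infty$, which supplies the integrability hypothesis \eref{EU7} of Theorem \ref{thmns2} needed for essential self-adjointness.

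\textbf{Conclusion and main obstacle.} Theorem \ref{thmM} now produces a unique strictly positive ground state $\phi \in L^2(m)$ for $\nabla_m^*\nabla + \tilde V$, and $\phi^2\,dm$ satisfies a LSI. Apply Lemma \ref{lemcgst} with $V_1 = W$ and $V_2 = \tilde V$: by construction $\psi_1 := Z^{-1/2}e^{-F}$ is the normalized, strictly positive, $L^2(dx)$ ground state of $-d^2/dx^2 + W$, giving $m_{\psi_1} = m$; and $\phi$ is the ground state of $\nabla_{m_{\psi_1}}^*\nabla + \tilde V$. The lemma therefore identifies $\psi := \psi_1\phi$ as the ground state of $-d^2/dx^2 + V$ and $m_\psi = \phi^2\,dm$, which satisfies a LSI. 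The decisive step is the upper-tail integrability $\|e^{\tilde V}\|_{L^\kappa(m)} < \infty$: it is precisely the growth condition \eref{E311b} that forces $F'' \le (k/2)F'$ and thereby prevents $e^{\kappa F''}$ from overwhelming the Gaussian-type decay of $e^{-2F}$ for all $\kappa > 0$.
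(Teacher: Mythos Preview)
Your proof is correct and follows essentially the same route as the paper: construct $F$ with $F'=\sqrt{V}$ on the outer region and linear on the inner region, invoke Bakry--Emery for the LSI of $m=e^{-2F}dx$, observe that $V-W=F''$ on the outer region, use \eref{E311b} to get $F''\le(k/2)F'$ and hence $\|e^{\kappa(V-W)}\|_{L^1(m)}<\infty$ for $\kappa<8/k^2$, and finish with Theorem \ref{thmM} plus Lemma \ref{lemcgst}. Two cosmetic remarks: the claim ``$\sqrt{V(x)}/x\to a$'' is false in general (take $V(x)=x^4$); what is true and all you use is $\liminf\sqrt{V(x)}/x\ge a$, which suffices to pick $x_1$. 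Also, the paper skips the mollification and works with $F\in C^1$ and $F''$ defined a.e., which is enough for the Bakry--Emery criterion; your smoothing is harmless but unnecessary.
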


The proof depends on the following construction of an intermediate ground state, 
which will be computationally useful in applications. 
 \begin{lemma}\label{leminterm}  $($Intermediate state$)$. 
 Let $V \in C^1(\R)$ and assume that $V$ is even. Suppose that there is a constant $a >0$ such that
 Eckmann's condition \eref{E316} holds.
  Let
\beq
F_0(x) = \int_{x_0}^x \sqrt{V(s)} ds\ \ \ \ \ \text{for}\ x \ge x_0.  \label{E314}
\eeq
Let 
\beq
b = \sqrt{V(x_0)} /x_0       \label{E314b}
\eeq
 and define
\begin{align}
F(x) =
\begin{cases} &F_0(x)  + bx_0^2/2 , \ \ \ x\ge x_0 \\
& b x^2/2, \qquad \ \ \  0\le x < x_0.
\end{cases}                                                                                \label{E173a}
\end{align}
Extend $F$  to be even on $\R$.  
Then $F$ and $F'$ are continuous on $\R$ and
\begin{align}
\int_\R e^{-pF} dx < \infty\ \ \text{for all}\ \ p >0.      \label{E315e}
\end{align}
Let  $\psi_0 \equiv Z^{-1} e^{-F}$  be normalized in $L^2(\R, dx)$ and define  $m^F = \psi_0^2 dx$. Then 
$m^F$ satisfies the  logarithmic Sobolev inequality 
\beq
Ent_{m^F}(f^2) \le \frac{1}{\min(b,a)} \int_\R (f')^2 dm^F.        \label{E116}
\eeq
In particular the Sobolev constant $c_F$ for $m^F$ satisfies $2c_F \le 1/\min(b,a)$.
\end{lemma}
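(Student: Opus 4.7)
The plan is to verify the three assertions in order, with the logarithmic Sobolev inequality being the only substantive step. For continuity at the junction $x = x_0$ I would check directly that $F(x_0^-) = bx_0^2/2 = F(x_0^+)$, and that $F'(x_0^-) = bx_0 = \sqrt{V(x_0)} = F'(x_0^+)$, where the middle equality is precisely the defining relation \eref{E314b} for $b$; evenness of $F$ handles the junction at $-x_0$ and the origin, where $F$ is quadratic and hence smooth. For the integrability \eref{E315e}, integrating Eckmann's condition \eref{E316} once gives $\sqrt{V(x)} \ge \sqrt{V(x_0)} + a(x-x_0)$ for $x \ge x_0$, and integrating again shows that $F(x)$ grows at least quadratically at infinity, so $e^{-pF}$ decays faster than any Gaussian for every $p > 0$.

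For the LSI I would apply the Bakry--Emery criterion to the measure $m^F = e^{-W}\,dx/Z^2$ with $W = 2F$. On $(0, x_0)$ one has $F''(x) = b$, while on $(x_0, \infty)$ one has $F''(x) = (d/dx)\sqrt{V(x)} \ge a$ by \eref{E316}; evenness propagates these bounds to $\R$. Because $F'$ is continuous at $\pm x_0$, the jump of $F''$ across those points contributes no singular (delta-function) part, so $F'' \ge \min(b,a)$ almost everywhere on $\R$, equivalently $W'' \ge 2\min(b,a)$. The classical Bakry--Emery bound then delivers $Ent_{m^F}(f^2) \le (1/\min(b,a))\int(f')^2\,dm^F$, which is \eref{E116}.

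The main obstacle is the failure of $F$ to be $C^2$ at $\pm x_0$, which blocks a direct appeal to the smooth Bakry--Emery theorem as stated in \cite[Corollary 5.7.2]{BGL}. I would resolve this by mollification: convolve $F$ with a nonnegative even $C_c^\infty$ mollifier $\phi_\ep$ of mass one, obtaining $F_\ep = F*\phi_\ep \in C^\infty$. Differentiating under the integral, $F_\ep'' = F''*\phi_\ep \ge \min(b,a)$ pointwise, since the almost-everywhere lower bound on $F''$ is preserved by convolution with a nonnegative kernel. Classical Bakry--Emery then yields \eref{E116} for the smoothed measures $m^{F_\ep}$ with the optimal constant $1/\min(b,a)$. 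As $\ep \downarrow 0$, the $F_\ep$ converge to $F$ locally uniformly, and the uniform quadratic lower bound on $F_\ep$ (inherited from the earlier growth estimate on $F$, up to an $\ep$-independent constant) provides an integrable Gaussian-type dominant. Dominated convergence controls the energy side and the normalizing constants $Z_\ep \to Z$, while lower semicontinuity of relative entropy under weak convergence of the measures $m^{F_\ep}$ on bounded smooth test functions passes the inequality to the limit, yielding \eref{E116}.
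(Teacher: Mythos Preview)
Your proposal is correct and follows essentially the same approach as the paper: compute $F'$ and $F''$ piecewise, verify continuity of $F'$ at $x_0$ via the definition of $b$, deduce integrability from the quadratic lower bound on $F_0$ obtained by integrating Eckmann's condition (this is the paper's Sublemma \ref{lemqg2}), and then invoke the Bakry--Emery criterion with $F'' \ge \min(b,a)$. The only difference is that the paper applies Bakry--Emery directly, saying merely that ``$F'$ is continuous and $F''$ is bounded away from zero,'' whereas you add a mollification-and-limit argument to handle the failure of $F \in C^2$ at $\pm x_0$; your extra care is legitimate and the mollification argument is sound, though one could also appeal to the fact that $F'$ is absolutely continuous with $F'' \ge \min(b,a)$ a.e., which suffices for the one-dimensional Bakry--Emery inequality.
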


    For the proof we need the following sublemma.
 
 \begin{sublemma} \label{lemqg2} If Eckmann's condition \eref{E316} holds then
\beq
F_0(x) \ge      \sqrt{V(x_0)}(x-x_0) +(a/2)(x-x_0)^2 \ \ \text{for all}\ x \ge x_0.        \label{E315c}
\eeq
In particular
\beq
\int_{x_0}^\infty e^{-pF_0(x)} dx < \infty,\ \ \text{for all}\ \ p >0. \label{E315d}
\eeq
\end{sublemma}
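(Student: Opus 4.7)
The plan is straightforward: integrate Eckmann's condition twice and conclude the integrability from the resulting Gaussian-type decay.

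First I would use Eckmann's condition \eqref{E316} pointwise. Since $(d/ds)\sqrt{V(s)} \ge a$ on $[x_0,\infty)$, the fundamental theorem of calculus (applied to $\sqrt{V}$, which is $C^1$ where $V>0$) yields
\begin{equation}
\sqrt{V(s)} \ge \sqrt{V(x_0)} + a(s-x_0) \quad \text{for all } s \ge x_0. \notag
\end{equation}
Then I would integrate this inequality once more from $x_0$ to $x$ against $ds$. The definition \eqref{E314} of $F_0$ gives
\begin{equation}
F_0(x) = \int_{x_0}^x \sqrt{V(s)}\, ds \ge \int_{x_0}^x \left[\sqrt{V(x_0)} + a(s-x_0)\right] ds = \sqrt{V(x_0)}(x-x_0) + \frac{a}{2}(x-x_0)^2, \notag
\end{equation}
which is exactly \eqref{E315c}.

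For the second statement \eqref{E315d}, I would simply insert the lower bound \eqref{E315c} into the integrand. Since $-pF_0(x)$ is bounded above by a downward-opening quadratic in $x$, namely $-p\sqrt{V(x_0)}(x-x_0) - (pa/2)(x-x_0)^2$, the integrand $e^{-pF_0(x)}$ is dominated by a (shifted) Gaussian density with positive variance parameter $1/(pa)$. Integrability over $[x_0,\infty)$ is then immediate from the standard Gaussian integral, valid for any $p>0$ because $a>0$.

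There is no real obstacle here; the only point of care is that Eckmann's condition is stated as a differential inequality requiring $V>0$ on $[x_0,\infty)$ (as hypothesized in Theorem \ref{thmeck4}), so that $\sqrt{V}$ is $C^1$ there and the integration of the derivative is legitimate. Both steps are one-line calculus.
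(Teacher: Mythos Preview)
Your proposal is correct and follows essentially the same argument as the paper: set $u(s)=\sqrt{V(s)}$, integrate $u'\ge a$ once to get $u(s)\ge u(x_0)+a(s-x_0)$, then integrate again to obtain \eqref{E315c}, from which \eqref{E315d} is immediate by Gaussian comparison. The paper's proof is the same, just more terse.
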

    \begin{proof} Let $u(s) = \sqrt{V(s)}$ for $s \ge x_0$. Then $u'(s) \ge a$ by \eref{E316}
and therefore  $u(s) \ge u(x_0) + a(s-x_0)$. Hence
$F_0(x) = \int_{x_0}^x u(s) ds \ge  u(x_0)(x-x_0) + (a/2)(x-x_0)^2$ for $x \ge x_0$.  \eref{E315d} follows.
\end{proof}   

\bigskip
\noindent
\begin{proof}[Proof of Lemma \ref{leminterm}] $F$ is clearly continuous on $\R$. It will suffice to make the following computations just for $x \ge 0$. Since $F$ is bounded on $[0, x_0]$, \eref{E315e} follows from
\eref{E315d}. $e^{-F}$ is normalizable in $L^2(\R, dx)$ and $m_F$ is a probability measure.

The first two derivatives of $F$ are given by
 \begin{align}
F'(x) &= 
\begin{cases} &\sqrt{V(x)},  \ \ \ \ \ x > x_0 \\
&b x, \qquad   0\le x < x_0.
\end{cases}                        \label{E110a}\\ 
F''(x) &= 
\begin{cases} &(d/dx)\sqrt{V(x)}  \ \ \ \ x > x_0 \\
& b  \qquad \ \ \ 0\le x < x_0
\end{cases}           \label{E111a}
\end{align}
$F'$ extends continuously to $[0, \infty)$ 
by the definition of $b$.    
Moreover $F''(x) \ge \min(b,a)$ everywhere except possibly at $x_0$.
Since $F'$ is continuous and $ F''$ is bounded away from
zero we can  apply the Bakry-Emery theorem 
(see e.g. \cite[Corollary 5.7.2]{BGL})    
 to the normalized measure $m^F = Z^{-2} e^{-2F} dx$. Bakry-Emery's theorem assures that 
\beq
Ent_{m^F}(f^2) \le 2c \int_\R (f')^2 dm^F
\eeq
 with $c= 1/\rho$ if $2F'' \ge  \rho$. In view of \eref{E111a} and Eckmann's condition, \eref{E316}, 
 we have $2F'' \ge 2\min (b,a)$.  So we may take $\rho = 2\min (b,a)$ and \eref{E116} follows. 
\end{proof}

\bigskip
\noindent
\begin{proof}[Proof of Theorem \ref{thmeck4}]
  Define a potential $W$ on $\R$ by applying the WKB equation \eref{W7} to the function $F$ defined
  in Lemma \ref{leminterm},  putting
\begin{align}
W =  - F''(x) + | F'(x)|^2.                   \label{E112}
\end{align}
At $x = \pm x_0$ this should be interpreted as a weak derivative.
Then the state $\psi_0$, defined in Lemma \ref{leminterm}, is the ground state for 
the Schr\"odinger  operator $-d^2/dx^2 + W$ and $m^F$ is the ground state measure.
$W$ can be computed explicitly 
with the help of  \eref{E110a} and \eref{E111a} as follows.
 \begin{align}
W &= - F''(x) + | F'(x)|^2       \notag\\
&=
\begin{cases}&-(d/dx)\sqrt{V(x)}   +V(x), \ \ \ \ x \ge x_0 \\
&=-b +b^2 x^2  , \qquad \qquad \ \ \ \ 0 \le x < x_0.
\end{cases}                \label{E182a}
\end{align}
Therefore
\begin{align}
V -W &= 
\begin{cases}&(d/dx)\sqrt{V(x)} , \qquad \ \ \ \ \ x \ge x_0 \\
& b -b^2 x^2  + V(x), \ \ \ 0 \le x < x_0.
\end{cases}       \label{E185}
\end{align}
In accordance with the consecutive ground state procedure of Section \ref{secgst}, the ground state, $\psi$,
for $-d^2/dx^2 +V$ in $L^2(\R, dx)$ is the product 
 of the ground state $\psi_0$  with the relative ground 
 state $\psi_1 \in L^2(m^F)$, defined as the ground state of
$\n^*\n + V-W$, where $\n^*\n$ is the Dirichlet form operator for $m^F$. 
           It suffices therefore to show that the ground state measure $(m^F)_{\psi_1}$ satisfies 
a logarithmic Sobolev inequality. 

For this we only need to verify the two hypotheses \eref{mt2} of Theorem  \ref{thmM} 
for the perturbation $V-W$ with $m = m^F$, since we already know that $\n^*\n$ satisfies the LSI \eref{E116}. 
$V- W$ is bounded on $[0,x_0]$ and, by \eref{E316}, 
 is positive on $[x_0, \infty)$. Therefore $V-W$ is bounded below and
\beq
\int_\R e^{-\nu (V-W)} dm^F = Z^{-2}\int_\R e^{-\nu (V-W)} e^{-2F} dx < \infty\ \ \ \text{for all}\ \ \ \nu >0.      \label{E187}
\eeq
This verifies the second  of the two conditions  \eref{mt2}.  
 To verify the first condition we need to show that
$\int_{x_0}^\infty e^{\kappa(V-W)  - 2F} dx < \infty$ for some $\kappa >0$, since $V,W$ and $F$ are all even.
We see from \eref{E311b} that $(d/dx) \sqrt{V(x)} \le (k/2)  \sqrt{V(x)} $ for $x \ge x_0$ and
therefore, by \eref{E185}, we have  $V- W \le (k/2)  \sqrt{V(x)} $ for $x \ge x_0$.
Hence $\kappa (V-W) - 2F \le \ka (k/2)  \sqrt{V(x)}  - 2F(x)$ on $[x_0, \infty)$. But
\begin{align*}
(d/dx) \(\ka (k/2)  \sqrt{V(x)}  - 2F\) &\le \ka (k/2)^2   \sqrt{V(x)}   -  2\sqrt{V(x)} \\
&=\{\ka (k/2)^2 -2\}   \sqrt{V(x)} .
\end{align*}
 Therefore, for some constant $C_1$ we have 
  \begin{align*}
&\kappa (V- W) - 2F  \le  \int_{x_0}^x  \{\ka (k/2)^2-2\} \sqrt{V(y)} dy +C_1 \\
& = \{2-\ka (k/2)^2\} (-F_0(x))  + C_1.
\end{align*}
Hence, if $ \kappa (k/2)^2 < 2$   then, by \eref{E315d}, we find 
\begin{align}
\int_{x_0}^\infty e^{ \kappa (V- W) - 2F} dx  &\le \int_{x_0}^\infty e^{-\{2-\ka (k/2)^2\} F_0(x)  + C_1} dx \notag\\
&<\infty.      \label{E190}
\end{align}

By Theorem \ref{thmM},  $\n^*\n +(V-W)$ is bounded below, has a unique ground state $\psi_1 \in L^2(m^F)$
 and the ground state measure for $\n^*\n +(V-W)$ satisfies a logarithmic Sobolev
 inequality.  Since, by the consecutive ground state procedure of Section \ref{secgst},
 this is the ground state measure for   $-d^2/dx^2 + V$, the theorem is proved.
\end{proof}

\begin{corollary} \label{correlgs}
Denote by $m^F$ the intermediate measure  defined in Lemma \ref{leminterm}  
 and by $\n^*\n$ its Dirichlet form operator. Let $\psi_1$ be the ground 
state for $\n^*\n +(V-W)$ in $L^2(m^F)$. Then $\psi_1$ is in $L^p(m^F)$ for all $ p < \infty$. 
In particular if  $f \ge 0$ then
\begin{align}
\int_\R f dm_\psi < \infty  \ \  \text{if} \ \ \    \int_\R f^q dm^F < \infty \ \ \text{for some}\ \ q >1.
\end{align}   
\end{corollary}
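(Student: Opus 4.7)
The plan is to verify that the perturbation $V-W$ relative to the intermediate measure $m^F$ falls within the hypotheses of Corollary \ref{corpolbelow}, which already gives polynomial control of $\|\psi_1\|_{L^p(m^F)}$ in $p$; the second assertion will then follow by H\"older's inequality together with the factorization $dm_\psi = \psi_1^2\, dm^F$ furnished by the consecutive ground state lemma.

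First I would observe that $V-W$ is bounded below on $\R$. Indeed, by the explicit formula \eqref{E185} and Eckmann's condition \eqref{E316}, we have $V-W = (d/dx)\sqrt{V(x)} \ge a > 0$ for $|x| \ge x_0$, while on the compact interval $[-x_0,x_0]$ the expression $b - b^2 x^2 + V(x)$ is continuous and hence bounded. Second, I would note that $V-W \in L^{p_1}(m^F)$ for every $p_1 < \infty$: in the proof of Theorem \ref{thmeck4}, inequality \eqref{E190} already established that $\int_\R e^{\kappa(V-W)}\,dm^F < \infty$ for small enough $\kappa > 0$, and combined with the lower bound this yields exponential, and in particular all polynomial, moments of $V-W$ under $m^F$. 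Since $m^F$ satisfies the logarithmic Sobolev inequality \eqref{E116}, the three hypotheses of Corollary \ref{corbelow} (and hence Corollary \ref{corpolbelow}) are satisfied for the Schr\"odinger operator $\n^*\n + (V-W)$ acting in $L^2(m^F)$.

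Applying Corollary \ref{corpolbelow} then yields the polynomial bound
\begin{align*}
\|\psi_1\|_{L^p(m^F)} \le (p-1)^{c_F\sup(\lambda_1 - (V-W))}, \qquad p \ge 2,
\end{align*}
where $\lambda_1$ is the ground state energy of $\n^*\n + (V-W)$ and $c_F$ is the logarithmic Sobolev constant of $m^F$. In particular $\psi_1 \in \bigcap_{p<\infty} L^p(m^F)$. For the second assertion, the consecutive ground state identity in Section \ref{secgst} gives $\psi = \psi_0 \psi_1$, and hence $dm_\psi = \psi^2\,dx = \psi_1^2\,dm^F$. If $f \ge 0$ satisfies $\int_\R f^q\,dm^F < \infty$ for some $q > 1$, let $q'=q/(q-1)$ be its conjugate exponent. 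H\"older's inequality gives
\begin{align*}
\int_\R f\,dm_\psi = \int_\R f\,\psi_1^2\,dm^F \le \left(\int_\R f^q\,dm^F\right)^{1/q} \left(\int_\R \psi_1^{2q'}\,dm^F\right)^{1/q'},
\end{align*}
and the second factor is finite by the previous step. The only mildly delicate point is confirming that the exponential integrability \eqref{E190} of $V-W$ against $m^F$ implies membership in $L^{p_1}(m^F)$ for every $p_1 < \infty$; but since $V-W$ is bounded below, $|V-W|^{p_1} \le C_{p_1}(1 + e^{\kappa(V-W)})$ for a suitable constant, so this is immediate. No step presents a serious obstacle.
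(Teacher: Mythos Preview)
Your proof is correct and follows essentially the same approach as the paper's: observe that $V-W$ is bounded below (as already noted near \eqref{E187}), invoke Corollary \ref{corpolbelow} to get $\psi_1 \in \bigcap_{p<\infty} L^p(m^F)$, and then apply H\"older's inequality together with the identification $dm_\psi = \psi_1^2\,dm^F$. Your write-up is more explicit than the paper's in verifying the $L^{p_1}(m^F)$ hypothesis on $V-W$ needed for Corollary \ref{corbelow}/\ref{corpolbelow}, which the paper leaves implicit.
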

\begin{proof} As noted in the proof of \eref{E187} the potential $V-W$ for the relative ground state $\psi_1$
 is bounded below. It follows from Corollary  \ref{corpolbelow} 
  that $\psi_1 \in \cap_{p<\infty}L^p(m^F)$. Therefore, if $q >1$ and  $1/q + 1/p = 1$ then
  \begin{align}
  \int_\R f dm_\psi &=\int_\R f \psi_1^2 dm^F \notag \\
  &\le \|\psi_1^2\|_{L^p(m^F)} \(\int_\R f^q dm_F\)^{1/q} . \notag
  \end{align}
 \end{proof}

\begin{remark} \label{rem7.1} {\rm The restriction to an even potential can easily be removed. Suppose that
$V \in C_1(\R) $ and  that there are constants $a >0$ and $k >0$
  and a number  $x_0 >0$ such that  
 \begin{align}
 &a.\ \  (sgn\, x)\, (d/dx) \sqrt{V(x)} \ge a\ \  \text{when}\ \ |x| \ge x_0 \ \   (\text{Eckmann's\  condition})   \label{E316s}\\ 
 &b.\ \   (sgn\, x)\, (d/dx) V(x) \le k V(x)\ \      \text{when}\ \ |x| \ge x_0.          \label{E311s}
\end{align}
Then the conclusion of Theorem \ref{thmeck4} holds. The proof is the same if one takes into account
the change in signs on the negative half-line.
}
\end{remark}

 \bigskip
  
 Consider  a Schr\"odinger  operator  of the form
 \beq
 -d^2/dx^2 + V + V_1 \ \  \text{on}\ \  \R,          \label{E699}
 \eeq 
 in which  $V$ satisfies  the conditions of Eckmann's theorem, while  $V_1$ is merely measurable.
 If $m$ is the ground state measure for $-d^2/dx^2 +V$ then $m$ is hypercontractive by 
 Eckmann's theorem. 
 The consecutive ground state transformation method 
 together with Theorem \ref{thmM} can  in principle  be used to  show that the ground state 
 measure for the  full operator \eref{E699}  will be hypercontractive if $V_1$ satisfies 
 exponential bounds of the form \eref{mt2}.
  But Eckmann's theorem gives only indirect information about the Sobolev coefficient of 
  the measure  $m$.  
  In the next corollary we will establish the hypercontractivity of the ground state measure 
  of the operator   \eref{E699}, but   by 
  replacing $m$ by the explicit intermediate measure $m^F$, thereby getting 
  conditions on $V_1$  which are easily verified in 
  applications.

     \begin{corollary} \label{corpp5} 
Suppose that $V$ is a potential that  satisfies the conditions of Eckmann's theorem, 
\eref{E316} and \eref{E311b}.
Let $e^{-F}$ be the intermediate ground state for $V$, constructed in Lemma \ref{leminterm}. 
Denote by $c_F$ the Sobolev constant for the measure $m^F$. 
Let $V_1$ be a measurable potential such that
\begin{align}
\int_\R e^{-\nu_1 V_1} dm^F &< \infty\ \ \text{for some}\ \  \nu_1 > 2c_F \ \ \text{and}    \label{E701a}\\
\int_\R e^{\ka_1V_1} dm^F &< \infty\ \  \text{for some}\ \ \ka_1 >0.    \label{E700a}
\end{align}
Then the Schr\"odinger operator $-(d/dx)^2 + V+V_1 $ is bounded below, has a unique positive ground state
$\psi \in L^2(\R, dx)$ and the ground state measure $\psi^2 dx$ satisfies a logarithmic Sobolev inequality.
\end{corollary}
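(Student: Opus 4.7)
The plan is to reduce the claim to the main perturbation theorem (Theorem \ref{thmM}) applied to the intermediate measure $m^F$ of Lemma \ref{leminterm}, using the consecutive ground state transformation of Lemma \ref{lemcgst}. Write the Schr\"odinger operator as
\[
-d^2/dx^2 + V + V_1 \;=\; \bigl(-d^2/dx^2 + W\bigr) + \bigl((V-W) + V_1\bigr),
\]
where $W$ is the auxiliary potential from \eref{E182a} whose ground state is exactly $e^{-F}/Z$, so that $m^F$ is the associated ground state measure. The first ground state transformation turns $-d^2/dx^2 + W - \l_W$ into the Dirichlet form operator $\n^*\n$ on $L^2(m^F)$, which by Lemma \ref{leminterm} satisfies the LSI with constant $2c_F \le 1/\min(b,a)$. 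Applying Lemma \ref{lemcgst} with $V_1^{\text{lem}} := W$ and $V_2^{\text{lem}} := (V-W) + V_1$ reduces the problem to showing that $\n^*\n + \bigl((V-W)+V_1\bigr)$ on $L^2(m^F)$ satisfies the hypotheses of Theorem \ref{thmM}.

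First I would verify the negative exponential condition. By the proof of Theorem \ref{thmeck4}, the potential $V-W$ equals $(d/dx)\sqrt{V}$ on $\{|x|\ge x_0\}$ (where it is $\ge a > 0$ by Eckmann's condition) and is bounded on $[-x_0,x_0]$, so $V-W$ is bounded below and consequently $\int_\R e^{-\nu(V-W)} dm^F < \infty$ for every $\nu > 0$. Given the hypothesis \eref{E701a} that $\int e^{-\nu_1 V_1} dm^F<\infty$ for some $\nu_1 > 2c_F$, choose $\nu \in (2c_F,\nu_1)$ and apply H\"older's inequality with conjugate exponents $p = \nu_1/\nu > 1$ and $q = p/(p-1)$ to obtain
\[
\int_\R e^{-\nu\bigl((V-W)+V_1\bigr)}\,dm^F \;\le\; \bigl\|e^{-\nu(V-W)}\bigr\|_{L^q(m^F)}\bigl\|e^{-\nu V_1}\bigr\|_{L^p(m^F)}<\infty,
\]
confirming the second condition of \eref{mt2} with this $\nu > 2c_F$.

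Next I would verify the positive exponential condition. The bound \eref{E190} in the proof of Theorem \ref{thmeck4} shows that $\int_\R e^{\kappa(V-W)} dm^F < \infty$ whenever $0 < \kappa < 8/k^2$. Combining this with the hypothesis \eref{E700a} that $\int e^{\kappa_1 V_1} dm^F < \infty$ for some $\kappa_1 > 0$, another application of H\"older's inequality (take any $\kappa < \min(\kappa_1,8/k^2)$ and split via conjugate exponents matched to $\kappa_1/\kappa$ and to the maximum admissible value of the $(V-W)$--exponent) yields $\int_\R e^{\kappa((V-W)+V_1)} dm^F < \infty$ for some $\kappa > 0$. At this point both conditions of \eref{mt2} hold for the perturbation $(V-W)+V_1$ relative to $m^F$, so Theorem \ref{thmM} applies: $\n^*\n + (V-W)+V_1$ is essentially self-adjoint on $\D(\n^*\n)\cap L^\infty(m^F)$, bounded below, with a unique a.e.\ positive ground state $\psi_1 \in L^2(m^F)$, and the measure $\psi_1^2\, dm^F$ satisfies a logarithmic Sobolev inequality.

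Finally, Lemma \ref{lemcgst} identifies the product $\psi = \psi_0\psi_1$ as the unique positive normalized ground state of $-d^2/dx^2 + V + V_1$ in $L^2(\R,dx)$, and gives $dm_\psi = \psi_1^2\,dm^F$, which is the measure just shown to satisfy an LSI. This yields all three conclusions of the corollary. The only delicate step is the H\"older argument --- one must choose exponents so that the gain from the intermediate-measure bound $\int e^{\pm\nu(V-W)} dm^F<\infty$ (which holds for all $\nu>0$ in the negative case but only on a bounded window $\kappa < 8/k^2$ in the positive case) can be separated from the $V_1$ factor while keeping $\nu>2c_F$; the negative side is automatic since $V-W$ is bounded below, but the positive side requires the explicit window coming from Eckmann's growth hypothesis \eref{E311b}, which is precisely why that hypothesis on $V$ is inherited through the perturbation.
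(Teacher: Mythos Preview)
Your proof is correct and follows essentially the same route as the paper: decompose $V+V_1 = W + ((V-W)+V_1)$, apply Lemma \ref{lemcgst}, and verify \eref{mt2} for the perturbation $(V-W)+V_1$ relative to $m^F$ using \eref{E190} for the positive side. The one minor simplification you miss is that for the negative exponential the paper avoids H\"older entirely: since $V-W$ is bounded below, $e^{-\nu_1(V-W)}$ is bounded, so \eref{E705} follows from \eref{E701a} directly with $\nu=\nu_1$ (no need to drop to $\nu<\nu_1$).
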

      \begin{proof}  Let $W$ be the intermediate potential defined in \eref{E182a}. 
      Writing $V+ V_1 = W + (V+ V_1 - W)$, we may apply the consecutive groundstate 
  transformation method to realize the ground state measure for
$-d^2/dx^2 + V + V_1$ as the ground state  measure for $\n^*\n + (V+V_1 -W)$, 
where $\n^*\n$ is the Dirichlet form operator for $m^F$.  
By Theorem \ref{thmM} we need to show then that
\begin{align}
\int_\R e^{-\nu(V_1 +V-W)} dm^F &< \infty   \ \ \text{for some}\ \nu > 2c_F\ \  \text{and}  \label{E705}\\
\int_\R  e^{\ka(V_1 +V-W)} dm^F &< \infty \ \ \text{for some}\  \ka >0.          \label{E706}
\end{align}
But $V-W$ is bounded below, by \eref{E185} and \eref{E316}.  
Therefore \eref{E705}, with $\nu =\nu_1$,
 follows from \eref{E701a}.
 
 For the proof of \eref{E706} suppose that $\kappa_2>0$ and that \eref{E190} holds for this value of $\kappa$.
It was shown in \eref{E190}  
 that any $\ka_2 \in (0, 8/k^2)$ will do. 
Let $\ka = (1/2)\min(\ka_1, \ka_2)$. Then
\begin{align}
\int_\R  e^{\ka(V_1 +V-W)} dm^F \le\(\int_\R e^{2\ka V_1} dm^F\)^{1/2} \(\int_R e^{2\ka(V-W)} dm^F\)^{1/2}.
\end{align}
Since $2\ka \le \ka_1$ the first factor  on the right is finite by \eref{E700a}. Since $2\ka \le \ka_2$ the second factor is finite by \eref{E190}.
\end{proof}

\subsubsection{Second order intermediate state}

 The next theorem further illustrates use of the combination 
of Bakry-Emery  convexity followed by  our perturbation theorem.
An additional derivative is assumed for $V$
but a wider variety of growth conditions are permitted.  
This modification of Eckmann's theorem is based on a second order approximation 
in the WKB method which is  discussed by  A. Dicke  in  Appendix IV of  \cite{Simon1970}.
We use the second order WKB approximation to construct an intermediate ground state, whose
potential is closer to the given potential $V$ than that in the preceding method.

\begin{theorem} \label{thmeck5}  Let $V$ be an even function in $C^1(\R)$.
 Suppose that there  is a number $x_0 >0$ 
 such that, on $[x_0, \infty)$, $V \in C^2$ and $V >0 $.
 Define  $F_0$ by \eref{E314} again. Let
 \begin{align}
 g(x) &= (1/4) (d/dx) \log V(x)    \ \   \ \text{for}\ \ x \ge x_0. \label{E340d}
 \end{align}
 Suppose  that there is a  constant $a >0$ such that   Eckmann's condition \eref{E316} holds and also,
 with $F_0$ given by \eref{E314}, assume that
  \begin{align}
 &a.\ \  (d/dx) \(\sqrt{V(x)}  + g(x) \)\ge a\ \ \text{when}\ \ x \ge x_0 \ \    \label{E360}   \\
&b.\ \    g(x)^2 =o(F_0(x))   \ \ \text{as}\ x \to  \infty      \label{E317c} \\
 &c. \ \   |g'(x)| = o(F_0(x))\ \ \text{as}\ x \to \infty.    \label{E318c}
\end{align}
Then, over $\R$, 
\begin{align}
-(d^2/dx^2) + V           \label{E320}
\end{align}
is bounded below. The bottom of the spectrum belongs to a unique positive ground state $\psi$. The
ground state measure $m_\psi := \psi^2 dx$ satisfies a logarithmic Sobolev inequality.
\end{theorem}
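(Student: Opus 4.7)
The plan is to mirror the structure of the proof of Theorem \ref{thmeck4} but replace the first-order WKB approximant $e^{-F_0}$ by the second-order WKB approximant $V^{-1/4} e^{-F_0}$. Concretely, on $[x_0,\infty)$ I would set
\[
F(x) = F_0(x) + \tfrac14 \log\bigl(V(x)/V(x_0)\bigr) + C_0,
\]
where the additive constant $C_0$ is chosen so that $F$ matches a smooth convex extension on $[0,x_0]$. A convenient choice on $[0,x_0]$ is again a quadratic $bx^2/2$ with $b$ chosen to make $F'(x_0^-)=F'(x_0^+)=\sqrt{V(x_0)}+g(x_0)$ and $C_0$ adjusted so that the function values match at $x_0$; extend $F$ evenly to $\mathbb{R}$. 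On $[x_0,\infty)$ one has $F'(x)=\sqrt{V(x)}+g(x)$ and $F''(x)=(d/dx)(\sqrt{V}+g)(x)\ge a$ by hypothesis \eqref{E360}, while on $[0,x_0]$ one has $F''\equiv b$. So $F''\ge\min(a,b)$ a.e., and by Bakry--Emery the normalized probability measure $m^F:=Z^{-2}e^{-2F}dx$ satisfies a logarithmic Sobolev inequality with some explicit constant $c_F\le 1/\bigl(2\min(a,b)\bigr)$. (Integrability of $e^{-pF}$ for all $p>0$ follows from \eqref{E315c}, once one checks that the logarithmic correction $\tfrac14\log V$ does not destroy integrability; here Eckmann's original condition \eqref{E316} gives the coercivity needed.)

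Next, I would apply the WKB identity \eqref{W7} to construct the potential $W = -F'' + (F')^2$ for which $e^{-F}/Z$ is the ground state of $-d^2/dx^2+W$ on $\mathbb{R}$. A direct computation on $[x_0,\infty)$, using the key algebraic simplification
\[
2\sqrt{V}\,g \;=\; 2\sqrt{V}\cdot\frac{V'}{4V} \;=\; (\sqrt{V})',
\]
collapses the expansion of $(F')^2-F''$ and yields the remarkable identity
\[
V - W \;=\; g'(x) - g(x)^2\qquad\text{for }x\ge x_0,
\]
(while $V-W$ is bounded on $[0,x_0]$). This is precisely why a second-order WKB gain is available and why the hypotheses \eqref{E317c}--\eqref{E318c} are framed in terms of $g^2$ and $g'$ rather than $(\sqrt{V})'$.

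With the intermediate ground state in hand, the proof is reduced, via the consecutive ground state factorization of Lemma \ref{lemcgst}, to verifying the two exponential integrability conditions \eqref{mt2} of Theorem \ref{thmM} for the perturbation $V-W$ with respect to $m^F$. That is, I need
\[
\int_{\mathbb{R}} e^{-\nu(V-W)}\,dm^F<\infty\quad\text{for some }\nu>2c_F,
\qquad
\int_{\mathbb{R}} e^{\kappa(V-W)}\,dm^F<\infty\quad\text{for some }\kappa>0.
\]
Since $V-W$ is bounded on $[0,x_0]$ and $V-W=g'-g^2$ on $[x_0,\infty)$, both integrals reduce (by evenness) to estimating $\int_{x_0}^\infty e^{\pm\lambda(g'-g^2)-2F}dx$. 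Using $F\ge F_0+O(1)$, it suffices to show that $\lambda|g'-g^2|\le 2F_0 + O(1)$ eventually for the required $\lambda$; but the smallness conditions $g^2=o(F_0)$ and $|g'|=o(F_0)$ make $\lambda|g'-g^2|$ negligible compared with $2F_0$ at infinity for every $\lambda>0$, which is a much stronger statement than what is needed. Consequently both integrals converge for every $\nu>0$ and every $\kappa>0$.

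Theorem \ref{thmM} then supplies a unique positive ground state $\psi_1\in L^2(m^F)$ of $\nabla^*\nabla+(V-W)$ whose ground state measure $\psi_1^2\,dm^F$ satisfies a logarithmic Sobolev inequality. By the consecutive ground state identity \eqref{E6}, this measure is exactly the ground state measure for $-d^2/dx^2+V$, and $\psi=\psi_1\cdot Z^{-1}e^{-F}$ is the corresponding positive $L^2(\mathbb{R},dx)$ ground state, which is unique by Theorem \ref{thmU2}. The main obstacle I anticipate is purely bookkeeping: verifying the derivative identity $2\sqrt{V}g=(\sqrt{V})'$ and propagating the ``$o(F_0)$'' smallness hypotheses through the exponential integrals in a way that handles the possible vanishing of $V$ or irregularity at $x=\pm x_0$ (where $F'$ must be checked to extend continuously and $F''$ only in the weak sense); the rest of the argument is a direct adaptation of the first-order case of Theorem \ref{thmeck4}.
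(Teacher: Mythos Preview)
Your proposal is correct and follows essentially the same route as the paper. In particular, your definition $F(x)=F_0(x)+\tfrac14\log(V(x)/V(x_0))+C_0$ on $[x_0,\infty)$ is exactly the integral $\int_{x_0}^x(\sqrt{V}+g)\,ds+C_0$ the paper uses, the key cancellation $2\sqrt{V}\,g=(\sqrt{V})'$ yielding $V-W=g'-g^2$ is the same, and the verification of \eqref{mt2} via $|V-W|=o(F_0)$ together with $F\ge F_0+O(1)$ (which follows from $g\ge0$, a consequence of Eckmann's condition \eqref{E316}) matches the paper's argument.
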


Note that the condition  \eref{E317c} is  weaker than \eref{E311b} since the latter assumes that $g$
is bounded on $[x_0, \infty)$ while the former allows $g$ to be unbounded  
by virtue of  Sublemma \ref{lemqg2}.
    
\bigskip
\noindent
\begin{proof}[Proof of Theorem \ref{thmeck5}]  Let 
\begin{align}
b &= \(\sqrt{ V(x_0)}  + g(x_0)\)/x_0   \label{E341c}
\end{align}
and define
\begin{align}
F(x) =
\begin{cases} &\int_{x_0}^x \(\sqrt{V(s)} +g(s)\) ds  + bx_0^2/2 , \ \ \ x\ge x_0 \\
& b x^2/2, \ \ \ \ 0\le x < x_0.
\end{cases}                                                                                \label{E173b}
\end{align}
Then on $[x_0, \infty)$ we have
\begin{align}
 F'(x) &= \sqrt{V(x)}  + g(x)    \label{E350} \\
 F'^2 &= V(x) + g(x)^2 + 2 g(x) \sqrt{V(x)}           \label{E351} \\
 -F''(x) &= -(d/dx)\sqrt{V(x)} - g'(x).      \label{E352}
 \end{align}
 Notice that $2g(x) \sqrt{V(x)} = V'(x)/(2\sqrt{V(x)}) = (d/dx)\sqrt{V(x)}$. The last  
 term in \eref{E351}  therefore cancels with the first term in \eref{E352} in the expression for the intermediate potential $W$ to give 
 \begin{align}
 W&= - F'' + (F')^2 \\
  &= - g'(x)  + V(x) + g(x)^2  
 \end{align}
 over the interval $[x_0, \infty)$. Hence over this interval we have
 \beq
 V-W = g'(x) - g(x)^2. \label{E353}
 \eeq
 It follows from  \eref{E317c} and \eref{E318c} that
 \begin{align}
  |V-W| = o(F_0(x)) \ \ \text{as}\ \   x \to \pm \infty .       \label{E354}
 \end{align}
 On the interval $[x_0, \infty)$ we have $g(x)=(1/4)V'(x)/V(x) =$ \linebreak
 $(1/2)\((d/dx)\sqrt{V(x)}\) / \sqrt{V(x)} \ge 0$. Therefore $ F \ge  F_0 +bx_0^2/2$ on this interval.
 Hence, for any real number $p$ we have, in view of \eref{E354},
 \begin{align}
 p(V-W) -2F &\le p(V-W) -2F_0 -  bx_0^2\ \ \  \text{for}\ x\ge x_0 \\ 
  &\le -F_0 -  bx_0^2 \ \ \  \text{for large $x$ depending on $p$}. 
 \end{align}
 Therefore, since $V -W$ is locally bounded, we have
 \begin{align}
 \int_\R e^{p(V-W) - 2F} dx < \infty      \label{E361}
 \end{align}
 for any $p \in \R$.  From \eref{E352}  and \eref{E360} we see that $F''\ge a$ on $[x_0, \infty)$.
 From \eref{E173b} we then find that $F'' \ge \min(a,b)$ everywhere except at $x = \pm x_0$.
 $m^F$ is therefore hypercontractive by the Bakry-Emery theorem.  In view of \eref{E361} 
 Theorem \ref{thmeck5} now follows 
 from Theorem \ref{thmM}.
\end{proof}

\subsubsection{Examples of Eckmann's theorem} \label{secExE}
 
          In each of the following examples we consider the one dimensional 
Schr\"odinger operator \eref{E200}. We take $V$ to be an even function  for simplicity.
    It suffices then to  compute derivatives for $x >0$. 
    
           In the first five examples we will 
           apply Eckmann's Theorem in the form of 
Theorem \ref{thmeck4}. But in the sixth example the potential grows too rapidly and we must use
the more refined theorem, Theorem \ref{thmeck5}, which is based on a second order WKB approximation.

\begin{example} \label{exE1}  {\rm  
   {\bf(Potential with power growth).}  
   Let  $V(x) = \lambda |x|^{2r}$ for  some $r \ge 1$ and $\lambda >0$.   
   Choose $x_0 \ge 1$. Then
   \begin{align}
 &\ \ (d/dx)\sqrt{V(x)} = \lambda^{1/2}r x^{r-1}\ge    \lambda^{1/2}  r x_0^{r-1}\ \ \text{when}\ \ x \ge x_0.  \label{E370}
 \end{align}
 So \eref{E316} holds with $a = \lambda^{1/2} r x_0^{r-1}$.
 Moreover for $x \ge 1$ we have $(d/dx)V(x) \le 2r V(x)$. 
 So  \eref{E311b}  holds with $k = 2r$  and Theorem \ref{thmeck4} applies. 
 Thus $H$ is bounded below, the bottom of the spectrum is an eigenvalue 
   of multiplicity one belonging  to a positive ground state $\psi$ and the ground state measure $\psi^2 dx$ satisfies a logarithmic Sobolev inequality.  In this example the intermediate ground state is $e^{-F}$, where  
   \beq
   F(x) = \lambda^{1/2} \int_{x_0}^x s^r ds  + const. =  \lambda^{1/2} x^{r+1}/ (r+1) \text{+ const. for large}\ x. \label{E371}
   \eeq
}
\end{example}

 \begin{remark}\label{remsuper} {\rm  In case $r >1$, \eref{E370} shows that
 $a$ can be chosen 
 large by choosing $x_0$ large. Moreover $b$, defined in \eref{E314b}, 
 also increases to $\infty$ as $x_0 \uparrow \infty$. Consequently the intermediate 
 measure $m^F$ can be chosen to 
 have an arbitrarily  small Sobolev coefficient by \eref{E116}.
 }
 \end{remark}

 \begin{example} \label{exE2} {\rm {\bf (Perturbation of power growth).} 
Let  $V = |x|^{2r} + V_1$ for  some $r\ge 1$ and some locally bounded, even, measurable 
function $V_1$  such that
\begin{align}
|V_1(x)| =  o(|x|^{r+1})\ \ \ \text{as}\ \ |x| \to \infty,        \label{E450a}
\end{align}
or, in case $r>1$,
\begin{align}
|V_1(x)| =  O(|x|^{r+1})\ \ \ \text{as}\ \ |x| \to \infty .        \label{E450b}
\end{align}
Then $-(d/dx)^2 + V$ is bounded below, has a unique positive ground 
state $\psi$ and the ground state measure $\psi^2 dx$ satisfies a logarithmic Sobolev inequality.

\begin{proof} We apply Corollary   \ref{corpp5} with $V =  |x|^{2r}$. Then \eref{E371}, with $\lambda =1$,
 determines the behavior near $\infty$  of the density of the intermediate measure $m^F$.
 If \eref{E450a} holds then we see that $\int_\R e^{ p |V_1|} dm^F < \infty$
for all real $p$. Therefore \eref{E701a} and \eref{E700a} both hold.
If only \eref{E450b} holds then  the integral in\eref{E701a} is only finite for some $\nu_1 >0$.
But by Remark  \ref{remsuper} a change in $F$ locally (by increasing $x_0$) can produce
an $F$ such that $m^F$ has arbitrarily small Sobolev coefficient $c_F$, while \eref{E371} still holds. 
Choose $x_0$ such that $c_F < (1/2) \nu_1$. Then  \eref{E701a} holds. 
 \eref{E700a} holds because $\ka_2$ can be chosen arbitrarily small. 
\end{proof}
}
\end{example}

 \begin{example}\label{exE3} {\rm
 {\bf (Polynomial potential).} 
 Let $V(x) = \sum_{j=0}^n a_jx^{2j}$ be an even  polynomial  with $a_n >0$.
 Choose $x_0 >0$  so that $V(x) \ge 1$ for $x \ge x_0$. 
 Since  $(d/dx) \sqrt{V(x)} =(1/2) V'(x)/\sqrt{V(x)}$ which, for large $x$
 behaves like $n a_nx^{2n-1}/\sqrt{a_n} x^n$ 
 $ = n\sqrt{a_n} x^{n-1}$, we can choose $x_0$ so that $(d/dx) \sqrt{V(x)} \ge a$ on $[x_0, \infty)$ for some
 $a >0$.  Moreover $V'(x)/V(x) \to 0$ as $x \to \infty$. So Theorem \ref{thmeck4} applies.
}
\end{example}

\begin{example} \label{exE4} {\rm {\bf (Potential with slow growth).} 
Suppose that $v(x);[0,\infty) \to (0, \infty)$ is $C_1$ and 
$0 < v' \le c < \infty$.
Let $V(x) =x^2 v(x)^2$ for $x \ge 0$ and extend $V$ to be even on  $\R$. Then, for $x >0$ we have
\begin{align}
(d/dx) \sqrt{V(x)} = (d/dx)\(x v(x)\) = v(x) + x v'(x)
\end{align}
Let $a = v(1)$. Then $(d/dx) \sqrt{V(x)} \ge a$ when $x \ge 1$ because $v$ is increasing. 
Moreover $V'(x) =2xv(x)^2 + 2x^2 vv'(x) \le 2x^2 v(x)^2 + 2cx^2v(x) \le 2(1 +(c/a))x^2 v(x)^2$ for $x \ge 1$.
 Therefore Theorem \ref{thmeck4} applies. The ground state  measure is hypercontractive. 
 
 In particular this example includes the cases   $v(x) = (\log(3+x))^{b}$ with $b >0$ 
 since $0 < v' \le c$  on $[0,\infty)$ for some constant $c<\infty$. 
 Therefore  the potentials $V(x) = x^2(\log(3+|x|))^{2b}$ have hypercontractive 
ground state measures, $m_\psi$.
Davies and Simon, \cite{DS84}, have shown that $m_\psi$ is ultracontractive if $2b >1$ but not if $2b \le 1$.
Our method does not distinguish these two cases.
}
\end{example}

\begin{example} \label{exE5}{\rm
 {\bf (Potential with exponential growth).} Let  $c >0$ and let  $V(x) = e^{2c|x|}$ for $|x| \ge 1$ 
 and be smooth and  even on $[-2,2]$. Choose $x_0 >1$. Then
\begin{align}
&\qquad (d/dx)\sqrt{V(x)}  = ce^{cx} \ge ce^{cx_0} \ \     \text{when}\ \ x \ge x_0.
 \end{align}
 So \eref{E316} holds with  $a = ce^{cx_0}$. 
 Moreover \eref{E311b} holds because $V'(x) = 2c V(x)$ for $x \ge x_0$. 
 Theorem \ref{thmeck4} therefore applies.
  }
 \end{example}

\begin{example}\label{exE6}  {\rm
{\bf (Potential with very rapid growth).} In the following example the potential grows too rapidly to satisfy the growth condition \eref{E311b}.
But the refined version of Eckmann's theorem, 
 Theorem \ref{thmeck5},  applies:
 Let $\alpha >0$ and let  $V(x) = e^{2\alpha x^2}$. Choose  $x_0 >0$.
 Then, for $x \ge x_0$, we  have
\begin{align}
&\qquad (d/dx)\sqrt{V(x)}  = 2\alpha x e^{\alpha x^2} \ge  2\alpha  x_0 e^{\alpha x_0^2} \\ 
&\qquad g(x) = \alpha x,\  g'(x) = \alpha.\\
&\qquad F_0(x) = \int_{x_0}^x e^{\alpha s^2} ds  
\end{align}
Eckmann's condition, \eref{E316},  holds with any choice of $x_0 >0$. But the growth
limitation \eref{E311b} doesn't hold because $g$ is unbounded.
Nevertheless \eref{E360}, \eref{E317c} and \eref{E318c} hold and Theorem 
\ref{thmeck5} applies.
}
\end{example}

\begin{remark} \label{reminap} 
 {\rm (Inapplicability to intrinsic ultracontractivity). 
 Davies and Simon  introduced in \cite{DS84} the terminology ``intrinsically ultracontractive" to refer
 to a   Schr\"odinger operator $H = -\Delta + V$ on $\R^n$ for which 
 $e^{-t\hat H}:L^2(\psi^2 dx) \to L^\infty$  is bounded for all $t >0$, where $\hat H$ is the  
 ground state transform of $H$.  
 Since the Schr\"odinger operators of interest for ultracontractivity
 typically have a mass gap, an intrinsically ultracontractive  Schr\"odinger  operator will also 
 be intrinsically hypercontractive by virtue of the Glimm-Segal-Rothaus theorem.
An intrinsically hypercontractive Schr\"odinger operator, however,  need not be intrinsically 
ultracontractive,  as we see in the harmonic oscillator. 
 There is a large literature  proving and exploiting intrinsic ultracontractivity, both 
 for Schr\"odinger operators  and for Dirichlet Laplacians on open subsets of $\R^n$.
   The proofs usually depend on dimension dependent estimates of the defect in the 
defective LSI for the ground state measure.  These in turn
depend on delicate pointwise estimates of the ground state, near infinity in the 
case of Schr\"odinger operators, or near the boundary in the case of the Dirichlet Laplacian. 
Intrinsic ultracontractivity is qualitatively stronger than intrinsic hypercontractivity because 
hypercontractivity  only yields bounds on $\|e^{-t\hat H}\|_{2\to p}$ for $p < \infty$.
 There is a trade-off between dimension independence
and sup-norm bounds. 
Our techniques in this paper are aimed at dimension independence. 

  A qualitative distinction between intrinsic hypercontractivity and intrinsic ultracontractivity
  is  already apparent in  Example \ref{exunb}, which describes a bounded potential whose addition to a
  hypercontractive Schr\"odinger operator $\n^*\n$ yields a ground state $\psi$ for which  
  $\psi$ and $\psi^{-1}$ are both unbounded. But \cite[Theorem 3.4]{DS84} shows that 
  the ground state for the perturbation
  of an intrinsically  ultracontractive Schr\"odinger operator by a bounded potential is always bounded
  and bounded away from zero.

In the literature on intrinsic ultracontractivity there are parallel versions of our Example  \ref{exE2}.
Their assumptions on the potential $V$  
 take the form   $f(x) \le V(x) \le g(x)$, where $f$ and $g$  are specified. See e.g. 
  \cite[Proposition 5.5]{Car79}, \cite[Theorem 6.3]{DS84}, 
 \cite[Lemma 4.5.1]{Da89}   and \cite{Ci2}.

For an avenue into this large literature on intrinsic ultracontractivity see the early 
papers Rosen, \cite{Ros}, Davies and Simon, \cite{DS84},  Davies, \cite{Da83,Da85,Da89}, 
Carmona, \cite{Car78,Car79}, Banuelos, \cite{Ban1991}, Murata, \cite{Mu1993},
Cowling and Meda, \cite{CoMe1993}, Lianantonakis, \cite{Lian1993}, 
Cipriani, \cite{Ci1,Ci2,CiG95},  Z-Q. Chen and R. Song, \cite{ChenZQ1997}, Tomisaki, \cite{Tom2007} 
and the  citations lists for these papers in Mathematical Reviews.
}
\end{remark}

\bigskip

\begin{remark}\label{reminfdim}{\rm   The bounds we have gotten in the general theory are dimension
independent. This reflects the fact that it was not necessary to use the classical Sobolev inequalities.
Eckmann's theorem is essentially one dimensional, although Eckmann also applied his methods
to radial potentials over $\R^n$.  We will see later, in the toy model for the quantum field $\phi^4_2$,
how dimension independence can expected to be used. But to emphasize the dimension independence
in a simple, though artificial example, consider the Dirichlet form operator over an abstract Wiener space
$(H, B, m)$, where $H$ is a real separable Hilbert space densely embedded in a Banach space $B$
and $m$ is the centered Gaussian measure on $B$ with covariance given by the 
inner product of $H$, \cite{G1967a}.
Denoting by $\n$ the gradient of functions on $B$ associated to differentiation only in $H$ directions,
referred to as the $H$ derivative in \cite{G1967b},
the Dirichlet form operator $\n^*\n$ is densely defined in $L^2(B,m)$ and is the well known
number operator of quantum field theory. The logarithmic Sobolev inequality 
$Ent_{m}(f^2) \le 2 \int_B |\n f|^2 dm$  
holds  on $B$ because it  reduces to the $n$ dimensional Gaussian LSI in case $f$ is a cylinder
function based on some $n$ dimensional subspace of $H$, while these functions form a core for $\n^*\n$.

The arguments in Sections \ref{secesa} and \ref{secEU}, showing that
if \eref{mt2} holds for some potential $V$ on $B$ then the Schr\"odinger operator $\n^*\n + V$ 
has a unique ground state $\psi$  which is strictly positive almost
everywhere, apply with no essential change even though $B$ is not finite dimensional. Theorem
\ref{thmM} also applies, from which we can conclude that $Ent_{m_\psi}(f^2) \le 2c_1 \int_B |\n f|^2 dm_\psi$
for a constant $c_1$ computed  as in Section \ref{sectightening}. 
For example, denoting by $\|\cdot\|$ the $B$ norm, the potentials 
$V(x) = \|x\|^{\beta}, 0\le \beta \le 2$ and $V(x) = \|x\| \sin \|x\|$ both satisfy the condition \eref{mt2}.
One needs only to use Fernique's theorem on the distribution of $\|\cdot\|$, \cite[Theorem 3.1]{Kuo}, 
or \cite[Theorem 3.6]{AMS94}, for both examples.
}
\end{remark}

\begin{remark} {\rm  In his paper, \cite{Eck74}, Eckmann also allowed potentials which have a strong 
singularity at zero. The techniques that we have been exploiting are not appropriate for such 
singularities.
}
\end{remark}

\subsection{Irregular potential over $\R$. } \label{secwp}

\begin{remark}\label{remcombined}{\rm  (Conditions on $\xi F - V$).
 We have been concerned with the probability measure $m^F \equiv e^{-2F}m$ only when 
$F = - \log \psi$, where $\psi$ is the ground state of a given Schrodinger operator $\n^*\n +V$.  But,
as mentioned in the Introduction, there is a large literature in which $F$ is given and is 
the primary object of interest, rather than the potential $V$.     
In that case one is interested in conditions on $F$ itself which ensure that $m^F$ is hypercontractive. 
The relation between these two problems was first discussed by Kusuoka and Stroock, \cite{KS85},
which appeared about  the same time as the paper of Davies and Simon  
that  introduced intrinsic hypercontractivity, \cite{DS84}.
 Kusuoka and Stroock explained that if, given $F$,  one defines an artificial potential  
 by $V_F = \n^*\n F + |\n F|^2$ then, taking
Davies and Simon's given potential $V$ to be $V_F$, the hypotheses in both papers are very similar.
In particular, they both depend on information about the combination $\xi F - V$ for 
various values of a real number $\xi$.
It can already be seen from \eref{gs720b} that if $F - cV$ is bounded above then $m_\psi$ satisfies 
a defective logarithmic Sobolev inequality if $m$ does. 
Conditions  which impose bounds on $\xi F - V$ from above figure prominently in either hypotheses or intermediate steps in the early papers Rosen \cite{Ros}, Carmona \cite{Car79},  
Davies and Simon, \cite{DS84}, Kusuoka and Stroock, \cite{KS85}. Davies's book, \cite{Da89},
 gives a self-contained exposition of parts of this also.
     Cattiaux, \cite[Section 5]{Cat2005}, takes $F$ as the primary object and imposes upper bounds 
 on $\xi F- V_F$ as well as various integral bounds.    
 $V_F$ arises naturally in 
 his paper because the Girsanov formula for change of
 density by $e^{-F}$ relates closely to the Feynman-Kac formula for $V_F$. See Carmona \cite{Car79a}
 for a discussion of this relation. 
  Cattiaux also gives another kind of mixed condition which shows how 
 close to necessary is our condition $\|\psi^{-1}\|_{L^p(m)} < \infty$   for proving a DLSI. 
 In  \cite[Theorem 2.5]{Cat2005}
 he shows
  that if $V_F$ is bounded below
 then for a DLSI to hold it is necessary and sufficient that $\psi^{-1} e^{-t(\n^*\n + V_F)} 1 \in L^p(m^F)$ for some $t >0$ and $ p > 2$.   For $t = 0$ this reduces to Aida's condition.

 Carlen and Loss, \cite{Ca2004}, also assume $F - cV_F$ is bounded above
 to show that $e^{-2F} d^nx$ satisfies a logarithmic Sobolev inequality. Their method is based on use of  a 
 perturbation of  the known Euclidean logarithmic Sobolev inequality. 
 Just such a combination of $F$ and $V$ is also used in the book 
 of Bakry, Gentil and Ledoux, \cite[Section 7.3]{BGL},
to determine a growth function for a general class of entropy-energy inequalities.
Bartier and Dolbeault \cite{BaDo2006}, in their perturbation theorem for a measure $m \equiv e^{-W}d^nx$
by a density $e^{-2F}$, assume that $V_F$ is bounded below and that $F$ is 
bounded above  to perturb a logarithmic Sobolev inequality,  and 
they also show that one can perturb the inequalities of Beckner, \cite{Beckner1989}, that are 
intermediate between Poincar\'e and LSI by assuming again that $V_F$ is bounded below and 
that $F\vee 0$ is in a suitable $L^p(m)$ class.
Another kind of hypothesis involving a combination of $F$ and $V_F$ is given by F.Y. Wang 
in \cite[Equ. (5.4)]{Wang2001}.
He assumes that $\int \exp{\epsilon F -c_\epsilon V_F} dm  < \infty$ for suitable $\ep$ and $c_\ep$.

 A change of density by a given factor $e^{-2F}$ arises also for purely geometric 
motives over Riemannian manifolds. In the papers \cite{CLR2015}, \cite{CR2019}, 
 Charalambous, Lu and Rowlett  use the ground state
transformation to gain information about the spectrum of the Dirichlet form operator for  
 the measure with density $e^{-2F}$ with respect to Riemann-Lebesgue measure.
 They transform the problem into  the study of the Schr\"odinger operator $-\Delta + V_F$ and impose
a uniform bound on $V_F$.  This is a very natural condition in this context.
 }
\end{remark}

Theorem \ref{thmM} shows that only conditions on the perturbing potential $V$ involving means
are required to produce a hypercontractive ground state measure.
The following one dimensional example of an irregular potential emphasizes this fact and 
at the same time shows that  $V$ can be so badly unbounded below 
  that a combined  condition such as $\sup (\xi F - V) < \infty$ can fail over every interval, even though
  the ground state measure for $-d^2/dx^2+V$ is hypercontractive.

\begin{example} {\rm (Irregular potential). We will construct a potential
$V$ over $\R$ which is unbounded below on every interval but 
 such that
 the Schr\"odinger operator
$H \equiv -d^2/dx^2 + V$  is bounded below, has an eigenvalue at the bottom of its spectrum, has a unique 
continuous 
ground state $\psi > 0$ a.e.  
 for which the ground state  measure $\psi^2 dx$ satisfies
a logarithmic Sobolev inequality.  In particular $\xi F - V$ is not bounded above on any interval for any real
number $\xi$.

 Let $r_1, r_2, \dots$ be an enumeration of the rational numbers
in $[0,1)$. Define $f(x) =\sum_{j=1}^\infty 2^{-j} |x- r_j|^{-1/2}$
for $0\le x < 1$ and $f(x) = 0$ elsewhere. Then $f$ is unbounded above on every open set in $ [0,1)$.
But  $\int_0^1 f(x)dx \le 4$ because  $\int_0^1 |x-r|^{-1/2} dx \le 2 \int_0^1 x^{-1/2} dx =  4$.
So $f$ is finite a.e.. Let $b >0$ and define
\begin{align}
V_1(x) = -(1/6)\sum_{n= -\infty}^\infty \log (1 + b2^{-|n|} f(x-n))      \label{Ir5}
\end{align}
The terms in the  sum have disjoint supports. Clearly $V_1$ is non-positive and is unbounded 
below on  every interval.  Define
\beq
V(x) = x^2/4 +V_1(x).
\eeq
$V$ is  unbounded below on every interval. We will use Theorem \ref{thmM} and the consecutive 
ground state transformation procedure  of Section \ref{secgst} to prove that $-d^2/dx^2 +V$ has the 
properties claimed above. 

     Let $m \equiv (2\pi)^{-1/2} e^{-x^2/2} dx$ be the standard
Gaussian measure on $\R$ of variance $1$.  Then  $m$ satisfies 
the LSI \eref{mt1} with $c =1$, \cite{G1}. Moreover $m$ is the ground state measure for the
Schrodinger operator $-d^2/dx^2 + x^2/4$ because $\sqrt{dm/dx}$ is the ground state, as
one can easily compute. By the consecutive ground state procedure  of Section \ref{secgst} it therefore
suffices to check the two conditions \eref{mt2} for $m$ and $V_1$. In view of the disjoint 
supports of the terms in \eref{Ir5} we find, for any $\nu > 0$, and then for $\nu =6$ that
\begin{align*}
\int_\R &e^{-\nu V(x)} dm(x) = \sum_{-\infty}^\infty \int_n^{n+1}(1+b2^{-|n|}f(x-n))^{\nu/6}  dm(x)\\
&=\sum_{-\infty}^\infty \int_n^{n+1} 1 dm + \sum_{-\infty}^\infty b 2^{-|n|}\int_n^{n+1} f(x-n) dm(x) \ \ 
                     \text{if}\ \nu =6\\
&\le 1 + b\sum_{-\infty}^\infty 2^{-|n|} \int_n^{n+1} f(x-n)  dx \\
&\le 1 + 12b .
\end{align*}
Since $6 > 2 = 2c$ the second condition in \eref{mt2} is satisfied for $m, V_1$. 
The first condition is automatic because $V_1 \le 0$. Therefore $H$ has a unique ground state $\psi$
whose ground state measure $\psi^2 dx$ satisfies a logarithmic Sobolev inequality. Moreover $\psi$
is given by $\psi(x) = (2\pi)^{-1/4} e^{-x^2/4} \psi_1(x)$ where $\psi_1$ is the ground state for
$\n^*\n + V_1$ and $\n^*\n$ is the Dirichlet form operator of $m$. 
In particular $\int_\R \psi_1'(x)^2 dm(x) < \infty$. So $\psi_1 \in H_{1, loc}(\R)$ and
one can  take $\psi_1$ to be continuous. Consequently  $\psi >0\ a.e.$ and is continuous. Hence
$F \equiv - \log \psi$ is continuous on some neighborhood of any point where $\psi(x) >0$ and bounded
on some smaller neighborhood. Therefore $\xi F - V$ is unbounded above on any open interval.
The same argument shows that in the intermediate space $L^2(\R, dm)$, 
the combination $\xi F_1  - V_1$  is also unbounded above on any open interval for any real number $\xi$,
wherein $F_1 = - \log \psi_1$. 
}
\end{example}

\bigskip

 \begin{remark}\label{remF} {\rm (Direct conditions on $F$). Size conditions on $F$ itself, not dependent on differentiability of $F$ and 
    in particular not dependent on $V_F$, which ensure that $m^F$ is hypercontractive 
    when $m$ is, have been explored with a view toward extending 
    the Deuschel-Holley-Stroock theorem  \cite{HS1987,DS1990}, 
     according to which it is sufficient for $F$ to be bounded. 
Hebisch and Zegarlinski,  \cite[Proposition A.1]{HZ1}, 
have  given a dramatic example showing that even if a density  on $\R$ is sandwiched  
between two densities that give a Poincar\'e inequality, the sandwiched density need not.
     Bakry, Ledoux and Wang, \cite{BLW2007}, have explored pure growth conditions on $F$
which ensure that $m^F$ satisfies a slightly weaker functional inequality than $m$ does in a scale
of inequalities interpolating between a Poincar\'e inequality and a logarithmic Sobolev inequality.
But their results suggest that invariance of LSI under some reasonable class of 
unbounded pointwise perturbations of $F$ may not hold. 
       The perturbation theorem of Barthe and Milman \cite{Barthe13} does not fall into any of 
these three categories of perturbation theorems. They consider, e.g., two measures
on $\R^n$, $\mu_i = e^{-2F_i} dx$, $i = 1,2$, and assume that $\mu_1$ satisfies a logarithmic Sobolev
inequality while $ Hess(F_2) \ge -\kappa$ for some $\kappa \ge 0$. The latter condition says, roughly,
that $F_2$ is not too badly non-convex. They show that even though the Bakry-Emery condition
(which requires $\kappa <0$) fails, nevertheless $\mu_2$ is hypercontractive if $e^{-2F_2} - e^{-2F_1}$ 
is small in a suitable $L^p$ sense. In this perturbation theorem a part of the hypothesis is placed directly
on the perturbed measure $\mu_2$. 
This paper contains a good, recent,  exposition of the use of logarithmic Sobolev inequalities 
in classical statistical mechanics and gives references to  
 related work in this large literature on spin systems. In this context it is natural to impose conditions
 directly on $F$, which is the Hamiltonian for a finite lattice spin system.

There are several other works imposing conditions directly on $F$,  which assume some differentiability  
of $F$ but are quite different from the conditions discussed in Remark \ref{remcombined} in that
the artificial potential $V_F$ is not used.
Typical of these are theorems imposing integrability with respect to $m$ of $\exp(|\n F|^2)$.
See Aida's Remark 4.13 in \cite{Aida2001}  for a comparison of these conditions.
Royer,  \cite[Theorem 3.2.7]{Roy07}, shows, by a simple proof, that if $|\n F|$ is bounded 
then $m^F$ is hypercontractive if $m$ itself is a measure  on $\R^n$ of the form $e^{-2F_1} d^n x$
for some uniformly convex function $F_1$.
}
\end{remark}

\subsection{Non-convexity}

We combined  the convexity techniques of the Bakry-Emery method and the perturbation theorem
of this paper to deduce  Eckmann's theorem over $\R$ in Section \ref{secEck}. 
But the final density of the ground state measure for a 
Schr\"odinger operator over $\R$ need not be log-concave in order 
for the ground state measure to be hypercontractive. Malrieu and Roberto \cite[Theorem 6.4.3]{Ane} 
have given an example of a density on the line which is far from log-concave but is hypercontractive.
Cattiaux, \cite[Example 5.5]{Cat2005}, further illuminated this example.   
Here we show how the hypercontractivity in their example can be deduced from Theorem \ref{thmM}.

\begin{example} \label{MRo-Cat}{\rm  (The example of Malrieu-Roberto, \cite[Theorem 6.4.3]{Ane}).
Let
\begin{align}
  F(x) &= x^2 + \beta x \sin x +C,  \ \ \ x, \beta \in \R,  \label{C5}
 \end{align}
 Define $\mu = e^{-2F} dx$. 
 It is clear that $\mu$ is a normalizable measure on $\R$. 
 We will ignore the normalization constant because it drops
 out in all of our calculations. We have $F' =2x +\beta x \cos x + \beta \sin x $ and 
 $F'' = -\beta x \sin x  +2 +2\beta \cos x$. Clearly $\liminf_{x \to \infty} F''(x) = -\infty$. 
 So $F$ is not convex  outside of any bounded set. 
 Malrieu-Roberto \cite[Theorem 6.4.3]{Ane} and  Cattiaux  \cite[Example 5.5]{Cat2005} have 
 shown that $\mu$ is hypercontractive if and only if  $|\beta| < 2$. We illustrate how our methods 
 show that  $|\beta| < 2$ is sufficient for hypercontractivity.
 
      According to the WKB equation, \eref{W7},  $e^{-F}$ is the ground state for the potential
 \begin{align}
 V_F &\equiv -F'' + (F')^2  \notag\\
 & = \beta x\sin x -\( 2 + 2\beta \cos x \) + \( x(2 + \beta \cos x) +  \beta \sin x\)^2 \notag \\
 & =x^2(2 + \beta \cos x)^2 +W        \label{C8}
\end{align}
where $W$ grows at most linearly: $|W(x)| \le c_1 |x| + c_2$. 

  Suppose that $|\beta| < 2$. 
Let $V_0 =x^2(2- |\beta|)^2$ and let $U = x^2(2 + \beta \cos x)^2 -x^2(2- |\beta)^2$.
Then $V_F = V_0 + U + W$. Moreover  $0\le U \le x^2 (2+|\beta|)^2$ .
Let $V_1 = U+W$. Then
\begin{align}
V_F = V_0 + V_1.        \label{C10}
\end{align}
The ground state measure for the quadratic potential $V_0$ is the Gaussian measure
$dm = (2\pi c)^{-1/2} e^{-x^2/(2c)} dx$ where $c^{-1} = 2 - |\beta|$. By the consecutive ground state transformation
procedure of Section \ref{secgst} we need only show that $V_1$ satisfies the two conditions \eref{mt2}
for some $\nu > 2c$ and some $\ka >0$. But since $U \ge 0$  and $W$ grows at most linearly we have
$\int_\R e^{-\nu V_1} dm < \infty$ for all $\nu >0$. Moreover $\int_\R e^{\ka V_1} dm < \infty$ 
whenever $\ka  (2+|\beta|)^2 < (2c)^{-1}$. Therefore we may apply Theorem \ref{thmM} to find that
$\mu$ is hypercontractive.
}
\end{example}

\begin{remark} {\rm (Inapplicability of Eckmann's method in this example). 
 Eckmann's method relies in part on non-oscillation of the given potential. In our example
the potential  whose ground state measure is $\mu$ is given by \eref{C8} and the 
derivative, $V_F'$, contains the highly oscillatory,
 quadratically growing term $-2x^2(2 + \beta \cos x) \sin x$ while $\sqrt{V_F}$ increases at most like $|x|$. 
 The condition \eref{E316} therefore fails.
}
\end{remark}

\begin{remark}  {\rm Since the example can be presented as the Gaussian measure 
$dm_0 \equiv Z^{-1}e^{-2x^2}dx$ with an additional density $e^{-2\beta x\sin x}$, 
as suggested by \eref{C5}, it is possible
to identify the measure in the example as the ground state measure for the Schrodinger operator
$\n^*\n + V_G$ in $L^2(m_0)$, where $\n^*$ is computed in $L^2(m_0)$ and 
$V_G = \n^*\n G + |\n G|^2$ with $G = \beta x \sin x$,  in accordance
with  \eref{W7} and Lemma \ref{lemcgst}.  A computation shows that this
approach will work only if $|\beta| <1$. The choice of decomposition of the potential $V_F$  given in \eref{C10},
for the application of the method of consecutive transformations, 
 therefore greatly affects the outcome.
}
\end{remark}

\subsection{A Toy Model}

Take $X= \R^n$. Define
\begin{align}
H = -\Delta +(Ax,x) +\l \sum_{j=1}^n x_j^4, \ \ \ \l >0     \label{tm5}
\end{align}
Here $(Ax,x)$ is any quadratic form, not necessarily positive.

  \begin{theorem} \label{thmtm1}
   The Hamiltonian \eref{tm5} is bounded below. The bottom of its
spectrum is an eigenvalue of multiplicity one. It has a unique strictly positive ground state $\psi$.
Let 
\begin{align}
m_\psi = \psi^2 d^nx  \ \ \ \text{on}\ \ \R^n.          \label{tm7}
\end{align}
Then there is a constant $c_1 < \infty$ such that
\begin{align}
Ent_{m_\psi}(u^2) \le 2c_1 \int_{\R^n} |\n u|^2 dm_\psi.      \label{tm9}
\end{align}
\end{theorem}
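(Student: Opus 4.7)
The plan is to apply the consecutive ground state transformation of Lemma~\ref{lemcgst} to the decomposition
\[
V(x)=V_0(x)+V_1(x),\qquad V_0(x)=\lambda\sum_{j=1}^n x_j^4,\qquad V_1(x)=(Ax,x),
\]
and then invoke Theorem~\ref{thmM} to absorb the perturbation $V_1$. The point of this split is that $V_0$ separates in the coordinates $x_j$, so the ground state of $-\Delta+V_0$ is a product of one-dimensional ground states; while $V_1$ is merely a polynomial of degree two whose exponential is easily controlled by the super-Gaussian tails of that product.

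First I would produce a logarithmic Sobolev inequality for the intermediate ground state measure. Since $-\Delta+V_0=\sum_{j=1}^n H_j^{(1)}$ with $H_j^{(1)}=-\partial_{x_j}^2+\lambda x_j^4$, the ground state factors as $\psi_0(x)=\prod_{j=1}^n\phi(x_j)$, where $\phi\in L^2(\R)$ is the unique positive ground state of $-d^2/dt^2+\lambda t^4$. Example~\ref{exE1} (the case $r=2$) applies to this one-dimensional operator and gives an LSI for $\mu\equiv\phi^2\,dt$ with some constant $c_\mu$. By the classical tensorization property of logarithmic Sobolev inequalities (proved already in \cite{G1}), the $n$-fold product $m_0\equiv\psi_0^2\,d^nx=\mu^{\otimes n}$ satisfies
\[
Ent_{m_0}(u^2)\le 2c_\mu\int_{\R^n}|\n u|^2\,dm_0,
\]
with $c_\mu$ independent of the dimension $n$.

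Next I would verify the hypotheses of Theorem~\ref{thmM} for the pair $(m_0,V_1)$. Since $V_1$ is a quadratic form there is a constant $C<\infty$ with $|V_1(x)|\le C|x|^2$, so it suffices to check that $\int_{\R^n}e^{p|x|^2}\,dm_0<\infty$ for every $p>0$. By the product structure this reduces to $\int_\R e^{pt^2}\phi(t)^2\,dt<\infty$ for every $p$, which is immediate from the standard WKB asymptotics $\phi(t)\sim|t|^{-1}\exp(-\tfrac{1}{3}\sqrt{\lambda}\,|t|^3)$ as $|t|\to\infty$ (or, more crudely, from any of the usual pointwise bounds on the ground state of a quartic well). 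Consequently both exponential integrability conditions in \eref{mt2} hold for arbitrarily large $\kappa$ and $\nu$, and Theorem~\ref{thmM} produces a unique positive ground state $\psi_1\in L^2(m_0)$ of $\n^*\n+V_1$ together with a logarithmic Sobolev inequality for the measure $\psi_1^2\,dm_0$.

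Finally, Lemma~\ref{lemcgst} applied to $H=(-\Delta+V_0)+V_1$ identifies $\psi=\psi_0\psi_1$ as the unique positive ground state of $H$ and $m_\psi=\psi^2\,d^nx=\psi_1^2\,dm_0$ as its ground state measure; the LSI just established for the latter is exactly \eref{tm9}. There is no substantive obstacle here: the essential ingredients are Example~\ref{exE1} (for the one-dimensional quartic well), the classical tensorization with the same constant (which keeps the argument dimension independent, in the spirit of Remark~\ref{reminfdim}), and the super-Gaussian decay of $\phi$ used to put $e^{\pm\kappa V_1}$ into every $L^p(m_0)$.
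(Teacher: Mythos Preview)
Your proposal is correct and follows the same overall strategy as the paper: split $V=V_0+V_1$ with $V_0=\lambda\sum x_j^4$, use Example~\ref{exE1} plus tensorization to get an LSI for the product ground state measure $m_0$, verify \eref{mt2} for $V_1=(Ax,x)$ via $|V_1(x)|\le C|x|^2$, and then invoke Theorem~\ref{thmM} together with Lemma~\ref{lemcgst}.

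The one difference worth noting is how you justify $\int_\R e^{pt^2}\phi(t)^2\,dt<\infty$. You appeal to external WKB asymptotics for $\phi$; the paper instead stays self-contained by using Corollary~\ref{correlgs}, which says that the relative ground state $\psi_1$ (of $\n^*\n+(V-W)$ in $L^2(m^F)$) lies in every $L^p(m^F)$ because $V-W$ is bounded below, so $\int f\,dm_{\psi_0}<\infty$ whenever $\int f^q\,dm^F<\infty$ for some $q>1$. Since the intermediate measure $m^F$ has explicit density proportional to $e^{-2\sqrt\lambda|x|^3/3}$ for large $|x|$ by \eref{E371}, the required integrability follows at once. Both routes work; the paper's has the advantage of using only machinery already developed in the paper rather than importing pointwise ground-state estimates.
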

      \begin{proof}  By Example \ref{exE1} the Hamiltonian $-(d/dx)^2  + \l x^4$ has 
a unique positive ground  state $\psi_0$ and the measure 
$dm_{\psi_0} \equiv \psi_0^2 dx$ satisfies a LSI
      \begin{align}
      Ent_{m_{\psi_0}} (u^2) \le 2c \int_\R u'(x)^2 dm_{\psi_0}(x).     
 \end{align}    
 Let $m(dx_1,\dots, dx_n) =  m_{\psi_0}(dx_1) \cdots m_{\psi_0}(dx_n)$.
  By the additivity property of logarithmic Sobolev inequalities,  
 \cite{G1}, \cite{F75} or  \cite[Theorem 2.3]{G1993},
 the measure $m$ satisfies
 \begin{align}
 Ent_{m} (f^2) \le 2c \int_{\R^n} |\n f(x)|^2 dm(x).
 \end{align}
 We will  apply the perturbation theorem, Theorem \ref{thmM}  
 to $m$ with potential $V(x) = (Ax,x)$. 
     The consecutive ground state procedure of Section \ref{secgst} then shows that the ground state of the
     Hamiltonian $H$ is hypercontractive.  
 To verify the  hypotheses  \eref{mt2}   for $e^{\pm V}$
  we make the crude estimate $|(Ax,x)| \le b |x|^2$,
 where $b$ is the operator norm  of $A$ over $\R^n$.  For any real number $\alpha >0$ we have
\begin{align}
 \int_{\R^n} e^{\alpha |(Ax,x)| }dm &\le  \int_{\R^n} e^{\alpha b |x|^2}  dm \\
 &=  \(\int_{\R} e^{\alpha b x_1^2} dm_{\psi_0}(x_1)\)^n.
 \end{align}
 To show that this is finite it suffices, by Corollary \ref{correlgs}, to show that  \linebreak
 $\int_\R e^{\alpha b x_1^2} dm^F(x_1) < \infty$ for all positive $\alpha$, where $m^F$ is the intermediate 
 ground state measure in the construction of $m_{\psi_0}$. According to \eref{E371}, 
 $m^F$ has an even  density proportional to  $e^{-2\sqrt{\lambda}x^3/3}$ for large positive $x$. 
 Since  
 \newline $\int^{\infty} e^{\alpha b x^2 -2\sqrt{\lambda} x^3/3} dx < \infty$ for all real $\alpha$, the integral, 
 $\int_\R e^{\alpha b x_1^2} dm^F(x_1) < \infty$ and therefore $\int_{\R^n}e^{\alpha |(Ax,x)| }dm <\infty$
 for all real $\alpha$.
 We can therefore apply Theorem \ref{thmM} since \eref{mt2} holds for all real $\nu$ and $\kappa$.
 \end{proof}
         
\bigskip
     
For the significance of this example to $\phi^4$ models note that  the Hamiltonians  \eref{tm5} include
Hamiltonians of the form $H = -\Delta +(Bx,x) +\l \sum_{j=1}^n (x_j^4- a x_j^2), \ \ \ \l >0, a >0 $.

\bigskip
The author states that there is no conflict of interest.

\section{Bibliography}

\bibliographystyle{amsplain}
\bibliography{ymh1}

\end{document}